\documentclass[a4paper]{amsart}
\usepackage[margin=1in]{geometry}
\usepackage{amsfonts}
\usepackage{amssymb}
\usepackage{graphicx}
\usepackage[all]{xy}
\input {diagxy} 
\usepackage{xcolor}

\def\>#1{{\bf #1}}                

\def\dim{\operatorname{dim}}              

\newtheorem{theorem}{Theorem}
\newtheorem{corollary}{Corollary}
\newtheorem{proposition}{Proposition}
\newtheorem{definition}{Definition}
\newtheorem{lemma}{Lemma}

\newtheorem{remark}{Remark}
\newtheorem{convention}{Convention}

\def\dem{\noindent\textbf{\emph{Proof. }}}\def\findem{$~\hfill\Box$}

\begin{document}
\title{Dirac Structures and Reduction of Optimal Control Problems with symmetries}
\author[A.~Ibort]{Alberto Ibort}
\address{Departamento de Matem\'aticas, Universidad Carlos III de Madrid,
Avda.\ de la Universidad 30, 28911 Legan\'es, Madrid, Spain}
\email{albertoi@math.uc3m.es}

\author[M.~Leok]{Melvin Leok}
\address{Department of Mathematics, University of California, San Diego,
9500 Gilman Drive, La Jolla, CA 92093-0112, USA}
\email{mleok@ucsd.edu}

\author[R.~Salmoni]{Rebecca Salmoni}
\address{Departamento de Matem\'aticas, Universidad Carlos III de Madrid,
Avda.\ de la Universidad 30, 28911 Legan\'es, Madrid, Spain}

\author[C.~Sard\'on]{Cristina Sard\'on}
\address{Departamento de Matem\'atica Aplicada, Universidad Polit\'ecnica de Madrid,
Avenida Juan de Herrera 6, 28040 Madrid, Spain}
\email{mariacristina.sardon@upm.es}

\date{\today}

\subjclass[2020]{Primary 49K15, 53D17; Secondary 53D20, 53C17, 70G45, 93B27}
\keywords{Dirac structures, optimal control, Pontryagin Maximum Principle,
symmetry reduction, constraint algorithm, abnormal extremals,
sub-Riemannian geometry}

\begin{abstract}
We study the symmetry reduction of geometric optimal control problems by means of
Dirac structures. The first-order conditions of the Pontryagin Maximum Principle are
formulated as an implicit Dirac system on the Pontryagin bundle. For a problem
invariant under a Lie group we reduce both the Lagrange--Pontryagin variational
principle and the underlying Dirac structure, the reduced canonical Dirac structure
being derived rather than postulated, and we prove that the two procedures yield the
same reduced equations. The reduction picture is completed by a reconstruction
theorem, by the independence of the reduced dynamics from the choice of principal
connection, and by the relation between criticality and optimality. The reduced
equations are identified with the Hamiltonian equations on the dual of the Atiyah
algebroid, recovering the formulations of Mart\'{\i}nez and of Grabowski and
J\'o\'zwikowski.

We then drop the regularity assumption, so that the control cannot be eliminated and
a constraint algorithm is required. The algorithm of Gotay, Nester and Hinds is given
a Dirac-geometric form in which each stage is the Dirac structure induced by the
constraint distribution, the constraint forces lying in its annihilator, and we prove
that symmetry reduction commutes with the algorithm: it may be run on the reduced
space, the unreduced constraint submanifolds being recovered by pullback. For
control-linear systems whose admissible directions form the horizontal distribution of
a principal connection, the abnormal extremals occupy the leaf on which the base
costate vanishes, and the first secondary constraint is the degeneracy of the
curvature pairing appearing in the reduced costate equation.

The theory is illustrated by three examples: a rigid spacecraft with internal rotors,
for which the connection is flat; the sub-Riemannian geodesic problem on the
Heisenberg group, for which the curvature is nowhere zero; and the Martinet problem,
for which the curvature degenerates on a hypersurface and the abnormal extremals are
exactly the admissible curves lying in it.
\end{abstract}

\maketitle



\markboth{Dirac structures}{Reduction of optimal control problems}


\section{Introduction}
\label{intro}

In this paper we study the role of Dirac structures in the geometric formulation and symmetry reduction of a class of optimal control problems. More specifically, we show how the reduction theory for Dirac structures developed by Yoshimura and Marsden \cite{Yo09} can be extended to optimal control systems admitting a Lie group of symmetries. We prove that, as in the case of implicit Hamiltonian and Lagrangian systems, the reduction of the variational principle and the reduction of the Dirac structure underlying the Pontryagin Maximum Principle (PMP) lead to the same reduced equations, and we then use the Dirac formulation to treat the singular case, in which the control cannot be eliminated and a constraint algorithm is required.

The reduced equations themselves are not new. They are the Hamiltonian equations on the dual of the Atiyah algebroid $TP/G$, obtained in that setting by Mart\'{\i}nez \cite{Mr04} and, in the wider context of almost Lie algebroids, by Grabowski and J\'o\'zwikowski \cite{GJ11}; for a Lie group they go back to Jurdjevic \cite{Jurbook}. Section~\ref{algebroid} records the dictionary explicitly. What the Dirac formulation adds is a framework in which the constraint algorithm and the symmetry reduction interact, and it is there that our main new result lies.

Optimal control problems often possess a rich geometric structure. Throughout this paper we restrict our attention to smooth geometric optimal control problems, where both the state space and the control space are smooth manifolds (assumed, for simplicity, to be without boundary), the controlled dynamics is described by a control-dependent vector field, and the cost functional is defined by a Lagrangian density. This setting naturally admits a differential-geometric description and provides the appropriate framework for studying additional geometric features, such as symmetries and reduction.

Suppose that a Lie group acts on the state and control spaces in such a way that both the dynamics and the cost functional are invariant under the action. We then say that the optimal control problem is symmetric with respect to the given Lie group. The presence of symmetry is a powerful tool for simplifying the analysis of dynamical systems. In Hamiltonian and Lagrangian mechanics, conserved quantities associated with symmetries often allow one to reduce the dimension of the problem. More generally, reduction techniques replace the original system with a family of lower-dimensional systems whose solutions can later be reconstructed to recover the solutions of the original dynamics.

The theory of reduction for systems with symmetry has a long history, and we do not attempt to review it here. Instead, we focus on the work of Yoshimura and Marsden, who developed a unified framework for implicit Lagrangian and Hamiltonian systems based on Dirac structures. Their contributions include the formulation of implicit Lagrangian mechanics using Dirac structures \cite{Yo06a,Yo06b}, the reduction of the canonical Dirac structure on the cotangent bundle of a Lie group \cite{Yo07a}, the treatment of implicit Hamiltonian and Lagrangian systems \cite{Yo07b}, and, ultimately, the reduction of Dirac structures on cotangent bundles \cite{Yo09}. These developments provide the main geometric inspiration for the present work.

The Pontryagin Maximum Principle admits a natural presymplectic formulation, which has been extensively developed in the literature (see, for example, \cite{Ba07,Ba08} and the references therein). Since every presymplectic manifold carries a canonical Dirac structure, the PMP can also be formulated entirely in terms of Dirac geometry. This observation makes it natural to ask whether the reduction theory for Dirac structures can be used to reduce optimal control problems while preserving not only the dynamics but also their variational character.

A straightforward approach would be to perform presymplectic reduction of the Hamiltonian system associated with the PMP. Although this reduces the number of degrees of freedom, it generally loses the variational information encoded in the original optimal control problem. Consequently, the reduced dynamics alone does not guarantee that its solutions correspond to extremals of the original cost functional. What is needed is a reduction procedure that simultaneously preserves the geometric dynamics and the underlying variational principle.

Dirac structures provide precisely such a framework. As shown by Yoshimura and Marsden, the reduction of Dirac structures is compatible with the reduction of the associated variational principles. This makes Dirac geometry particularly well suited to the study of symmetric optimal control problems, where one seeks a reduced formulation that retains both the geometric and variational content of the original problem.

Following this perspective, we extend the Dirac reduction framework of \cite{Yo09} to geometric optimal control. The geometric formulation of the PMP naturally leads to the introduction of the \emph{Pontryagin bundle}, a bundle over the state manifold that plays a role analogous to the Pontryagin bundle $TQ\oplus T^*Q$ arising in implicit Lagrangian mechanics. We formulate the first-order necessary conditions of the PMP as an implicit Hamiltonian system determined by the Pontryagin Hamiltonian and the canonical Dirac structure on this bundle. Assuming the optimal control problem admits a symmetry group $G$, we study the reduction of both the variational principle and the associated Dirac structure.

The reduced PMP equations are obtained in two independent ways. First, we derive them by reducing the variational principle obtained through the method of Lagrange multipliers. Second, we obtain the same equations by directly reducing the Dirac structure defining the PMP, the reduced canonical Dirac structure being derived in Section~\ref{reduced_canonical} rather than postulated. The agreement between these two constructions shows that the reduced Dirac system provides the correct geometric realization of the reduced optimal control problem.

It may be helpful to state at the outset what we claim and what we do not. Our main new result is Theorem~\ref{reduction_commutes}: symmetry reduction commutes with the constraint algorithm, so that for a singular problem the algorithm may be run on the reduced space and the unreduced constraint submanifolds recovered by pulling back. To formulate it we give a Dirac-geometric form of the algorithm, in which each stage is the Dirac structure induced by the constraint distribution and the constraint forces appear in its annihilator. Alongside this we supply the reduction picture in full: reconstruction (Theorem~\ref{reconstruction_theorem}), independence of the choice of principal connection (Theorem~\ref{connection_independent}), and the relation between criticality and optimality (Proposition~\ref{criticality_optimality}). A second new result is Theorem~\ref{strictness_criterion}, which decides in reduced terms whether an abnormal extremal is \emph{strictly} abnormal: it is not, precisely when its projection to the base satisfies Wong's equation for some parallel coadjoint multiplier. The criterion separates cleanly what the curvature alone determines from what the cost contributes, and it is decided by a second constraint algorithm run in the coadjoint fiber (Proposition~\ref{strictness_algorithm}). Section~\ref{index_form} then supplies the second-order theory in the same variables: for an abelian structure group the index form of an abnormal extremal is $\int\langle\mu,F_A(\nabla_{\dot z}v,v)+(\nabla_vF_A)(\dot z,v)\rangle\,dt$ (Theorem~\ref{reduced_index_form}), which involves the curvature and its covariant derivative but not the metric. The abelian hypothesis is not removable: for nonabelian holonomy the second variation acquires time-ordered commutator terms that we do not compute, and Remark~\ref{nonabelian_index} exhibits an $SO(3)$ example in which they destroy positivity on long arcs. Location, rigidity and strictness are then three reduced tests (Theorem~\ref{trichotomy}), the first and third criteria and the second a sufficient certificate, and the deformed Martinet family of \cite{ABCK97} together with the rank-two example of Section~\ref{strictness} shows that the last two are genuinely independent.

By contrast, several ingredients are recastings of known results in the present language, and we have tried to attribute them carefully. The constraint algorithm is that of Gotay, Nester and Hinds \cite{Go79}, applied to the PMP in the presymplectic setting by Barbero-Li\~n\'an and Mu\~noz-Lecanda \cite{Ba08}, who also identified its secondary constraints with the classical Goh and generalized Legendre--Clebsch conditions. The device of adjoining the running cost to the state, which brings abnormal extremals within the same framework, is due to Ohsawa \cite{Oh15} and J\'o\'zwikowski and Respondek \cite{JR16}; we only observe that it is itself an instance of the reduction theory developed here, the cost multiplier being a momentum map. The characterization of abnormal extremals in Section~\ref{abnormal_curvature_sec} is the specialization to a principal bundle of criteria that are classical \cite{AgrachevSachkov,JR16}; what is ours is the observation that the obstruction is exactly the curvature term by which the reduced PMP differs from the unreduced one.

The remainder of the paper is organized as follows. In Section~2 we introduce the geometric framework for the class of optimal control problems under consideration. We review the necessary notions of Pontryagin bundles, Dirac structures, and symmetries, and establish the notation used throughout the paper.

In Section~3 we study the reduction of the variational formulation of the Pontryagin Maximum Principle. We derive the reduced Pontryagin bundle, formulate the reduced variational principle, and obtain the corresponding first-order necessary conditions, which we refer to as the reduced Pontryagin Maximum Principle. The section closes with the four results that complete the reduction picture: reconstruction, independence of the connection, the relation between criticality and optimality, and the comparison with Lie algebroid reduction.

Section~4 approaches the reduction problem from the perspective of Dirac geometry. We construct the reduced Dirac structure associated with the Pontryagin bundle, derive its connection-dependent description, and relate it to the reduction of the canonical Dirac structure on the cotangent bundle. We show that the reduced Dirac formulation yields exactly the same reduced equations obtained from the variational approach.

We then illustrate the theory with two complementary examples. In Section~5 we treat the optimal control of a rigid spacecraft carrying three internal reaction wheels, in which the symmetry group $SO(3)$ is nonabelian but the connection is flat, so that the reduced equations are governed by the coadjoint term. In Section~6 we treat the sub-Riemannian geodesic problem on the Heisenberg group, in which the symmetry group is the abelian center and the coadjoint term vanishes, while the curvature of the natural connection is nowhere zero. Between them the two examples exercise both of the terms by which the reduced Pontryagin Maximum Principle differs from its unreduced counterpart.

Section~7 removes the regularity assumption made in Sections~2--6. It is here that the Dirac formulation does work that the presymplectic one does not: we give the constraint algorithm in Dirac form, prove that reduction commutes with it, and show that for control-linear systems the abnormal extremals occupy the leaf on which the base costate vanishes, the first secondary constraint being the degeneracy of the curvature pairing. The Martinet problem, in which the curvature vanishes on a hypersurface and the abnormal extremals are exactly the curves lying in it, is worked out as a third example, in both its abnormal and its normal leaf. The section closes with the strictness criterion, with an example, of structure group $\mathbb R^2$, in which strictness is decided only at the second stage of the algorithm, and with the reduced index form and the resulting trichotomy.

\section{Geometrical and Variational Aspects of Optimal Control Problems with Symmetry}
\label{ocpsec}

As discussed in the introduction, we restrict our attention to a class of geometric optimal control problems for which the underlying differential-geometric structures can be described in a transparent way. Throughout the paper, all manifolds are assumed to be smooth and without boundary, and all maps, vector fields, differential forms, and curves are assumed to be smooth.

For simplicity, we consider in Sections~\ref{ocpsec}--\ref{heisenberg} only regular optimal control problems, excluding abnormal extremals and singular control systems (see, for example, \cite{AgrachevSachkov,Jurdjevic}); the regularity assumption is removed in Section~\ref{singular}. Moreover, admissible controls are assumed to be smooth. These assumptions allow us to focus on the geometric aspects of the theory without addressing the additional analytical difficulties arising from nonsmooth controls or more general admissible classes. Most of the constructions presented here extend, with appropriate modifications, to more general settings, including systems defined on manifolds with boundary or singular optimal control problems.

We begin by recalling the geometric framework and fixing the notation used throughout the paper.


\subsection{A geometric formulation of optimal control problems}

We follow the standard geometric formulation of optimal control (see, for instance, \cite{AgrachevSachkov,Jurdjevic,BulloLewis}).

\begin{definition}[Control system]
A \emph{control system} is a triple $(P,C,\Gamma)$ consisting of:
\begin{enumerate}
\item[(i)] A smooth manifold $P$, called the \emph{state space}. Local coordinates on $P$ are denoted by $(x^i)$, $i=1,\ldots,n$, and a generic point by $x$.
\item[(ii)] A locally trivial fiber bundle $\pi:C\longrightarrow P,$ called the \emph{control bundle}. For each state $x\in P$, the fiber $C_x=\pi^{-1}(x)$ is the manifold of admissible controls at $x$. Adapted local coordinates on $C$ are denoted by $(x^i,u^a)$, where $a=1,\ldots,r$, and a generic point is written $(x,u)$.
\item[(iii)] A vector field along the projection $\pi:C\longrightarrow P,$ that is, a smooth map
$\Gamma:C\longrightarrow TP$ satisfying $\tau_P\circ\Gamma=\pi,$
where $\tau_P:TP\rightarrow P$ denotes the canonical projection. The map $\Gamma$ describes the controlled dynamics; in local coordinates, $\Gamma=f^i(x,u)\,\partial/\partial x^i$, where natural coordinates on $TP$ are denoted by $(x^i,v^i)$, or equivalently by $(x^i,\dot{x}^i)$. An admissible control is a smooth curve $c:I\longrightarrow C$, defined on an interval $I\subset\mathbb{R}$, with state trajectory $x=\pi\circ c:I\rightarrow P$. Writing locally $c(t)=\bigl(x(t),u(t)\bigr)$, the controlled dynamics is described by
\begin{equation}\label{cs}
\dot{x}^i(t)=f^i\bigl(x(t),u(t)\bigr),
\qquad i=1,\ldots,n.
\end{equation}
\end{enumerate}
\end{definition}
\begin{definition}[Optimal control problem]
\label{ocp}
An \emph{optimal control problem} (OCP) is a quadruple $(P,C,\Gamma,L)$ consisting of
\begin{enumerate}
\item[(i)] A control system $(P,C,\Gamma)$.
\item[(ii)] A smooth cost function $L:C\longrightarrow\mathbb{R}$,
called the \emph{running cost} (or \emph{Lagrangian}), which defines the objective functional
\begin{equation}
\label{of}
S(c)=\int_0^T L(c(t))\,dt
=\int_0^T L\bigl(x(t),u(t)\bigr)\,dt,
\end{equation}
where $c:[0,T]\longrightarrow C$ is an admissible control satisfying $\pi(c(t))=x(t),$ and the associated state trajectory $x(t)$ satisfies the controlled dynamics
\eqref{cs} together with the prescribed boundary conditions $x(0)=x_0$, $x(T)=x_T$.
The optimal control problem consists of finding admissible controls minimizing the functional \eqref{of}.
\end{enumerate}

\end{definition}


\subsection{The presymplectic formulation of optimal control problems}
\label{PMP}

The fundamental tool in optimal control theory is the Pontryagin Maximum Principle (PMP), which provides first-order necessary conditions for optimality. Throughout this paper we assume that optimal trajectories exist and, until Section~\ref{singular}, restrict our attention to regular optimal control problems.

The geometric formulation of the PMP naturally leads to the introduction of the following bundle.

\begin{definition}[Pontryagin bundle]
The \emph{Pontryagin bundle} associated with the control system $(P,C,\Gamma)$ is the fiber product $M=T^*P\times_PC,$
whose fiber over a point $x\in P$ is $M_x=T_x^*P\times C_x.$
Equivalently,
\[
M=\{(\alpha_x,c_x)\in T^*P\times C\mid
\pi_P(\alpha_x)=\pi(c_x)=x\}.
\]
\end{definition}

The Pontryagin bundle inherits natural local coordinates $(x^i,p_i,u^a)$, induced from the canonical coordinates $(x^i,p_i)$ on $T^*P$ and the adapted coordinates $(x^i,u^a)$ on the control bundle $C$.

We denote by $\mathrm{pr}_P:M\to T^*P$ and $\mathrm{pr}_C:M\to C$ the canonical projections onto the two factors,
\begin{equation}
\label{projections}
\mathrm{pr}_P(x,p,u)=(x,p),
\qquad
\mathrm{pr}_C(x,p,u)=(x,u).
\end{equation}
These maps fit into the commutative diagram
\[
\bfig
\xymatrix{
&M=T^*P\times_PC\ar[dl]_{\mathrm{pr}_P}\ar[dr]^{\mathrm{pr}_C}&\\
T^*P\ar[dr]_{\pi_P}&&C\ar[dl]^{\pi}\\
&P&
}
\efig
\]
The Pontryagin bundle carries a natural presymplectic form obtained by pulling back the canonical symplectic form $\omega_P$ on $T^*P$, namely
\begin{equation}
\label{presympl}
\Omega=\mathrm{pr}_P^{\,*}\omega_P
      =dx^i\wedge dp_i .
\end{equation}
The control vector field $\Gamma$ and the running cost $L$ determine the \emph{Pontryagin Hamiltonian}
\begin{equation}
\label{pont_ham}
H_P(x,p,u)
=\langle p,\Gamma(x,u)\rangle-L(x,u)
=p_if^i(x,u)-L(x,u).
\end{equation}

The presymplectic system $(M,\Omega,H_P)$ provides a geometric formulation of the Pontryagin Maximum Principle. Indeed, its integral curves are precisely the extremals of the optimal control problem $(P,C,\Gamma,L)$. The characteristic distribution of $\Omega$ is
\[
K=\ker\Omega=\ker(\mathrm{pr}_P)_*
=\operatorname{span}\left\{\frac{\partial}{\partial u^a}\; ;\; a=1,\ldots,r\right\},
\]
that is, the vertical bundle of the projection $\mathrm{pr}_P:M\to T^*P$.

The dynamics is determined by the presymplectic equation
\begin{equation}
\label{presym}
i_{X_P}\Omega=dH_P.
\end{equation}

Since $\Omega$ is degenerate, solutions of \eqref{presym} do not exist everywhere. A necessary condition for solvability \cite{Go79} is $\langle Z,dH_P\rangle=0, \forall\,Z\in K,$ which, in local coordinates, becomes
\begin{equation}
\label{constr}
\varphi_a=\frac{\partial H_P}{\partial u^a}=0,
\qquad
a=1,\ldots,r.
\end{equation}

The functions $\varphi_a$ are called the \emph{primary constraints}, and their common zero set
\begin{equation}
\label{M1}
M_1=\left\{(x,p,u)\in M\mid
\varphi_a(x,p,u)=0,\;
a=1,\ldots,r
\right\}
\end{equation}
is the \emph{primary constraint submanifold}. Under the regularity assumptions considered in this paper, $M_1$ is a smooth submanifold of codimension $r$.

Let $i_1:M_1\hookrightarrow M$ denote the canonical inclusion. We write
\begin{equation}
\label{Omega1}
\Omega_1=i_1^*\Omega,
\qquad
H_1=i_1^*H_P,
\end{equation}
for the induced presymplectic form and Hamiltonian on $M_1$, and by
$p_1=\mathrm{pr}_P\circ i_1:M_1\longrightarrow T^*P$ the restriction of $\mathrm{pr}_P$ to $M_1$. Then
\[
p_1^*\omega_P=\Omega_1,
\qquad
K_1:=\ker(p_1)_*=TM_1\cap K\ \subseteq\ \ker\Omega_1,
\]
with equality precisely at those points where $p_{1*}(TM_1)$ is a
symplectic subspace of $T(T^*P)$, hence in particular at every regular point
(see Proposition~\ref{regular_points} below). The inclusion may be strict at singular points: for $n=r=1$ and
$H_P=pu-\tfrac13u^3$ one has $M_1=\{p=u^2\}$ and
$\Omega_1=2u\,dx\wedge du$, so that at $u=0$ the radical
$\ker\Omega_1=TM_1$ is two-dimensional whereas $K_1=\mathrm{span}\{\partial_u\}$.

Where the differentials $d\varphi_1,\dots,d\varphi_r$ are linearly independent, so that $TM_1=\ker d\varphi$, a vector $C^a\,\partial/\partial u^a$ of $K$ is tangent to $M_1$ if and only if $C^aW_{ab}=0$, where
\[
W=W_{ab}\,du^a\otimes du^b,
\qquad
W_{ab}
=
\frac{\partial^2H_P}
{\partial u^a\partial u^b},
\]
is the symmetric bilinear form along the fibers of $\mathrm{pr}_P$; hence $K_1(\xi)\neq\{0\}$ if and only if $\det W_{ab}(\xi)=0$. We say that a point $\xi\in M_1$ is \emph{regular} if $W$ is nondegenerate at $\xi$, and \emph{singular} otherwise. The following characterization is proved in \cite{De03}.
\begin{proposition}
\label{regular_points}
Let $\xi\in M_1$ and assume that the differentials $d\varphi_1,\dots,d\varphi_r$ are linearly independent at $\xi$, so that near $\xi$ the set $M_1$ is an
embedded submanifold of $M$ of codimension $r$ with $TM_1=\ker d\varphi$ (this is
automatic at a regular point, since the $du$-block of the matrix of the $d\varphi_b$ is $W$). Then the following statements are equivalent:
(i) $\xi$ is a regular point;
(ii) $K_1(\xi)=\{0\}$, that is, $M_1$ is transverse at $\xi$ to the fibers of $\mathrm{pr}_P$;
(iii) $\ker\Omega_1(\xi)=\{0\}$;
(iv) $p_1$ is a local symplectomorphism between $(M_1,\Omega_1)$ and $(T^*P,\omega_P)$ in a neighborhood of $\xi$.
In that case the equation $i_X\Omega_1=dH_1$ has a unique local solution $X$ near $\xi$, whose integral curve through $\xi$ is the unique extremal of the optimal control problem through $\xi$.
\end{proposition}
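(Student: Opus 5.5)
The argument is linear algebra at the single point $\xi$, followed by the inverse function theorem. Throughout I use the hypothesis that the $d\varphi_b$ are independent, so $T_\xi M_1=\ker d\varphi(\xi)$. In the coordinates $(x,p,u)$ a tangent vector is $v=(\delta x,\delta p,\delta u)$. It lies in $T_\xi M_1$ if and only if
\[
\frac{\partial^2H_P}{\partial u^a\partial x^i}\delta x^i+\frac{\partial f^i}{\partial u^a}\delta p_i+W_{ab}\delta u^b=0 ,
\]
using $\partial\varphi_a/\partial p_i=\partial f^i/\partial u^a$.

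For (i)$\Leftrightarrow$(ii), vectors of $K$ have $\delta x=\delta p=0$. Such a vector is tangent to $M_1$ exactly when $W_{ab}\delta u^b=0$, which is the observation made before the statement. So $K_1(\xi)=0$ if and only if $W(\xi)$ is nondegenerate.

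For (ii)$\Rightarrow$(iv), $\dim M_1=2n+r-r=2n=\dim T^*P$. The kernel of $(p_1)_*$ at $\xi$ is $K_1(\xi)=0$, so $(p_1)_*$ is an isomorphism. By the inverse function theorem $p_1$ is a local diffeomorphism near $\xi$, and since $p_1^*\omega_P=\Omega_1$ it is a local symplectomorphism.

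For (iv)$\Rightarrow$(iii), a form pulled back by a local diffeomorphism from a symplectic form is nondegenerate.

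For (iii)$\Rightarrow$(ii), use the inclusion $K_1\subseteq\ker\Omega_1$ already noted.

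Hence (i)–(iv) are equivalent. This step is the main point to get right: one must avoid the naive converse "(iii) implies (i)" through some other route. The counterexample in the text shows that $\ker\Omega_1$ can exceed $K_1$, but the inclusion $K_1\subseteq\ker\Omega_1$ always holds, and that is the only direction needed.

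For the final assertion, on the neighbourhood where (iii) holds, $\Omega_1$ is symplectic. So $i_X\Omega_1=dH_1$ has a unique smooth solution $X$, which has a unique integral curve through $\xi$. To identify it with the extremals, I use (iv) and the implicit function theorem. Since $W$ is nondegenerate, $\varphi=0$ can be solved locally as $u=u(x,p)$, so $M_1$ is the graph of this section. Write $H(x,p)=H_P(x,p,u(x,p))$. Then $H_1=p_1^*H$, because $p_1$ is the inverse of the graph map. It follows that $X$ is $p_1$-related to the Hamiltonian vector field $X_H$ on $T^*P$.

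By the chain rule and $\varphi=0$, we have $\partial H/\partial p_i=f^i$ and $\partial H/\partial x^i=\partial H_P/\partial x^i$. So the integral curves of $X_H$, lifted by $u=u(x,p)$, are exactly the solutions of the PMP system
\[
\dot x=\partial H_P/\partial p,\qquad \dot p=-\partial H_P/\partial x,\qquad \partial H_P/\partial u=0 .
\]
These are the extremals, i.e. the integral curves of \eqref{presym} lying in $M_1$. Conversely, any extremal through $\xi$ stays in $M_1$ and satisfies $i_{\dot\gamma}\Omega=dH_P$ along $M_1$. Restricting this to $TM_1$ gives $i_{\dot\gamma}\Omega_1=dH_1$, so by uniqueness $\dot\gamma=X$.
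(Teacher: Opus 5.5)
Your proof is correct. The chain (i)$\Leftrightarrow$(ii)$\Rightarrow$(iv)$\Rightarrow$(iii)$\Rightarrow$(ii) is sound: (ii)$\Rightarrow$(iv) uses the inverse function theorem together with $\ker(p_1)_*=K_1$, and (iii)$\Rightarrow$(ii) uses the inclusion $K_1\subseteq\ker\Omega_1$; the graph description $u=u(x,p)$ then identifies $X$ with the lift of $X_H$. The paper does not prove the proposition itself but cites \cite{De03}, and your argument is the one assembled from the facts it records just before the statement ($K_1(\xi)=\ker W(\xi)$, $\ker(p_1)_*=K_1$, $p_1^*\omega_P=\Omega_1$ and $K_1\subseteq\ker\Omega_1$).
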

If $\xi$ is a singular point, then $K_1(\xi)\neq\{0\}$ and the presymplectic equation
$i_{X_P}\Omega_1=dH_1$ may fail to admit a local solution, so that there may exist no optimal trajectory passing through $\xi$. Extremals passing through singular points are called \emph{singular arcs}.

Throughout Sections~\ref{ocpsec}--\ref{heisenberg} we restrict our attention to \emph{regular} optimal control problems, namely those for which every point of $M_1$ is regular. The Dirac formulation developed below remains valid in the singular case, which is taken up in Section~\ref{singular}.

Since $\Omega=\mathrm{pr}_P^{\,*}\omega_P$, a solution $X_P$ of \eqref{presym} enters the equation only through its projection $T\mathrm{pr}_P\circ X_P$. The dynamics of the Pontryagin Maximum Principle is therefore naturally regarded as a vector field along the projection $\mathrm{pr}_P:M\longrightarrow T^*P$, defined on the locus $M_1$ on which \eqref{presym} is solvable, that is, as a smooth map $X:M_1\longrightarrow T(T^*P)$ satisfying $\tau_{T^*P}\circ X=\mathrm{pr}_P\circ i_1=p_1$; this is the form in which the dynamics enters Sections~\ref{regular_dirac} and~\ref{ldrpmp}.


\subsection{The Lagrange--Pontryagin principle}

We now describe the variational formulation underlying the presymplectic picture developed above.

Consider the optimal control problem $(P,C,\Gamma,L)$ and let $M=T^*P\times_PC$ be its Pontryagin bundle. The optimal control problem may be regarded as a constrained variational problem, where the functional \eqref{of} is minimized subject to the controlled dynamics \eqref{cs}. Under suitable regularity assumptions, the method of Lagrange multipliers shows that the extremals are precisely the critical points of the functional
$\mathbb{S}:\mathcal{C}_{x_0,x_T}(M)\longrightarrow\mathbb{R}$ defined by
\begin{equation}
\label{LPvariational}
\mathbb{S}(\gamma)
=
\int_0^T
\left(
L(x(t),u(t))
+
\langle p(t),
\dot x(t)-\Gamma(x(t),u(t))
\rangle
\right)\,dt,
\end{equation}
where $\mathcal{C}_{x_0,x_T}(M)$ denotes the space of smooth curves
$\gamma:[0,T]\longrightarrow M$, $\gamma(t)=\bigl(x(t),p(t),u(t)\bigr)$,
such that $x(0)=x_0$ and $x(T)=x_T$.

Although the objective of an optimal control problem is to minimize the functional \eqref{of}, in the present geometric framework we are interested in the extremals of $\mathbb{S}$, assuming that a minimum exists in the class of admissible curves.

Let $\xi\in T_\gamma\mathcal{C}_{x_0,x_T}(M)$ be a variation of the curve $\gamma$. A standard computation (see, for example, \cite{De04}) shows that
\[
d_\xi\mathbb{S}(\gamma)
=
\int_0^T
\left\langle
i_{\dot\gamma(t)}\Omega_{\gamma(t)}
-
dH_P(\gamma(t)),
\xi(t)
\right\rangle
dt.
\]

Thus, the differential of $\mathbb{S}$ is locally represented by the one-form along $\gamma$
\[
i_{\dot\gamma}\Omega-dH_P,
\]
which is precisely the geometric object defining the presymplectic dynamics. Consequently, a curve $\gamma$ is an extremal of $\mathbb{S}$ if and only if it satisfies
\begin{equation}
\label{excur}
i_{\dot\gamma(t)}
\Omega_{\gamma(t)}
=
dH_P(\gamma(t)),
\qquad
t\in[0,T].
\end{equation}

Following Yoshimura and Marsden \cite{Yo09}, a variational principle of this type is usually called the \emph{Hamilton--Pontryagin principle}. Here, however, we shall refer to \eqref{LPvariational} as the \emph{Lagrange--Pontryagin principle}, since it is naturally derived from the Lagrangian cost function and is defined on the Pontryagin bundle; rearranging the integrand as $\langle p,\dot x\rangle-H_P(x,p,u)$ gives the equivalent \emph{Hamilton--Pontryagin} form.
 

\subsection{Dirac structures and optimal control}

The presymplectic and variational formulations developed above admit a natural interpretation in terms of Dirac geometry. We briefly recall the basic definitions needed in the sequel; for further details we refer to \cite{CW88,Co90}.

\begin{definition}[Dirac structure on a vector bundle]
\label{Dirac_vector_bundle}
Let $E\to N$ be a vector bundle. A \emph{Dirac structure} on $E$ is a maximally isotropic subbundle
\[
D\subset E\oplus E^*
\]
with respect to the symmetric bilinear form
\[
\langle\!\langle (u,\alpha),(v,\beta)\rangle\!\rangle
=
\langle\alpha,v\rangle
+
\langle\beta,u\rangle,
\]
where $u,v\in E$ and $\alpha,\beta\in E^*$.
\end{definition}

\begin{definition}[Integrability]
\label{integrability}
A Dirac structure $D\subset TN\oplus T^*N$ on the tangent bundle of a smooth manifold $N$ is \emph{integrable} if its space of sections is closed under the Courant bracket \cite{Co90}
\[
[(X,\alpha),(Y,\beta)]
=
\left(
[X,Y],
\mathcal{L}_X\beta
-
\mathcal{L}_Y\alpha
+
\frac12\,d\bigl(\alpha(Y)-\beta(X)\bigr)
\right);
\]
an integrable Dirac structure on $TN$ is simply called a \emph{Dirac structure on the manifold} $N$.
\end{definition}

Two important classes of Dirac structures are provided by presymplectic and Poisson manifolds. In particular, if $\Omega$ is a closed $2$-form on a smooth manifold $N$, the graph of $\Omega$ defines the Dirac structure
\[
D_\Omega
=
\left\{
(X,\alpha)\in TN\oplus T^*N
\mid
\alpha=i_X\Omega
\right\},
\]
and conversely the integrability of $D_\Omega$ is equivalent to the closedness of $\Omega$. If $D$ is a Dirac structure on $N$ and $H\in C^\infty(N)$, the corresponding dynamics is defined by the vector fields $X$ satisfying $(X,dH)\in D$.

\begin{definition}[Canonical Dirac structure associated with an optimal control problem]
Let $(P,C,\Gamma,L)$ be an optimal control problem and let $M=T^*P\times_PC$ be its Pontryagin bundle endowed with the canonical presymplectic form $\Omega=\mathrm{pr}_P^{\,*}\omega_P$. The associated Dirac structure $D_\Omega\subset TM\oplus T^*M$ will be called the \emph{canonical Dirac structure associated with the optimal control problem}.
\end{definition}

With this definition, the presymplectic equation \eqref{presym} reads $(X_P,dH_P)\in D_\Omega$: the dynamics of the Pontryagin Maximum Principle is the Dirac system on the Pontryagin bundle determined by the canonical Dirac structure and the Pontryagin Hamiltonian, and its integral curves are precisely the extremals of the optimal control problem.

\subsection{Dirac structures for regular optimal control problems}
\label{regular_dirac}

For regular optimal control problems the previous Dirac formulation admits a more intrinsic description, which will play a fundamental role in the reduction process developed later. Let $D_P\subset T(T^*P)\oplus T^*(T^*P)$ denote the canonical Dirac structure determined by the canonical symplectic form $\omega_P$ on $T^*P$. Since the problem is regular, $p_1:M_1\to T^*P$ is a local symplectomorphism (Proposition~\ref{regular_points}) and the presymplectic equation admits a unique solution at every point of $M_1$, so that the dynamics of Section~\ref{PMP} is a vector field $X:M_1\to T(T^*P)$ along $p_1$. Likewise, since $p_1^*:T^*(T^*P)\to T^*M_1$ is an isomorphism of vector bundles, the differential of $H_1=i_1^*H_P$ may be regarded as a covector on $T^*P$ along $p_1$,
\[
dH_1
=
(p_1^*)^{-1}(i_1^*dH_P):M_1\longrightarrow T^*(T^*P),
\]
and we use the symbol $dH_1$ for both; the constraint manifold $M_1$ thus plays the role of the graph of the Legendre transformation in the implicit Lagrangian formalism of Yoshimura and Marsden \cite{Yo09}. The implicit Dirac formulation of the regular optimal control problem is
\[
(X,dH_1)\in D_P
\qquad
\text{on }M_1,
\]
both entries being taken along $p_1$. This formulation is completely equivalent to the presymplectic description obtained after imposing the primary constraints \eqref{constr}.

\subsection{Symmetries of optimal control problems}
\label{symmetric_ocp}

We now introduce the notion of symmetry for an optimal control problem. Let $(P,C,\Gamma,L)$ be an optimal control problem, and let $\pi:C\longrightarrow P$ be the corresponding control bundle.

Suppose that a Lie group $G$ acts on the right on the bundle $C$ by fiber-preserving transformations. We denote this action by
$\Psi:C\times G\longrightarrow C$, $(c,g)\longmapsto c\cdot g$,
and the induced action on the state manifold by
$\Phi:P\times G\longrightarrow P$, $(x,g)\longmapsto x\cdot g$;
writing $\Psi_g(c)=\Psi(c,g)$ and $\Phi_g(x)=\Phi(x,g)$, the fact that the action on $C$ is fibered means that
$\pi\circ\Psi_g=\Phi_g\circ\pi$ for all $g\in G$.

\begin{definition}\label{ocp_symmetry}
An optimal control problem $(P,C,\Gamma,L)$ is said to be \emph{$G$-invariant}, or equivalently to admit $G$ as a symmetry group, if the fibered action $(\Phi,\Psi)$ satisfies

\begin{enumerate}
\item[(i)] the running cost is invariant,
\[
L\circ\Psi_g=L,
\qquad
\forall\,g\in G;
\]

\item[(ii)] the controlled dynamics is equivariant,
\[
T\Phi_g\circ\Gamma
=
\Gamma\circ\Psi_g,
\qquad
\forall\,g\in G.
\]
\end{enumerate}
\end{definition}

The action of $G$ on $P$ induces, by tangent and cotangent lift, right actions on $TP$ and $T^*P$, namely $v\cdot g=T\Phi_g(v)$ for $v\in T_xP$ and $p\cdot g=(T\Phi_{g^{-1}})^*(p)$ for $p\in T_x^*P$. Consequently, $G$ acts naturally on the Pontryagin bundle
$M=T^*P\times_PC$ according to
\[
(x,p,u)\cdot g
=
(x\cdot g,\,
p\cdot g,\,
u\cdot g)
=
\left(
\Phi_g(x),
(T\Phi_{g^{-1}})^*p,
\Psi_g(u)
\right).
\]

The presymplectic formulation of the optimal control problem is naturally compatible with the action of the symmetry group.

\begin{proposition}
\label{invariance_prop}
Let $(P,C,\Gamma,L)$ be a $G$-invariant optimal control problem. Then the induced action of $G$ on the Pontryagin bundle $M=T^*P\times_PC$ preserves both the canonical presymplectic form
$\Omega=\mathrm{pr}_P^{\,*}\omega_P$ and the Pontryagin Hamiltonian $H_P$; in particular,
\begin{equation}
\label{Hinvar}
H_P\bigl((x,p,u)\cdot g\bigr)
=
H_P(x,p,u),
\qquad
\forall\,g\in G.
\end{equation}
Moreover, the action admits the equivariant momentum map
$J_M=\mathrm{pr}_P^{\,*}J_P:M\longrightarrow\mathfrak{g}^*$, $\langle J_M(x,p,u),\xi\rangle=\langle p,\xi_P(x)\rangle$,
where $J_P:T^*P\to\mathfrak g^*$ is the canonical momentum map of the cotangent-lifted action and $\xi_P$ the infinitesimal generator on $P$; it satisfies
$i_{\xi_M}\Omega=d\langle J_M,\xi\rangle$ for all $\xi\in\mathfrak{g}$, where $\xi_M$ denotes the infinitesimal generator of the action on $M$.
\end{proposition}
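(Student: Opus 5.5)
The plan is to exploit the fact that the action on $M$ covers, through $\mathrm{pr}_P$, the cotangent-lifted action on $T^*P$. Everything then reduces to standard properties of cotangent lifts, together with the two invariance conditions of Definition~\ref{ocp_symmetry}. Write $\Theta_g:M\to M$ for the action and $\hat\Phi_g=(T\Phi_{g^{-1}})^*:T^*P\to T^*P$ for the cotangent lift. By construction
\[
\mathrm{pr}_P\circ\Theta_g=\hat\Phi_g\circ\mathrm{pr}_P,\qquad \mathrm{pr}_C\circ\Theta_g=\Psi_g\circ\mathrm{pr}_C .
\]
Cotangent lifts preserve the Liouville one-form $\theta_P$, and hence $\omega_P=-d\theta_P$. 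Therefore
\[
\Theta_g^*\Omega=\Theta_g^*\mathrm{pr}_P^*\omega_P=\mathrm{pr}_P^*\hat\Phi_g^*\omega_P=\Omega .
\]
The invariance $\hat\Phi_g^*\theta_P=\theta_P$ can either be cited as standard or checked quickly from $\theta_P(\alpha)(V)=\langle\alpha,T\pi_P V\rangle$ and $\pi_P\circ\hat\Phi_g=\Phi_g\circ\pi_P$. I would record it in the form $\Theta_g^*(\mathrm{pr}_P^*\theta_P)=\mathrm{pr}_P^*\theta_P$, since this is needed again below.

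For the Hamiltonian we evaluate directly:
\[
H_P((x,p,u)\cdot g)=\langle (T\Phi_{g^{-1}})^*p,\Gamma(\Psi_g(x,u))\rangle-L(\Psi_g(x,u)).
\]
Equivariance (ii) gives $\Gamma\circ\Psi_g=T\Phi_g\circ\Gamma$, so the first term becomes
\[
\langle p,T\Phi_{g^{-1}}T\Phi_g\Gamma(x,u)\rangle=\langle p,\Gamma(x,u)\rangle .
\]
Invariance (i) handles the second term. This proves \eqref{Hinvar}.

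For the momentum map, first note that $\mathrm{pr}_P$ is $G$-equivariant. It follows that the infinitesimal generators are $\mathrm{pr}_P$-related: $T\mathrm{pr}_P(\xi_M)=\xi_{T^*P}\circ\mathrm{pr}_P$. Then
\[
i_{\xi_M}\Omega=i_{\xi_M}\mathrm{pr}_P^*\omega_P=\mathrm{pr}_P^*(i_{\xi_{T^*P}}\omega_P)=\mathrm{pr}_P^*d\langle J_P,\xi\rangle=d\langle J_M,\xi\rangle ,
\]
using the standard fact that $J_P$ with $\langle J_P(\alpha),\xi\rangle=\langle\alpha,\xi_P(\pi_P\alpha)\rangle$ is the momentum map of the cotangent lift. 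If one prefers not to cite this fact, it follows from Cartan's formula applied to $\mathcal L_{\xi}\theta_P=0$, together with $i_{\xi_{T^*P}}\theta_P=\langle J_P,\xi\rangle$.

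Equivariance of $J_M$ reduces to equivariance of $J_P$, namely $J_P(\hat\Phi_g\alpha)=\mathrm{Ad}^*_{g}J_P(\alpha)$ with the convention appropriate to right actions. This follows from $\xi_P(x\cdot g)=T\Phi_g\bigl((\mathrm{Ad}_g\xi)_P(x)\bigr)$, which for a right action is the identity $\Phi_g\circ\Phi_{\exp t\xi}=\Phi_{\exp(t\,\mathrm{Ad}_g\xi)}\circ\Phi_g$ differentiated at $t=0$. The only delicate point is keeping the right-action sign and $\mathrm{Ad}$ conventions consistent. I would fix them once, by computing with $g^{-1}\exp(t\xi)g$, and check that the resulting coadjoint action is a right action on $\mathfrak g^*$. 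Apart from this bookkeeping, the argument is routine.
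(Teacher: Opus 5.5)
Your proposal is correct and follows the same route as the paper's proof: cotangent lifts are symplectic and $\mathrm{pr}_P$ is equivariant, $H_P$ is evaluated directly using (i) and (ii), and $J_M$ is the pullback of $J_P$; you simply write out details the paper leaves implicit, such as $i_{\xi_M}\Omega=d\langle J_M,\xi\rangle$ via $\mathrm{pr}_P$-related generators. There is one slip in the bookkeeping you flagged: for a right action the identity to differentiate is $\Phi_{\exp t\xi}\circ\Phi_g=\Phi_g\circ\Phi_{\exp(t\operatorname{Ad}_g\xi)}$, since the order you wrote holds for left actions, although the infinitesimal identity $\xi_P(x\cdot g)=T\Phi_g\bigl((\operatorname{Ad}_g\xi)_P(x)\bigr)$ and the conclusion $J_P(\alpha\cdot g)=\operatorname{Ad}^*_gJ_P(\alpha)$ that you state are right.
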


\begin{proof}
Cotangent lifts are symplectic, so $\Omega=\mathrm{pr}_P^{\,*}\omega_P$ is $G$-invariant, and $J_M$ is the pull-back of the canonical momentum map, with which it shares its equivariance. The invariance of $H_P$ follows from the equivariance of $\Gamma$ and the invariance of $L$:
$H_P((x,p,u)\cdot g)=\langle (T\Phi_{g^{-1}})^*p,\,T\Phi_g\Gamma(x,u)\rangle-L(x,u)=H_P(x,p,u)$.
\end{proof}

Throughout the paper we shall assume that the quotient space $P/G$ is a smooth manifold and that the canonical projection $\rho:P\longrightarrow P/G$ is a surjective submersion. Whenever explicit local expressions are required, we shall further assume that the action of $G$ on $P$ is free and proper, so that
$\rho$ is a principal $G$-bundle.

For the same reason, we assume that the control bundle is obtained from an associated bundle. More precisely, let $U$ be a smooth manifold endowed with a left action
$\Sigma:G\times U\longrightarrow U$. The associated bundle over the quotient is
$\widetilde U=P\times_GU=(P\times U)/G$,
and the control bundle is assumed to be its pull-back to $P$, $C\simeq\rho^*\widetilde U$.
Whenever no confusion is possible, we shall use the same notation $\widetilde U$ for its pull-back to $P$ or to $T^*P$. Notice that, as bundles over $P$, $\rho^*\widetilde U\simeq P\times U$,
although the notation $\widetilde U$ is convenient for emphasizing its origin as the pull-back of the associated bundle over $P/G$.


\section{Reduction of the Lagrange--Pontryagin principle}
\label{lp_reduction}

Let $(P,C,\Gamma,L)$ be a $G$-invariant optimal control problem. Since the running cost $L$ is $G$-invariant and the controlled dynamics $\Gamma$ is $G$-equivariant, the Lagrange--Pontryagin functional
$\mathbb{S}:\mathcal{C}(M)\longrightarrow\mathbb{R}$
defined in \eqref{LPvariational} is invariant under the natural action of $G$ on the space of smooth curves in the Pontryagin bundle,
$(\gamma\cdot g)(t)=\bigl(x(t)\cdot g,\;p(t)\cdot g,\;u(t)\cdot g\bigr)$,
that is, $\mathbb{S}(\gamma\cdot g)=\mathbb{S}(\gamma)$ for all $\gamma\in\mathcal{C}(M)$ and $g\in G$.

Two endpoint conditions must be distinguished here, because the space of
curves with genuinely fixed endpoints $x(0)=x_0$, $x(T)=x_T$ is \emph{not}
preserved by the $G$-action: for a free action a nontrivial group element fixes
no endpoint. The only formulation on which $G$ acts is the
\emph{orbit-endpoint} problem, in which $x(0)$ and $x(T)$ are required only to
lie in the orbits $G\!\cdot x_0$ and $G\!\cdot x_T$; there $\mathbb S$ descends
to the quotient of the corresponding curve space. Its endpoint variations are
vertical, so that $\delta z(0)=\delta z(T)=0$ automatically, while the reduced
vertical variations $\eta(0)$ and $\eta(T)$ of Section~\ref{rlpp} remain free
and produce the transversality conditions $\mu(0)=\mu(T)=0$.

The fixed-endpoint problem is treated by reduction as follows. Its admissible
curves are exactly the curves obtained by reconstructing a reduced curve from
the fixed initial lift $x(0)=x_0$ and imposing on it the terminal holonomy
condition $x(T)=x_T$, and its variations are those with
$\delta z(0)=\delta z(T)=0$ \emph{and} $\eta(0)=\eta(T)=0$. The two problems
therefore have the same interior equations, namely the reduced
Lagrange--Pontryagin equations \eqref{RHP} derived below, and differ only in
their boundary conditions: no condition on $\mu$ at the endpoints in the fixed
case, and $\mu(0)=\mu(T)=0$ in the orbit case. Except where stated otherwise we
work with the fixed-endpoint problem, and all the endpoint conditions written
below, in Section~\ref{rlpp} and in Proposition~\ref{criticality_optimality},
are its. Strictly speaking, then, it is the interior variational problem, and
not the fixed-endpoint functional itself, that is $G$-invariant; this is what is
used in what follows.

On the unconstrained curve space $\mathcal C(M)$, and likewise on the
orbit-endpoint curve space, the functional $\mathbb S$ therefore descends to the quotient $\mathcal{C}(M)/G$. We denote the induced functional by $[\mathbb{S}]_G$ and refer to it as the \emph{reduced Lagrange--Pontryagin functional}. Since the quotient map $\mathcal{C}(M)\rightarrow\mathcal{C}(M)/G$ identifies curves belonging to the same $G$-orbit, the critical points of $[\mathbb{S}]_G$ are in one-to-one correspondence with $G$-equivalence classes of extremals of $\mathbb{S}$. For the fixed-endpoint problem the corresponding statement is the one supplied by Theorem~\ref{reconstruction_theorem}: reduced solutions correspond to extremals through a prescribed initial point $x_0$, and the terminal condition $x(T)=x_T$ is imposed afterwards.

Our goal is to characterize these critical points in terms of the geometry of the reduced Pontryagin bundle. The corresponding Euler--Lagrange equations on the quotient space will be called the \emph{reduced Lagrange--Pontryagin equations}. Since they represent the reduction of the first-order necessary conditions of the Pontryagin Maximum Principle, we shall also refer to them simply as the \emph{reduced Pontryagin Maximum Principle}.


\subsection{Canonical reduced bundles}
\label{reduced_bundles}

We begin by introducing several canonical bundles naturally associated with the symmetry reduction, which will be used throughout the remainder of this section.

Unless otherwise stated, we assume that the right action of the Lie group $G$ on $P$ is free and proper. Hence the quotient
$B=P/G$
is a smooth manifold and the canonical projection
$\rho:P\rightarrow B$
defines a principal $G$-bundle. The orbit through a point $x\in P$ will be denoted either by $x\!\cdot G$ or by
$[x]=\{xg\mid g\in G\}$,
and we shall write $z=[x]\in B$. Local coordinates on $B=P/G$ will be denoted by $(z^\alpha)$, $\alpha=1,\ldots,\dim B$.


\subsubsection{The adjoint and coadjoint bundles}

Associated with the principal $G$-bundle $\rho:P\rightarrow B$ are the adjoint and coadjoint vector bundles, corresponding to the adjoint and coadjoint representations of $G$ on its Lie algebra $\mathfrak g$ and its dual $\mathfrak g^*$, respectively.

The adjoint bundle is the associated vector bundle
$\operatorname{Ad}P=P\times_G\mathfrak g=(P\times\mathfrak g)/G$,
where the quotient is taken with respect to the left action
\begin{equation}
\label{adjoint_action}
g\cdot(x,\xi)=\bigl(xg^{-1},\operatorname{Ad}_g\xi\bigr).
\end{equation}
Following the notation introduced in Section~\ref{symmetric_ocp}, we shall denote this bundle by $\widetilde{\mathfrak g}$. An element of $\widetilde{\mathfrak g}$ will be written simply as $(z,\xi)$, where $z=[x]\in B$.

Similarly, the coadjoint bundle is the associated vector bundle $\operatorname{Ad}^*P=P\times_G\mathfrak g^*$, where the quotient is defined by the left action
\begin{equation}
\label{coadjoint_action}
g\cdot(x,\mu)=\bigl(xg^{-1},\operatorname{Ad}_{g^{-1}}^*\mu\bigr).
\end{equation}
Since the coadjoint representation is the dual of the adjoint representation, $\operatorname{Ad}^*P$ is naturally the dual vector bundle of $\operatorname{Ad}P$. Accordingly, we shall denote it by $\widetilde{\mathfrak g}^{\,*}$; equivalently,
\[
\widetilde{\mathfrak g}^{\,*}\cong\widetilde{\mathfrak g^*}.
\]

\subsubsection{The principal bundles $TP$ and $T^*P$}

Since the action of $G$ on $P$ is free and proper, its tangent lift $(x,v_x)\cdot g=(xg,T\Phi_g(v_x))$ and its cotangent lift $(x,p_x)\cdot g=\bigl(xg,(T\Phi_{g^{-1}})^*p_x\bigr)$ are free and proper right actions, so that $TP\rightarrow TP/G$ and $T^*P\rightarrow T^*P/G$ are again principal $G$-bundles, and the tangent map $T\rho:TP\rightarrow T(P/G)$ covers $\rho$. The tangent group $TG$ is identified, by left trivialization $(g,v_g)\mapsto(g,TL_{g^{-1}}v_g)$, with the semidirect product $G\ltimes\mathfrak g$, whose multiplication is $(g,\xi)(h,\zeta)=(gh,\operatorname{Ad}_{h^{-1}}\xi+\zeta)$, and the induced right action of $TG$ on $TP$ reads
\begin{equation}
\label{tangent_action}
(x,v_x)\cdot(g,\xi)
=
\bigl(xg,\,
T\Phi_g(v_x)
+
\xi_P(xg)
\bigr),
\end{equation}
where $\xi_P$ denotes the fundamental vector field on $P$ generated by $\xi\in\mathfrak g$: for any curve with $g(0)=g$ and $TL_{g^{-1}}\dot g(0)=\xi$ one has
$\frac{d}{dt}\big|_{t=0}x\,g(t)=\frac{d}{dt}\big|_{t=0}(xg)\exp(t\xi)=\xi_P(xg)$.
With this convention \eqref{tangent_action} is a right action of the group above, since $T\Phi_h\bigl(\xi_P(xg)\bigr)=(\operatorname{Ad}_{h^{-1}}\xi)_P(xgh)$.

\subsubsection{The reduced tangent and cotangent bundles}
\label{atiyah_bundle}
\label{weinstein_space}

The quotient $TP/G$ is the middle term of the \emph{Atiyah exact sequence} of vector bundles over $B=P/G$,
\begin{equation}
\label{atiyah_sequence}
0\longrightarrow\widetilde{\mathfrak g}\longrightarrow TP/G\stackrel{T\rho}{\longrightarrow}TB\longrightarrow0 ,
\end{equation}
and is therefore canonically an affine bundle over $T(P/G)$ modeled on the pullback $\tau_{P/G}^*(\widetilde{\mathfrak g})$ of the adjoint bundle, where $\tau_{P/G}:T(P/G)\rightarrow P/G$ is the tangent bundle projection: two classes $[x,v_x]$ and $[x,w_x]$ lying over the same $v_z\in T_zB$ differ by the vertical class $[x,v_x-w_x]\in\widetilde{\mathfrak g}_z$, but no class over $v_z$ is distinguished. Its identification with $T(P/G)\oplus\widetilde{\mathfrak g}$ requires a splitting of \eqref{atiyah_sequence}, that is, a principal connection on $\rho:P\rightarrow P/G$, and is therefore not canonical. As before, we shall use the same notation $\widetilde{\mathfrak g}$ for the adjoint bundle and its pullback whenever no confusion is likely to arise.

Let $A$ be a principal connection on $P$, and consider the map
\[
\alpha_A:TP\longrightarrow T(P/G)\times(P\times\mathfrak g),
\qquad
\alpha_A(x,v_x)=\bigl(T\rho(v_x),\,x,\,A_x(v_x)\bigr).
\]
The group $G$ acts trivially on the factor $T(P/G)$ and on $P\times\mathfrak g$ by the adjoint action \eqref{adjoint_action}. Since $\alpha_A$ is $G$-equivariant, it descends to the quotient, giving rise to the vector bundle isomorphism
\begin{equation}
\label{tangent_quotient}
[\alpha_A]:
TP/G
\longrightarrow
T(P/G)\oplus\widetilde{\mathfrak g},
\qquad
[\alpha_A]\bigl([x,v_x]\bigr)
=
\bigl(T\rho(v_x),[x,A_x(v_x)]\bigr),
\end{equation}
under which the class $[x,v_x]$ is represented by the triple $(z,v_z,\zeta)$, where $z=[x]\in P/G$, $v_z=T\rho(v_x)\in T_z(P/G)$ and $\zeta=[x,A_x(v_x)]\in\widetilde{\mathfrak g}_z$.

By duality, the connection identifies the quotient $T^*P/G$, the \emph{Weinstein space} \cite{We78}, with $T^*(P/G)\oplus\widetilde{\mathfrak g}^{\,*}$, which is a vector bundle over $T^*(P/G)$, namely the pullback of the coadjoint bundle $\widetilde{\mathfrak g}^{\,*}$ by the cotangent bundle projection $\pi_{P/G}:T^*(P/G)\rightarrow P/G$; as before, we shall use the notation $T^*(P/G)\oplus\widetilde{\mathfrak g}^{\,*}$ for this bundle whenever no confusion may arise. Explicitly, let
$H_x=\ker A_x\subset T_xP$
be the horizontal subspace determined by the connection. Since the restriction of
$T\rho_x:H_x\rightarrow T_{[x]}(P/G)$
is an isomorphism, it induces an identification
$H_x^*\cong T_{[x]}^*(P/G)$, and the decomposition $T_xP=H_x\oplus V_x$, where $V_x$ denotes the vertical subspace, defines a projection
$\widetilde{\pi}_G:T^*P\longrightarrow T^*(P/G)$, $\widetilde{\pi}_G(x,p_x)=(z,p_z)$,
obtained by restricting covectors to $H_x$ and identifying them with covectors on $T_{[x]}(P/G)$.
The momentum map $J:T^*P\longrightarrow\mathfrak g^*$
associated with the cotangent lift of the $G$-action on $P$ then provides the second component of the identification. More precisely, consider the map
\[
\beta_J:T^*P\longrightarrow T^*(P/G)\times(P\times\mathfrak g^*),
\qquad
\beta_J(x,p_x)=\bigl(z,p_z;\,x,J(x,p_x)\bigr).
\]
Since $\beta_J$ is $G$-equivariant, where $G$ acts on $P\times\mathfrak g^*$ by the coadjoint action \eqref{coadjoint_action}, it descends to the quotient, yielding the vector bundle isomorphism
\begin{equation}
\label{cotangent_quotient}
[\beta_J]:
T^*P/G
\longrightarrow
T^*(P/G)\oplus\widetilde{\mathfrak g}^{\,*},
\qquad
[\beta_J]\bigl([x,p_x]\bigr)
=
\bigl(z,p_z;\,[x,J(x,p_x)]\bigr).
\end{equation}

Accordingly, an element of $T^*P/G$ will be represented by a triple
$(z,p_z,\mu)$, where
$(z,p_z)\in T^*(P/G)$ and
$\mu\in\widetilde{\mathfrak g}^{\,*}_z$.
\subsubsection{The reduced control bundle}

If $G$ is a symmetry group of the optimal control problem $(P,C,\Gamma,L)$, the quotient
$\widetilde C=C/G$ is a bundle over the orbit space $P/G$ with projection
$\widetilde\pi([x,u])=[x]$. As explained in Section~\ref{symmetric_ocp}, we assume that the control bundle
$\pi:C\rightarrow P$ is the pullback of an associated bundle
$\widetilde U=P\times_GU\rightarrow P/G$, where $G$ acts on the typical fiber $U$, so that $C=\rho^*(\widetilde U)\cong P\times U$ (a trivialization which is not $G$-equivariant), and passing to the quotient yields the associated bundle
\begin{equation}
\label{control_quotient}
C/G\cong\widetilde U,
\qquad
[x,u]\longmapsto[x,u],
\end{equation}
which will be regarded as the reduced control bundle over $P/G$. The right-hand side is the class of $(x,u)$ in $P\times_GU$ and not the pair $([x],u)$, which for a nontrivial action on $U$ is defined only once a representative $x$ of $z=[x]$ has been chosen. We write $\widetilde u\in\widetilde U_z$ for the reduced control, and revert to the notation $u$ only in a local trivialization of $\widetilde U$ or when the action on $U$ is trivial.

Unless the contrary is stated, $U$ is only a smooth $G$-manifold, so that
$\widetilde U$ is a fiber bundle over $B=P/G$ and not a vector bundle. From
Section~\ref{drec} onwards, and in particular in Sections~\ref{strictness} and~\ref{index_form}, we assume in addition
that $U$ is a vector space carrying a linear $G$-action, in which case
$\widetilde U$ is an associated vector bundle.


\subsubsection{The reduced Pontryagin bundle}
\label{MG}

Since the action of $G$ on the Pontryagin bundle $M=T^*P\times_PC$ covers the action on $P$, the quotient $M/G$ is naturally a bundle over $P/G$, and combining the identifications
\eqref{cotangent_quotient} and \eqref{control_quotient} we obtain the
isomorphism of fiber bundles over $B=P/G$
\begin{equation}
\label{reduced_pontryagin}
M/G
\cong
\bigl(T^*(P/G)\oplus\widetilde{\mathfrak g}^{\,*}\bigr)\times_B\widetilde U .
\end{equation}
The fiber product $\times_B$ reduces to the Whitney sum $T^*(P/G)\oplus\widetilde{\mathfrak g}^{\,*}\oplus\widetilde U$ when $U$ is a vector space with a linear $G$-action, and we write the right-hand side of \eqref{reduced_pontryagin} additively throughout, since that hypothesis is in force from Section~\ref{drec} onwards; in Section~\ref{rlpp}, where it is not yet needed, the additive notation is to be read as the fiber product, no operation being performed on the third factor. An element of the reduced Pontryagin bundle will be represented by a quadruple
$(z,p_z,\mu,\widetilde u)$,
where $(z,p_z)\in T^*(P/G)$, $\mu\in\widetilde{\mathfrak g}^{\,*}_z$, and $\widetilde u\in\widetilde U_z$.



\subsection{The reduced Lagrange--Pontryagin principle}
\label{rlpp}

We are now in a position to derive the reduced Lagrange--Pontryagin variational principle.

Since the running cost $L$ is $G$-invariant, it descends to a smooth function
$l:\widetilde U=C/G\rightarrow\mathbb R$, given by
$l(z,u)=L(x,u)$, where $z=[x]\in P/G$. This definition is independent of the chosen representative because of the $G$-invariance of $L$.

Similarly, the equivariance of the control vector field
$\Gamma:C\rightarrow TP$
induces a bundle map
$[\Gamma]:\widetilde U\rightarrow TP/G$.
After choosing a principal connection and using the identification
$TP/G\cong T(P/G)\oplus\widetilde{\mathfrak g}$,
the reduced dynamics decomposes as
\[
[\Gamma](z,u)=\bigl(\widetilde\Gamma(z,u),\widetilde\gamma(z,u)\bigr),
\]
where $\widetilde\Gamma:\widetilde U\rightarrow T(P/G)$ is a vector field along the bundle projection and $\widetilde\gamma:\widetilde U\rightarrow\widetilde{\mathfrak g}$ is a section of the adjoint bundle. In terms of the connection,
\[
\widetilde\gamma(z,u)=[x,A(\Gamma(x,u))],
\]
which is well defined by the equivariance of the connection form.

Using the decompositions
$TP/G\cong T(P/G)\oplus\widetilde{\mathfrak g}$ and
$T^*P/G\cong T^*(P/G)\oplus\widetilde{\mathfrak g}^{\,*}$,
the density of the Lagrange--Pontryagin functional becomes
\begin{equation}
\label{red_lag}
\mathbb L=
l(z,u)
+\langle p_z,\dot z-\widetilde\Gamma(z,u)\rangle
+\langle\mu,\xi-\widetilde\gamma(z,u)\rangle,
\end{equation}
where $\dot z=T\rho(\dot x)$, $\xi=A(\dot x)$ and
$p_x\leftrightarrow(p_z,\mu)$ under the above identification.

The reduced Lagrange--Pontryagin principle is formulated in terms of a curve
$\bigl(z(t),p_z(t),\mu(t),\widetilde u(t)\bigr)$
in the reduced Pontryagin bundle
$M/G\cong T^*(P/G)\oplus\widetilde{\mathfrak g}^{\,*}\oplus\widetilde U$,
together with the reduced velocity
$(\dot z(t),\xi(t))\in T(P/G)\oplus\widetilde{\mathfrak g}$,
where $\xi$ is the vertical component of the velocity determined by the
chosen principal connection. The reduced Lagrange--Pontryagin functional is therefore
\begin{equation}
\label{red_funct}
S_G
=
\int_0^T
\left[
l(z,u)
+
\left\langle p_z,
\dot z-\widetilde\Gamma(z,u)
\right\rangle
+
\left\langle\mu,
\xi-\widetilde\gamma(z,u)
\right\rangle
\right]dt.
\end{equation}
Equivalently,
\begin{equation}
\label{red_funct_hamiltonian}
S_G
=
\int_0^T
\left[
\langle p_z,\dot z\rangle
+
\langle\mu,\xi\rangle
-
h(z,p_z,\mu,u)
\right]dt,
\end{equation}
where the reduced Pontryagin Hamiltonian is
\begin{equation}
\label{reduced_pontryagin_hamiltonian}
h(z,p_z,\mu,u)
=
\langle p_z,\widetilde\Gamma(z,u)\rangle
+
\langle\mu,\widetilde\gamma(z,u)\rangle
-
l(z,u).
\end{equation}

The variations of the base variables satisfy $\delta z(0)=\delta z(T)=0$,
while the variations $\delta p_z$, $\delta\mu$ and $\delta u$ are arbitrary.
The variation of the reduced vertical velocity $\xi$ is constrained by the
Lagrange--Poincaré reduction \cite{CMR01}: if
$\eta(t)\in\widetilde{\mathfrak g}_{z(t)}$
denotes the reduced vertical variation, with $\eta(0)=\eta(T)=0$, then
\begin{equation}
\label{reduced_vertical_variation}
\delta\xi
=
\frac{D\eta}{Dt}
+
[\eta,\xi]
+
F_A(\delta z,\dot z),
\end{equation}
where $F_A$ denotes the curvature of the chosen principal connection $A$.

Two covariant derivatives appear in the computation that follows, and we fix
them now. The principal connection $A$ induces a linear connection $\nabla^A$
on the adjoint bundle $\widetilde{\mathfrak g}$ and, by duality, on
$\widetilde{\mathfrak g}^{\,*}$; the symbols $D\eta/Dt$ and $D\mu/Dt$ denote
the corresponding covariant derivatives along $z(\cdot)$. In a local
trivialization over a chart $(z^\alpha)$ on $B$, with $\{e_i\}$ a local frame
of $\widetilde{\mathfrak g}$ and
$\nabla^A_{\partial_\alpha}e_i=\Gamma^k_{\alpha i}\,e_k$,
\[
\Bigl(\frac{D\eta}{Dt}\Bigr)^{\!k}
=\dot\eta^k+\Gamma^k_{\alpha i}\,\dot z^\alpha\eta^i,
\qquad
\Bigl(\frac{D\mu}{Dt}\Bigr)_{\!i}
=\dot\mu_i-\Gamma^k_{\alpha i}\,\mu_k\dot z^\alpha .
\]
The connection $A$ does not, however, induce a connection on the base
$B=P/G$, and the symbols $Dp_z/Dt$, $\delta\dot z$ and $\partial h/\partial z$
below require one. We therefore fix, once and for all, an auxiliary
torsion-free linear connection $\nabla$ on $B$ (any one will do, for instance
the Levi-Civita connection of a Riemannian metric; in
Sections~\ref{strictness} and~\ref{index_form} it will be the Levi-Civita
connection of the reduced metric), and we understand: $Dp_z/Dt$ as the
covariant derivative of $p_z(\cdot)$ along $z(\cdot)$ with respect to the
connection induced by $\nabla$ on $T^*B$; $\delta\dot z$ as the covariant
variation $D(\delta z)/Dt$, which agrees with the covariant
$\varepsilon$-derivative of $\dot z$ because $\nabla$ is torsion-free; and
$\partial h/\partial z$ as the covariant partial derivative of $h$, namely the
derivative of $h$ along $\nabla\oplus\nabla^A$-parallel translation of the
fiber variables $(p_z,\mu)$, the control $u$ being held fixed in a local
trivialization of $\widetilde U$. The fiber derivatives
$\partial h/\partial p_z$, $\partial h/\partial\mu$ and
$\partial h/\partial u$ require no connection. This is the same device as in
the Hamilton--Poincaré equations of Cendra, Marsden, Pekarsky and Ratiu
\cite{CMPR03}. The choice of $\nabla$ affects the two symbols separately but
not the equations: under a change $\nabla\mapsto\nabla+S$, with $S$ a
symmetric $(1,2)$-tensor field, $Dp_z/Dt$ changes by
$-\langle p_z,S(\dot z,\cdot)\rangle$ and $\partial h/\partial z$ by
$+\langle p_z,S(\cdot\,,\partial h/\partial p_z)\rangle$, and the two changes
cancel in the reduced equations \eqref{RHP} below by the symmetry of $S$,
along the first equation $\dot z=\partial h/\partial p_z$.


\subsection{The reduced Lagrange--Pontryagin equations}
\label{pmpreduc}

We now derive the reduced Pontryagin Maximum Principle directly from the
reduced variational principle. Taking the variation of
\eqref{red_funct_hamiltonian}, we obtain
\begin{align*}
\delta S_G
=
\int_0^T
\Bigg[
&
\langle\delta p_z,\dot z\rangle
+
\langle p_z,\delta\dot z\rangle
+
\langle\delta\mu,\xi\rangle
+
\langle\mu,\delta\xi\rangle
\\
&
-
\left\langle
\frac{\partial h}{\partial z},
\delta z
\right\rangle
-
\left\langle
\frac{\partial h}{\partial p_z},
\delta p_z
\right\rangle
-
\left\langle
\frac{\partial h}{\partial\mu},
\delta\mu
\right\rangle
-
\left\langle
\frac{\partial h}{\partial u},
\delta u
\right\rangle
\Bigg]dt.
\end{align*}

We substitute \eqref{reduced_vertical_variation} for $\delta\xi$, integrate by parts the terms containing $\delta\dot z$ and $D\eta/Dt$, which is legitimate since $\delta z$ and $\eta$ vanish at the endpoints, and use
$\langle\mu,[\eta,\xi]\rangle=-\langle\operatorname{ad}_{\xi}^{*}\mu,\eta\rangle$,
which is our convention for the coadjoint action. Collecting all terms, we obtain
\begin{align*}
\delta S_G
=
\int_0^T
\Bigg[
&
\left\langle
\dot z-\frac{\partial h}{\partial p_z},
\delta p_z
\right\rangle
+
\left\langle
\xi-\frac{\partial h}{\partial\mu},
\delta\mu
\right\rangle
\\
&
-
\left\langle
\frac{Dp_z}{Dt}
+
\frac{\partial h}{\partial z},
\delta z
\right\rangle
+
\langle\mu,F_A(\delta z,\dot z)\rangle
\\
&
-
\left\langle
\frac{D\mu}{Dt}
+
\operatorname{ad}_{\xi}^{*}\mu,
\eta
\right\rangle
-
\left\langle
\frac{\partial h}{\partial u},
\delta u
\right\rangle
\Bigg]dt.
\end{align*}

Since $\delta z$, $\delta p_z$, $\delta\mu$, $\delta u$ and $\eta$ are
arbitrary, the stationarity condition $\delta S_G=0$ yields
\begin{equation}
\label{RHP}
\left\{
\begin{aligned}
\dot z
&=
\frac{\partial h}{\partial p_z},
\\[0.6em]
\frac{Dp_z}{Dt}
&=
-\frac{\partial h}{\partial z}
+
\left\langle
\mu,F_A(\,\cdot\,,\dot z)
\right\rangle,
\\[0.6em]
\frac{D\mu}{Dt}
&=
-\operatorname{ad}_{\xi}^{*}\mu,
\\[0.6em]
\xi
&=
\frac{\partial h}{\partial\mu},
\\[0.6em]
0
&=
\frac{\partial h}{\partial u}.
\end{aligned}
\right.
\end{equation}

These equations constitute the reduced first-order necessary conditions of
the Pontryagin Maximum Principle; by the antisymmetry of the curvature two-form, the curvature term in the second equation may equally be written as $-\langle\mu,F_A(\dot z,\,\cdot\,)\rangle$.

\begin{remark}[the left-action convention]
\label{left_action_variational}
Suppose that $G$ acts on $P$ on the left. The connection is then normalized by
$A(\zeta_P)=\zeta$ with
$\zeta_P(x)=\frac{d}{dt}\big|_{t=0}\exp(t\zeta)\cdot x$, the adjoint bundle
carries the equivalence $(x,\zeta)\sim(g\cdot x,\operatorname{Ad}_g\zeta)$, the
induced covariant derivative on $\widetilde{\mathfrak g}$ reads
$D/Dt=\partial_t-\operatorname{ad}_\xi$, and the structure equation becomes
$F_A=dA-\tfrac12[A,A]$, since the orbit map now pulls $A$ back to the right,
rather than the left, Maurer--Cartan form. With these conventions the
constrained variations \eqref{reduced_vertical_variation} read
$\delta\xi=D\eta/Dt+[\xi,\eta]+F_A(\delta z,\dot z)$, and the reduced Lagrange--Pontryagin equations are \eqref{RHP} with the sign of the coadjoint equation reversed,
\begin{equation}
\label{RHP_left}
\frac{D\mu}{Dt}
=
+\operatorname{ad}_{\xi}^{*}\mu ;
\end{equation}
the other four equations, and in particular the curvature term, are unaffected,
since the latter arises from $\int_0^T\langle\mu,F_A(\delta z,\dot z)\rangle\,dt$ in
either convention. This is the form taken by the reduced
Pontryagin Maximum Principle in body variables, and is consistent with the
implicit Lie--Poisson equations of Yoshimura and Marsden
\cite[Prop.~6.5]{Yo07a}.
\end{remark}


\subsection{Reconstruction}
\label{reconstruction}

Reduction is only half of the correspondence. We now show that a solution of the
reduced equations \eqref{RHP}, together with an initial condition in $P$,
determines a unique extremal of the original optimal control problem, so that no
information is lost in passing to the quotient.

Let $(z(t),p_z(t),\mu(t),u(t))$ be a solution of \eqref{RHP} on $[0,T]$, and let
$\xi(t)=\partial h/\partial\mu\in\widetilde{\mathfrak g}_{z(t)}$ be the
associated reduced vertical velocity. Fix $x_0\in P$ with $\rho(x_0)=z(0)$.

\begin{definition}
\label{def_horizontal_lift}
The \emph{horizontal lift} of $z(\cdot)$ through $x_0$ is the unique curve
$x^h(\cdot)$ in $P$ with
\[
\rho\bigl(x^h(t)\bigr)=z(t),
\qquad
A\bigl(\dot x^h(t)\bigr)=0,
\qquad
x^h(0)=x_0 .
\]
Along it, write $\xi^h(t)\in\mathfrak g$ for the representative of $\xi(t)$
determined by $x^h(t)$, that is, $\xi(t)=\bigl[x^h(t),\xi^h(t)\bigr]$.
\end{definition}

\begin{theorem}[Reconstruction]
\label{reconstruction_theorem}
Let $g(\cdot)$ be the solution of the initial value problem in $G$
\begin{equation}
\label{reconstruction_equation}
\dot g(t)=T R_{g(t)}\,\xi^h(t),
\qquad
g(0)=e ,
\end{equation}
and set $x(t)=x^h(t)\cdot g(t)$. Then $x(\cdot)$ is the unique curve in $P$ with
$x(0)=x_0$ whose class in $TP/G\cong T(P/G)\oplus\widetilde{\mathfrak g}$ is
$(\dot z(t),\xi(t))$. Let $\hat\mu(t)\in\mathfrak g^*$ be the representative of
$\mu(t)$ determined by $x(t)$, that is, $\mu(t)=[x(t),\hat\mu(t)]$, and let
$c(t)\in C_{x(t)}$ be the representative of the reduced control
$\widetilde u(t)\in\widetilde U_{z(t)}$ determined by $x(t)$ under
\eqref{control_quotient}. Define the costate $p(t)\in T^*_{x(t)}P$ by
\begin{equation}
\label{costate_reconstruction}
\bigl\langle p(t),v\bigr\rangle
=
\bigl\langle p_z(t),T\rho(v)\bigr\rangle
+
\bigl\langle\hat\mu(t),A(v)\bigr\rangle,
\qquad
v\in T_{x(t)}P,
\end{equation}
which is the unique covector over $x(t)$ whose class is
$[\beta_J]^{-1}(z(t),p_z(t),\mu(t))\in T^*P/G$. Then the curve
$\bigl(x(t),p(t),c(t)\bigr)$ is an extremal of the unreduced Pontryagin Maximum
Principle issuing from $x_0$; it solves the fixed-endpoint problem with terminal
state $x_T$ precisely when the terminal holonomy condition $x(T)=x_T$ holds,
which is a condition on $z(\cdot)$, $\xi(\cdot)$ and $x_0$ and not merely on
$z(T)$. Conversely, every extremal projects
to a solution of \eqref{RHP}, and two extremals project to the same solution if
and only if they lie on the same $G$-orbit.
\end{theorem}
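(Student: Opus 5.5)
We propose to prove the theorem in three steps: kinematic reconstruction of $x(\cdot)$, verification that the reconstructed triple satisfies the unreduced equations \eqref{excur}, and the converse statements.

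\emph{Step 1 (the curve $x$).} Differentiating $x(t)=x^h(t)\cdot g(t)$ gives
\[
\dot x=T\Phi_{g}\dot x^h+\bigl(TL_{g^{-1}}\dot g\bigr)_P(x).
\]
Since $\dot g=TR_g\xi^h$, we have $TL_{g^{-1}}\dot g=\operatorname{Ad}_{g^{-1}}\xi^h$. Equivariance of $A$ and $A(\dot x^h)=0$ then give $A(\dot x)=\operatorname{Ad}_{g^{-1}}\xi^h$, hence $[x,A(\dot x)]=[x^h\cdot g,\operatorname{Ad}_{g^{-1}}\xi^h]=[x^h,\xi^h]=\xi$, using \eqref{adjoint_action}. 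Moreover $T\rho(\dot x)=\dot z$, because $\rho\circ x=z$. Uniqueness follows because any curve $y$ with $y(0)=x_0$ and $\rho\circ y=z$ can be written $y=x^h\cdot k$. The condition $A(\dot y)=\operatorname{Ad}_{k^{-1}}\xi^h$ then forces $k$ to satisfy the same ODE as $g$, and uniqueness for the ODE in $G$ gives $k=g$. That $p(t)$ is the stated representative follows from the definitions of $\widetilde\pi_G$ and $J$: the covector \eqref{costate_reconstruction} restricts to $p_z$ on horizontals, and $\langle p,\zeta_P\rangle=\langle\hat\mu,\zeta\rangle$.

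\emph{Step 2 (extremality).} This is the main obstacle. I would not verify Hamilton's equations in coordinates. Instead I would use the variational characterization: $\gamma=(x,p,c)$ is extremal iff $d\mathbb S(\gamma)\cdot\delta\gamma=0$ for all variations vanishing at the endpoints in $x$, by \eqref{excur}. By $G$-invariance of the integrand (Proposition~\ref{invariance_prop}), $\mathbb S(\gamma_\varepsilon)=S_G([\gamma_\varepsilon])$ for every deformation. The density \eqref{red_lag} is precisely the integrand of $\mathbb S$ written through $[\alpha_A]$ and $[\beta_J]$, since $\langle p,\dot x\rangle=\langle p_z,\dot z\rangle+\langle\hat\mu,A(\dot x)\rangle$ and $\langle p,\Gamma\rangle=\langle p_z,\widetilde\Gamma\rangle+\langle\mu,\widetilde\gamma\rangle$.

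The key lemma needed is that every variation $\delta x$ with fixed endpoints projects to a reduced variation $(\delta z,\eta)$ with $\eta=[x,A(\delta x)]$ vanishing at the endpoints, and that $\delta\xi$ is then given by \eqref{reduced_vertical_variation}. This is the Lagrange--Poincaré lemma of \cite{CMR01}, which I would quote, verifying only that the sign conventions match those fixed for \eqref{RHP}. With the lemma in hand, the computation preceding \eqref{RHP} shows that $d\mathbb S(\gamma)\cdot\delta\gamma$ equals the expression for $\delta S_G$. That expression vanishes because $(z,p_z,\mu,\widetilde u)$ solves \eqref{RHP}, and since $\xi=\partial h/\partial\mu$ the velocity matches by Step 1. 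Hence \eqref{excur} holds. The terminal statement is immediate: $x(T)$ is determined by $x_0$, $z$ and $\xi$ through $x^h$ and $g$.

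\emph{Step 3 (converse and fibres).} Given an extremal, set $z=\rho\circ x$, $\xi=[x,A(\dot x)]$ and $(p_z,\mu)=[\beta_J](p)$, and take $\widetilde u=[c]$. Restricting the same identity $\delta\mathbb S=\delta S_G$ to arbitrary reduced variations, which all lift by the surjectivity half of the lemma, gives \eqref{RHP}. If two extremals project to the same reduced curve, then $x_2(0)=x_1(0)\cdot h$ for some $h\in G$. Also $x_1\cdot h$ is again an extremal by Proposition~\ref{invariance_prop} with the same reduced data, so Step 1 uniqueness gives $x_2=x_1\cdot h$. Then $p_2=p_1\cdot h$ and $c_2=c_1\cdot h$, because $[\beta_J]$ and \eqref{control_quotient} are fibrewise bijective on orbits. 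The converse direction, that extremals on one orbit project identically, is immediate from invariance.
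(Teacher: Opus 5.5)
Your proposal is correct and follows the paper's proof essentially step for step: the same computation of $A(\dot x)$ from the tangent action and the equivariance of $A$, the costate and control taken as the unique representatives over $x(t)$, extremality from $\mathbb S(\gamma)=S_G([\gamma])$ together with the correspondence of \cite{CMR01} between fixed-endpoint variations and the constrained variations \eqref{reduced_vertical_variation}, and the converse from $G$-invariance. The differences are minor. You obtain the state equation $\dot x=\Gamma(x,u)$ from the $\delta p_z,\delta\mu$ terms of the variational identity, whereas the paper checks it separately by showing that $\dot x$ and $\Gamma(x,u)$ have the same class in $TP/G$. You also spell out, via the ODE for $k$ in $G$, the uniqueness and same-orbit statements that the paper treats as immediate.
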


The three representatives $\hat\mu(t)$, $c(t)$ and $p(t)$ are the point at which
the quotient must be undone: $\mu(t)$ and $\widetilde u(t)$ are classes, and it
is the choice of the reconstructed point $x(t)$ in the fiber, not the reduced
data alone, that selects representatives of them.

\dem
Write $x=x^h g$. By \eqref{tangent_action},
\[
\dot x
=
T\Phi_{g}\bigl(\dot x^h\bigr)+\bigl(TL_{g^{-1}}\dot g\bigr)_P(x^hg),
\]
so that, using the equivariance $A_{xg}\circ T\Phi_g=\operatorname{Ad}_{g^{-1}}
\circ A_x$ and $A(\dot x^h)=0$,
\[
A_{x(t)}\bigl(\dot x(t)\bigr)
=
\operatorname{Ad}_{g^{-1}}A\bigl(\dot x^h\bigr)+TL_{g^{-1}}\dot g
=
TL_{g^{-1}}\dot g .
\]
On the other hand the representative of $\xi(t)$ determined by $x(t)=x^hg$ is
$\operatorname{Ad}_{g^{-1}}\xi^h(t)$, by the equivalence
$(x,\zeta)\sim(xg,\operatorname{Ad}_{g^{-1}}\zeta)$ of \eqref{adjoint_action}.
The condition $A(\dot x)=\operatorname{Ad}_{g^{-1}}\xi^h$ is therefore
$TL_{g^{-1}}\dot g=\operatorname{Ad}_{g^{-1}}\xi^h$, which, since
$\operatorname{Ad}_{g^{-1}}=TL_{g^{-1}}\circ TR_g$, is exactly
\eqref{reconstruction_equation}. Moreover $T\rho(\dot x)=T\rho(\dot x^h)=\dot z$,
so the class of $\dot x$ in $TP/G$ is $(\dot z,\xi)$ under \eqref{tangent_quotient}.
Equation \eqref{reconstruction_equation} is a time-dependent right-invariant
equation on $G$ and therefore has a unique solution on all of $[0,T]$; uniqueness
of $x(\cdot)$ follows since the pair $(\dot z,\xi)$ determines $\dot x$ pointwise.

The costate \eqref{costate_reconstruction} is well defined because $[\beta_J]$ of
\eqref{cotangent_quotient} is a vector bundle isomorphism, so that
$[\beta_J]^{-1}(z,p_z,\mu)$ is a well-defined class in $T^*P/G$; the pairing
\eqref{costate_reconstruction} is exactly the representative of that class lying
over the chosen point $x(t)$, and it is the only one. The same remark applies to
$c(t)$: the class $\widetilde u(t)=[x(t),c(t)]$ has exactly one representative
in the fiber $C_{x(t)}$.

That the resulting curve satisfies the unreduced first-order conditions is
seen in two steps, neither of which uses the Dirac-geometric results of
Section~\ref{drec}. First, the state equation. By
\eqref{reduced_pontryagin_hamiltonian}, the first and fourth equations of
\eqref{RHP} read $\dot z=\widetilde\Gamma(z,u)$ and
$\xi=\widetilde\gamma(z,u)$, so the class of $\dot x$ in
$TP/G\cong T(P/G)\oplus\widetilde{\mathfrak g}$, shown above to be
$(\dot z,\xi)$, coincides with the class of $\Gamma(x,u)$; two vectors of
$T_xP$ with the same class are equal, the action being free, whence
$\dot x=\Gamma(x,u)$. Second, the extremal property. By construction
$\mathbb S(\gamma)=S_G([\gamma])$ for $\gamma=(x,p,u)$, and the constrained
variations \eqref{reduced_vertical_variation}, together with free variations
of $(p_z,\mu,u)$, are exactly the projections of the free variations of curves
in $M$ with fixed endpoints in $P$, every such reduced variation admitting a
lift \cite{CMR01}. Stationarity of $S_G$ at $[\gamma]$ under the constrained
variations is the system \eqref{RHP}, by Section~\ref{pmpreduc}, while
stationarity of $\mathbb S$ at $\gamma$ under free variations is the unreduced
first-order system \eqref{excur}; the correspondence of variations therefore
converts the one into the other. The converse statements are immediate from
the $G$-invariance of the data.
\findem

\begin{remark}
\label{reconstruction_left}
Once Theorem~\ref{red_Dirac_ham} is available, the same correspondence can be
read off the Dirac side, since the reduced equations \eqref{RHP} are equivalent to
the inclusion $([X]_G,dH_G)\in[D_P]_G$, which holds if and only if its representatives
$(X,dH_P)$ lie in $D_P$; the proof given above is independent of
that result. In the left-action convention of Remark~\ref{left_action_variational} the same
computation gives $\dot g=TL_{g}\,\xi^h$ in place of
\eqref{reconstruction_equation}; the examples of Sections~\ref{sec:spacecraft} and~\ref{heisenberg} specialize accordingly. Finally, Theorem~\ref{reconstruction_theorem} makes precise the assertion of
Section~\ref{lp_reduction} that the critical points of $[\mathbb S]_G$ are in
bijection with $G$-equivalence classes of extremals of $\mathbb S$: a reduced solution
determines a whole $G$-orbit of extremals, one for each choice of initial point
in the fiber over $z(0)$, and for the fixed-endpoint problem, whose curve space is not itself $G$-invariant, the
terminal condition $x(T)=x_T$ selects, within the reduced solutions, those whose
reconstruction through $x_0$ closes up.
\end{remark}


\subsection{Independence of the connection}
\label{connection_independence}

The reduced equations \eqref{RHP} were written using a principal connection $A$,
which is not part of the data of the optimal control problem. We now verify that
the choice affects only the presentation.

Let $A$ and $A'$ be two principal connections on $\rho:P\rightarrow B$. Their
difference $\tau=A'-A$ is a $G$-equivariant horizontal $\mathfrak g$-valued
one-form on $P$, hence corresponds to a one-form $\widetilde\tau$ on $B$ with
values in the adjoint bundle, $\widetilde\tau\in\Omega^1(B;\widetilde{\mathfrak
g})$.

\begin{proposition}
\label{connection_change}
Under the change of connection $A\mapsto A'=A+\tau$ the identifications
\eqref{tangent_quotient} and \eqref{cotangent_quotient} change by
\begin{equation}
\label{splitting_change}
\xi'=\xi+\widetilde\tau(\dot z),
\qquad
p_z'=p_z-\bigl\langle\mu,\widetilde\tau\bigr\rangle,
\qquad
\mu'=\mu,
\end{equation}
while the reduced Hamiltonian is unchanged as a function on $M/G$,
\begin{equation}
\label{h_invariance}
h^{A'}\bigl(z,p_z',\mu,u\bigr)=h^{A}\bigl(z,p_z,\mu,u\bigr).
\end{equation}
\end{proposition}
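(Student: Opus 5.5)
We verify the three transformation rules in \eqref{splitting_change} directly from the definitions of $[\alpha_A]$ and $[\beta_J]$, and then deduce \eqref{h_invariance} by pairing. We work at a fixed point $x\in P$ over $z=[x]$, since every map involved descends to the quotient by equivariance, and all identities are pointwise.

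\emph{Velocity splitting.} For $v\in T_xP$, \eqref{tangent_quotient} gives components $T\rho(v)$ and $[x,A_x(v)]$. Under the change of connection the first is untouched. For the second,
\[
A'_x(v)=A_x(v)+\tau_x(v).
\]
The form $\tau$ is horizontal and equivariant, so $\tau_x(v)$ depends only on $T\rho(v)$, and $[x,\tau_x(v)]=\widetilde\tau(T\rho(v))$ is independent of the representative $x$. This yields $\xi'=\xi+\widetilde\tau(\dot z)$.

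\emph{Costate splitting.} The component $\mu=[x,J(x,p_x)]$ involves no connection: $J$ is the canonical momentum map, $\langle J(p_x),\zeta\rangle=\langle p_x,\zeta_P(x)\rangle$, so $\mu'=\mu$. The base component is defined by restricting $p_x$ to $H_x=\ker A_x$. It is cleaner to use the characterization \eqref{costate_reconstruction}:
\[
\langle p_x,v\rangle=\langle p_z,T\rho(v)\rangle+\langle\hat\mu,A(v)\rangle .
\]
This identity holds because $A$ restricted to vertical vectors is inverse to $\zeta\mapsto\zeta_P(x)$, and on $H_x$ the second term vanishes. Writing the same identity for $A'$ and subtracting gives
\[
\langle p_z-p'_z,T\rho(v)\rangle=\langle\hat\mu,\tau(v)\rangle=\langle\mu,\widetilde\tau(T\rho(v))\rangle .
\]
Hence $p'_z=p_z-\langle\mu,\widetilde\tau\rangle$, with the pairing taken in the $\widetilde{\mathfrak g}$ slot.

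\emph{Hamiltonian.} The key observation is that $h$ is just $H_P$ written in reduced coordinates. Since $H_P$ is $G$-invariant (Proposition~\ref{invariance_prop}), we have $h^A(z,p_z,\mu,u)=H_P(x,p_x,c)$ whenever $(z,p_z,\mu,u)$ is the image of $(x,p_x,c)$ under the $A$-identification of \eqref{reduced_pontryagin}. To confirm this, apply the pairing identity above with $v=\Gamma(x,c)$; it reproduces exactly the formula \eqref{reduced_pontryagin_hamiltonian}, because $T\rho\,\Gamma=\widetilde\Gamma$ and $[x,A\Gamma]=\widetilde\gamma$. Since the same point of $M/G$ has coordinates $(z,p_z',\mu,u)$ under $A'$, equation \eqref{h_invariance} follows. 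As a direct check, we also verify the identity by expansion:
\[
\langle p_z',\widetilde\Gamma\rangle+\langle\mu,\widetilde\gamma'\rangle
=\langle p_z,\widetilde\Gamma\rangle-\langle\mu,\widetilde\tau(\widetilde\Gamma)\rangle+\langle\mu,\widetilde\gamma+\widetilde\tau(\widetilde\Gamma)\rangle ,
\]
using $\widetilde\gamma'=\widetilde\gamma+\widetilde\tau(\widetilde\Gamma)$ from the velocity rule. The $\widetilde\tau$ terms cancel, and $l$ is untouched.

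The only delicate step is the costate rule. We must check that the sign and the identification $H^*_x\cong T^*_zB$ are compatible with the pairing formula, in particular that restricting $p_x$ to $\ker A'$ and pulling back via $T\rho$ agrees with the $A'$ version of \eqref{costate_reconstruction}. We will settle this by evaluating both sides on an $A'$-horizontal vector $v$, for which $\langle p_z',T\rho v\rangle=\langle p_x,v\rangle$ directly.
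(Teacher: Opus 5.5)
Your proposal is correct and follows the paper's proof. As there, $\mu'=\mu$ because $J$ does not involve the connection, and the rule for $p_z'$ comes from subtracting the $A$- and $A'$-versions of $\langle p_x,v\rangle=\langle p_z,T\rho(v)\rangle+\langle\hat\mu,A(v)\rangle$; then \eqref{h_invariance} follows from $\widetilde\gamma^{A'}=\widetilde\gamma^{A}+\widetilde\tau(\widetilde\Gamma)$, with the two $\widetilde\tau$ terms canceling. Your extra remark that $h^A$ is just $H_P$ transported through the $A$-identification (the same pairing identity evaluated at $v=\Gamma(x,c)$) makes the invariance evident without expanding, but it rests on the same identity the paper uses.
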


\dem
The momentum map $J$ is defined by evaluation on fundamental vector fields and so
does not involve the connection; hence $\mu'=\mu$. For $\alpha\in T^*_xP$ and
$v\in T_xP$ one has $\langle p_z,T\rho(v)\rangle=\langle\alpha,v\rangle-
\langle\mu,A(v)\rangle$, and subtracting the same identity for $A'$ gives
$p_z'=p_z-\langle\mu,\widetilde\tau\rangle$. Likewise
$\xi'=[x,A'(\dot x)]=\xi+\widetilde\tau(\dot z)$.

For \eqref{h_invariance}, note that $\widetilde\Gamma=T\rho\circ\Gamma$ does not
depend on the connection, whereas
$\widetilde\gamma^{A'}=\widetilde\gamma^{A}+\widetilde\tau(\widetilde\Gamma)$.
Hence, by \eqref{reduced_pontryagin_hamiltonian},
\[
h^{A'}
=
\bigl\langle p_z-\langle\mu,\widetilde\tau\rangle,\widetilde\Gamma\bigr\rangle
+\bigl\langle\mu,\widetilde\gamma^{A}+\widetilde\tau(\widetilde\Gamma)\bigr\rangle-l
=
\bigl\langle p_z,\widetilde\Gamma\bigr\rangle
+\bigl\langle\mu,\widetilde\gamma^{A}\bigr\rangle-l
=
h^{A},
\]
the two $\widetilde\tau$ terms canceling.
\findem

\begin{theorem}
\label{connection_independent}
The reduced dynamics is independent of the choice of principal connection. More
precisely, the map \eqref{splitting_change} is an isomorphism of vector bundles
over $B$ carrying the reduced Dirac structure and the reduced equations
\eqref{RHP} written with respect to $A$ to those written with respect to $A'$,
and it commutes with reconstruction: the unreduced extremal produced by
Theorem~\ref{reconstruction_theorem} is the same for both connections.
\end{theorem}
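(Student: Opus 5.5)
The plan is to show that the map $\Phi_\tau:(z,p_z,\mu,\widetilde u)\mapsto(z,p_z-\langle\mu,\widetilde\tau\rangle,\mu,\widetilde u)$ of Proposition~\ref{connection_change} is just $[\beta_{J}']\circ[\beta_J]^{-1}$ on $T^*P/G$, extended by the identity on $\widetilde U$. All three claims then follow from the fact that both descriptions are coordinate presentations of the same intrinsic object on $M/G$.

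First, $\Phi_\tau$ is clearly fiberwise linear over $B$ on $T^*B\oplus\widetilde{\mathfrak g}^{\,*}$, with inverse $\Phi_{-\tau}$, so it is a vector bundle isomorphism. By the computation in the proof of Proposition~\ref{connection_change}, it equals $[\beta_{J'}]\circ[\beta_J]^{-1}$. The same holds on the tangent side, where $(\dot z,\xi)\mapsto(\dot z,\xi+\widetilde\tau(\dot z))$ equals $[\alpha_{A'}]\circ[\alpha_A]^{-1}$. Since the reduced Dirac structure of Section~\ref{drec} is defined as the quotient $[D_P]_G$ of a $G$-invariant structure on $M$, and is only afterwards written in the $A$-splitting, the $A$- and $A'$-descriptions are two presentations of the same subbundle. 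The tangent and cotangent transition maps are mutually dual up to inverse transpose, because $\langle\mu,\widetilde\tau(\dot z)\rangle$ appears with opposite signs in $p_z'$ and $\xi'$. Consequently the pairing $\langle p_z,\dot z\rangle+\langle\mu,\xi\rangle$ is preserved, and the induced map on $T(T^*P/G)\oplus T^*(T^*P/G)$ carries one description of $[D_P]_G$ to the other.

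Next, for the equations \eqref{RHP}, I would avoid a direct covariant computation. Instead I would use the variational characterization, which is the cleanest route. Substituting $\xi'$ and $p_z'$ into the integrand of \eqref{red_funct_hamiltonian} gives
\[
\langle p_z',\dot z\rangle+\langle\mu,\xi'\rangle-h^{A'}=\langle p_z,\dot z\rangle+\langle\mu,\xi\rangle-h^A,
\]
by \eqref{h_invariance} together with the pairing identity above. So $S_G^{A'}\circ\Phi_\tau=S_G^{A}$. The remaining point is that $\Phi_\tau$ carries the admissible constrained variations \eqref{reduced_vertical_variation} for $A$ to those for $A'$. This holds because both classes are the projections of the same free variations of curves in $M$ \cite{CMR01}, and $\Phi_\tau$ is the change of coordinates on these projections. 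Critical curves then correspond, so the equations \eqref{RHP} for $A$ go to those for $A'$.

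The main obstacle is making this last step honest. If one prefers a direct check, $F_{A'}=F_A+d^{\nabla^A}\widetilde\tau+[\widetilde\tau\wedge\widetilde\tau]$, and $\nabla^{A'}=\nabla^A+\operatorname{ad}_{\widetilde\tau}$, so one must verify that the term $D\langle\mu,\widetilde\tau\rangle/Dt$ coming from $Dp_z'/Dt$ absorbs the new curvature and ad-terms. I would fall back on this computation only as a sanity check in the abelian case, where $F_{A'}=F_A+d\widetilde\tau$ and the cancellation is Stokes-like.

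Finally, for reconstruction, Theorem~\ref{reconstruction_theorem} characterizes $x(\cdot)$ intrinsically as the unique curve through $x_0$ whose class in $TP/G$ is the given one. Since $\Phi_\tau$ is induced from the identity of $TP/G$, both connections prescribe the same class, hence the same $x(\cdot)$, even though the horizontal lift $x^h$ and $g(\cdot)$ differ. The costate \eqref{costate_reconstruction} is the representative over $x(t)$ of a class in $T^*P/G$ that is likewise independent of the splitting, and $c(t)$ involves no connection at all. Hence the reconstructed extremal $(x,p,c)$ coincides for $A$ and $A'$.
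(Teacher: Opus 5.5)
Your proof is correct, and its skeleton is the paper's. In both, the two presentations are coordinatizations of connection-free objects, \eqref{splitting_change} is the composite $[\beta_J^{A'}]\circ[\beta_J^{A}]^{-1}$, and reconstruction depends only on the class of $\dot x$ in $TP/G$ and of $p$ in $T^*P/G$.

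The one genuine difference is how you transport the equations \eqref{RHP}. The paper reads them as the intrinsic Dirac inclusion $([X]_G,[dH])\in[D_\Omega]_G$, which involves no connection. The two systems are then literally one condition written in two coordinatizations; identifying \eqref{RHP} with that inclusion implicitly uses Theorem~\ref{red_Dirac_ham}. You instead show $S_G^{A'}\circ\Phi_\tau=S_G^{A}$, which is just $\mathbb S(\gamma)=S_G([\gamma])$ for each connection. You then note that both families of constrained variations are projections of the same free variations upstairs.

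Your route bypasses Theorem~\ref{red_Dirac_ham}, and with it the regularity hypothesis under which that theorem is stated. The price is that you need the lifting of reduced variations from \cite{CMR01}. The paper's route is purely algebraic once $[D_\Omega]_G$ is available, and it handles the Dirac structure and the equations in one stroke.

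One small correction. $[D_P]_G$ does not sit in $T(T^*P/G)\oplus T^*(T^*P/G)$. It sits in the larger bundle with $\widetilde{\mathfrak g}\oplus\widetilde{\mathfrak g}^{\,*}$ summands, whose splitting also moves with the connection. So the preservation of $\langle p_z,\dot z\rangle+\langle\mu,\xi\rangle$ is a consistency check, not the reason the two descriptions agree. The reason is the intrinsic-quotient observation you make just before it, which is exactly the paper's.
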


\dem
The reduced Dirac structure $[D_\Omega]_G$ and the reduced Hamiltonian $H_G$ are
defined without reference to a connection, as is the intrinsic reduced dynamics
$\bigl([X]_G,[dH]\bigr)\in[D_\Omega]_G$ of \eqref{Dirac_reduced_equations}. The
connection enters only through the isomorphism $[\beta_J]$ of
\eqref{cotangent_quotient} used to write that condition in the explicit form
\eqref{reduced_canonical_dirac_explicit}, and \eqref{splitting_change} is exactly
the composite $[\beta_J^{A'}]\circ[\beta_J^{A}]^{-1}$. The two systems of
equations are therefore the same condition expressed in two coordinatizations,
and \eqref{h_invariance} shows that the Hamiltonian is literally the same
function. Reconstruction is likewise connection-free: the class of $\dot x$ in
$TP/G$ is determined by $\dot x$, and $A$ enters
Theorem~\ref{reconstruction_theorem} only through the splitting of that class
into a horizontal lift and a group factor.
\findem


\subsection{Criticality and optimality}
\label{criticality}

The Pontryagin Maximum Principle provides necessary conditions, and everything
above concerns critical points of the functionals rather than minimizers. We make
the relationship explicit, and record what is and is not claimed.

Three standing conventions should be stated plainly. First, we work with the
\emph{weak} form of the maximum principle: the control set is the whole fiber
$C_x$, controls are smooth, and the maximum condition is replaced by the
stationarity condition $\partial H_P/\partial u^a=0$. Second, the normalization
$p_0=-1$ implicit in $H_P=\langle p,\Gamma\rangle-L$ selects the normal
extremals; the abnormal case $p_0=0$ is treated in Section~\ref{singular} through
the cost extension of Proposition~\ref{p0_momentum}. Third, the endpoints
$x_0,x_T\in P$ are prescribed, so that in the reduced principle the vertical
variation satisfies $\eta(0)=\eta(T)=0$.

\begin{proposition}
\label{criticality_optimality}
Let $\gamma$ be an admissible curve of $(P,C,\Gamma,L)$ from $x_0$ to $x_T$ and
let $[\gamma]$ be its reduction. Then
\begin{enumerate}
\item[(i)] $\mathbb S(\gamma)=S_G([\gamma])$, and this holds for every lift of
$[\gamma]$;
\item[(ii)] $\gamma$ is a critical point of $\mathbb S$ if and only if
$[\gamma]$ is a critical point of $S_G$ with respect to the constrained
variations \eqref{reduced_vertical_variation};
\item[(iii)] $\gamma$ is a local minimizer of $\mathbb S$ among admissible curves
from $x_0$ to $x_T$ if and only if $[\gamma]$ is a local minimizer of $S_G$ among
reduced admissible curves from $z_0$ to $z_T$ whose reconstruction through $x_0$,
in the sense of Theorem~\ref{reconstruction_theorem}, terminates at $x_T$;
\item[(iv)] if $\gamma$ is a critical point of $\mathbb S$, then the second
variation of $\mathbb S$ at $\gamma$ descends to the second variation of $S_G$ at
$[\gamma]$; and for the orbit-endpoint problem introduced at the beginning of
Section~\ref{lp_reduction} the tangent
space to the $G$-orbit of $\gamma$ lies in the kernel of the former.
\end{enumerate}
\end{proposition}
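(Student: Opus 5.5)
The plan is to treat the four items in order, each resting on the identification $T^*P/G\cong T^*(P/G)\oplus\widetilde{\mathfrak g}^{\,*}$ and $TP/G\cong T(P/G)\oplus\widetilde{\mathfrak g}$ of \eqref{cotangent_quotient}, \eqref{tangent_quotient}, on Proposition~\ref{invariance_prop}, and on Theorem~\ref{reconstruction_theorem}.

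For (i), I would compare the integrands pointwise. At each $t$, decompose $\dot x$ as $(\dot z,\xi)$ and $p$ as $(p_z,\hat\mu)$ using the same connection $A$. By \eqref{costate_reconstruction} we have $\langle p,\dot x\rangle=\langle p_z,\dot z\rangle+\langle\mu,\xi\rangle$. Since $h$ is obtained from $H_P$ by passing to the quotient, $H_P(x,p,u)=h(z,p_z,\mu,\widetilde u)$. Substituting into the Hamilton--Pontryagin form of \eqref{LPvariational} gives \eqref{red_funct_hamiltonian}. Every step uses only $G$-invariant quantities (Proposition~\ref{invariance_prop}), so any other lift $\gamma\cdot g$ gives the same integrand. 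If lifts are also allowed to vary by time-dependent $g(t)$ while keeping the reduced data fixed, the equality still holds because the reduced data determine the integrand.

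For (ii), I would reuse the second half of the proof of Theorem~\ref{reconstruction_theorem}. The key input is the Lagrange--Poincaré lemma of \cite{CMR01}. Variations of $\gamma$ with $\delta x(0)=\delta x(T)=0$ project to reduced variations $(\delta z,\eta,\delta p_z,\delta\mu,\delta u)$ with $\delta\xi$ given by \eqref{reduced_vertical_variation} and $\delta z,\eta$ vanishing at the endpoints. Conversely, every such reduced variation lifts, by reconstructing the varied vertical velocity from $x_0$. Combining this surjective correspondence of variations with (i) gives $d\mathbb S(\gamma)\cdot\delta\gamma=dS_G([\gamma])\cdot\delta[\gamma]$, and the equivalence follows.

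For (iii), the plan is to show that the map $\gamma\mapsto[\gamma]$ restricts to a homeomorphism. On one side are admissible curves from $x_0$ to $x_T$, in a $C^k$ topology. On the other are reduced admissible curves from $z_0$ to $z_T$ whose reconstruction through $x_0$ ends at $x_T$. The inverse is reconstruction with initial point $x_0$, which is continuous because \eqref{reconstruction_equation} is an ODE depending continuously on $\xi$. With this homeomorphism in hand, (i) transports local minima in both directions.

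For (iv), differentiate (i) twice along a two-parameter family of curves and apply the variation correspondence from (ii). At a critical point the second variation is intrinsic, that is, independent of how the family extends, so it descends to $[\gamma]$. For the orbit-endpoint statement, $\mathbb S$ is constant on the orbit $\gamma\cdot G$. Differentiating $\mathbb S(\gamma_s\cdot\exp(t\zeta))=\mathbb S(\gamma_s)$ twice, in $s$ and in $t$, and using criticality to kill the first-order term, shows that $\zeta_M\circ\gamma$ lies in the kernel of the Hessian.

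The main obstacle I expect is the lifting statement in (ii) and (iii) for the fixed-endpoint problem. A reduced variation with $\eta(0)=\eta(T)=0$ and $\delta z$ vanishing at the endpoints lifts to a variation with $\delta x(0)=0$. Whether it also has $\delta x(T)=0$ requires checking that the holonomy perturbation vanishes. I would first try to show this by integrating the linearized reconstruction equation: the vertical part of $\delta x$ is given by $\eta$ up to an $\operatorname{Ad}$-conjugation, and $\eta(T)=0$ forces it to vanish. If that fails, I would restrict the reduced variations to the terminal-holonomy-preserving ones, as the formulation of (iii) suggests.
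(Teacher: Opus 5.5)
Your proposal is correct and follows the paper's proof item by item: (i) by the pointwise splitting of the integrand, (ii) by the Lagrange--Poincar\'e correspondence of variations of \cite{CMR01}, (iii) by the reconstruction bijection of Theorem~\ref{reconstruction_theorem} (your continuity remark supplies the topological point the paper leaves implicit), and (iv) by differentiating (i) twice, using criticality and the orbit-endpoint setting exactly where the paper says they are needed. The obstacle you anticipate does not arise. In that correspondence $\eta=[x,A(\delta x)]$ is by definition the vertical part of $\delta x$, so $\delta x$ is the sum of the horizontal lift of $\delta z$ and the vertical field of $\eta$ pointwise, and $\delta x(T)=0$ is equivalent to $\delta z(T)=0$ and $\eta(T)=0$; your linearized-reconstruction argument gives the same conclusion, so the fallback is never needed.
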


\dem
Part (i) is the definition $l(z,u)=L(x,u)$ together with the $G$-invariance of
$L$. Part (ii) is the content of Section~\ref{pmpreduc} combined with
Theorem~\ref{reconstruction_theorem}: the constrained variations
\eqref{reduced_vertical_variation} are exactly the projections of the free
variations of curves in $P$ with fixed endpoints, and $\eta(0)=\eta(T)=0$ encodes
the fixed endpoints. Part (iii) follows from (i) and the bijection of
Theorem~\ref{reconstruction_theorem}, which identifies admissible curves from
$x_0$ to $x_T$ with reduced admissible curves whose reconstruction through $x_0$
terminates at $x_T$; the terminal condition is a genuine restriction, since a reduced curve from $z_0$ to $z_T$ reconstructs to a curve ending anywhere in the fiber over $z_T$, the discrepancy being the group element $g(T)$ of \eqref{reconstruction_equation} (for the Heisenberg problem, the signed area swept by the plane curve, which is the enclosed area when the curve is closed), and this is why the vertical variable $\xi$ is retained in the reduced principle \eqref{red_funct} rather than eliminated. Part (iv) follows by differentiating (i) twice. Two qualifications are needed
for the second assertion of (iv), and both are essential. First, criticality
cannot be dropped: at a noncritical point the second derivative along an orbit
is $\mathrm{Hess}\,\mathbb S(\dot\gamma,\dot\gamma)+d\mathbb S(\ddot\gamma)$, and
invariance annihilates the sum rather than the first term. The
rotation-invariant function $x^2+y^2$ on $\mathbb R^2$ at $(1,0)$, whose Hessian
is $2I$, is the elementary instance. Second, the orbit direction is tangent to
the space of competitors only for the orbit-endpoint problem, since
$\gamma\cdot g$ does not have the endpoints of $\gamma$; for the fixed-endpoint
problem the assertion holds with $G$ replaced by the common stabilizer
$G_{x_0}\cap G_{x_T}$, which is trivial when the action is free.
\findem

Proposition~\ref{criticality_optimality} asserts only that reduction neither creates nor destroys minimizers; no sufficient conditions are claimed here, and the question of minimality for the abnormal extremals of Section~\ref{singular} is taken up in Section~\ref{index_form}; in general the regularity of abnormal minimizers remains open \cite{AgrachevSachkov,Mo02}.


\subsection{Relation to Lie algebroid reduction}
\label{algebroid}

Mart\'{\i}nez \cite{Mr04} formulates optimal control on Lie algebroids and
obtains reduced equations by regarding a $G$-invariant problem on $P$ as a
problem on the Atiyah algebroid $TP/G$. The correspondence between that
formulation and the present one is known: Grabowski and J\'o\'zwikowski
\cite[\S7]{GJ11}, in the wider setting of almost Lie algebroids, reduce an
invariant optimal control problem on a principal bundle to the Atiyah algebroid
and compute the resulting Poisson tensor on $E^*\cong T^*M\times\mathfrak g^*$ in
a connection-adapted splitting, obtaining a canonical part, a curvature term and
a Lie--Poisson part; for $P=G$ this recovers results of Jurdjevic
\cite{Jurbook}. The purpose of this subsection is not to add to those results but
to record the dictionary explicitly for \eqref{RHP}, since the correspondence is
asserted in Section~\ref{intro} and has not so far been exhibited.

Recall that the Atiyah algebroid of $\rho:P\rightarrow B$ is the vector bundle
$E=TP/G\rightarrow B$ of Section~\ref{atiyah_bundle}, whose sections are the
$G$-invariant vector fields on $P$, with the Lie bracket of vector fields and
with anchor $\varrho=T\rho$. Its dual is
\[
E^*=T^*P/G\cong T^*(P/G)\oplus\widetilde{\mathfrak g}^{\,*},
\]
by \eqref{cotangent_quotient}. Consequently the reduced Pontryagin bundle
\eqref{reduced_pontryagin} is
\begin{equation}
\label{algebroid_pontryagin}
M/G\;\cong\;E^*\times_B\widetilde U ,
\end{equation}
the Pontryagin bundle of the Atiyah algebroid, and the reduced Pontryagin
Hamiltonian $h$ is a function on it.

For a Lie algebroid $E\rightarrow B$ with local coordinates $(z^\alpha)$ on $B$
and a local basis of sections $\{e_a\}$ with
\[
\varrho(e_a)=\varrho^\alpha_a\frac{\partial}{\partial z^\alpha},
\qquad
[e_a,e_b]_E=\mathcal C^c_{ab}\,e_c ,
\]
the Hamiltonian equations on $E^*$, with fiber coordinates $y_a$, are
\begin{equation}
\label{algebroid_hamilton}
\dot z^\alpha=\varrho^\alpha_a\frac{\partial h}{\partial y_a},
\qquad
\dot y_a=-\varrho^\alpha_a\frac{\partial h}{\partial z^\alpha}
-\mathcal C^c_{ab}\,y_c\frac{\partial h}{\partial y_b},
\end{equation}
together with $\partial h/\partial u^a=0$ in the controlled case \cite{Mr04}.

\begin{proposition}
\label{atiyah_structure}
Let $A$ be a principal connection and let $\{X^h_\alpha\}$ be the horizontal
lifts of a coordinate frame on $B$, $\{e_i\}$ a local frame of
$\widetilde{\mathfrak g}$. In the adapted basis $\{X^h_\alpha,e_i\}$ of $E$ the
anchor and structure functions are
\[
\varrho(X^h_\alpha)=\frac{\partial}{\partial z^\alpha},
\qquad
\varrho(e_i)=0,
\]
\[
\mathcal C^i_{\alpha\beta}=-F^i_{\alpha\beta},
\qquad
\mathcal C^k_{\alpha i}=-\mathcal C^k_{i\alpha}=\Gamma^k_{\alpha i},
\qquad
\mathcal C^k_{ij}=-c^k_{ij},
\]
where $F^i_{\alpha\beta}$ are the components of the curvature $F_A$,
$\Gamma^k_{\alpha i}$ those of the induced connection on
$\widetilde{\mathfrak g}$ introduced in Section~\ref{rlpp}, and $c^k_{ij}$ the
structure constants of the fiberwise bracket of the adjoint bundle,
$[e_i,e_j]=c^k_{ij}e_k$, together with the values forced by
antisymmetry of the bracket, $\mathcal C^i_{\beta\alpha}=F^i_{\alpha\beta}$ and
$\mathcal C^k_{ji}=c^k_{ij}$. All structure functions not determined by these
vanish.
\end{proposition}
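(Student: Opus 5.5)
The plan is to compute the Lie bracket of $G$-invariant vector fields on $P$ directly, using the identification of sections of $E=TP/G$ with such fields, and then to read off the anchor and the structure functions from the splitting \eqref{tangent_quotient}. First I fix the dictionary. The section $X^h_\alpha$ is the horizontal lift of $\partial/\partial z^\alpha$; it is $G$-invariant because $A$ is equivariant. A section $\eta$ of $\widetilde{\mathfrak g}$ corresponds to an equivariant function $\hat\eta:P\to\mathfrak g$ with $\hat\eta(xg)=\operatorname{Ad}_{g^{-1}}\hat\eta(x)$, by \eqref{adjoint_action}, and hence to the invariant vertical field $\eta^\sharp(x)=(\hat\eta(x))_P(x)$. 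The anchor is then immediate: $T\rho(X^h_\alpha)=\partial/\partial z^\alpha$ by construction, and $T\rho(\eta^\sharp)=0$ because $\eta^\sharp$ is vertical.

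Next I compute the three types of bracket. For $[X^h_\alpha,X^h_\beta]$, its projection is $[\partial_\alpha,\partial_\beta]=0$, so the bracket is vertical. Its vertical part is $A([X^h_\alpha,X^h_\beta])$. By the structure equation evaluated on horizontal fields, $dA(X,Y)=-A([X,Y])$ and the $[A,A]$ term vanishes. Hence $[X^h_\alpha,X^h_\beta]=-F^i_{\alpha\beta}e_i$, which gives $\mathcal C^i_{\alpha\beta}=-F^i_{\alpha\beta}$ and no horizontal component.

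For $[X^h_\alpha,e_i^\sharp]$, I use two facts. An invariant field commutes with every fundamental field $\xi_P$ with $\xi$ constant. Applying this pointwise, $[X^h_\alpha,\hat e_i{}_P]=(X^h_\alpha\hat e_i)_P$. The horizontal derivative of an equivariant function is the covariant derivative $\nabla^A_{\partial_\alpha}e_i$ on the associated bundle. This yields $\mathcal C^k_{\alpha i}=\Gamma^k_{\alpha i}$ with no horizontal part.

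For $[e_i^\sharp,e_j^\sharp]$, I expand the bracket of fields of the form $\hat\eta_P$:
\[
[\hat\eta_P,\hat\zeta_P]=\bigl(\hat\eta_P\hat\zeta-\hat\zeta_P\hat\eta+[\hat\eta,\hat\zeta]\bigr)_P .
\]
I then insert the infinitesimal equivariance $\xi_P\hat\zeta=-[\xi,\hat\zeta]$. The three terms combine to $-[\hat\eta,\hat\zeta]$, which gives $\mathcal C^k_{ij}=-c^k_{ij}$. The remaining values follow from antisymmetry, and every other structure function vanishes because all the brackets computed above have no horizontal component.

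The main obstacle is sign bookkeeping, not any conceptual step. Three sign conventions interact: the right-action convention $\xi_P(x)=\frac{d}{dt}x\exp(t\xi)$, for which $[\xi_P,\zeta_P]=[\xi,\zeta]_P$; the curvature convention implicit in \eqref{reduced_vertical_variation}; and the coadjoint sign in \eqref{RHP}. I would settle these by a consistency check. I would substitute the structure functions into \eqref{algebroid_hamilton} with $y=(p_\alpha,\mu_i)$. The check succeeds if the $\mathcal C^i_{\alpha\beta}$ term reproduces the curvature term $\langle\mu,F_A(\cdot,\dot z)\rangle$ of \eqref{RHP}, the $\Gamma$ terms reproduce the covariant derivatives $D\mu/Dt$ and the $\Gamma$ part of $Dp_z/Dt$, and the $-c^k_{ij}$ term reproduces $-\operatorname{ad}^*_\xi\mu$. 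If a sign disagrees, I would fix the curvature normalization to the one that makes \eqref{reduced_vertical_variation} correct and recheck against it.
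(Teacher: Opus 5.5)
Your proposal is correct and follows the paper's proof: the same three bracket computations, with the horizontal--horizontal case obtained from the structure equation (the paper's Lemma~\ref{bracket_curvature}) and the mixed case identified with $\nabla^A$; in the vertical--vertical case your expansion via infinitesimal equivariance, giving $-[\hat\eta,\hat\zeta]$, spells out the paper's remark that invariant vertical fields are right-invariant on each fiber and so bracket anti-homomorphically. The closing consistency check is not needed, and it should not be used to adjust signs, since the comparison with \eqref{algebroid_hamilton} is Theorem~\ref{martinez_comparison}, which rests on this proposition; your direct computation already fixes every sign once the right-action convention for $\xi_P$ and the convention $F_A=dA+\tfrac12[A,A]$ are in place, as they are in the paper.
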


\dem
The anchor statements are immediate. Since $[X^h_\alpha,X^h_\beta]$ projects to
$[\partial_\alpha,\partial_\beta]=0$ it is vertical, and its vertical part is
$A\bigl([X^h_\alpha,X^h_\beta]\bigr)=-F_A(\partial_\alpha,\partial_\beta)$ by
Lemma~\ref{bracket_curvature}. (Readers comparing with \cite{GJ11} should note
the opposite curvature convention there: $F_\nabla(X,Y)=[\nabla X,\nabla Y]_E-
\nabla([X,Y])$ equals $A([X^h,Y^h])$, so that $F_\nabla=-F_A$ with the Cartan
structure equation $F_A=dA+\frac12[A,A]$ used here.) The bracket of a horizontal lift with a vertical
section is the covariant derivative of that section,
$[X^h_\alpha,X_\sigma]=X_{\nabla^A_{\partial_\alpha}\sigma}$, which one checks
in a local trivialization. The bracket of two invariant vertical fields is the
\emph{opposite} of the fiberwise bracket of $\widetilde{\mathfrak g}$: for a
right action the invariant vertical fields restrict on each fiber to
right-invariant vector fields, and right-invariant fields bracket
anti-homomorphically, so $[X_\sigma,X_\tau]=-X_{[\sigma,\tau]}$ and
$\mathcal C^k_{ij}=-c^k_{ij}$. (For a left action the invariant vertical
fields are the left-invariant ones and $\mathcal C^k_{ij}=+c^k_{ij}$,
consistently with the sign change of the coadjoint equation in
Remark~\ref{left_action_variational}.)
\findem

\begin{theorem}
\label{martinez_comparison}
Under the identification \eqref{algebroid_pontryagin}, and writing
$y=(p_z,\mu)$, the algebroid Hamiltonian equations \eqref{algebroid_hamilton} for
the Atiyah algebroid coincide with the reduced Lagrange--Pontryagin equations
\eqref{RHP}. In particular the curvature term of \eqref{RHP} is the structure
function $\mathcal C^i_{\alpha\beta}=-F^i_{\alpha\beta}$, the coadjoint term is
$\mathcal C^k_{ij}=-c^k_{ij}$, and the structure functions
$\mathcal C^k_{\alpha i}=\Gamma^k_{\alpha i}$ are absorbed by the covariant
derivative $D\mu/Dt$ and by the covariant partial derivative
$\partial h/\partial z$ of Section~\ref{rlpp}.
\end{theorem}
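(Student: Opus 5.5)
The plan is to substitute the structure functions of Proposition~\ref{atiyah_structure} into \eqref{algebroid_hamilton}, using the adapted basis $\{X^h_\alpha,e_i\}$ of $E$. The dual fiber coordinates are $y=(y_\alpha,y_i)=(p_{z,\alpha},\mu_i)$. This matches \eqref{cotangent_quotient}, because the dual basis of $\{X^h_\alpha,e_i\}$ pairs a covector $p_x$ with horizontal lifts to give $p_z$ and with vertical generators to give $J$. We then compare the resulting equations with \eqref{RHP} line by line. The work splits into the $z$-equation, the $y_\alpha$-equation and the $y_i$-equation; the control equation $\partial h/\partial u=0$ is common to both systems.

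\emph{The $z$-equation.} Since $\varrho(X^h_\alpha)=\partial_\alpha$ and $\varrho(e_i)=0$, the first equation of \eqref{algebroid_hamilton} reads $\dot z^\alpha=\partial h/\partial p_{z,\alpha}$. As a by-product we record $\xi^i=\partial h/\partial\mu_i$, which is the reduced vertical velocity and the fourth equation of \eqref{RHP}. This identifies the velocity $\varrho^\alpha_a\,\partial h/\partial y_a$ with the class $(\dot z,\xi)$.

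\emph{The $\mu$-equation.} For $a=i$ the right side of \eqref{algebroid_hamilton} is
\[
-\mathcal C^c_{ib}\,y_c\,\partial h/\partial y_b
=\Gamma^k_{\alpha i}\mu_k\dot z^\alpha+c^k_{ij}\mu_k\xi^j .
\]
The first term moves to the left and produces $(D\mu/Dt)_i$ in the local form fixed in Section~\ref{rlpp}. The second term is $-(\operatorname{ad}^*_\xi\mu)_i$ under the convention $\langle\mu,[\eta,\xi]\rangle=-\langle\operatorname{ad}^*_\xi\mu,\eta\rangle$. This gives the third equation of \eqref{RHP}. The sign $\mathcal C^k_{ij}=-c^k_{ij}$, coming from the right-action convention, is exactly what is needed here.

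\emph{The $p_z$-equation.} For $a=\alpha$, the coordinate derivative $\partial h/\partial z^\alpha$ holds $y$ fixed in the adapted frame. The covariant partial derivative of Section~\ref{rlpp} instead transports $\mu$ by $\nabla^A$ and $p_z$ by the auxiliary $\nabla$. The difference in the $\mu$-slot is $\Gamma^k_{\alpha i}\mu_k\,\partial h/\partial\mu_i$. This cancels against the structure term $-\mathcal C^k_{\alpha i}\mu_k\xi^i$, which is how $\mathcal C^k_{\alpha i}$ is absorbed. The difference in the $p_z$-slot involves Christoffel symbols of $\nabla$, and these combine with those in $Dp_z/Dt$. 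They cancel by torsion-freeness together with $\dot z=\partial h/\partial p_z$, which is the same symmetry argument used at the end of Section~\ref{rlpp}. The remaining term is $-\mathcal C^i_{\alpha\beta}\mu_i\dot z^\beta=F^i_{\alpha\beta}\mu_i\dot z^\beta=\langle\mu,F_A(\partial_\alpha,\dot z)\rangle$, which is the curvature term of the second equation of \eqref{RHP}.

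The main obstacle is the bookkeeping in this last step. One must check that the coordinate derivative in the algebroid equations and the covariant partial derivative of \eqref{RHP} differ exactly by the $\Gamma$-terms and Christoffel terms that cancel. The signs of $F$ must also be tracked against Lemma~\ref{bracket_curvature}. I would first verify the identification on a trivial bundle $B\times G$ with flat connection, and then add a general $A$.
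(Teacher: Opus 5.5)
Your proposal is correct and follows the paper's proof essentially step by step. You substitute the structure functions of Proposition~\ref{atiyah_structure} into \eqref{algebroid_hamilton}, read off the $z$-, $\mu$- and $p_z$-equations, and absorb $\mathcal C^k_{\alpha i}=\Gamma^k_{\alpha i}$ into $D\mu/Dt$ and into the covariant $\partial h/\partial z$. The only difference is cosmetic: the paper avoids the Christoffel bookkeeping in the $p_z$-slot by appealing to the independence of the auxiliary connection and taking $\nabla$ flat in the chart, which is exactly the torsion-free symmetry cancellation you carry out by hand.
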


\dem
Substituting Proposition~\ref{atiyah_structure} into
\eqref{algebroid_hamilton} and writing $y_\alpha=p_\alpha$, $y_i=\mu_i$,
$\partial h/\partial\mu_i=\xi^i$, the first equation gives
$\dot z^\alpha=\partial h/\partial p_\alpha$. The equation for $p_\alpha$ reads
\[
\dot p_\alpha
=
-\frac{\partial h}{\partial z^\alpha}
-\mathcal C^i_{\alpha\beta}\mu_i\frac{\partial h}{\partial p_\beta}
-\mathcal C^k_{\alpha i}\mu_k\frac{\partial h}{\partial\mu_i}
=
-\frac{\partial h}{\partial z^\alpha}
+F^i_{\alpha\beta}\mu_i\dot z^\beta
-\Gamma^k_{\alpha i}\mu_k\xi^i ,
\]
in which the middle term is $-\langle\mu,F_A(\dot z,\partial_\alpha)\rangle$ and
the last is the connection term by which the naive partial derivative
$\partial h/\partial z^\alpha$, taken at fixed $(p_\alpha,\mu_i)$, differs from
the covariant partial derivative of Section~\ref{rlpp}: since the reduced
equations are independent of the auxiliary connection, we may take $\nabla$
flat in the chart, and then $(Dp_z/Dt)_\alpha=\dot p_\alpha$ while
$(\partial h/\partial z)_\alpha
=\partial h/\partial z^\alpha+\Gamma^k_{\alpha i}\mu_k\xi^i$; the displayed
equation is therefore the second equation of \eqref{RHP}. The equation for
$\mu_i$ reads, using $\mathcal C^k_{i\alpha}=-\Gamma^k_{\alpha i}$,
\[
\dot\mu_i
=
-\mathcal C^k_{ij}\mu_k\frac{\partial h}{\partial\mu_j}
-\mathcal C^k_{i\alpha}\mu_k\frac{\partial h}{\partial p_\alpha}
=
c^k_{ij}\mu_k\xi^j+\Gamma^k_{\alpha i}\mu_k\dot z^\alpha ,
\]
which is $D\mu/Dt=-\operatorname{ad}^*_\xi\mu$, since
$\bigl(\operatorname{ad}^*_\xi\mu\bigr)_i
=\langle\mu,[\xi,e_i]\rangle=-c^k_{ij}\mu_k\xi^j$. The stationarity condition
$\partial h/\partial u^a=0$ is common to both.
\findem

\begin{remark}
The reduction of the \emph{variations}, rather than only of the equations, has
also been studied in the algebroid setting; see Grabowska and Grabowski
\cite{GG08} for constrained variational calculus on general algebroids and the
attendant distinction between the vakonomic and nonholonomic principles.
\end{remark}

\begin{remark}
The comparison also explains the status of the two constructions. Mart\'inez's
formulation is intrinsic, the algebroid $TP/G$ requiring no choice of connection;
the price is that the equations are written in terms of structure functions
relative to a chosen local basis. The formulation of Section~\ref{pmpreduc} fixes
a connection once and for all, which converts the structure functions into the
geometric objects $F_A$ and $\operatorname{ad}^*$ and makes the vertical variable
$\xi$ available for reconstruction. Proposition~\ref{connection_change} is the
statement that changing the connection is a change of adapted basis, which is why
the resulting dynamics is the same.
\end{remark}

\begin{remark}
The Dirac formulation of Section~\ref{drec} does not have an evident algebroid
counterpart. The reduced object $[D_\Omega]_G$ is a Dirac subbundle of the
Courant algebroid $\bigl(TM\oplus T^*M\bigr)/G$, and the natural question of
whether it can be described as a Dirac structure on a suitable prolongation of
$TP/G$ appears to be open; we do not pursue it here.
\end{remark}

\section{Reduction of the Dirac structure of optimal control problems with symmetry}\label{drec}

\subsection{Lie--Dirac reduction of the Dirac structure}

Let $(P,C,\Gamma,L)$ be a $G$-invariant optimal control problem, and let
$D_\Omega\subset TM\oplus T^*M$ be the Dirac structure on the Pontryagin bundle
$M=T^*P\times_PC$ associated with the canonical presymplectic form $\Omega$.

Since the action of $G$ on $M$ lifts naturally to the tangent and cotangent bundles, it also induces an action on the Whitney sum $TM\oplus T^*M$. It is therefore natural to ask whether the Dirac structure descends to the quotient. The following result shows that this is indeed the case.

\begin{proposition}
\label{Dinvariant}
Let $(P,C,\Gamma,L)$ be a $G$-invariant optimal control problem. Then the canonical Dirac structure
$D_\Omega\subset TM\oplus T^*M$
is $G$-invariant.
\end{proposition}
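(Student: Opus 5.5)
The plan is to reduce the statement to the $G$-invariance of $\Omega$, already established in Proposition~\ref{invariance_prop}. First we must fix what invariance of $D_\Omega$ means. For $g\in G$ write $\Theta_g:M\to M$ for the induced diffeomorphism $(x,p,u)\mapsto(x,p,u)\cdot g$. It acts on $TM\oplus T^*M$ by
\[
(X,\alpha)\longmapsto\bigl(T\Theta_g(X),\,(T\Theta_{g^{-1}})^*\alpha\bigr),
\]
that is, by pushforward on vectors and by the inverse transpose on covectors, over the base map $\Theta_g$. This action preserves the pairing $\langle\!\langle\cdot,\cdot\rangle\!\rangle$ of Definition~\ref{Dirac_vector_bundle}. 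The claim to prove is that this map carries $D_\Omega$ at $m$ onto $D_\Omega$ at $m\cdot g$.

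Take $(X,\alpha)\in (D_\Omega)_m$, so that $\alpha=i_X\Omega_m$. For any $Y\in T_{m\cdot g}M$ we compute
\[
\langle (T\Theta_{g^{-1}})^*\alpha,Y\rangle=\Omega_m\bigl(X,T\Theta_{g^{-1}}Y\bigr)=(\Theta_{g^{-1}}^*\Omega)_{m\cdot g}\bigl(T\Theta_g X,Y\bigr).
\]
By Proposition~\ref{invariance_prop}, $\Theta_{g^{-1}}^*\Omega=\Omega$. The right-hand side is therefore $\Omega_{m\cdot g}(T\Theta_gX,Y)$, so the image pair is again of the form $(X',i_{X'}\Omega)$ with $X'=T\Theta_gX$. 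This gives the inclusion $g\cdot D_\Omega\subseteq D_\Omega$. Applying the same argument with $g^{-1}$ gives the reverse inclusion; alternatively, both sides are maximally isotropic of the same rank, so the inclusion is already an equality.

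For completeness I would recall why $\Omega$ is invariant, since this is where the content of the statement lies. The projection $\mathrm{pr}_P$ is equivariant, $\mathrm{pr}_P\circ\Theta_g=\text{(cotangent lift of }\Phi_g)\circ\mathrm{pr}_P$, and cotangent lifts preserve $\omega_P$, so pulling back gives $\Theta_g^*\Omega=\Omega$.

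No step is a real obstacle. The only care needed is notational: for right actions one must use $T\Theta_{g^{-1}}^*$ rather than $T\Theta_g^*$ on covectors, so that the action on $TM\oplus T^*M$ is a genuine bundle action covering $\Theta_g$. It is also worth remarking that the Hamiltonian invariance \eqref{Hinvar} is not needed for this statement; it enters only later, when the dynamics $(X_P,dH_P)\in D_\Omega$ is reduced.
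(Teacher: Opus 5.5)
Your proof is correct and is essentially the paper's argument: the paper's one-line proof observes that $g^*\Omega=\Omega$ (Proposition~\ref{invariance_prop}) gives $(g^{*})^{-1}i_X\Omega=i_{g_*X}\Omega$, hence $(g_*X,(g^{*})^{-1}\alpha)\in D_\Omega$. You have written out the same computation pointwise, with the index conventions made explicit and the reverse inclusion added.
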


\begin{proof}
If $\alpha=i_X\Omega$ then $g^{*-1}\alpha=i_{g_*X}\Omega$, since $g^*\Omega=\Omega$ by Proposition~\ref{invariance_prop}; hence $(g_*X,g^{*-1}\alpha)\in D_\Omega$.
\end{proof}

Since the action of $G$ on $TM\oplus T^*M$ is fiber preserving, the quotient
$(TM\oplus T^*M)/G$, which is canonically the Whitney sum of the quotient bundles $TM/G$ and $T^*M/G$,
is a vector bundle over the reduced Pontryagin bundle $M/G$, and the $G$-invariance of $D_\Omega$ implies that it descends to a vector subbundle of the quotient bundle.

\begin{definition}
The vector subbundle
$[D_\Omega]_G\subset TM/G\oplus T^*M/G$
is called the \emph{Lie--Dirac reduction} of the Dirac structure $D_\Omega$.
\end{definition}

From now on we assume, as holds in all the
examples of this paper, that the standard fiber $U$ of the control bundle is a
vector space on which $G$ acts linearly, so that $\widetilde U$ is a vector
bundle and $A$ induces linear connections on $\widetilde{\mathfrak g}$,
$\widetilde{\mathfrak g}^{\,*}$ and $\widetilde U$; in particular, $A$ identifies $TM/G$ with $T(M/G)\oplus\widetilde{\mathfrak g}$ and $T^*M/G$ with $T^*(M/G)\oplus\widetilde{\mathfrak g}^{\,*}$, as in Section~\ref{atiyah_bundle}.

Notice that, in general, $[D_\Omega]_G$ is not a Dirac structure on the reduced Pontryagin bundle $M/G$, since it is not a subbundle of
\[
T(M/G)\oplus T^*(M/G).
\]
Indeed, using the canonical identifications
$TM/G\cong T(M/G)\oplus\widetilde{\mathfrak g}$ and
$T^*M/G\cong T^*(M/G)\oplus\widetilde{\mathfrak g}^{\,*}$,
we obtain
\[
[D_\Omega]_G
\subset
T(M/G)\oplus T^*(M/G)
\oplus
\widetilde{\mathfrak g}
\oplus
\widetilde{\mathfrak g}^{\,*}.
\]

It is, however, a Dirac structure in a precise sense. The bundle
\[
E=\bigl(TM\oplus T^*M\bigr)/G\longrightarrow M/G
\]
carries the quotients of the symmetric pairing of Definition~\ref{Dirac_vector_bundle} and of the Courant bracket of Definition~\ref{integrability}, both of which are preserved by the tangent and cotangent lifts of the $G$-action; hence $E$ is a Courant algebroid over $M/G$ in the general sense of Liu, Weinstein and Xu \cite{LWX97}, although it is not of the form $T(M/G)\oplus T^*(M/G)$. Accordingly, if $D\subset TM\oplus T^*M$ is a $G$-invariant Dirac structure, then $D/G$ is a Dirac subbundle of $E$, and $D/G$ is integrable whenever $D$ is, since $G$ preserves the Courant bracket (cf. \cite[p.~400]{Yo07a}). In particular $\Omega=\mathrm{pr}_P^{\,*}\omega_P$ is closed, so that $D_\Omega$ is integrable and hence so is $[D_\Omega]_G$.

We shall keep the notation $[D_\Omega]_G$ and refer to it as the \emph{reduced Dirac structure}, understanding it as a Dirac subbundle of the Courant algebroid $E$; the pullback and pushforward introduced below are then the corresponding morphisms of Courant algebroids.

 \subsection{Reduced Dirac equations of an optimal control problem with symmetry}
\label{REOCP}

Once the reduced Dirac structure $[D_\Omega]_G$ has been constructed, the corresponding reduced dynamics is simply determined by the inclusion
\begin{equation}
\label{Dirac_reduced_equations}
([X],[dH])\in[D_\Omega]_G.
\end{equation}

Here $[X]$ denotes a section of the reduced tangent bundle
$TM/G$, while $[dH]$ is the corresponding section of $T^*M/G$ determined by the differential of the reduced Pontryagin Hamiltonian
$H_G:M/G\rightarrow\mathbb R$, induced by the $G$-invariant Hamiltonian $H_P$ (cf. \eqref{Hinvar}). Under the identifications of Section~\ref{atiyah_bundle}, the reduced vector field takes values in $T(M/G)\oplus\widetilde{\mathfrak g}$ and the reduced covector field in $T^*(M/G)\oplus\widetilde{\mathfrak g}^{\,*}$.

Although equation \eqref{Dirac_reduced_equations} provides the intrinsic reduced dynamics, its explicit computation is rather involved because of the structure of the reduced bundle. A considerable simplification follows from observing that the presymplectic form on the Pontryagin bundle is the pullback of the canonical symplectic form on $T^*P$, namely
$\Omega=\mathrm{pr}_P^{\,*}\omega_P$, and does not involve the control directions. Consequently, after projecting these out, the Lie--Dirac reduction of $D_\Omega$ is directly related to the Lie--Dirac reduction of the canonical Dirac structure on $T^*P$ associated with $\omega_P$.

To make this relation precise we recall the notions of pullback and pushforward of Dirac structures \cite{Yo09}, which for the quotients considered here are the corresponding morphisms of Courant algebroids.

\begin{definition}[Pushforward and pullback of Dirac structures]
\label{push_pull}

Let $\psi:N\rightarrow N'$ be a smooth map, and let $\mathfrak{Dir}(N)$ and $\mathfrak{Dir}(N')$ denote the sets of Dirac structures on $N$ and $N'$, respectively.

The \emph{forward map}
$\mathcal F\psi:\mathfrak{Dir}(N)\rightarrow\mathfrak{Dir}(N')$
is defined by
\[
\mathcal F\psi(D)
=
\{(\psi_*u,\alpha)\mid
(u,\psi^*\alpha)\in D\},
\]
where $u\in TN$ and $\alpha\in T^*N'$.

Similarly, the \emph{backward map}
$\mathcal B\psi:\mathfrak{Dir}(N')\rightarrow\mathfrak{Dir}(N)$
is defined by
\[
\mathcal B\psi(D')
=
\{(v,\psi^*\beta)\mid
(\psi_*v,\beta)\in D'\},
\]
where $v\in TN$ and $\beta\in T^*N'$.

The pushforward and pullback of Dirac structures are then defined by
\[
\psi_*D=\mathcal F\psi(D),
\qquad
\psi^*D'=\mathcal B\psi(D').
\]
\end{definition}

Both formulas are, in the first place, pointwise prescriptions, and for an
arbitrary smooth $\psi$ the subspaces they produce are maximally isotropic at
each point but need not have constant rank, hence need not be smooth
subbundles; a forward image must in addition be constant along the fibers of
$\psi$ in order to be well defined on $N'$. We therefore read
Definition~\ref{push_pull} as producing \emph{pointwise Dirac relations}, and
call $\psi_*D$ or $\psi^*D'$ a Dirac structure only under the usual
clean-intersection and constant-rank hypotheses (see \cite{Bl04} and
Remark~\ref{restriction_comparison} below). All the maps to which the
definition is applied in this paper are surjective submersions and all the
structures involved are graphs of pullbacks of two-forms, for which those
hypotheses are automatic: the backward image of the graph of $\widetilde\Omega$
along $\psi$ is the graph of $\psi^*\widetilde\Omega$, of constant rank
$\dim N$, and the forward image of the graph of $\psi^*\widetilde\Omega$ is
again the graph of $\widetilde\Omega$.

The Dirac structures $D_\Omega$ and $D_P$ are related by pullback and pushforward through the canonical projection $\mathrm{pr}_P:M\rightarrow T^*P$.

\begin{lemma}
\label{pullback_Dirac}
Let $D_P$ denote the canonical Dirac structure on $T^*P$ associated with the symplectic form $\omega_P$. Then
\[
\mathrm{pr}_P^{\,*}D_P=D_\Omega,
\qquad
\mathrm{pr}_{P\,*}D_\Omega=D_P.
\]
\end{lemma}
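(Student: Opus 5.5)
The plan is to verify both identities pointwise from Definition~\ref{push_pull}, using only two facts. The first is $\Omega=\mathrm{pr}_P^{\,*}\omega_P$. The second is that $\mathrm{pr}_P:M\to T^*P$ is a surjective submersion whose vertical bundle is $K=\ker T\mathrm{pr}_P$. Throughout, fix $m\in M$ and write $\psi=\mathrm{pr}_P$ and $y=\psi(m)$. Note that $D_P=\{(w,\beta)\mid\beta=i_w\omega_P\}$.

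\emph{Pullback.} By definition,
\[
\psi^*D_P=\{(v,\psi^*\beta)\mid v\in T_mM,\ \beta=i_{\psi_*v}\omega_P\}.
\]
For $v\in T_mM$ and $v'\in T_mM$ we have $\langle\psi^*(i_{\psi_*v}\omega_P),v'\rangle=\omega_P(\psi_*v,\psi_*v')=\Omega(v,v')$, so $\psi^*\beta=i_v\Omega$. Every element of $\psi^*D_P$ therefore lies in $D_\Omega$. Conversely, given $(v,i_v\Omega)\in D_\Omega$, set $\beta=i_{\psi_*v}\omega_P$; then $(\psi_*v,\beta)\in D_P$ and $\psi^*\beta=i_v\Omega$. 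This gives the equality $\psi^*D_P=D_\Omega$. The equality of dimensions ($\dim M$) is automatic here, consistent with the remark after Definition~\ref{push_pull}.

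\emph{Pushforward.} By definition,
\[
\psi_*D_\Omega=\{(\psi_*u,\alpha)\mid \alpha\in T^*_yT^*P,\ \psi^*\alpha=i_u\Omega\}.
\]
Suppose $\psi^*\alpha=i_u\Omega$. The computation above gives $\psi^*\alpha=\psi^*(i_{\psi_*u}\omega_P)$. Because $\psi$ is a submersion, $\psi^*$ is injective on covectors at $m$, so $\alpha=i_{\psi_*u}\omega_P$ and $(\psi_*u,\alpha)\in D_P$. Conversely, given $(w,i_w\omega_P)\in D_P$, surjectivity of $T_m\psi$ yields $u$ with $\psi_*u=w$, and then $\psi^*(i_w\omega_P)=i_u\Omega$. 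Hence $\psi_*D_\Omega=D_P$ at each point $m$.

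The only genuinely delicate step is well-definedness of the forward image on $T^*P$: the subspace computed at $m$ must not depend on the choice of $m$ in the fiber $\psi^{-1}(y)$. This is clear here because the set obtained at every $m$ is exactly $(D_P)_y$. Smoothness and constant rank also hold automatically, since both sides are graphs of closed two-forms. Thus the main point to watch is merely the injectivity of $\psi^*$ and the surjectivity of $\psi_*$, both of which come from $\mathrm{pr}_P$ being a surjective submersion (the fibers $C_x$ being nonempty).
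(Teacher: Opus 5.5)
Your proof is correct and follows the paper's route, which compresses both inclusions into the single equivalence $i_X\Omega=\mathrm{pr}_P^{\,*}\alpha\iff i_{\mathrm{pr}_{P*}X}\omega_P=\alpha$, resting on $i_X\Omega=\mathrm{pr}_P^{\,*}(i_{\mathrm{pr}_{P*}X}\omega_P)$. You spell out what that equivalence uses silently (injectivity of $\mathrm{pr}_P^{\,*}$, surjectivity of $T\mathrm{pr}_P$, and independence of the forward image from the chosen point of the fiber); one small correction is that constant rank needs only that both sides are graphs of two-forms, since closedness bears on integrability rather than on rank.
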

\begin{proof}
Since $\Omega=\mathrm{pr}_P^{\,*}\omega_P$ and $D_P$ is the graph of $\omega_P$, the condition $i_X\Omega=\mathrm{pr}_P^{\,*}\alpha$ is equivalent to $i_{\mathrm{pr}_{P*}X}\omega_P=\alpha$, which by Definition~\ref{push_pull} is both equalities at once.
\end{proof}

The previous result is a particular instance of a more general situation. Let $(\widetilde M,\widetilde\Omega)$ be a symplectic manifold and let
$p:M\rightarrow\widetilde M$
be a surjective submersion. If
$\Omega=p^*\widetilde\Omega$,
then exactly the same argument shows that $D_\Omega=p^*D_{\widetilde\Omega}$: the presymplectic Dirac structure $D_\Omega$ projects onto the symplectic Dirac structure $D_{\widetilde\Omega}$.

Assume now that a Lie group $G$ acts on both $M$ and $\widetilde M$ in such a way that the projection $p$ is equivariant,
\begin{equation}
\label{pequiv}
p(x\cdot g)=p(x)\cdot g,
\qquad
x\in M,\;g\in G,
\end{equation}
and that the action preserves the symplectic form,
$g^*\widetilde\Omega=\widetilde\Omega$.
Then both $D_{\widetilde\Omega}$ and $D_\Omega$ are $G$-invariant.

Consequently, the Lie--Dirac reduction of $D_{\widetilde\Omega}$ defines a Dirac subbundle of the corresponding Courant algebroid over the reduced space,
\[
[D_{\widetilde\Omega}]_G
\subset
T(\widetilde M/G)\oplus
T^*(\widetilde M/G)
\oplus
\widetilde{\mathfrak g}
\oplus
\widetilde{\mathfrak g}^{\,*}.
\]

Moreover, the equivariance condition \eqref{pequiv} induces a smooth map
$p_G:M/G\rightarrow\widetilde M/G$
satisfying
\begin{equation}
\label{equiv_quotient}
p_G([x])=[p(x)].
\end{equation}

Before stating the corresponding statement after reduction, we must say what
``pullback'' means there, since $[D_\Omega]_G$ and $[D_{\widetilde\Omega}]_G$ are
not Dirac structures on manifolds but subbundles of the quotient Courant
algebroids
\[
\mathcal E_{M}=(TM/G)\oplus(T^*M/G),
\qquad
\mathcal E_{\widetilde M}=(T\widetilde M/G)\oplus(T^*\widetilde M/G),
\]
over $M/G$ and $\widetilde M/G$ respectively. A class $[\alpha]\in T^*M/G$ is not
an ordinary covector on $M/G$ --- under a principal connection it is a pair
consisting of a covector on $M/G$ and a section of the coadjoint bundle --- so
$[\,p^*\beta\,]$ is not an ordinary cotangent pullback and no map
$M/G\to\widetilde M/G$ can by itself induce it. What does induce it is the
following pair of bundle maps, both consequences of the equivariance of $p$.

Since $Tp:TM\to T\widetilde M$ is $G$-equivariant, it descends to a vector bundle
morphism
\begin{equation}
\label{quotient_tangent_map}
Tp/G:\ TM/G\longrightarrow T\widetilde M/G,
\qquad
[v]\longmapsto[\,Tp(v)\,],
\end{equation}
covering $p_G$. Dually, if $\beta$ is a covector at $p(x)$ then $p^*\beta$ is a
covector at $x$, and $p^*(\beta\cdot g)=(p^*\beta)\cdot g$, so the assignment
\begin{equation}
\label{quotient_cotangent_map}
(Tp)^*/G:\ p_G^*\bigl(T^*\widetilde M/G\bigr)\longrightarrow T^*M/G,
\qquad
[\beta]\longmapsto[\,p^*\beta\,],
\end{equation}
is a well-defined vector bundle morphism over $M/G$: the representative
$\beta$ is taken at $p(x)$ for the same representative $x$ used on the left, and
changing $x$ to $xg$ changes both sides by the action of $g$. In the splitting
determined by a principal connection, \eqref{quotient_tangent_map} is
$(Tp_G,\mathrm{id}_{\widetilde{\mathfrak g}})$ and
\eqref{quotient_cotangent_map} is
$(p_G^*,\mathrm{id}_{\widetilde{\mathfrak g}^{\,*}})$, the adjoint and coadjoint
components being the identity because $p$ is equivariant and therefore matches
infinitesimal generators: $Tp(\zeta_M)=\zeta_{\widetilde M}$ for every
$\zeta\in\mathfrak g$. This is the precise content of the informal statement that
$p_G$ ``acts as the identity on the adjoint factor''.

The pair \eqref{quotient_tangent_map}--\eqref{quotient_cotangent_map} defines a
Courant relation
\begin{equation}
\label{quotient_courant_relation}
\mathcal R_p
=
\Bigl\{
\bigl(([v],[\alpha]),([w],[\beta])\bigr)
\ :\
[w]=(Tp/G)[v],
\quad
[\alpha]=\bigl((Tp)^*/G\bigr)[\beta]
\Bigr\}
\subset
\mathcal E_M\times_{M/G}p_G^*\,\mathcal E_{\widetilde M},
\end{equation}
and we write $\mathcal B\mathcal R_p$ for the associated backward image,
\[
\mathcal B\mathcal R_p(\mathcal D')
=
\bigl\{
([v],[\alpha])\in\mathcal E_M
\ :\
\exists\,[\beta],\
\bigl((Tp/G)[v],[\beta]\bigr)\in\mathcal D'
\text{ and }
[\alpha]=\bigl((Tp)^*/G\bigr)[\beta]
\bigr\}.
\]
This is the reduced counterpart of the backward map $\mathcal B\psi$ of
Definition~\ref{push_pull}, and it reduces to it when $G$ is trivial.

\begin{proposition}
\label{pull_quotient_canonical}
With the notation introduced above, the Lie--Dirac reduction of the presymplectic
Dirac structure $D_\Omega$ is the backward image, along the Courant relation
\eqref{quotient_courant_relation}, of the Lie--Dirac reduction of the symplectic
Dirac structure $D_{\widetilde\Omega}$,
\[
[D_\Omega]_G
=
\mathcal B\mathcal R_p\bigl([D_{\widetilde\Omega}]_G\bigr)
=:
\widetilde p_G^{\,*}[D_{\widetilde\Omega}]_G,
\]
where $\widetilde p_G^{\,*}$ denotes $\mathcal B\mathcal R_p$ and not a cotangent pullback. In particular, for the Pontryagin bundle $M=T^*P\times_PC$ of a $G$-invariant optimal control problem and $p=\mathrm{pr}_P$,
\[
[D_\Omega]_G
=
\mathcal B\mathcal R_{\mathrm{pr}_P}\bigl([D_P]_G\bigr)
=:
\mathrm{pr}_G^{\,*}[D_P]_G,
\]
where $\mathrm{pr}_G:M/G\longrightarrow T^*P/G$ is the map induced by the canonical projection $\mathrm{pr}_P$.
\end{proposition}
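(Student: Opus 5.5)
The plan is to prove the identity upstairs first, fiberwise over a point $x\in M$, and then check that the Courant relation $\mathcal R_p$ is exactly the quotient of the unreduced backward relation. The upstairs statement is $D_\Omega=p^*D_{\widetilde\Omega}=\mathcal Bp(D_{\widetilde\Omega})$. It holds by the argument of Lemma~\ref{pullback_Dirac}. Since $\Omega=p^*\widetilde\Omega$, one has $i_X\Omega=p^*(i_{p_*X}\widetilde\Omega)$. Hence $(X,\alpha)\in D_\Omega$ if and only if $\alpha=p^*\beta$ with $\beta=i_{p_*X}\widetilde\Omega$, that is, $(p_*X,\beta)\in D_{\widetilde\Omega}$.

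For the converse inclusion, suppose $(p_*X,\beta)\in D_{\widetilde\Omega}$. Then $\beta=i_{p_*X}\widetilde\Omega$ is uniquely determined, because $\widetilde\Omega$ is nondegenerate. Therefore $p^*\beta=i_X\Omega$.

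Next I pass to the quotient. Take a class $([v],[\alpha])\in\mathcal E_M$ with representative $(v,\alpha)$ at $x$. I claim that $([v],[\alpha])\in[D_\Omega]_G$ if and only if $(v,\alpha)\in D_\Omega$. The forward direction uses the $G$-invariance of Proposition~\ref{Dinvariant}: membership does not depend on the representative, so the whole $G$-orbit of $(v,\alpha)$ lies in $D_\Omega$ as soon as one element does.

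The comparison with $\mathcal B\mathcal R_p$ then runs as follows.

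Suppose $(v,\alpha)\in D_\Omega$. Set $\beta=i_{Tp(v)}\widetilde\Omega$, a covector at $p(x)$. Then $(Tp(v),\beta)\in D_{\widetilde\Omega}$, so $\bigl((Tp/G)[v],[\beta]\bigr)\in[D_{\widetilde\Omega}]_G$, with both classes formed using the representative $p(x)$. Moreover $[\alpha]=[p^*\beta]=\bigl((Tp)^*/G\bigr)[\beta]$ by \eqref{quotient_cotangent_map}. So $([v],[\alpha])\in\mathcal B\mathcal R_p([D_{\widetilde\Omega}]_G)$.

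Conversely, suppose $[\beta]$ is given with $\bigl((Tp/G)[v],[\beta]\bigr)\in[D_{\widetilde\Omega}]_G$ and $[\alpha]=\bigl((Tp)^*/G\bigr)[\beta]$. By equivariance \eqref{pequiv}, the second relation lets me choose the representative $\beta$ at $p(x)$, with $x$ the same point used for $v$ and $\alpha$, such that $\alpha=p^*\beta$. This is exactly the well-definedness remark following \eqref{quotient_cotangent_map}. With that choice, $(Tp(v),\beta)$ represents a class in $[D_{\widetilde\Omega}]_G$ over $p(x)$. By invariance of $D_{\widetilde\Omega}$ it lies in $D_{\widetilde\Omega}$ itself. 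The upstairs identity then gives $(v,p^*\beta)=(v,\alpha)\in D_\Omega$.

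The main obstacle is the bookkeeping of representatives in this converse step. The relation defining $\mathcal B\mathcal R_p$ is stated on classes, so one must ensure that the representatives of $[v]$, $[\alpha]$ and $[\beta]$ can be taken simultaneously over $x$ and $p(x)$. The key point is that $G$ acts freely on $M$. Fixing $x$ therefore fixes a unique representative of each class in the fibers over $x$. Equivariance of $p$ then transports this choice to a unique representative over $p(x)$, so no ambiguity by an element of $G$ remains. I would also note briefly that both sides have the same rank $\dim M$, which gives an independent check of equality once one inclusion is known.

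Finally, I specialize to $M=T^*P\times_PC$, $\widetilde M=T^*P$ and $p=\mathrm{pr}_P$. This map is a $G$-equivariant surjective submersion by construction of the action in Section~\ref{symmetric_ocp}. The action preserves $\omega_P$ because cotangent lifts are symplectic (Proposition~\ref{invariance_prop}). Lemma~\ref{pullback_Dirac} supplies the upstairs identity. This yields $[D_\Omega]_G=\mathrm{pr}_G^{\,*}[D_P]_G$.
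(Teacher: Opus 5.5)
Your argument is the paper's: the upstairs identity $D_\Omega=\mathcal Bp(D_{\widetilde\Omega})$ from Lemma~\ref{pullback_Dirac}, followed by passage to $G$-orbits using invariance, freeness and the definitions \eqref{quotient_tangent_map}--\eqref{quotient_cotangent_map}; your bookkeeping of representatives makes explicit what the paper compresses into one sentence. Two justifications need correcting: the uniqueness of $\beta$ with $(p_*X,\beta)\in D_{\widetilde\Omega}$ holds because $D_{\widetilde\Omega}$ is a graph, not because $\widetilde\Omega$ is nondegenerate; and the unique representative over $p(x)$, which you need both for $(Tp)^*/G$ to be well defined and to identify the pair of classes with the class of $(Tp(v),\beta)$, comes from freeness of the action on $\widetilde M$ itself (automatic for $\widetilde M=T^*P$), since freeness on $M$ together with equivariance of $p$ does not imply it.
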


\begin{proof}
By Lemma~\ref{pullback_Dirac}, extended to $p$ as above, $(X,\alpha)\in D_\Omega$ if and only if $\alpha=p^*\beta$ for some $\beta$ with $(p_*X,\beta)\in D_{\widetilde\Omega}$. Passing to $G$-orbits, which is legitimate since $D_\Omega$ and $D_{\widetilde\Omega}$ are $G$-invariant and the action is free, and using $[p_*X]=(Tp/G)[X]$ and $[p^*\beta]=\bigl((Tp)^*/G\bigr)[\beta]$, which are the definitions \eqref{quotient_tangent_map} and \eqref{quotient_cotangent_map}, this becomes: $([X],[\alpha])\in[D_\Omega]_G$ if and only if $[\alpha]=\bigl((Tp)^*/G\bigr)[\beta]$ for some $[\beta]$ with $\bigl((Tp/G)[X],[\beta]\bigr)\in[D_{\widetilde\Omega}]_G$, which is the membership condition for $\mathcal B\mathcal R_p([D_{\widetilde\Omega}]_G)$. The two inclusions are this equivalence read in its two directions.
\end{proof}

\subsection{The reduced canonical Dirac structure}
\label{reduced_canonical}

We now derive the connection-dependent description of $[D_P]_G$ that will be
used in the sequel. Throughout, $A$ is a fixed principal connection on
$\rho:P\rightarrow B=P/G$, with curvature $F_A=dA+\tfrac12[A,A]$, and we use the
identification $T^*P/G\cong T^*(P/G)\oplus\widetilde{\mathfrak g}^{\,*}$ of
Section~\ref{weinstein_space}.

At a point $(z,p_z,\mu)$ we write an element of the reduced tangent bundle as
$v=(v_z,v_{p_z};\xi,v_\mu)$, where
\[
(v_z,v_{p_z})\in T_{(z,p_z)}T^*(P/G),
\qquad
\xi\in\widetilde{\mathfrak g}_z,
\qquad
v_\mu\in\widetilde{\mathfrak g}^{\,*}_z,
\]
and an element of the corresponding dual bundle as
$\alpha=(\pi_z,\pi_{p_z};\alpha_\xi,\alpha_{v_\mu})$. The pair notation
$(v_z,v_{p_z})$ for a tangent vector to $T^*(P/G)$ refers to the splitting
$T\,T^*(P/G)\cong TB\oplus T^*B$ determined by the auxiliary torsion-free
connection $\nabla$ fixed in Section~\ref{rlpp}; equivalently, all formulas
below may be read in cotangent-natural local coordinates. Because $\nabla$ is
torsion-free, the canonical symplectic form of $T^*(P/G)$ takes in this
splitting the familiar pair form appearing as the first two terms of
\eqref{Omega_red}. The duality is
\begin{equation}
\label{reduced_duality}
\langle\alpha,v\rangle
=
\langle\pi_z,v_z\rangle
+
\langle\pi_{p_z},v_{p_z}\rangle
+
\langle\alpha_\xi,\xi\rangle
+
\langle\alpha_{v_\mu},v_\mu\rangle.
\end{equation}

\begin{lemma}
\label{reduced_two_form}
Let $\Theta$ denote the canonical one-form on $T^*P$, so that
$\omega_P=-d\Theta$. Under the identification \eqref{cotangent_quotient}, the form
$\omega_P$ descends to the fiberwise skew-symmetric bilinear form
\begin{equation}
\label{Omega_red}
\begin{aligned}
\Omega_{\mathrm{red}}(v,w)
={}&
\langle w_{p_z},v_z\rangle
-
\langle v_{p_z},w_z\rangle
\\
&-
\langle v_\mu,\zeta\rangle
+
\langle w_\mu,\xi\rangle
-
\langle\mu,[\xi,\zeta]\rangle
-
\langle\mu,F_A(v_z,w_z)\rangle,
\end{aligned}
\end{equation}
where $w=(w_z,w_{p_z};\zeta,w_\mu)$.
\end{lemma}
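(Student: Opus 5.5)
The plan is to compute $\omega_P=-d\Theta$ on $G$-invariant representatives, writing everything in terms of a local section of $\rho$ together with the connection $A$. Fix a point $\alpha_x\in T^*_xP$ with $[\beta_J](\alpha_x)=(z,p_z,\mu)$. By \eqref{costate_reconstruction}, the canonical one-form evaluated on a tangent vector $V$ at $\alpha_x$ is
\[
\Theta(V)=\langle p_z,T\rho\,T\pi_P V\rangle+\langle\hat\mu,A(T\pi_PV)\rangle .
\]
So $\Theta$ splits as the pullback of the canonical one-form $\theta_B$ of $T^*B$ plus the one-form $\langle\hat\mu,\pi_P^*A\rangle$, where $\hat\mu$ is regarded as the $\mathfrak g^*$-valued function $\alpha_x\mapsto J(\alpha_x)$. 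Taking $-d$ of this splitting, the first piece produces the two canonical terms of \eqref{Omega_red}. The splitting $T\,T^*B\cong TB\oplus T^*B$ comes from the torsion-free $\nabla$, so no extra terms appear in those two entries. The second piece gives
\[
-dJ\wedge A-\langle J,dA\rangle ,
\]
and the structure equation $dA=F_A-\tfrac12[A,A]$ turns $-\langle J,dA\rangle$ into $-\langle J,F_A\rangle+\tfrac12\langle J,[A,A]\rangle$.

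Next I will evaluate each term on two invariant vectors $v,w$ whose classes are $(v_z,v_{p_z};\xi,v_\mu)$ and $(w_z,w_{p_z};\zeta,w_\mu)$. This gives $A(T\pi_Pv)=\hat\xi$ and $A(T\pi_Pw)=\hat\zeta$ (the representatives at $x$), and $T\rho\,T\pi_Pv=v_z$. The curvature piece yields $-\langle\mu,F_A(v_z,w_z)\rangle$ directly, because $F_A$ is horizontal and equivariant and therefore descends to $\Omega^2(B;\widetilde{\mathfrak g})$. The bracket piece gives $\tfrac12\langle\hat\mu,[A,A](v,w)\rangle=\langle\hat\mu,[\hat\xi,\hat\zeta]\rangle$.

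The delicate step is the term $-(dJ\wedge A)(v,w)=-dJ(v)(\hat\zeta)+dJ(w)(\hat\xi)$. Here $dJ(v)$ is the derivative of the representative $\hat\mu$, and it must be rewritten via the covariant fiber variable $v_\mu$. In a local trivialization adapted to $A$, the derivative of $\hat\mu$ along $v$ splits into three parts:
\begin{enumerate}
\item[(a)] a covariant part, identified with $v_\mu$;
\item[(b)] a connection part, which cancels against the $\nabla^A$-parallel convention built into the fiber coordinate;
\item[(c)] a part coming from the vertical motion $\hat\xi$ of the base point, namely the infinitesimal coadjoint action $\operatorname{ad}^*_{\hat\xi}\hat\mu$, with the sign fixed by \eqref{coadjoint_action}.
\end{enumerate}
I expect this identification of $dJ(v)$ with $v_\mu$ plus a coadjoint correction to be the main obstacle. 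I would do it first in the trivial case $P=B\times G$ with a flat connection, and only then pass to the general case by equivariance.

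The correction terms in (c) contribute $\langle\hat\mu,[\hat\xi,\hat\zeta]\rangle$ twice with the opposite sign. Combined with the $+\langle\hat\mu,[\hat\xi,\hat\zeta]\rangle$ from the bracket piece, the net result is $-\langle\mu,[\xi,\zeta]\rangle$, which is the coefficient in \eqref{Omega_red}. I will check this sign bookkeeping against the right-action conventions of Proposition~\ref{atiyah_structure}, where $\mathcal C^k_{ij}=-c^k_{ij}$; the coefficient $-\langle\mu,[\xi,\zeta]\rangle$ is consistent with that. The remaining pieces give $-\langle v_\mu,\zeta\rangle+\langle w_\mu,\xi\rangle$.

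Finally, every term is $G$-invariant: the pairings are between equivariant representatives, and $\omega_P$ is invariant by Proposition~\ref{invariance_prop}. The resulting expression is therefore independent of the representative $x$, and so it defines a fiberwise skew form on the reduced bundle, as claimed in \eqref{Omega_red}.
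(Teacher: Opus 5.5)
Your proposal is correct and follows the paper's proof essentially step by step. You split $\Theta=\Theta_B+\langle J,A\rangle$, apply the structure equation $dA=F_A-\tfrac12[A,A]$, and identify $dJ(v)=v_\mu+\operatorname{ad}^*_\xi\mu$; its two correction terms, $-2\langle\mu,[\xi,\zeta]\rangle$ in total, combine with the $+\langle\mu,[\xi,\zeta]\rangle$ from $\tfrac12\langle J,[A,A]\rangle$ exactly as you describe.

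The only adjustment is that this identification needs no separate passage from the flat case ``by equivariance.'' It holds pointwise for any connection $A$ if you write the base-point curve as its horizontal lift times a curve $g(s)$ in $G$ with $g(0)=e$ and $\dot g(0)=A(v)$, which is what your items (a)--(c) amount to.
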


\dem
For $\alpha_x\in T^*_xP$ and $v\in T_xP$ the connection splits
$v=\mathrm{hor}\,v+\mathrm{ver}\,v$, with $\mathrm{ver}\,v=(A(v))_P(x)$, so
that
\[
\langle\alpha_x,v\rangle
=
\langle p_z,T\rho(v)\rangle
+
\langle J(x,\alpha_x),A(v)\rangle,
\]
where $p_z\in T^*_zB$ is the horizontal part of $\alpha_x$ and $J$ is the
momentum map appearing in \eqref{cotangent_quotient}. Consequently
\[
\Theta=\Theta_B+\langle J,A\rangle
\]
in the obvious sense, the two terms being pulled back to $T^*P$ along the maps
of \eqref{cotangent_quotient}. Taking $-d$ and using the structure equation,
\[
-d\langle J,A\rangle
=
-\langle dJ\wedge A\rangle
-
\langle J,F_A\rangle
+
\tfrac12\langle J,[A,A]\rangle,
\]
Passing to the quotient, the second term produces the curvature term
$-\langle\mu,F_A(v_z,w_z)\rangle$, and the third produces
$+\langle\mu,[\xi,\zeta]\rangle$, since $[A,A](v,w)=2[\xi,\zeta]$. The first
term requires care: the fiber components $v_\mu,w_\mu$ of the splitting are
the \emph{covariant} ones, and along a curve through the point the naive
derivative of the momentum in the trivialization differs from them by the
coadjoint correction of the transport, $dJ(v)=v_\mu+\operatorname{ad}^*_\xi\mu$.
Hence
\[
-\langle dJ\wedge A\rangle(v,w)
=
-\langle v_\mu,\zeta\rangle+\langle w_\mu,\xi\rangle
-2\langle\mu,[\xi,\zeta]\rangle ,
\]
and the bracket contributions of the first and third terms combine into the
single term $-\langle\mu,[\xi,\zeta]\rangle$ of \eqref{Omega_red}, while
$-d\Theta_B$ gives the canonical part
$\langle w_{p_z},v_z\rangle-\langle v_{p_z},w_z\rangle$. Skew-symmetry of
\eqref{Omega_red} is immediate, the last two terms being skew by the
antisymmetry of $[\cdot,\cdot]$ and of $F_A$. For the detailed verification we
refer to \cite{Yo09}.
\findem

\begin{proposition}
\label{red_Dirac_structure}
The Lie--Dirac reduction $[D_P]_G$ of the canonical Dirac structure on $T^*P$
is the graph of \eqref{Omega_red}, that is,
\[
[D_P]_G
=
\bigl\{(v,\alpha)\ \big|\ \alpha=i_v\Omega_{\mathrm{red}}\bigr\},
\]
and $(v,\alpha)\in[D_P]_G$ if and only if
\begin{equation}
\label{reduced_canonical_dirac_explicit}
\left\{
\begin{aligned}
v_z
&=
\pi_{p_z},
\\[0.4em]
v_{p_z}+\pi_z
&=
-\left\langle
\mu,F_A(v_z,\cdot)
\right\rangle,
\\[0.4em]
\xi
&=
\alpha_{v_\mu},
\\[0.4em]
v_\mu+\alpha_\xi
&=
-\operatorname{ad}_\xi^*\mu.
\end{aligned}
\right.
\end{equation}
Moreover $[D_P]_G$ is
maximally isotropic with respect to the pairing induced by
\eqref{reduced_duality}.
\end{proposition}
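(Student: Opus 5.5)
The plan has three steps: show that $[D_P]_G$ is the graph of the descended form, unpack that graph into the four equations, and then read off maximal isotropy.

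\emph{Step 1 (graph property).} By definition $D_P=\{(X,i_X\omega_P)\}$. The action is by cotangent lifts, which are symplectic, so the map $X\mapsto i_X\omega_P$ is $G$-equivariant from $T(T^*P)$ to $T^*(T^*P)$. It therefore descends to a bundle map
\[
TT^*P/G\longrightarrow T^*T^*P/G,\qquad [X]\longmapsto[i_X\omega_P].
\]
The class $[D_P]_G$ is the set of pairs $([X],[i_X\omega_P])$, which is exactly the graph of this descended map. By Lemma~\ref{reduced_two_form}, $\omega_P$ descends to $\Omega_{\mathrm{red}}$, in the sense that $\omega_P(X,Y)=\Omega_{\mathrm{red}}([X],[Y])$ for $X,Y$ at the same point. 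Pairing through \eqref{reduced_duality} then gives $\langle[i_X\omega_P],[Y]\rangle=\Omega_{\mathrm{red}}([X],[Y])$ for all $[Y]$. Hence the descended map is $v\mapsto i_v\Omega_{\mathrm{red}}$, with the convention $i_v\Omega(w)=\Omega(v,w)$. The one point needing care is that \eqref{reduced_duality} really is the pairing induced from $T\,T^*P\times T^*T^*P$ in the connection splitting. I would check this by noting that $[\beta_J]$ and $[\alpha_A]$ are dual splittings, since $J$ pairs with $A$ exactly as in \eqref{costate_reconstruction}.

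\emph{Step 2 (explicit equations).} Here I compute $i_v\Omega_{\mathrm{red}}$ from \eqref{Omega_red} component by component, by reading off the coefficient of each slot $w_z,w_{p_z},\zeta,w_\mu$:
\[
\pi_z=-v_{p_z}-\langle\mu,F_A(v_z,\cdot)\rangle,\quad
\pi_{p_z}=v_z,\quad
\alpha_\xi=-v_\mu-\langle\mu,[\xi,\cdot]\rangle,\quad
\alpha_{v_\mu}=\xi .
\]
The $w_{p_z}$-slot and the $w_\mu$-slot immediately give the first and third lines of \eqref{reduced_canonical_dirac_explicit}. The $w_z$-slot gives the second line. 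For the $\zeta$-slot, the convention $\langle\mu,[\xi,\eta]\rangle=\langle\operatorname{ad}^*_\xi\mu,\eta\rangle$ used in \eqref{RHP} turns $-\langle\mu,[\xi,\cdot]\rangle$ into $-\operatorname{ad}^*_\xi\mu$, which is the fourth line. Since the graph condition is equivalent to these four identities, this proves the ``if and only if''.

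\emph{Step 3 (maximal isotropy).} For two elements $(v,i_v\Omega_{\mathrm{red}})$ and $(w,i_w\Omega_{\mathrm{red}})$ of the graph, the symmetric pairing equals
\[
\Omega_{\mathrm{red}}(v,w)+\Omega_{\mathrm{red}}(w,v)=0,
\]
by the skew-symmetry noted in Lemma~\ref{reduced_two_form}, so the graph is isotropic. A graph of a bundle map $V\to V^*$ has rank $\dim V$, which is half the rank of $V\oplus V^*$, so the graph is maximal. Alternatively, maximality is inherited directly from $D_P$ because the quotient by a free action preserves ranks.

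The main obstacle is the sign and coordinate bookkeeping in Step 2. One has to fix the convention for $i_v$, confirm the $\operatorname{ad}^*$ sign convention, and use the covariant splitting of $(v_{p_z},v_\mu)$ relative to $\nabla\oplus\nabla^A$ consistently with Lemma~\ref{reduced_two_form}. Otherwise a spurious $\operatorname{ad}^*$ or connection term would appear.
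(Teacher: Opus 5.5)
Your proposal is correct and follows essentially the paper's own proof. The paper likewise evaluates $\langle\alpha,w\rangle=\Omega_{\mathrm{red}}(v,w)$ on the four slot families using $\langle\mu,[\xi,\zeta]\rangle=\langle\operatorname{ad}^*_\xi\mu,\zeta\rangle$, and obtains maximal isotropy from skew-symmetry plus the rank count of a graph; your Step~1 only spells out the graph property that the paper takes directly from Lemma~\ref{reduced_two_form}. The check you flag holds by construction, because $(\pi_z,\pi_{p_z};\alpha_\xi,\alpha_{v_\mu})$ is defined as the dual splitting, so \eqref{reduced_duality} is the natural pairing; the duality of $[\alpha_A]$ and $[\beta_J]$ is not what is needed, since these split $TP/G$ and $T^*P/G$ rather than $T(T^*P)/G$ and its dual.
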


\dem
Write $\langle\alpha,w\rangle=\Omega_{\mathrm{red}}(v,w)$ and evaluate on the
four families $w=(w_z,0;0,0)$, $(0,w_{p_z};0,0)$, $(0,0;\zeta,0)$ and
$(0,0;0,w_\mu)$ in turn. The second gives $\pi_{p_z}=v_z$. The first gives
$\langle\pi_z,w_z\rangle=-\langle v_{p_z},w_z\rangle-\langle\mu,F_A(v_z,w_z)\rangle$,
that is, $v_{p_z}+\pi_z=-\langle\mu,F_A(v_z,\cdot)\rangle$. The fourth gives
$\alpha_{v_\mu}=\xi$. Finally the third gives
\[
\langle\alpha_\xi,\zeta\rangle
=
-\langle v_\mu,\zeta\rangle
-
\langle\mu,\operatorname{ad}_\xi\zeta\rangle
=
-\bigl\langle v_\mu+\operatorname{ad}_\xi^*\mu,\zeta\bigr\rangle
\qquad
\text{for all }\zeta\in\widetilde{\mathfrak g}_z,
\]
that is, $v_\mu+\alpha_\xi=-\operatorname{ad}_\xi^*\mu$. These are precisely the
four relations \eqref{reduced_canonical_dirac_explicit}.

As for the last assertion, let $(v,\alpha)$ and $(w,\beta)$ belong to
$[D_P]_G$. Then
\[
\langle\!\langle(v,\alpha),(w,\beta)\rangle\!\rangle
=
\langle\alpha,w\rangle+\langle\beta,v\rangle
=
\Omega_{\mathrm{red}}(v,w)+\Omega_{\mathrm{red}}(w,v)
=
0
\]
by skew-symmetry, so that $[D_P]_G$ is isotropic; and it is maximally so,
being the graph of a bundle map and therefore of fiber rank equal to that of
the reduced tangent bundle.
\findem

\begin{remark}
Both of the signs that carry the content of the reduction, namely that of
$\operatorname{ad}_\xi^*\mu$ and that of the curvature coupling, originate in
the single term $-d\langle J,A\rangle$ of Lemma~\ref{reduced_two_form}, through
the structure equation $F_A=dA+\tfrac12[A,A]$. They are not determined by
isotropy alone: replacing $-\operatorname{ad}_\xi^*\mu$ by
$+\operatorname{ad}_\xi^*\mu$, or reversing the sign of the curvature term,
again yields a maximally isotropic subbundle of the same rank. This is why
\eqref{Omega_red} is derived here, rather than the relations
\eqref{reduced_canonical_dirac_explicit} being posited.
\end{remark}

\begin{remark}[the left-action convention]
\label{left_action_dirac}
With the conventions of Remark~\ref{left_action_variational}, that is, for a
left action of $G$ on $P$, the same computation gives the relations \eqref{reduced_canonical_dirac_explicit} with the last one replaced by
\begin{equation}
\label{reduced_canonical_dirac_left}
v_\mu+\alpha_\xi
=
+\operatorname{ad}_\xi^*\mu,
\end{equation}
which for $P=G$ is the relation of Yoshimura and Marsden
\cite[Prop.~4.1]{Yo07a}; substituting the reduced vector field and the reduced
differential then returns \eqref{RHP} with the coadjoint equation \eqref{RHP_left},
so that Theorem~\ref{red_Dirac_ham} below holds verbatim in this convention.
\end{remark}


\subsection{The Lie--Dirac reduction and the reduced Pontryagin Maximum Principle}
\label{ldrpmp}

In Section~\ref{pmpreduc} we derived the reduced Pontryagin Maximum Principle by reducing the Lagrange--Pontryagin variational principle. We now show that exactly the same equations are obtained from the Lie--Dirac reduction of the implicit Dirac formulation of the optimal control problem. By Proposition~\ref{pull_quotient_canonical}, the reduced presymplectic Dirac structure $[D_\Omega]_G$ is the pullback of the reduced canonical Dirac structure $[D_P]_G$, so that it is sufficient to describe the dynamics using $[D_P]_G$.

Let $H_G:M/G\longrightarrow\mathbb R$ be the reduced Pontryagin Hamiltonian. Since the optimal control problem is regular, the primary constraint manifold
$M_1=\{\partial H_P/\partial u^a=0\}$
is $G$-invariant and therefore descends to the quotient $[M_1]_G=M_1/G$, given by $\partial H_G/\partial u^a=0$.
Accordingly, the reduced implicit Dirac system is defined by
\begin{equation}
\label{ecimplicpontry}
\left(
[X]_G,
dh
\right)
\in
[D_P]_G,
\end{equation}
where both entries are taken along the map
\[
p_{1,G}:[M_1]_G\longrightarrow T^*P/G
\]
induced by $\mathrm{pr}_P$, and $h$ is the reduced Hamiltonian on $T^*P/G$
obtained by eliminating the stationary control. Some care is needed with the
types here. The differential $dH_G|_{T[M_1]_G}$ is a covector on $[M_1]_G$,
whereas the covector component of $[D_P]_G$ lives over $T^*P/G$; the two are
matched by $p_{1,G}$, which by Proposition~\ref{regular_points} is a local
diffeomorphism at every regular point of $M_1$; since the action is free and
proper and $\mathrm{pr}_P$ is equivariant, that property descends to
$p_{1,G}$. Hence
\[
h=\bigl(p_{1,G}^{\,*}\bigr)^{-1}\bigl(H_G|_{[M_1]_G}\bigr)
\]
is defined, and the vector component $[X]_G$ is likewise a vector field along
$p_{1,G}$ rather than on $[M_1]_G$. Since regularity gives only a local
diffeomorphism, $h$ is defined locally on each branch of the stationary control,
that is, of the solution set of $\partial H_G/\partial u^a=0$; it is defined
globally on $T^*P/G$ only under the
additional hypothesis that the stationary control is globally unique. All the
statements below are to be read on a fixed such branch.

We now show that this implicit formulation is completely equivalent to the reduced Pontryagin Maximum Principle obtained from the reduced variational principle.
\begin{theorem}
\label{red_Dirac_ham}
Let $(P,C,\Gamma,L)$ be a regular $G$-invariant optimal control problem. The equations of the reduced Pontryagin Maximum Principle obtained from the reduced Lagrange--Pontryagin variational principle coincide with the implicit equations determined by the Lie--Dirac reduction
\eqref{ecimplicpontry}.

\end{theorem}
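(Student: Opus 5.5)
The plan is to compute both sides explicitly in the splitting determined by a fixed principal connection $A$ and compare them term by term with \eqref{RHP}. By Theorem~\ref{connection_independent} and Proposition~\ref{connection_change} this involves no loss of generality: both formulations transform in the same way under a change of connection, and the Hamiltonian $h$ is literally the same function. On the Dirac side I use Proposition~\ref{red_Dirac_structure}, which gives $[D_P]_G$ as the graph of $\Omega_{\mathrm{red}}$ and characterizes membership by the four relations \eqref{reduced_canonical_dirac_explicit}. By Proposition~\ref{pull_quotient_canonical}, membership in $[D_\Omega]_G$ reduces to membership in $[D_P]_G$ along $p_{1,G}$. Accordingly I would take the reduced Hamiltonian on $M/G$, restricted to $[M_1]_G$ and transported by the local diffeomorphism $p_{1,G}$, as the function $h(z,p_z,\mu)$ on $T^*P/G$.

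The first step is to identify the two entries of \eqref{ecimplicpontry} in components. Along a curve $(z(t),p_z(t),\mu(t))$, the reduced vector field has components
\[
[X]_G=\bigl(\dot z,\tfrac{Dp_z}{Dt};\xi,\tfrac{D\mu}{Dt}\bigr),
\]
where $\xi$ is the vertical velocity class, as in the decomposition of $TP/G$. Here $v_{p_z}$ and $v_\mu$ are the covariant components, with respect to $\nabla$ and $\nabla^A$ respectively, as in Lemma~\ref{reduced_two_form}. Next I would argue that the covector $[dh]$ has components
\[
(\pi_z,\pi_{p_z};\alpha_\xi,\alpha_{v_\mu})=\bigl(\partial h/\partial z,\ \partial h/\partial p_z;\ 0,\ \partial h/\partial\mu\bigr).
\]
Here $\partial h/\partial z$ is the covariant partial derivative of Section~\ref{rlpp}. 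The component $\alpha_\xi$ vanishes because $h$ is $G$-invariant, so its differential annihilates the infinitesimal generators, and these represent the $\widetilde{\mathfrak g}$-directions of $TM/G$.

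Substituting into \eqref{reduced_canonical_dirac_explicit} gives
\[
\dot z=\partial h/\partial p_z,\qquad
\frac{Dp_z}{Dt}=-\frac{\partial h}{\partial z}-\langle\mu,F_A(\dot z,\cdot)\rangle,\qquad
\xi=\partial h/\partial\mu,\qquad
\frac{D\mu}{Dt}=-\operatorname{ad}^*_\xi\mu.
\]
By antisymmetry of $F_A$, as noted after \eqref{RHP}, these are the first four equations of \eqref{RHP}. The fifth equation, $\partial h/\partial u=0$, is exactly the defining equation of $[M_1]_G$ on which the system is posed. In this sense it is encoded in restricting to the constraint submanifold and eliminating the control on a fixed branch; by the chain rule, $\partial h/\partial(z,p_z,\mu)$ at the stationary control equals the corresponding partial derivatives of $H_G$. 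The converse direction is the same computation read backwards. Each relation in \eqref{reduced_canonical_dirac_explicit} is an equivalence, and regularity makes $p_{1,G}$ a local diffeomorphism, so \eqref{RHP} implies the inclusion.

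I expect the main obstacle to be justifying the identification of $[dh]$ and $[X]_G$ with the covariant components. The issue is that the splitting $T\,T^*B\cong TB\oplus T^*B$ uses $\nabla$ and the fiber derivative of $\mu$ uses $\nabla^A$, while Lemma~\ref{reduced_two_form} asserts that $v_\mu$ is the covariant derivative. I would check this in local coordinates with $\nabla$ flat in the chart, as in the proof of Theorem~\ref{martinez_comparison}. In that chart the $\Gamma^k_{\alpha i}$ terms produced by covariantizing $D\mu/Dt$ and $\partial h/\partial z$ must cancel exactly as they did there. Alternatively, I could shortcut by noting that Theorem~\ref{martinez_comparison} already identifies \eqref{RHP} with the algebroid equations, which are the Hamiltonian equations of the graph of $\Omega_{\mathrm{red}}$.
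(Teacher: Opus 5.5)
Your proposal is correct and follows the paper's own proof. You fix the splitting determined by $A$ and $\nabla$, represent $[X]_G$ and $dh$ by $(\dot z,Dp_z/Dt;\xi,D\mu/Dt)$ and $(\partial h/\partial z,\partial h/\partial p_z;0,\partial h/\partial\mu)$, substitute into \eqref{reduced_canonical_dirac_explicit}, and read off \eqref{RHP}, with $\partial h/\partial u=0$ supplied by $[M_1]_G$ — this is exactly the paper's route, and your reasons for $\alpha_\xi=0$ ($G$-invariance of $h$) and for matching the partial derivatives at the stationary control (chain rule) justify steps the paper only asserts. Drop the opening appeal to Theorem~\ref{connection_independent}: it is unnecessary because both sides are written with the same $A$, and its proof identifies \eqref{RHP} with the Dirac condition, so invoking it here is circular.
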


\begin{proof}
Let $(z(t),p_z(t),\mu(t),u(t))$ be a curve in $[M_1]_G$, on a fixed branch of the stationary control. In the splitting of Section~\ref{reduced_canonical}, determined by $\nabla$ and by the connection induced by $A$, the velocity of the reduced curve and the differential of $h$ are represented by
\begin{equation}
\label{reduced_dirac_vector_field}
[X]_G
=
\left(
\dot z,
\frac{Dp_z}{Dt};
\xi,
\frac{D\mu}{Dt}
\right),
\qquad
\left(
\pi_z,\pi_{p_z};
\alpha_\xi,\alpha_{v_\mu}
\right)
=
\left(
\frac{\partial h}{\partial z},
\frac{\partial h}{\partial p_z};
0,
\frac{\partial h}{\partial\mu}
\right),
\end{equation}
where $Dp_z/Dt$, $D\mu/Dt$ and the covariant partial derivative $\partial h/\partial z$ are those fixed in Section~\ref{rlpp}, and the derivative with respect to the coadjoint variable $\mu$ is paired with the variation $v_\mu$ and therefore corresponds to $\alpha_{v_\mu}$, so that $\alpha_\xi=0$. Substituting \eqref{reduced_dirac_vector_field} into the four relations \eqref{reduced_canonical_dirac_explicit} gives, in turn, $\dot z=\partial h/\partial p_z$, $Dp_z/Dt=-\partial h/\partial z-\langle\mu,F_A(\dot z,\cdot)\rangle$, $\xi=\partial h/\partial\mu$ and, since $\alpha_\xi=0$, $D\mu/Dt=-\operatorname{ad}_\xi^*\mu$. Together with the primary constraint $\partial h/\partial u=0$ these are exactly the reduced Lagrange--Pontryagin equations \eqref{RHP}, the curvature term being written in the equivalent form noted after \eqref{RHP}.
\end{proof}

\section{Application: Optimal control of a rigid spacecraft with internal rotors}
\label{sec:spacecraft}

In this section, we illustrate the reduction procedure developed above by means
of the optimal control of a rigid spacecraft equipped with three internal
reaction wheels. The mechanical system is adapted from the classical model
considered in \cite{Bloch1992}. It provides a particularly convenient example
for our purposes, since the configuration manifold carries a natural
$SO(3)$-symmetry, while the torques applied to the internal rotors provide
natural control variables.

We first describe the mechanical system and its Hamiltonian formulation.
We then regard the resulting Hamiltonian phase space as the state manifold of
an optimal control problem. Thus, the configuration manifold of the mechanical
system will be denoted by $Q$, whereas the state manifold entering the optimal
control problem will be $P=T^*Q$. The reduction procedure developed in the
previous sections is therefore applied to the induced $SO(3)$-action on $P$.


\subsection{Mechanical model}

The configuration manifold of the spacecraft--rotor system is
\[
Q=SO(3)\times (S^1)^3,
\]
where $R\in SO(3)$ describes the attitude of the spacecraft and
$\theta=(\theta_1,\theta_2,\theta_3)\in(S^1)^3$ denotes the angular positions
of the three reaction wheels. The body angular velocity is
$\widehat\Omega=R^{-1}\dot R\in\mathfrak{so}(3)$, written simply as
$\Omega\in\mathbb R^3$ under the standard identification
$\mathfrak{so}(3)\simeq\mathbb R^3$, and the angular velocity of the rotors
relative to the spacecraft is $\Omega_r=\dot\theta\in\mathbb R^3$.

Let $I\in\mathbb R^{3\times3}$ be the inertia tensor of the spacecraft
excluding the axial inertia of the reaction wheels, and let
$J=\operatorname{diag}(J_1,J_2,J_3)$ be the inertia tensor of the three rotors
about their rotation axes. The mechanical Lagrangian is taken to be
\begin{equation}
\label{spacecraft_lagrangian}
\mathcal L(R,\theta,\Omega,\Omega_r)
=
\frac12\Omega^T I\Omega
+
\frac12(\Omega+\Omega_r)^T
J(\Omega+\Omega_r).
\end{equation}
In particular, $\mathcal L$ is independent of $R$ and $\theta$ when expressed
in body variables.

The natural symmetry of the rigid-body system is the left multiplication of
$SO(3)$ on the attitude variable, that is, the \emph{left} action
\begin{equation}
\label{spacecraft_action}
A\cdot(R,\theta)
=
(AR,\theta),
\qquad
A\in SO(3).
\end{equation}
This is a smooth, free and proper left action. The general reduction theory of
Sections~\ref{pmpreduc} and~\ref{drec} was formulated for right actions; the
corresponding statements for a left action are those of
Remarks~\ref{left_action_variational} and~\ref{left_action_dirac}, and it is
these that we use throughout the present section. With this choice the product
connection on $P\cong SO(3)\times B$ is normalized by $A(\zeta_P)=\zeta$ with
$\zeta_P(R)=\zeta R$, so that $A(\dot R)=\dot RR^{-1}$; trivializing the adjoint
bundle by the section $R=I$ then gives
$\xi=\operatorname{Ad}_{R^{-1}}(\dot RR^{-1})=R^{-1}\dot R=\Omega$ and, dually,
$\mu=\Pi$.

Indeed, under \eqref{spacecraft_action} one has $R\mapsto AR$ and
$\dot R\mapsto A\dot R$, so that $(AR)^{-1}(A\dot R)=R^{-1}\dot R$. The body
angular velocity $\Omega$ is therefore invariant under the action, as is
$\Omega_r=\dot\theta$, and consequently
$\mathcal L(AR,\theta,\Omega,\Omega_r)=\mathcal L(R,\theta,\Omega,\Omega_r)$:
the mechanical Lagrangian is $SO(3)$-invariant. The quotient of the
configuration manifold is $Q/SO(3)\cong(S^1)^3$.


\subsection{Hamiltonian formulation}

We now pass to the Hamiltonian description of the spacecraft. The momenta
conjugate to $\Omega$ and $\Omega_r$ are
\begin{equation}
\label{spacecraft_momenta}
m
=
\frac{\partial\mathcal L}{\partial\Omega}
=
I\Omega+J(\Omega+\Omega_r),
\end{equation}
and
\begin{equation}
\label{spacecraft_rotor_momentum}
l
=
\frac{\partial\mathcal L}{\partial\Omega_r}
=
J(\Omega+\Omega_r).
\end{equation}
It follows that
\begin{equation}
\label{spacecraft_inverse_legendre}
\Omega
=
I^{-1}(m-l),
\qquad
\Omega_r
=
J^{-1}l-\Omega.
\end{equation}

Assuming that $I$ and $J$ are positive definite, the Legendre transformation
is hyperregular. We may therefore work on the Hamiltonian phase space $P=T^*Q.$

Using the body trivialization of $T^*SO(3)$, we identify
\[
P
\simeq
SO(3)\times(S^1)^3
\times\mathfrak{so}(3)^*
\times\mathbb R^3,
\]
and write a point of $P$ as $x=(R,\theta,m,l)$.

The Hamiltonian is obtained by the Legendre transformation,
\[
H
=
\langle m,\Omega\rangle
+
\langle l,\Omega_r\rangle
-
\mathcal L.
\]
Using \eqref{spacecraft_inverse_legendre}, it becomes
\begin{equation}
\label{spacecraft_hamiltonian}
H(R,\theta,m,l)
=
\frac12(m-l)^T I^{-1}(m-l)
+
\frac12l^T J^{-1}l.
\end{equation}

The corresponding Hamiltonian equations are
\begin{equation}
\label{spacecraft_uncontrolled_hamiltonian}
\begin{aligned}
\dot R
&=
R\widehat{\Omega},
\\
\dot\theta
&=
\Omega_r,
\\
\dot m
&=
m\times\Omega,
\\
\dot l
&=
0,
\end{aligned}
\end{equation}
where
\begin{equation}
\label{spacecraft_velocities}
\Omega
=
I^{-1}(m-l),
\qquad
\Omega_r
=
J^{-1}l-I^{-1}(m-l).
\end{equation}


\subsection{The induced symmetry on the state manifold}

The $SO(3)$-action \eqref{spacecraft_action} on $Q$ induces, by cotangent
lift, a free and proper action on $P=T^*Q.$

In the body trivialization introduced above, the action is
\begin{equation}
\label{spacecraft_phase_action}
A\cdot(R,\theta,m,l)
=
(AR,\theta,m,l).
\end{equation}
Thus the reduced state manifold is
\begin{equation}
\label{spacecraft_reduced_state}
B=P/SO(3)
\cong
(S^1)^3
\times\mathfrak{so}(3)^*
\times\mathbb R^3.
\end{equation}
A point of $B$ is written $z=(\theta,m,l)$. Notice that it is $B=P/G$, and not
$Q/G$, that is the base manifold entering the reduction of the optimal control
problem developed in the previous sections.

The principal bundle $\rho_P:P\rightarrow B$ is trivial, and we choose the
principal connection associated with the product decomposition
$P\simeq SO(3)\times B$. The adjoint and coadjoint bundles are then trivial,
\[
\widetilde{\mathfrak{so}(3)}
\cong
B\times\mathfrak{so}(3),
\qquad
\widetilde{\mathfrak{so}(3)}^{\,*}
\cong
B\times\mathfrak{so}(3)^*,
\]
and the connection is flat, $F_A=0$.

Accordingly, the general identifications obtained in the previous sections
become
\begin{equation}
\label{spacecraft_atiyah}
TP/SO(3)
\cong
TB\oplus\widetilde{\mathfrak{so}(3)}
\end{equation}
and
\begin{equation}
\label{spacecraft_reduced_cotangent}
T^*P/SO(3)
\cong
T^*B\oplus\widetilde{\mathfrak{so}(3)}^{\,*}.
\end{equation}

These are precisely the reduced tangent and cotangent bundles entering the
reduction procedure developed above.


\subsection{Optimal control extension}

We now introduce controls through torques applied to the three internal
reaction wheels. The control space is $U=\mathbb R^3$ with the trivial
$SO(3)$-action, so the control bundle is $C=P\times U\rightarrow P$ and a
control is written $u=(u_1,u_2,u_3)\in\mathbb R^3$. The applied torques modify
the rotor momentum equation to $\dot l=u$.
Thus the controlled dynamics is described by the bundle map
$\Gamma:C=P\times U\rightarrow TP$ given by
\begin{equation}
\label{spacecraft_controlled_vector_field}
\Gamma(R,\theta,m,l,u)
=
\left(
R\widehat{\Omega},
\Omega_r,
m\times\Omega,
u
\right),
\end{equation}
where $\Omega=I^{-1}(m-l)$ and $\Omega_r=J^{-1}l-\Omega$. Equivalently, the
controlled equations are
\begin{equation}
\label{spacecraft_controlled_dynamics}
\begin{aligned}
\dot R
&=
R\widehat{\Omega},
\\
\dot\theta
&=
J^{-1}l-\Omega,
\\
\dot m
&=
m\times\Omega,
\\
\dot l
&=
u.
\end{aligned}
\end{equation}

Introducing the input vector fields $Y_i=\partial/\partial l_i$ for
$i=1,2,3$, the controlled dynamics is equivalently
$\Gamma(x,u)=X_H(x)+\sum_{i=1}^3u_iY_i(x)$, which is control-affine with drift
$X_H$.

\begin{proposition}
The bundle map $\Gamma:P\times U\rightarrow TP$ is $SO(3)$-equivariant.
\end{proposition}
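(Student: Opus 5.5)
We verify condition (ii) of Definition~\ref{ocp_symmetry}, namely $T\Phi_A\circ\Gamma=\Gamma\circ\Psi_A$ for every $A\in SO(3)$. Here $\Phi_A$ is the action \eqref{spacecraft_phase_action} on $P$, and $\Psi_A(R,\theta,m,l,u)=(AR,\theta,m,l,u)$ is its lift to $C=P\times U$, which is trivial on the factor $U$. The computation is done in the body trivialization $P\simeq SO(3)\times(S^1)^3\times\mathfrak{so}(3)^*\times\mathbb R^3$. In this trivialization $\Phi_A$ is left multiplication on the first factor and the identity on the others. Hence its tangent map sends a vector $(\delta R,\delta\theta,\delta m,\delta l)$ at $(R,\theta,m,l)$ to $(A\,\delta R,\delta\theta,\delta m,\delta l)$ at $(AR,\theta,m,l)$.

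The first step is to compute the left-hand side from \eqref{spacecraft_controlled_vector_field}:
\[
T\Phi_A\bigl(\Gamma(R,\theta,m,l,u)\bigr)=\bigl(AR\widehat\Omega,\;\Omega_r,\;m\times\Omega,\;u\bigr),
\]
based at $(AR,\theta,m,l)$. The second step is to compute the right-hand side $\Gamma(AR,\theta,m,l,u)$. Because $\Omega=I^{-1}(m-l)$ and $\Omega_r=J^{-1}l-\Omega$ depend only on $(m,l)$, they are unchanged by replacing $R$ with $AR$. Therefore
\[
\Gamma(AR,\theta,m,l,u)=\bigl((AR)\widehat\Omega,\;\Omega_r,\;m\times\Omega,\;u\bigr).
\]
The two expressions agree by associativity, $A(R\widehat\Omega)=(AR)\widehat\Omega$.

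The only point that needs care, and it is a mild one, is justifying that in the body trivialization the cotangent-lifted action really is $(R,\theta,m,l)\mapsto(AR,\theta,m,l)$ with $m$ untouched. This is the content of \eqref{spacecraft_phase_action}. It holds because body momenta are obtained by left trivialization, $m=T^*L_R(p_R)$, and left translations satisfy $L_{AR}=L_A\circ L_R$, so $m$ is invariant. Granting this, the action on the tangent space is just $TL_A$ on the $SO(3)$ factor, which is what we used above.

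As a coordinate-free cross-check, one can use the decomposition $\Gamma=X_H+\sum_iu_iY_i$. The Hamiltonian $H$ in \eqref{spacecraft_hamiltonian} is $SO(3)$-invariant and the action is symplectic, so $X_H$ is equivariant. The fields $Y_i=\partial/\partial l_i$ are invariant because $\Phi_A$ does not touch $l$. Combined with the trivial action on $U$, this again gives equivariance of $\Gamma$. Together with the obvious invariance of whatever running cost depends only on $(m,l,u)$, this yields the $SO(3)$-invariance needed for the reduction.
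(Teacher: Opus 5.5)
Your proposal is correct. Your main argument is a direct verification in the body trivialization. Because $\Omega$ and $\Omega_r$ depend only on $(m,l)$, and $T\Phi_A$ acts as $\delta R\mapsto A\,\delta R$ on the $SO(3)$ factor and trivially on the others, equivariance comes down to the associativity $A(R\widehat\Omega)=(AR)\widehat\Omega$.

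The paper's proof is exactly your closing cross-check. Invariance of $H$ in \eqref{spacecraft_hamiltonian} makes the drift $X_H$ equivariant. The input fields $Y_i=\partial/\partial l_i$ are invariant because the lifted action does not touch $l$. The action on $U$ is trivial.

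The two routes buy different things:
\begin{itemize}
\item The structural route never uses the explicit form of $X_H$ and would apply unchanged to any invariant Hamiltonian drift. It does, however, tacitly use that the cotangent-lifted action is symplectic, which you state explicitly.
\item Your direct route needs instead the explicit formula \eqref{spacecraft_controlled_vector_field} and the fact that the lifted action fixes $m$ in the body trivialization. You justify the latter correctly. Written precisely: with $m=T_e^*L_R(p_R)$ one has $\langle m,\xi\rangle=\langle p_R,R\xi\rangle$. This pairing is unchanged under $p_R\mapsto p_{AR}$, $\langle p_{AR},v\rangle=\langle p_R,A^{-1}v\rangle$.
\end{itemize}

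Your final remark about the running cost is harmless but not needed here, since the proposition concerns $\Gamma$ alone.
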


\begin{proof}
The Hamiltonian \eqref{spacecraft_hamiltonian} is invariant under the
$SO(3)$-action \eqref{spacecraft_phase_action}, so its Hamiltonian vector field
$X_H$ is equivariant. The input vector fields $Y_i=\partial/\partial l_i$ are
invariant, since the lifted action leaves the rotor momentum variables $l_i$
unchanged. The action on $U$ is trivial, so
$T\Phi_A^P\circ\Gamma=\Gamma\circ\Psi_A$ for every $A\in SO(3)$, with
$\Psi_A(x,u)=(A\cdot x,u)$. Hence $\Gamma$ is $SO(3)$-equivariant.
\end{proof}

We consider the quadratic running cost
\begin{equation}
\label{spacecraft_running_cost}
L:P\times U\longrightarrow\mathbb R,
\qquad
L(x,u)
=
\frac12|u|^2.
\end{equation}
Since the $SO(3)$-action on $U$ is trivial, the running cost is
$SO(3)$-invariant.

We have therefore obtained an $SO(3)$-invariant optimal control problem
\[
(P,C,\Gamma,L),
\qquad
P=T^*Q,
\qquad
C=P\times\mathbb R^3.
\]


\subsection{Pontryagin formulation}

We now construct the Pontryagin bundle associated with this optimal control
problem. Since the state manifold is $P=T^*Q$, the Pontryagin bundle is
\begin{equation}
\label{spacecraft_pontryagin_bundle}
M
=
T^*P\times_PC
=
T^*(T^*Q)\times_{T^*Q}
\bigl(T^*Q\times\mathbb R^3\bigr).
\end{equation}

A costate $\lambda\in T_x^*P$ is written, in the trivialization introduced
above, as $\lambda=(\Pi,p_\theta,p_m,p_l)$, where
\[
\Pi\in\mathfrak{so}(3)^*,
\qquad
p_\theta,p_m,p_l\in\mathbb R^3.
\]

The Pontryagin Hamiltonian is
\begin{equation}
\label{spacecraft_pontryagin_hamiltonian}
H_P(x,\lambda,u)
=
\langle\lambda,\Gamma(x,u)\rangle
-
\frac12|u|^2.
\end{equation}
Using \eqref{spacecraft_controlled_vector_field}, this becomes
\begin{equation}
\label{spacecraft_pontryagin_hamiltonian_explicit}
\begin{aligned}
H_P
={}&
\langle\Pi,\Omega\rangle
+
\left\langle
p_\theta,
J^{-1}l-\Omega
\right\rangle
+
\left\langle
p_m,
m\times\Omega
\right\rangle
\\
&+
\langle p_l,u\rangle
-
\frac12|u|^2,
\end{aligned}
\end{equation}
where $\Omega=I^{-1}(m-l)$. The primary constraint $\partial H_P/\partial u=0$
gives
\begin{equation}
\label{spacecraft_stationarity}
u=p_l,
\end{equation}
and $\partial^2H_P/\partial u^2=-\operatorname{Id}_{\mathbb R^3}$ is
nondegenerate, so the optimal control problem is regular.


\subsection{Reduction of the optimal control problem}

Since the action on $U=\mathbb R^3$ is trivial, the reduced control bundle is
$\widetilde U=C/SO(3)\cong B\times\mathbb R^3$, a vector bundle, and an element
of it is written $(z,u)=(\theta,m,l,u)$. The quotient of the Pontryagin bundle
is, by the general construction,
\begin{equation}
\label{spacecraft_reduced_pontryagin}
M/SO(3)
\cong
T^*B
\oplus
\widetilde{\mathfrak{so}(3)}^{\,*}
\oplus
\widetilde U.
\end{equation}
All the bundles appearing here are trivial in the present example, so that
$M/SO(3)\cong T^*B\oplus(B\times\mathfrak{so}(3)^*)\oplus(B\times\mathbb R^3)$.

The equivariant controlled vector field $\Gamma$ induces a reduced bundle map
$[\Gamma]:\widetilde U\rightarrow TP/SO(3)$, which the identification
\eqref{spacecraft_atiyah} decomposes as
$[\Gamma](z,u)=(\widetilde\Gamma(z,u),\widetilde\gamma(z,u))$.

For $z=(\theta,m,l)$, the reduced base dynamics is
\begin{equation}
\label{spacecraft_reduced_horizontal_dynamics}
\widetilde\Gamma(\theta,m,l,u)
=
\left(
J^{-1}l-\Omega,
m\times\Omega,
u
\right),
\end{equation}
whereas the vertical component associated with the eliminated attitude
variable is
\begin{equation}
\label{spacecraft_reduced_vertical_dynamics}
\widetilde\gamma(\theta,m,l,u)
=
\Omega
=
I^{-1}(m-l).
\end{equation}


\subsection{Reduced variational formulation}

\subsubsection{Reduced Lagrange--Pontryagin principle}

We now perform explicitly the reduction of the Lagrange--Pontryagin
variational principle for the spacecraft optimal control problem.

Recall that the unreduced Lagrange--Pontryagin functional is
\begin{equation}
\label{spacecraft_unreduced_LP}
S
=
\int_0^T
\left[
L(x,u)
+
\left\langle
\lambda,\dot x-\Gamma(x,u)
\right\rangle
\right]dt,
\end{equation}
with $x=(R,\theta,m,l)\in P$, $\lambda\in T_x^*P$ and $u\in\mathbb R^3$.
Equivalently, in Hamilton--Pontryagin form,
\begin{equation}
\label{spacecraft_unreduced_HP}
S
=
\int_0^T
\left[
\langle\lambda,\dot x\rangle
-
H_P(x,\lambda,u)
\right]dt.
\end{equation}

The functional is reduced using the decomposition
$TP/SO(3)\cong TB\oplus\widetilde{\mathfrak{so}(3)}$ determined by the
principal connection introduced above. A curve
$x(t)=(R(t),\theta(t),m(t),l(t))$ in $P$ has reduced base curve
$z(t)=(\theta(t),m(t),l(t))\in B$, and its velocity $\dot x$ determines the
reduced pair $(\dot z,\xi)\in TB\oplus\widetilde{\mathfrak{so}(3)}$, where
$\xi$ is the vertical component carried by the attitude variable. Dually, the
costate is represented by $(p_z,\mu)\in
T^*B\oplus\widetilde{\mathfrak{so}(3)}^{\,*}$ with $p_z=(p_\theta,p_m,p_l)$.

For the spacecraft the two components of the reduced controlled vector field
are
\[
\widetilde\gamma(z,u)=\Omega=I^{-1}(m-l),
\qquad
\widetilde\Gamma(z,u)=\bigl(J^{-1}l-\Omega,\ m\times\Omega,\ u\bigr).
\]
Under these identifications the pairing splits,
\[
\langle\lambda,\dot x-\Gamma(x,u)\rangle
=
\left\langle p_z,\dot z-\widetilde\Gamma(z,u)\right\rangle
+
\left\langle \mu,\xi-\widetilde\gamma(z,u)\right\rangle ,
\]
so that \eqref{spacecraft_unreduced_LP} reduces to the density \eqref{red_lag}
of the general theory, whose running-cost term is here $\tfrac12|u|^2$. (In
\eqref{red_lag} and \eqref{reduced_pontryagin_hamiltonian} the symbol $l$
denotes the reduced running cost; in the present section $l$ is the rotor
momentum, a component of $z=(\theta,m,l)$.) Written out,
\begin{equation}
\label{spacecraft_reduced_LP_explicit}
\begin{aligned}
S_G
=
\int_0^T
\Big[
&
\frac12|u|^2
+
\left\langle
p_\theta,
\dot\theta-J^{-1}l+\Omega
\right\rangle
\\
&
+
\left\langle
p_m,
\dot m-m\times\Omega
\right\rangle
+
\left\langle
p_l,
\dot l-u
\right\rangle
\\
&
+
\left\langle
\mu,
\xi-\Omega
\right\rangle
\Big]dt,
\end{aligned}
\end{equation}
where $\Omega=I^{-1}(m-l)$ throughout. Equivalently, the reduced Pontryagin
Hamiltonian \eqref{reduced_pontryagin_hamiltonian} is here
\begin{equation}
\label{spacecraft_reduced_hamiltonian_explicit}
\begin{aligned}
h
={}&
\left\langle
p_\theta,
J^{-1}l-\Omega
\right\rangle
+
\left\langle
p_m,
m\times\Omega
\right\rangle
+
\langle p_l,u\rangle
\\
&+
\langle\mu,\Omega\rangle
-
\frac12|u|^2,
\end{aligned}
\end{equation}
and hence
\begin{equation}
\label{spacecraft_reduced_HP_functional}
S_G
=
\int_0^T
\left[
\langle p_z,\dot z\rangle
+
\langle\mu,\xi\rangle
-
h(z,p_z,\mu,u)
\right]dt.
\end{equation}


\subsubsection{Reduced variations}

We now derive the reduced optimality equations directly from
\eqref{spacecraft_reduced_HP_functional}, rather than by quoting the general
result of Section~\ref{pmpreduc}.

The variations $\delta z=(\delta\theta,\delta m,\delta l)$ of the base
variables are arbitrary and vanish at the endpoints,
$\delta z(0)=\delta z(T)=0$; the variations $\delta p_\theta$, $\delta p_m$,
$\delta p_l$, $\delta\mu$ and $\delta u$ are arbitrary and unconstrained.

The variation of the vertical variable $\xi$, however, is constrained by
the reduction. With the convention adopted in the general reduction theory,
it is of the form
\begin{equation}
\label{spacecraft_reduced_vertical_variation}
\delta\xi
=
\frac{D\eta}{Dt}
+
[\xi,\eta]
+
F_A(\delta z,\dot z),
\end{equation}
where $\eta(t)\in\mathfrak{so}(3)$ is arbitrary and vanishes at the
endpoints. The product connection chosen in the present example is flat,
$F_A=0$, and therefore
\begin{equation}
\label{spacecraft_vertical_variation_flat}
\delta\xi
=
\dot\eta+[\xi,\eta].
\end{equation}

Taking the variation of
\eqref{spacecraft_reduced_HP_functional} gives
\begin{equation}
\label{spacecraft_delta_SG}
\begin{aligned}
\delta S_G
=
\int_0^T
\Big[
&
\langle\delta p_z,\dot z\rangle
+
\langle p_z,\delta\dot z\rangle
+
\langle\delta\mu,\xi\rangle
+
\langle\mu,\delta\xi\rangle
\\
&
-
\left\langle
\frac{\partial h}{\partial z},
\delta z
\right\rangle
-
\left\langle
\frac{\partial h}{\partial p_z},
\delta p_z
\right\rangle
-
\left\langle
\frac{\partial h}{\partial\mu},
\delta\mu
\right\rangle
-
\left\langle
\frac{\partial h}{\partial u},
\delta u
\right\rangle
\Big]dt.
\end{aligned}
\end{equation}

Two integrations by parts remove the derivatives from the variations. The
$\delta\dot z$ term gives
$\int_0^T\langle p_z,\delta\dot z\rangle\,dt
=-\int_0^T\langle\dot p_z,\delta z\rangle\,dt$,
the boundary term vanishing with $\delta z$. For the vertical term, substitute
\eqref{spacecraft_vertical_variation_flat}, integrate the $\dot\eta$ term by
parts using $\eta(0)=\eta(T)=0$, and use the coadjoint convention of this
paper, $\langle\mu,[\xi,\eta]\rangle=\langle\operatorname{ad}_\xi^*\mu,\eta\rangle$:
\[
\int_0^T
\langle\mu,\delta\xi\rangle\,dt
=
-
\int_0^T
\left\langle
\frac{D\mu}{Dt}
-
\operatorname{ad}_\xi^*\mu,
\eta
\right\rangle dt.
\]

Since all the remaining variations are arbitrary, the condition $\delta S_G=0$
yields
\begin{equation}
\label{spacecraft_reduced_variational_equations}
\left\{
\begin{aligned}
\dot z
&=
\frac{\partial h}{\partial p_z},
\\[0.4em]
\dot p_z
&=
-\frac{\partial h}{\partial z},
\\[0.4em]
\frac{D\mu}{Dt}
&=
\operatorname{ad}_\xi^*\mu,
\\[0.4em]
\xi
&=
\frac{\partial h}{\partial\mu},
\\[0.4em]
0
&=
\frac{\partial h}{\partial u}.
\end{aligned}
\right.
\end{equation}
This is \eqref{RHP} with $F_A=0$ and with the sign of the coadjoint equation reversed as in \eqref{RHP_left}, obtained here directly from the reduced
variational principle rather than by quoting the general reduced Pontryagin
Maximum Principle. The agreement is the first of the two independent checks
this section performs on the general theory; the second is in
Section~\ref{spacecraft_dirac_section}.


\subsubsection{Reduced optimality equations}
\label{spacecraft_reduced_eqs}

We now compute the equations in
\eqref{spacecraft_reduced_variational_equations} explicitly.

Variations with respect to the costates $p_\theta$, $p_m$ and $p_l$ give
$\dot z=\partial h/\partial p_z$, that is,
\begin{equation}
\label{spacecraft_reduced_state_equations}
\left\{
\begin{aligned}
\dot\theta
&=
J^{-1}l-\Omega,
\\
\dot m
&=
m\times\Omega,
\\
\dot l
&=
u,
\end{aligned}
\right.
\qquad
\Omega=I^{-1}(m-l).
\end{equation}

Variation with respect to $\mu$ gives $\xi=\partial h/\partial\mu$, that is,
\begin{equation}
\label{spacecraft_xi_equation}
\xi=\Omega=I^{-1}(m-l),
\end{equation}
and variation with respect to the control gives
$0=\partial h/\partial u=p_l-u$, so that the optimal control is
\begin{equation}
\label{spacecraft_optimal_control}
u=p_l.
\end{equation}
Here $\partial^2h/\partial u^2=-\operatorname{Id}_{\mathbb R^3}$, so the
problem is regular and the stationarity condition determines the control
uniquely.

The vertical variation gives the reduced coadjoint equation
\begin{equation}
\label{spacecraft_reduced_coadjoint}
\frac{D\mu}{Dt}
=
\operatorname{ad}_{\Omega}^{*}\mu,
\qquad\text{equivalently}\qquad
\dot\mu=\mu\times\Omega.
\end{equation}

It remains to determine the equations for the base costates. Since
\[
h
=
\left\langle
p_\theta,J^{-1}l-\Omega
\right\rangle
+
\left\langle
p_m,m\times\Omega
\right\rangle
+
\langle p_l,u\rangle
+
\langle\mu,\Omega\rangle
-
\frac12|u|^2,
\]
and $h$ is independent of $\theta$, we immediately obtain
\begin{equation}
\label{spacecraft_ptheta_equation}
\dot p_\theta=0.
\end{equation}

To compute the remaining derivatives, observe that
\begin{equation}
\label{spacecraft_delta_Omega}
\delta\Omega
=
I^{-1}(\delta m-\delta l).
\end{equation}
Throughout, $\mathfrak{so}(3)^*\simeq\mathbb R^3$ by the Euclidean scalar
product, and the inertia matrices $I$ and $J$ are symmetric.

For variations with respect to $m$ one has $\delta_m\Omega=I^{-1}\delta m$,
and hence
\begin{align*}
\delta_m h
={}&
-\left\langle
p_\theta,I^{-1}\delta m
\right\rangle
+
\left\langle
p_m,\delta m\times\Omega
\right\rangle
\\
&+
\left\langle
p_m,m\times I^{-1}\delta m
\right\rangle
+
\left\langle
\mu,I^{-1}\delta m
\right\rangle.
\end{align*}
The two cross-product terms are rearranged by the identities
$\langle p_m,\delta m\times\Omega\rangle=\langle\Omega\times p_m,\delta m\rangle$
and
$\langle p_m,m\times I^{-1}\delta m\rangle=\langle I^{-1}(p_m\times m),\delta m\rangle$,
the second using the symmetry of $I$. This gives
\[
\frac{\partial h}{\partial m}
=
-I^{-1}p_\theta
+
\Omega\times p_m
+
I^{-1}(p_m\times m)
+
I^{-1}\mu.
\]
Therefore,
\begin{equation}
\label{spacecraft_pm_equation}
\dot p_m
=
I^{-1}p_\theta
-
\Omega\times p_m
-
I^{-1}(p_m\times m)
-
I^{-1}\mu.
\end{equation}

For variations with respect to $l$ one has $\delta_l\Omega=-I^{-1}\delta l$,
and therefore
\begin{align*}
\delta_l h
={}&
\left\langle
p_\theta,
J^{-1}\delta l+I^{-1}\delta l
\right\rangle
\\
&-
\left\langle
p_m,
m\times I^{-1}\delta l
\right\rangle
-
\left\langle
\mu,I^{-1}\delta l
\right\rangle.
\end{align*}
It follows that
\[
\frac{\partial h}{\partial l}
=
J^{-1}p_\theta
+
I^{-1}p_\theta
-
I^{-1}(p_m\times m)
-
I^{-1}\mu,
\]
and hence
\begin{equation}
\label{spacecraft_pl_equation}
\dot p_l
=
-J^{-1}p_\theta
-
I^{-1}p_\theta
+
I^{-1}(p_m\times m)
+
I^{-1}\mu.
\end{equation}

Substituting the optimal control $u=p_l$ and the reconstruction variable
$\xi=\Omega$ into the preceding equations gives the complete reduced
optimality system
\begin{equation}
\label{spacecraft_complete_reduced_PMP}
\left\{
\begin{aligned}
\dot\theta
&=
J^{-1}l-\Omega,
\\[0.3em]
\dot m
&=
m\times\Omega,
\\[0.3em]
\dot l
&=
p_l,
\\[0.5em]
\dot p_\theta
&=
0,
\\[0.3em]
\dot p_m
&=
I^{-1}p_\theta
-
\Omega\times p_m
-
I^{-1}(p_m\times m)
-
I^{-1}\mu,
\\[0.3em]
\dot p_l
&=
-J^{-1}p_\theta
-
I^{-1}p_\theta
+
I^{-1}(p_m\times m)
+
I^{-1}\mu,
\\[0.5em]
\frac{D\mu}{Dt}
&=
\operatorname{ad}_{\Omega}^{*}\mu,
\\[0.3em]
u
&=
p_l,
\end{aligned}
\right.
\qquad
\Omega=I^{-1}(m-l).
\end{equation}

Under the standard identification $\mathfrak{so}(3)^*\simeq\mathbb R^3$ the
coadjoint equation reads $\dot\mu=\mu\times\Omega$. Two consistency checks are
worth recording. First, $\mu$ and the state variable $m$ are both body
quantities in $\mathfrak{so}(3)^*$ attached to the same $SO(3)$ symmetry, and
\eqref{spacecraft_complete_reduced_PMP} gives them the same Lie--Poisson form,
$\dot m=\operatorname{ad}_\Omega^*m$ and
$\dot\mu=\operatorname{ad}_\Omega^*\mu$. Second, since $H_P$ is
$SO(3)$-invariant, the spatial costate
$\nu=\operatorname{Ad}^*_{R^{-1}}\mu=R\mu$ is conserved, and
$\dot\nu=R(\dot\mu+\Omega\times\mu)$ vanishes precisely when
$\dot\mu=\mu\times\Omega$.

The reduced optimality system has therefore been obtained directly by reducing
the Lagrange--Pontryagin variational principle. The attitude variable $R$ no
longer appears among the reduced state variables; its contribution is carried
by the vertical variable $\xi=\Omega$ and the coadjoint variable
$\mu\in\mathfrak{so}(3)^*$. The next subsection recovers the same system
independently, from the Lie--Dirac reduction of the implicit formulation.

\subsection{Reduced Dirac formulation}
\label{spacecraft_dirac_section}
\subsubsection{The reduced Dirac structure}

The canonical Dirac structure $D_P$ induces a reduced Dirac structure
$[D_P]_{SO(3)}$ on
$T^*P/SO(3)\cong T^*B\oplus\widetilde{\mathfrak{so}(3)}^{\,*}$. At a point
$(z,p_z,\mu)$, an element of the reduced tangent bundle is written
$(v_z,v_{p_z};\xi,v_\mu)$ and an element of the dual bundle is written
$(\pi_z,\pi_{p_z};\alpha_\xi,\alpha_{v_\mu})$.

According to the reduced canonical Dirac structure
\eqref{reduced_canonical_dirac_explicit} in the left-action
convention of Remark~\ref{left_action_dirac}, that is, with the coadjoint relation \eqref{reduced_canonical_dirac_left}, these variables satisfy
\begin{equation}
\label{spacecraft_reduced_dirac_structure_general}
\left\{
\begin{aligned}
v_z
&=
\pi_{p_z},
\\
v_{p_z}+\pi_z
&=
-\left\langle
\mu,F_A(v_z,\cdot)
\right\rangle,
\\
\xi
&=
\alpha_{v_\mu},
\\
v_\mu+\alpha_\xi
&=
\operatorname{ad}_\xi^*\mu.
\end{aligned}
\right.
\end{equation}

The product connection chosen in the spacecraft example is flat, $F_A=0$, so
the reduced Dirac structure simplifies to
\begin{equation}
\label{spacecraft_reduced_dirac_structure}
\left\{
\begin{aligned}
v_z
&=
\pi_{p_z},
\\
v_{p_z}
&=
-\pi_z,
\\
\xi
&=
\alpha_{v_\mu},
\\
v_\mu+\alpha_\xi
&=
\operatorname{ad}_\xi^*\mu.
\end{aligned}
\right.
\end{equation}


\subsubsection{Reduced implicit Dirac system}

The reduced Pontryagin Hamiltonian is the one already obtained in
\eqref{spacecraft_reduced_hamiltonian_explicit}, with
$\Omega=I^{-1}(m-l)$. The reduced implicit Dirac system is
\begin{equation}
\label{spacecraft_reduced_implicit_dirac}
\left(
[X]_{SO(3)},
dh
\right)
\in
[D_P]_{SO(3)},
\end{equation}
together with the reduced primary constraint $\partial h/\partial u=0$. The
problem is regular, so that constraint determines the control locally as a
function of the remaining reduced variables, and on the reduced primary
constraint manifold
$[M_1]_{SO(3)}=\{(z,p_z,\mu,u):\partial h/\partial u=0\}$ the reduced
Pontryagin Hamiltonian determines a Hamiltonian on
$T^*B\oplus\widetilde{\mathfrak{so}(3)}^{\,*}$, as in
Section~\ref{ldrpmp}. Along a curve in $[M_1]_{SO(3)}$ the reduced tangent
component of the dynamics is represented by
\begin{equation}
\label{spacecraft_reduced_dirac_vector}
[X]_{SO(3)}
=
\left(
\dot z,
\dot p_z;
\xi,
\frac{D\mu}{Dt}
\right).
\end{equation}

The differential of the reduced Hamiltonian is represented, in the dual
variables introduced above, by
\begin{equation}
\label{spacecraft_reduced_dirac_dh}
\left(
\pi_z,\pi_{p_z};
\alpha_\xi,\alpha_{v_\mu}
\right)
=
\left(
\frac{\partial h}{\partial z},
\frac{\partial h}{\partial p_z};
0,
\frac{\partial h}{\partial\mu}
\right).
\end{equation}
In particular $\alpha_\xi=0$ and $\alpha_{v_\mu}=\partial h/\partial\mu$: the
derivative with respect to the coadjoint variable $\mu$ is paired with the
variation $v_\mu$, and so corresponds to $\alpha_{v_\mu}$ and not to
$\alpha_\xi$.

Substituting
\eqref{spacecraft_reduced_dirac_vector} and
\eqref{spacecraft_reduced_dirac_dh}
into the defining relations
\eqref{spacecraft_reduced_dirac_structure}, we obtain
\begin{equation}
\label{spacecraft_dirac_equations_abstract}
\left\{
\begin{aligned}
\dot z
&=
\frac{\partial h}{\partial p_z},
\\[0.4em]
\dot p_z
&=
-\frac{\partial h}{\partial z},
\\[0.4em]
\xi
&=
\frac{\partial h}{\partial\mu},
\\[0.4em]
\frac{D\mu}{Dt}
&=
\operatorname{ad}_\xi^*\mu,
\\[0.4em]
0
&=
\frac{\partial h}{\partial u}.
\end{aligned}
\right.
\end{equation}

The four relations of \eqref{spacecraft_reduced_dirac_structure} give the
first four equations in order, the fourth using $\alpha_\xi=0$, and the last
equation is the reduced primary constraint.

Thus, before any particular derivative is computed, the reduced Dirac system
reproduces the equations obtained from the reduced Lagrange--Pontryagin
principle, namely \eqref{spacecraft_reduced_variational_equations}. The flat
product connection is what removes the curvature term from the reduced costate
equation, and the vertical part of the reduced Dirac structure is what
produces the coadjoint equation. It remains to write these equations out
explicitly for the spacecraft.

\subsubsection{Explicit Dirac equations}
The five equations of \eqref{spacecraft_dirac_equations_abstract} are now
evaluated. The reduced Hamiltonian $h$ entering them is
\eqref{spacecraft_reduced_hamiltonian_explicit}, which is the same function of
$(z,p_z,\mu,u)$ that appeared in the variational route, so its partial
derivatives have already been computed in Section~\ref{spacecraft_reduced_eqs}
and need not be recomputed here.

The state equations $\dot z=\partial h/\partial p_z$ return
\eqref{spacecraft_reduced_state_equations}. The vertical equation
$\xi=\partial h/\partial\mu$ returns $\xi=\Omega$, and the primary constraint
$0=\partial h/\partial u=p_l-u$ returns $u=p_l$. The vertical component of the
reduced Dirac relation, the fourth line of
\eqref{spacecraft_reduced_dirac_structure} with $\alpha_\xi=0$, gives the
coadjoint equation $D\mu/Dt=\operatorname{ad}^*_\Omega\mu$, that is,
$\dot\mu=\mu\times\Omega$. Finally $\dot p_z=-\partial h/\partial z$ gives
$\dot p_\theta=0$, since $h$ is independent of $\theta$, together with
\eqref{spacecraft_pm_equation} and \eqref{spacecraft_pl_equation} from
$\partial h/\partial m$ and $\partial h/\partial l$.

Substituting $u=p_l$ and $\xi=\Omega$, the complete reduced Dirac system is
therefore
\begin{equation}
\label{spacecraft_complete_reduced_dirac_system}
\left\{
\begin{aligned}
\dot\theta
&=
J^{-1}l-\Omega,
\\[0.3em]
\dot m
&=
m\times\Omega,
\\[0.3em]
\dot l
&=
p_l,
\\[0.5em]
\dot p_\theta
&=
0,
\\[0.3em]
\dot p_m
&=
I^{-1}p_\theta
-
\Omega\times p_m
-
I^{-1}(p_m\times m)
-
I^{-1}\mu,
\\[0.3em]
\dot p_l
&=
-J^{-1}p_\theta
-
I^{-1}p_\theta
+
I^{-1}(p_m\times m)
+
I^{-1}\mu,
\\[0.5em]
\frac{D\mu}{Dt}
&=
\operatorname{ad}_{\Omega}^{*}\mu,
\\[0.3em]
u
&=
p_l,
\end{aligned}
\right.
\qquad
\Omega=I^{-1}(m-l).
\end{equation}

Comparing
\eqref{spacecraft_complete_reduced_dirac_system}
with
\eqref{spacecraft_complete_reduced_PMP},
the two systems are term by term the same. Thus, the reduction of
the implicit Dirac formulation and the reduction of the
Lagrange--Pontryagin variational principle produce exactly the same reduced
optimality equations for the spacecraft with internal rotors.

This provides an explicit realization, for the spacecraft optimal control
problem, of the equivalence between the reduced variational and reduced
Dirac descriptions established in the general theory.

\section{Application: the sub-Riemannian geodesic problem on the Heisenberg
group}
\label{heisenberg}

The spacecraft of Section~\ref{sec:spacecraft} was reduced by a nonabelian group,
but with respect to a flat product connection, so that the curvature term in
the reduced equations \eqref{RHP} and in the reduced Dirac structure
\eqref{reduced_canonical_dirac_explicit} played no role. We now treat an
example of the opposite kind. The symmetry group is the center of the
Heisenberg group, hence abelian, so that the coadjoint term drops out; but the
natural connection is not flat, its curvature being nowhere zero. (The bundle
itself is trivial: $\sigma(x,y)=(x,y,0)$ is a global section, and
$\sigma(x,y)\cdot s=(x,y,s)$ is a global equivariant trivialization. It is the
connection, not the topology of the bundle, that distinguishes this example
from the previous one; nonzero curvature is perfectly compatible with a trivial
bundle.) Taken together, the two examples exercise both of the terms that
distinguish the reduced Pontryagin Maximum Principle from its unreduced
counterpart.

A further feature of the present example is that the reduction is not merely a
convenience: the vertical variable is precisely the one eliminated by passing
to the sub-Riemannian picture, and the reconstruction equation turns out to be
the horizontality condition itself.


\subsection{The Heisenberg group and its horizontal distribution}

Let $H$ denote the three-dimensional Heisenberg group, that is, $\mathbb R^3$
with coordinates $(x,y,w)$ and the multiplication
\begin{equation}
\label{heis_group_law}
(x_1,y_1,w_1)\cdot(x_2,y_2,w_2)
=
\Bigl(
x_1+x_2,\;
y_1+y_2,\;
w_1+w_2+\tfrac12(x_1y_2-y_1x_2)
\Bigr).
\end{equation}
A basis of left-invariant vector fields is
\begin{equation}
\label{heis_frame}
X=\frac{\partial}{\partial x}-\frac{y}{2}\frac{\partial}{\partial w},
\qquad
Y=\frac{\partial}{\partial y}+\frac{x}{2}\frac{\partial}{\partial w},
\qquad
W=\frac{\partial}{\partial w},
\end{equation}
and the only nonvanishing bracket is
\begin{equation}
\label{heis_bracket}
[X,Y]=W.
\end{equation}
The distribution
\[
\mathcal H=\mathrm{span}\{X,Y\}\subset TH
\]
is bracket generating of step two, and the sub-Riemannian structure is obtained
by declaring $X,Y$ to be orthonormal.


\subsection{The optimal control problem}

We take as state manifold
\[
P=H\cong\mathbb R^3,
\]
as control bundle the trivial bundle
\[
C=P\times U,
\qquad
U=\mathbb R^2,
\qquad
u=(u_1,u_2),
\]
and as controlled dynamics the bundle map
\begin{equation}
\label{heis_control_vector_field}
\Gamma(x,y,w,u)=u_1X+u_2Y,
\end{equation}
that is,
\begin{equation}
\label{heis_controlled_dynamics}
\dot x=u_1,
\qquad
\dot y=u_2,
\qquad
\dot w=\tfrac12(xu_2-yu_1).
\end{equation}
The running cost is the sub-Riemannian energy
\begin{equation}
\label{heis_cost}
L(x,y,w,u)=\tfrac12|u|^2=\tfrac12(u_1^2+u_2^2).
\end{equation}
Minimizing \eqref{heis_cost} subject to \eqref{heis_controlled_dynamics} is the
sub-Riemannian geodesic problem on $H$; see \cite{Mo02}. Since the admissible
curves are exactly the horizontal ones, the quadruple $(P,C,\Gamma,L)$ is an
optimal control problem in the sense of Definition~\ref{ocp}.


\subsection{The symmetry group and the reduced bundles}

Let
\[
G=Z(H)=\{(0,0,s)\ :\ s\in\mathbb R\}\cong\mathbb R
\]
be the center of $H$, acting on $P$ by right translation,
\begin{equation}
\label{heis_action}
(x,y,w)\cdot s=(x,y,w+s).
\end{equation}
Because $G$ is central, this coincides with the corresponding left action, and
the left and right conventions of Remarks~\ref{left_action_variational}
and~\ref{left_action_dirac} agree; we may therefore use the equations of
Sections~\ref{pmpreduc} and~\ref{drec} exactly as they stand.

The action is free and proper, $G$ acts trivially on $U$, and neither $\Gamma$
nor $L$ depends on $w$, so that $(P,C,\Gamma,L)$ is $G$-invariant in the sense
of Definition~\ref{ocp_symmetry}. The base of the reduction is
\begin{equation}
\label{heis_base}
B=P/G\cong\mathbb R^2,
\qquad
z=(x,y),
\end{equation}
and $\rho:H\rightarrow\mathbb R^2$, $\rho(x,y,w)=(x,y)$, is a principal
$\mathbb R$-bundle.

Since $\mathfrak g=\mathbb R$ is abelian, the adjoint and coadjoint bundles are
trivial,
\begin{equation}
\label{heis_adjoint_bundles}
\widetilde{\mathfrak g}\cong B\times\mathbb R,
\qquad
\widetilde{\mathfrak g}^{\,*}\cong B\times\mathbb R,
\qquad
\operatorname{ad}_\xi^*\mu=0
\quad\text{for all }\xi,\mu,
\end{equation}
and the induced covariant derivative on $\widetilde{\mathfrak g}$ and
$\widetilde{\mathfrak g}^{\,*}$ reduces to the ordinary time derivative. The
reduced control bundle is $\widetilde U\cong B\times\mathbb R^2$, so that the
reduced Pontryagin bundle is
\begin{equation}
\label{heis_reduced_pontryagin}
M/G
\cong
T^*\mathbb R^2\oplus\widetilde{\mathfrak g}^{\,*}\oplus\widetilde U,
\end{equation}
with coordinates $(x,y,p_x,p_y,\mu,u)$.

\subsubsection*{The connection and its curvature}

We take as principal connection the $\mathbb R$-valued one-form
\begin{equation}
\label{heis_connection}
A=dw-\tfrac12(x\,dy-y\,dx).
\end{equation}
It is indeed a principal connection: the fundamental vector field of
$\zeta\in\mathfrak g=\mathbb R$ is $\zeta_P=\zeta\,\partial_w$, whence
$A(\zeta_P)=\zeta$, and $A$ is invariant under \eqref{heis_action} because its
coefficients do not involve $w$. Moreover
\[
A(X)=\frac{y}{2}-\frac{y}{2}=0,
\qquad
A(Y)=-\frac{x}{2}+\frac{x}{2}=0,
\]
so that the horizontal subspace of $A$ is exactly the sub-Riemannian
distribution,
\begin{equation}
\label{heis_horizontal}
\ker A=\mathcal H=\mathrm{span}\{X,Y\}.
\end{equation}
Since $\mathfrak g$ is abelian, $[A,A]=0$ and the curvature is
\begin{equation}
\label{heis_curvature}
F_A=dA=-\,dx\wedge dy,
\end{equation}
which is a nowhere vanishing basic two-form on $B=\mathbb R^2$. This is the
essential difference with the example of Section~\ref{sec:spacecraft}.

\subsubsection*{The reduced dynamics}

The decomposition $[\Gamma](z,u)=(\widetilde\Gamma(z,u),\widetilde\gamma(z,u))$
determined by $A$ is
\begin{equation}
\label{heis_reduced_dynamics}
\widetilde\Gamma(z,u)=u_1\frac{\partial}{\partial x}+u_2\frac{\partial}{\partial y},
\qquad
\widetilde\gamma(z,u)=A\bigl(\Gamma(x,y,w,u)\bigr)=0,
\end{equation}
the vertical part vanishing identically by \eqref{heis_horizontal}: the
controlled vector field is horizontal, which is precisely the statement that
admissible curves of the sub-Riemannian problem are horizontal curves.


\subsection{Pontryagin formulation}

Writing a costate as $\lambda=(\lambda_x,\lambda_y,\lambda_w)\in T^*_{(x,y,w)}P$,
the Pontryagin Hamiltonian $H_P=\langle\lambda,\Gamma\rangle-L$ is
\begin{equation}
\label{heis_pontryagin_hamiltonian}
H_P
=
\lambda_xu_1+\lambda_yu_2
+
\frac{\lambda_w}{2}(xu_2-yu_1)
-
\frac12|u|^2 .
\end{equation}
The primary constraints $\partial H_P/\partial u^a=0$ read
\begin{equation}
\label{heis_primary_constraints}
u_1=\lambda_x-\frac{y}{2}\lambda_w,
\qquad
u_2=\lambda_y+\frac{x}{2}\lambda_w,
\end{equation}
and
\[
\frac{\partial^2H_P}{\partial u^a\partial u^b}=-\operatorname{Id}_{\mathbb R^2},
\]
so that the problem is regular at every point, and the standing assumptions of
Section~\ref{ocpsec} are met. (Regularity of the normal problem does not by
itself exclude abnormal extremals, which live on the leaf $p_0=0$ where the
Hessian in the controls vanishes; that the Heisenberg problem has no
nontrivial abnormal extremals is proved in
Corollary~\ref{heisenberg_no_abnormal}.)
The unreduced first-order conditions are \eqref{heis_controlled_dynamics}
together with
\begin{equation}
\label{heis_unreduced_costate}
\dot\lambda_x=-\frac{\lambda_w}{2}u_2,
\qquad
\dot\lambda_y=\frac{\lambda_w}{2}u_1,
\qquad
\dot\lambda_w=0 .
\end{equation}

Under the identification \eqref{cotangent_quotient} determined by $A$, the
momentum map is $J(x,\lambda)=\langle\lambda,\partial_w\rangle=\lambda_w$ and
the horizontal part of $\lambda$ is obtained by evaluating on the frame
\eqref{heis_frame}. Hence the reduced variables are
\begin{equation}
\label{heis_reduced_variables}
p_x=\lambda_x-\frac{y}{2}\lambda_w,
\qquad
p_y=\lambda_y+\frac{x}{2}\lambda_w,
\qquad
\mu=\lambda_w .
\end{equation}


\subsection{Reduced variational formulation}

By \eqref{heis_reduced_dynamics} the reduced Pontryagin Hamiltonian
\eqref{reduced_pontryagin_hamiltonian} is
\begin{equation}
\label{heis_reduced_hamiltonian}
h(z,p_z,\mu,u)
=
\bigl\langle p_z,\widetilde\Gamma(z,u)\bigr\rangle
+
\bigl\langle\mu,\widetilde\gamma(z,u)\bigr\rangle
-
l(z,u)
=
p_xu_1+p_yu_2-\frac12|u|^2 .
\end{equation}
Substituting \eqref{heis_reduced_variables} shows that $h$ coincides with
\eqref{heis_pontryagin_hamiltonian}, as it must. Note that $h$ is independent
of both $z$ and $\mu$: the entire dependence on the eliminated variable is
carried by the curvature.

The reduced Lagrange--Pontryagin functional \eqref{red_funct_hamiltonian} is
\begin{equation}
\label{heis_reduced_HP_functional}
S_G
=
\int_0^T
\Bigl[
\langle p_z,\dot z\rangle+\langle\mu,\xi\rangle-h(z,p_z,\mu,u)
\Bigr]dt,
\end{equation}
and, since $\mathfrak g$ is abelian, the constrained variations
\eqref{reduced_vertical_variation} reduce to
\begin{equation}
\label{heis_constrained_variation}
\delta\xi=\dot\eta+F_A(\delta z,\dot z)
=
\dot\eta-\bigl(\delta x\,\dot y-\delta y\,\dot x\bigr),
\end{equation}
with $\eta(0)=\eta(T)=0$. The bracket term $[\eta,\xi]$, which produced the
coadjoint equation in Section~\ref{sec:spacecraft}, is absent here; what survives
is exactly the curvature term.

Taking variations and integrating by parts as in Section~\ref{pmpreduc},
\[
\int_0^T\langle\mu,\delta\xi\rangle\,dt
=
-\int_0^T\langle\dot\mu,\eta\rangle\,dt
+
\int_0^T\langle\mu,F_A(\delta z,\dot z)\rangle\,dt,
\]
so that stationarity of \eqref{heis_reduced_HP_functional} yields
\begin{equation}
\label{heis_reduced_equations}
\left\{
\begin{aligned}
\dot z
&=
\frac{\partial h}{\partial p_z},
\\[0.4em]
\dot p_z
&=
-\frac{\partial h}{\partial z}
-
\bigl\langle\mu,F_A(\dot z,\,\cdot\,)\bigr\rangle,
\\[0.4em]
\dot\mu
&=
0,
\\[0.4em]
\xi
&=
\frac{\partial h}{\partial\mu},
\\[0.4em]
0
&=
\frac{\partial h}{\partial u},
\end{aligned}
\right.
\end{equation}
which is \eqref{RHP} with $\operatorname{ad}^*_\xi\mu=0$.

\subsubsection*{The reduced optimality equations}

We compute \eqref{heis_reduced_equations} explicitly. The stationarity
condition in the control gives
\begin{equation}
\label{heis_optimal_control}
0=\frac{\partial h}{\partial u}=p_z-u,
\qquad\text{that is,}\qquad
u=(p_x,p_y).
\end{equation}
Since $h$ is independent of $\mu$,
\begin{equation}
\label{heis_xi}
\xi=\frac{\partial h}{\partial\mu}=0 .
\end{equation}
For the costate equation, $\partial h/\partial z=0$ and, by
\eqref{heis_curvature},
\[
F_A(\dot z,\,\cdot\,)
=
-\bigl(\dot x\,dy-\dot y\,dx\bigr)
=
u_2\,dx-u_1\,dy ,
\]
whence
\begin{equation}
\label{heis_costate}
\dot p_x=-\mu u_2,
\qquad
\dot p_y=\mu u_1 .
\end{equation}
Collecting \eqref{heis_optimal_control}--\eqref{heis_costate}, the complete
reduced optimality system is
\begin{equation}
\label{heis_complete_reduced_PMP}
\left\{
\begin{aligned}
\dot x&=p_x,
\\[0.2em]
\dot y&=p_y,
\\[0.4em]
\dot p_x&=-\mu\,p_y,
\\[0.2em]
\dot p_y&=\mu\,p_x,
\\[0.4em]
\dot\mu&=0,
\\[0.2em]
\xi&=0,
\\[0.2em]
u&=(p_x,p_y).
\end{aligned}
\right.
\end{equation}

\begin{remark}
Equations \eqref{heis_complete_reduced_PMP} may be checked directly against the
unreduced conditions \eqref{heis_unreduced_costate}. Differentiating
\eqref{heis_reduced_variables} along an unreduced extremal and using
$\dot\lambda_w=0$ gives
\[
\dot p_x
=
\dot\lambda_x-\frac{\dot y}{2}\lambda_w
=
-\frac{\lambda_w}{2}u_2-\frac{u_2}{2}\lambda_w
=
-\mu u_2,
\qquad
\dot p_y
=
\dot\lambda_y+\frac{\dot x}{2}\lambda_w
=
\mu u_1,
\]
in agreement with \eqref{heis_costate}. This provides an independent
verification of the sign of the curvature term in \eqref{RHP} and in
\eqref{reduced_canonical_dirac_explicit}, which the example of
Section~\ref{sec:spacecraft} could not supply.
\end{remark}


\subsection{Reduced Dirac formulation}

We now recover \eqref{heis_complete_reduced_PMP} from the Lie--Dirac reduction.
By Proposition~\ref{red_Dirac_structure}, and using
$\operatorname{ad}_\xi^*\mu=0$ together with \eqref{heis_curvature}, the
reduced canonical Dirac structure on
$T^*\mathbb R^2\oplus\widetilde{\mathfrak g}^{\,*}$ is
\begin{equation}
\label{heis_reduced_dirac_structure}
\left\{
\begin{aligned}
v_z
&=
\pi_{p_z},
\\
v_{p_z}+\pi_z
&=
-\bigl\langle\mu,F_A(v_z,\cdot)\bigr\rangle,
\\
\xi
&=
\alpha_{v_\mu},
\\
v_\mu+\alpha_\xi
&=
0 .
\end{aligned}
\right.
\end{equation}
Note that, in contrast with \eqref{spacecraft_reduced_dirac_structure}, the
second relation is not the canonical one: the vertical variable $\mu$ enters
the horizontal block through the curvature.

Along a curve in the reduced primary constraint manifold
\[
[M_1]_G=\left\{(z,p_z,\mu,u)\in M/G\ \Big|\ \frac{\partial h}{\partial u}=0\right\},
\]
the reduced tangent component of the dynamics is
\begin{equation}
\label{heis_reduced_dirac_vector}
[X]_G=\bigl(\dot z,\dot p_z;\xi,\dot\mu\bigr),
\end{equation}
while the differential of the reduced Hamiltonian is represented by
\begin{equation}
\label{heis_reduced_dirac_dh}
\bigl(\pi_z,\pi_{p_z};\alpha_\xi,\alpha_{v_\mu}\bigr)
=
\left(
\frac{\partial h}{\partial z},
\frac{\partial h}{\partial p_z};
0,
\frac{\partial h}{\partial\mu}
\right).
\end{equation}
Substituting \eqref{heis_reduced_dirac_vector} and
\eqref{heis_reduced_dirac_dh} into \eqref{heis_reduced_dirac_structure}, the
first relation gives $\dot z=\partial h/\partial p_z$; the second gives
\[
\dot p_z
=
-\frac{\partial h}{\partial z}
-
\bigl\langle\mu,F_A(\dot z,\,\cdot\,)\bigr\rangle;
\]
the third gives $\xi=\partial h/\partial\mu$; and the fourth, since
$\alpha_\xi=0$, gives $\dot\mu=0$. Together with the reduced primary constraint
$\partial h/\partial u=0$ these are precisely
\eqref{heis_reduced_equations}, hence \eqref{heis_complete_reduced_PMP}.

Thus, as in Section~\ref{sec:spacecraft}, the reduction of the implicit Dirac
formulation and the reduction of the Lagrange--Pontryagin variational principle
produce the same reduced optimality equations; here, however, the agreement is
tested on the curvature term rather than on the coadjoint term.


\subsection{Reconstruction and explicit solution}

The reduced system \eqref{heis_complete_reduced_PMP} is readily integrated.
Since $\mu$ is constant and
\[
\frac{d}{dt}(p_x+ip_y)=i\mu\,(p_x+ip_y),
\]
we obtain, for $\mu\neq0$,
\begin{equation}
\label{heis_momentum_solution}
p_x(t)+ip_y(t)=r\,e^{i(\mu t+\varphi)},
\qquad
r=|p_z(0)|,
\end{equation}
so that $|u|=r$ is constant and the extremal is parametrized proportionally to
arclength. Integrating $\dot z=p_z$,
\begin{equation}
\label{heis_base_solution}
x(t)+iy(t)
=
c+\frac{r}{i\mu}\,e^{i(\mu t+\varphi)},
\qquad
c=\bigl(x(0)+iy(0)\bigr)+\frac{ir}{\mu}e^{i\varphi},
\end{equation}
that is, the reduced base curve is a circle of center $c$ and radius
$r/|\mu|$, traversed with angular velocity $\mu$. For $\mu=0$ the momentum
$p_z$ is constant and the base curve is a straight line.

The reconstruction equation is $\xi=0$, that is, $A(\dot x)=0$: the extremal in
$P$ is the \emph{horizontal lift} of the reduced base curve. Explicitly,
\begin{equation}
\label{heis_reconstruction}
\dot w=\tfrac12\bigl(x\dot y-y\dot x\bigr),
\qquad\text{so that}\qquad
w(T)-w(0)=\frac12\int_0^T\bigl(x\,dy-y\,dx\bigr),
\end{equation}
the signed area swept by the segment joining the origin to the moving base
point. One recovers in this way the classical description of Heisenberg
geodesics: they project to circles and straight lines in the plane, the vertical
increment being the swept area \cite{Mo02}. For an open projected path this
swept area is not an intrinsic enclosed area: it depends on the choice of
origin, changing by $\tfrac12(a\,\Delta y-b\,\Delta x)$ under the translation of
the origin by $(a,b)$, in step with the gauge dependence of
$A=dw-\tfrac12(x\,dy-y\,dx)$. It is the enclosed signed area exactly when the
projection is closed, as for a full circle.

\begin{remark}
The reduced equations \eqref{heis_complete_reduced_PMP} are those of a particle
of unit mass and unit charge moving in the plane under the constant magnetic
two-form
\[
-\mu F_A=\mu\,dx\wedge dy ,
\qquad\text{that is,}\qquad
\ddot x=-\mu\dot y,
\quad
\ddot y=\mu\dot x .
\]
The sign is that of the curvature term $\langle\mu,F_A(\cdot,\dot z)\rangle
=-C_{\dot z}\mu$ of \eqref{RHP}, and it is the same sign that appears in Wong's
equation \eqref{normal_wong} in Section~\ref{strictness}. Only the product of
charge and field strength is physical, and we fix the convention once: unit
charge in the effective field $-\mu F_A$. Equivalently, one may say charge $\mu$
in the normalized field $-F_A=dx\wedge dy$; what one may not do is combine the
two readings, since the vertical momentum $\mu$ --- the value of the momentum
map, conserved by Noether's theorem --- would then be counted twice. With
either convention the effective field strength seen by the particle is $\mu$.
This is the standard mechanical interpretation of the curvature term in
\eqref{RHP}: after reduction, the eliminated variable reappears as a magnetic
coupling on the base.
\end{remark}

\begin{remark}
The two applications are complementary. In Section~\ref{sec:spacecraft} the group
$SO(3)$ is nonabelian and the connection is flat, so that the reduced equations
test the coadjoint term $\operatorname{ad}^*_\xi\mu$ and not the curvature. Here
the group is abelian and the coadjoint term vanishes identically, while the
curvature is nowhere zero. Since every proper \emph{connected} Lie subgroup of
$H$ is abelian, an example exhibiting both terms simultaneously requires a
different state manifold; the reduction of a mechanical system on a principal
bundle with nonabelian structure group and non-flat connection, such as a rigid
body carrying a rotor constrained to a nonholonomic distribution, would provide
one. The connectedness is not decorative: proper \emph{discrete} nonabelian
subgroups exist, for instance the lattice
$\Lambda=\mathbb Z\times\mathbb Z\times\tfrac12\mathbb Z$, in which
$(1,0,0)\cdot(0,1,0)=(1,1,\tfrac12)$ and
$(0,1,0)\cdot(1,0,0)=(1,1,-\tfrac12)$ by \eqref{heis_group_law}. What is true,
and what is used here, is the infinitesimal statement: every proper Lie
subalgebra $\mathfrak a\subsetneq\mathfrak h$ is abelian. Indeed, writing
$\pi(aX+bY+cW)=(a,b)$, one has $[v,v']=\omega(\pi v,\pi v')W$ with $\omega$ the
standard area form, so $\dim\pi(\mathfrak a)=2$ would force
$W\notin\mathfrak a$ while $[\mathfrak a,\mathfrak a]=\mathbb RW$, which is
impossible; hence $\dim\pi(\mathfrak a)\leq1$ and $\mathfrak a$ is abelian.
A discrete subgroup has trivial Lie algebra and so cannot revive the
$\operatorname{ad}^*$ term either.
\end{remark}


\section{Singular and abnormal extremals}
\label{singular}

Throughout Sections~\ref{ocpsec}--\ref{drec} the optimal control problem was
assumed regular, so that the Hessian $W_{ab}=\partial^2H_P/\partial u^a\partial
u^b$ was invertible, the control could be eliminated, and $p_1:M_1\rightarrow
T^*P$ was a local diffeomorphism. Under that hypothesis the primary constraint
submanifold is \emph{locally} symplectomorphic to $T^*P$, on each branch of the
stationary control, and the Pontryagin Maximum Principle is there an ordinary
Hamiltonian system; the Dirac formulation is then a convenient language, but not
more than that. Regularity by itself gives no global statement: several
nondegenerate solutions of $\partial H_P/\partial u=0$ may lie over the same
$(x,p)$, as for $P=\mathbb R^2$, $\Gamma(x,u)=u$ and
$L(u)=e^{u_1}\cos u_2$, where $\det\operatorname{Hess}L=-e^{2u_1}$ never
vanishes while $p_1(x,u)=(x,\nabla L(u))$ is $2\pi$-periodic in $u_2$ and misses
$p=0$.

Dropping that hypothesis is not generality for its own sake. Already in
\cite{Hs92} one finds a bracket-generating rank-two system, the Hilbert--Cartan
system on $\mathbb R^5$, carrying a five-parameter family of integral curves that
admit no compactly supported variations at all; such curves are trivially
stationary for every functional, and are in general \emph{not} solutions of the
associated Euler--Lagrange system. They are counterexamples to the assertion,
still occasionally made, that a minimizing curve in a sub-Riemannian manifold
satisfies the geodesic equations \cite[\S2]{Hs92}. The regular theory of
Sections~\ref{ocpsec}--\ref{drec} cannot see them.

In this section we drop the regularity assumption. Two things change. First, the
control can no longer be eliminated, and the constraint algorithm of Gotay,
Nester and Hinds \cite{Go79} must be run; for optimal control this was carried
out in the presymplectic setting by Barbero-Li\~n\'an and Mu\~noz-Lecanda
\cite{Ba08}. Second, and more importantly for the present paper, the reduction
of Section~\ref{drec} cannot be simplified as in Section~\ref{ldrpmp}: the
passage from $[D_\Omega]_G$ to $[D_P]_G$ used the regularity of the problem, so
in the singular case one must work with the reduced Dirac structure
$[D_\Omega]_G$ on $M/G$ itself. The object constructed in Section~\ref{drec} and
then set aside is precisely the one that is needed here.

The main result of this section is that reduction commutes with the constraint
algorithm (Theorem~\ref{reduction_commutes}): the algorithm may be run on the
reduced space, which is smaller, and the answer is the quotient of the answer
obtained upstairs. We then show that abnormal extremals occupy a distinguished
leaf of the reduced Pontryagin bundle, and that for control-linear systems the
first secondary constraint is exactly the vanishing of the curvature pairing
$\langle\mu,F_A\rangle$ appearing in the reduced costate equation \eqref{RHP}
(Theorem~\ref{abnormal_curvature}). The term that distinguishes the reduced
Pontryagin Maximum Principle from its unreduced counterpart is therefore also
the term that obstructs abnormality.


\subsection{Dirac structures induced by a distribution}
\label{induced_dirac}

We first recall the construction, due in this form to Yoshimura and Marsden
\cite[\S7]{Yo07a}, by which a two-form and a distribution together determine a
Dirac structure in the vector-bundle sense of
Definition~\ref{Dirac_vector_bundle}. It is this construction, rather than the
graph of a nondegenerate form, that will carry the constraints.

\begin{definition}
\label{def_induced_dirac}
Let $N$ be a manifold, $\Omega$ a two-form on $N$ and $\Delta\subset TN$ a
regular distribution. The \emph{Dirac structure induced by $(\Omega,\Delta)$} is
\begin{equation}
\label{induced_dirac_structure}
D_{\Omega,\Delta}(n)
=
\Bigl\{
(v,\alpha)\in T_nN\oplus T^*_nN
\ \Big|\
v\in\Delta(n),\quad
\alpha|_{\Delta(n)}=\Omega(n)(v,\cdot)\big|_{\Delta(n)}
\Bigr\}.
\end{equation}
\end{definition}

Equivalently, $(v,\alpha)\in D_{\Omega,\Delta}$ if and only if $v\in\Delta$ and
\begin{equation}
\label{annihilator_form}
\alpha-i_v\Omega\in\Delta^\circ ,
\end{equation}
which is the form in which the constraints will appear: the failure of the
unconstrained equation $\alpha=i_v\Omega$ is required to lie in the annihilator
of $\Delta$, exactly as a reaction force does in nonholonomic mechanics.

\begin{proposition}
\label{induced_is_dirac}
$D_{\Omega,\Delta}$ is a Dirac structure on the vector bundle $TN$, in the
sense of Definition~\ref{Dirac_vector_bundle}, for every two-form $\Omega$,
possibly degenerate, and every regular distribution $\Delta$. Moreover
\begin{equation}
\label{kernel_of_induced}
G_{D_{\Omega,\Delta}}(n)
:=
\bigl\{v\in T_nN\ :\ (v,0)\in D_{\Omega,\Delta}(n)\bigr\}
=
\Delta(n)\cap\Delta(n)^{\perp\Omega},
\end{equation}
where $\Delta^{\perp\Omega}$ denotes the $\Omega$-orthogonal complement.
Taking $\Delta=TN$ recovers the graph $D_\Omega$, with
$G_{D_\Omega}=\ker\Omega$.
\end{proposition}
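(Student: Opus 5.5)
The plan is to work pointwise at a fixed $n\in N$, since Definition~\ref{Dirac_vector_bundle} only asks for a maximally isotropic subbundle. There are four things to show: isotropy, the dimension count $\dim D_{\Omega,\Delta}(n)=\dim N$, constant rank of the subspaces (so that they form a smooth subbundle), and the kernel formula \eqref{kernel_of_induced}. Write $V=T_nN$, $\Delta=\Delta(n)$, $k=\dim\Delta$, $m=\dim V$.

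\emph{Isotropy.} Take $(v,\alpha),(w,\beta)\in D_{\Omega,\Delta}(n)$. Since $w\in\Delta$, the defining condition in \eqref{induced_dirac_structure} gives $\alpha(w)=\Omega(v,w)$. In the same way $v\in\Delta$ gives $\beta(v)=\Omega(w,v)$. Hence $\langle\!\langle(v,\alpha),(w,\beta)\rangle\!\rangle=\Omega(v,w)+\Omega(w,v)=0$ by skew-symmetry. No nondegeneracy of $\Omega$ is used.

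\emph{Dimension.} Use the annihilator form \eqref{annihilator_form}. The map $\Delta\times\Delta^\circ\to V\oplus V^*$, $(v,\nu)\mapsto(v,i_v\Omega+\nu)$, is linear and injective, because the first component recovers $v$ and then the second recovers $\nu$. Its image is exactly $D_{\Omega,\Delta}(n)$. Therefore $\dim D_{\Omega,\Delta}(n)=k+(m-k)=m$, which is half of $\dim(V\oplus V^*)=2m$. An isotropic subspace of this dimension is maximal, because the pairing is of split signature $(m,m)$.

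\emph{Smoothness.} Since $\Delta$ is regular, I can choose local frames of $\Delta$ and of $\Delta^\circ$. Pushing them through the map above gives a local smooth frame of $D_{\Omega,\Delta}$ of constant rank $m$. This is the one point where regularity of $\Delta$ enters. It is the only mildly delicate step, and the local-frame argument settles it.

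\emph{Kernel.} We have $(v,0)\in D_{\Omega,\Delta}(n)$ if and only if $v\in\Delta$ and $\Omega(v,\cdot)|_\Delta=0$. The second condition says $v\in\Delta^{\perp\Omega}$, which gives \eqref{kernel_of_induced}.

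\emph{The case $\Delta=TN$.} Then $\Delta^\circ=0$, so the condition reduces to $\alpha=i_v\Omega$, and $D_{\Omega,TN}=D_\Omega$. The kernel formula becomes $TN\cap TN^{\perp\Omega}=\ker\Omega$.
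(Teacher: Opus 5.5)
Your proof is correct and follows the paper's argument essentially step by step. Isotropy comes from skew-symmetry once $v,w\in\Delta$, maximality from the count $\dim\Delta+\dim\Delta^\circ=\dim N$ (your injective map $(v,\nu)\mapsto(v,i_v\Omega+\nu)$ is an explicit version of the paper's observation that the admissible $\alpha$ form an affine space over $\Delta^\circ$), and the kernel is read directly off the definition; your local-frame check of smoothness and constant rank, the one place where regularity of $\Delta$ enters, is left implicit in the paper, so your version is slightly more complete.
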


\dem
Isotropy: if $(v,\alpha)$ and $(w,\beta)$ lie in $D_{\Omega,\Delta}$ then
$w\in\Delta$ and $v\in\Delta$, so
\[
\langle\!\langle(v,\alpha),(w,\beta)\rangle\!\rangle
=
\langle\alpha,w\rangle+\langle\beta,v\rangle
=
\Omega(v,w)+\Omega(w,v)
=
0 .
\]
Maximality: for fixed $v\in\Delta$ the admissible $\alpha$ form an affine space
over $\Delta^\circ$, so
$\dim D_{\Omega,\Delta}=\dim\Delta+\dim\Delta^\circ=\dim N$,
which is the rank of a maximally isotropic subbundle of $TN\oplus T^*N$.
The kernel: $(v,0)\in D_{\Omega,\Delta}$ means $v\in\Delta$ and
$\Omega(v,\cdot)|_\Delta=0$, that is, $v\in\Delta\cap\Delta^{\perp\Omega}$.
\findem

Only the pointwise linear algebra of $D_{\Omega,\Delta}$ enters the constraint
algorithm of the next subsection, so no integrability hypothesis will be
needed there; a routine computation with the Courant bracket shows that
$D_{\Omega,\Delta}$ is integrable precisely when $\Delta$ is involutive and
$d\Omega$ vanishes on $\Delta$, but this will not be used. We shall use Definition~\ref{def_induced_dirac} with $\Delta=TN$ for a
submanifold $N\subset M$, in which case \eqref{annihilator_form} reads
$\alpha-i_v\Omega\in(TN)^\circ$ along $N$.

\begin{remark}
\label{restriction_comparison}
For a general Dirac structure $D$ on $M$ and a submanifold $N$, the restriction
$D_N(x)=\sigma(x)\bigl(D(x)\cap(T_xN\times T^*_xM)\bigr)$, with $\sigma(x)(v,v^*)=(v,v^*|_{T_xN})$, defined by Courant \cite{Co90} is a Dirac
structure on $N$ provided $\dim\bigl(D(x)\cap(T_xN\times T^*_xM)\bigr)$ is
constant along $N$ \cite[Prop.~2]{Bl04}. For $D=D_{\Omega,\Delta}$ that dimension is $\dim(\Delta\cap TN)+\operatorname{codim}\Delta$, which is constant exactly when $\Delta\cap TN$ has constant rank (constancy of the ranks of $\Delta$ and of $TN$ separately does not suffice: on $\mathbb R^2$, $\Delta=\operatorname{span}\{\partial_x\}$ meets the tangent bundle of the parabola $y=x^2$ in rank one at the origin and rank zero elsewhere), and is automatic for $\Delta=TM$, in which case $\sigma(D_{\Omega,TN})$ is the graph of $i_N^*\Omega$. We work
with $D_{\Omega,\Delta}$ rather than with $\sigma(D_{\Omega,\Delta})$ because the
annihilator in \eqref{annihilator_form}, which carries the constraint forces,
remains visible.
\end{remark}


\subsection{The constraint algorithm in Dirac form}
\label{algorithm}

Let $D$ be a Dirac structure on a manifold $N$ and $H\in C^\infty(N)$. The
implicit Hamiltonian system determined by the pair is $(X,dH)\in D$. Writing, as
in \eqref{kernel_of_induced},
\[
G_D(n)=\{v\in T_nN\ :\ (v,0)\in D(n)\},
\qquad
\rho^*(D)(n)=\{\alpha\in T^*_nN\ :\ \exists v,\ (v,\alpha)\in D(n)\},
\]
both understood \emph{pointwise}, maximal isotropy of $D(n)$ in
$T_nN\oplus T^*_nN$ gives $\rho^*(D)(n)=G_D(n)^\circ$, so that the system admits
a solution at $n$ if and only if $dH(n)\in G_D(n)^\circ$. This is the intrinsic
form of the primary constraint.

\begin{remark}
\label{pointwise_caveat}
The pointwise reading is essential. Blankenstein and Ratiu
\cite[Rem.~3]{Bl04} observe that the distribution generated by the local sections
$X$ with $(X,0)\in D$ may be strictly smaller than the pointwise subspace
$G_D(n)$, and that the two agree precisely when $\rho^*(D)$ has constant
dimension. Their example is $N=\mathbb R^2$ with
$\Omega=\|x\|^2\,dx_1\wedge dx_2$ and $D=D_\Omega$: the sections generate the
zero distribution, whereas $G_D(0)=T_0N$. It is the pointwise subspace that
governs solvability at a point, and it is therefore the pointwise subspace that
enters \eqref{algorithm_sequence}; the price is that $G_D$ need not be a smooth
distribution, which is one source of the regularity hypotheses in
Definition~\ref{def_algorithm}.
\end{remark}

\begin{definition}
\label{def_algorithm}
Let $\Omega$ be a two-form on $N$, let $D=D_\Omega$ be its graph, and let
$H\in C^\infty(N)$. The \emph{constraint algorithm} for $(N,D_\Omega,H)$ is the
sequence
\begin{equation}
\label{algorithm_sequence}
N_0=N,
\qquad
N_{k+1}
=
\bigl\{n\in N_k\ :\ dH(n)\big|_{G_{D_k}(n)}=0\bigr\},
\qquad
D_k=D_{\Omega,TN_k},
\end{equation}
where $D_k$ is the structure induced on $N$ along $N_k$ in the sense of
Definition~\ref{def_induced_dirac}; by Proposition~\ref{induced_is_dirac} it is
a Dirac structure in the vector-bundle sense, which is all that the pointwise
conditions of \eqref{algorithm_sequence} require. We assume throughout that
every $N_k$ is an embedded submanifold and that the sequence stabilizes,
$N_{k+1}=N_k=:N_\infty$ for $k$ large.
\end{definition}

Two features of this definition should be noted, because they are what make it
usable and what limit its scope. First, it is stated for a \emph{presymplectic
graph} $D=D_\Omega$ and not for an arbitrary Dirac structure $D$ on $N$: the
recipe $D_k=D_{\Omega,TN_k}$ of Definition~\ref{def_induced_dirac} takes a
two-form and a distribution as its data, and for a general $D$ there is no such
$\Omega$. For a general Dirac structure the corresponding object is the clean
backward image $i_k^!D$ along the inclusion $i_k:N_k\hookrightarrow N$, which
requires the constant-rank hypotheses of Remark~\ref{restriction_comparison};
we do not need that generality, since every structure to which the algorithm is
applied in this paper is of the form $D_\Omega$. Second, $D_k$ is an
\emph{ambient} object: it is a maximally isotropic subbundle of
$(TN\oplus T^*N)|_{N_k}$, whose covectors are still covectors of $N$, and not
an intrinsic Dirac structure on $N_k$. This is deliberate, and is what keeps the
constraint forces visible in the annihilator, as explained in
Remark~\ref{restriction_comparison}; the intrinsic structure on $N_k$ is its
image $\sigma(D_k)$, the graph of $i_k^*\Omega$.

Applied to the Pontryagin bundle this reproduces the constraints of
Section~\ref{ocpsec} with no regularity hypothesis.

\begin{proposition}
\label{primary_from_dirac}
Let $M=T^*P\times_PC$ carry $\Omega=\mathrm{pr}_P^{\,*}\omega_P$ and the
Pontryagin Hamiltonian $H_P$. Then $G_{D_\Omega}=\ker\Omega=K$ is the vertical
bundle of $\mathrm{pr}_P$, and the first step of \eqref{algorithm_sequence} is
\[
M_1
=
\left\{
(x,p,u)\in M
\ :\
\frac{\partial H_P}{\partial u^a}=0,\ a=1,\dots,r
\right\},
\]
which is \eqref{constr}. At the next step,
$G_{D_1}=TM_1\cap(TM_1)^{\perp\Omega}=\ker\Omega_1$ and
\[
M_2=\bigl\{m\in M_1\ :\ dH_P(m)|_{\ker\Omega_1(m)}=0\bigr\},
\]
which is the Gotay--Nester secondary constraint. If the problem is regular then
$\ker\Omega_1=0$ and the algorithm stops at $M_1$.
\end{proposition}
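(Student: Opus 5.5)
The plan is to unwind Definition~\ref{def_algorithm} step by step, using Proposition~\ref{induced_is_dirac} to identify each kernel $G_{D_k}$ and Proposition~\ref{regular_points} for the regular case.

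\emph{Step zero.} With $N=M$ and $D_0=D_{\Omega,TM}=D_\Omega$, Proposition~\ref{induced_is_dirac} gives $G_{D_0}=TM\cap(TM)^{\perp\Omega}=\ker\Omega$. Since $\Omega=\mathrm{pr}_P^{\,*}\omega_P$ and $\omega_P$ is nondegenerate, $i_X\Omega=0$ holds exactly when $\omega_P(\mathrm{pr}_{P*}X,\cdot)=0$ on the image of $\mathrm{pr}_{P*}$. That image is all of $T(T^*P)$, because $\mathrm{pr}_P$ is a surjective submersion. Hence $i_X\Omega=0$ iff $\mathrm{pr}_{P*}X=0$, so $\ker\Omega=\ker(\mathrm{pr}_P)_*=K$, spanned by the $\partial/\partial u^a$. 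The defining condition $dH_P(m)|_{K(m)}=0$ therefore reads $\partial H_P/\partial u^a=0$, which is \eqref{constr}, and $N_1=M_1$.

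\emph{Step one.} Here $D_1=D_{\Omega,TM_1}$ along $M_1$. Proposition~\ref{induced_is_dirac} gives $G_{D_1}=TM_1\cap(TM_1)^{\perp\Omega}$. A vector $v\in TM_1$ lies in $(TM_1)^{\perp\Omega}$ exactly when $\Omega(v,w)=0$ for all $w\in TM_1$. This is the statement $(i_1^*\Omega)(v,\cdot)=0$, so $G_{D_1}=\ker\Omega_1$, with $\Omega_1$ as in \eqref{Omega1}. The second step of \eqref{algorithm_sequence} then gives $M_2$ as displayed. Because $\ker\Omega_1\subset TM_1$, the condition $dH_P|_{\ker\Omega_1}=0$ is the same as $dH_1|_{\ker\Omega_1}=0$. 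That is the Gotay--Nester condition for the presymplectic system $(M_1,\Omega_1,H_1)$.

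\emph{Regular case.} Proposition~\ref{regular_points}(iii) gives $\ker\Omega_1=0$ at every point of $M_1$. The condition defining $M_2$ is then vacuous, so $M_2=M_1$ and the sequence stabilizes. Its standing hypothesis on the $d\varphi_a$ holds automatically at regular points.

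The only step needing care is the identification $G_{D_0}=K$, which uses that $\mathrm{pr}_P$ is a submersion. I also need to note that everything is pointwise, so no smoothness of $\ker\Omega_1$ is required (Remark~\ref{pointwise_caveat}). Beyond that the argument is bookkeeping.
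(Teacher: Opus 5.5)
Your proposal is correct and follows the paper's proof essentially line by line. You identify $G_{D_0}=\ker\Omega=K$, then obtain $G_{D_1}=TM_1\cap(TM_1)^{\perp\Omega}=\ker\Omega_1$ from \eqref{kernel_of_induced} with $\Delta=TM_1$, and invoke Proposition~\ref{regular_points}(iii) for the regular case; the one difference is that you re-derive $\ker\Omega=\ker(\mathrm{pr}_P)_*$ from the surjectivity of $T\mathrm{pr}_P$ and the nondegeneracy of $\omega_P$, where the paper cites the computation in Section~\ref{ocpsec}.
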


\dem
The first assertion is the computation of $\ker\Omega$ in Section~\ref{ocpsec},
and $dH_P|_K=0$ is $\partial H_P/\partial u^a=0$. The second follows from
\eqref{kernel_of_induced} with $\Delta=TM_1$, since
$\Omega_1=i_1^*\Omega$ and hence
$\ker\Omega_1=TM_1\cap(TM_1)^{\perp\Omega}$. The last assertion is
Proposition~\ref{regular_points}(iii).
\findem

Along $M_k$ the equation $(X,dH_P)\in D_k$ reads, by \eqref{annihilator_form},
\begin{equation}
\label{reaction_form}
X\in TM_k,
\qquad
i_X\Omega-dH_P\in(TM_k)^\circ ,
\end{equation}
so the constraints enter twice, as a restriction on the admissible velocities
and as a relaxation of the covector equation into the annihilator. In the
regular case $(TM_1)^\circ$ contributes nothing beyond the elimination of the
control, and one recovers the formulation of Section~\ref{ocpsec}.

\begin{remark}
The presymplectic algorithm for the Pontryagin Maximum Principle, including the
identification of the secondary constraints with the classical Goh and
generalized Legendre--Clebsch conditions, is due to Barbero-Li\~n\'an and
Mu\~noz-Lecanda \cite{Ba08}; what is new here is the Dirac formulation
\eqref{reaction_form} of each stage and, in the next subsection, its behavior
under symmetry reduction.
\end{remark}


\subsection{Reduction commutes with the constraint algorithm}
\label{commutation}

We now come to the main point. The reduced object $[D_\Omega]_G$ is a Dirac
subbundle of the Courant algebroid $E=(TM\oplus T^*M)/G$ over $M/G$, not a
Dirac structure on $T(M/G)\oplus T^*(M/G)$, so Definition~\ref{def_algorithm}
does not apply verbatim to the reduced data $(M/G,[D_\Omega]_G,H_G)$: the
kernel
\[
G_{[D]_G}([m])
=
\bigl\{[v]\in(TM/G)_{[m]}\ :\ ([v],0)\in[D]_G([m])\bigr\}
\]
is a subspace of the fiber $(TM/G)_{[m]}$, whereas $dH_G([m])$ is a covector
on $M/G$, and the two cannot be paired directly. The pairing is supplied by
the anchor map of $E$, which is the anchor of the Atiyah algebroid $TM/G$
(cf.\ Section~\ref{algebroid}, where the case $TP/G$ is spelled out),
\[
a\colon TM/G\longrightarrow T(M/G),
\qquad
a([v])=T\pi_M(v),
\]
well defined by the equivariance of $T\pi_M$. The constraint algorithm for the
reduced data is then the sequence
\begin{equation}
\label{reduced_algorithm_sequence}
(M/G)_0=M/G,
\qquad
(M/G)_{k+1}
=
\bigl\{[m]\in(M/G)_k\ :\
dH_G([m])\big|_{a(\overline G_k([m]))}=0\bigr\},
\end{equation}
where $\overline G_k=G_{\overline D_k}$ and $\overline D_k$ is the structure
induced along $(M/G)_k$, read fiberwise in $E$ as in
Definition~\ref{def_induced_dirac}: $\overline D_k
=D_{[\Omega]_G,\,\Delta_k}$ with two-form
$[\Omega]_G([v],[w])=\Omega(v,w)$ on $TM/G$, well defined by the invariance of
$\Omega$, and distribution $\Delta_k=a^{-1}\bigl(T(M/G)_k\bigr)\subset TM/G$.
Write $(M/G)_k$ for the resulting sequence.

\begin{lemma}
\label{kernel_descends}
Let $G$ act freely and properly on $M$ and let $D\subset TM\oplus T^*M$ be a
$G$-invariant Dirac structure. Then
\[
G_{[D]_G}=\bigl[G_D\bigr]_G .
\]
\end{lemma}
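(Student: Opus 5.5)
The plan is to prove the equality fiberwise, as two inclusions, by unwinding the definition of $[D]_G$ as the quotient of the $G$-invariant subbundle $D$. First we fix what each side means. $[D]_G([m])$ is the set of classes $([v],[\alpha])$ with $(v,\alpha)\in D(m)$ for some representative $m$ of $[m]$. This is well defined because $D$ is $G$-invariant: replacing $m$ by $mg$ replaces $(v,\alpha)$ by $(g_*v,g^{*-1}\alpha)$, which again lies in $D(mg)$. The kernel $G_{[D]_G}([m])$ is then the set of classes $[v]\in(TM/G)_{[m]}$ with $([v],0)\in[D]_G([m])$. On the other side, $[G_D]_G([m])$ is the set of classes of vectors $v\in G_D(m)$, that is, of vectors with $(v,0)\in D(m)$.

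First we check that $G_D$ itself is $G$-invariant, so that the right-hand side makes sense. If $(v,0)\in D(m)$, then $(g_*v,g^{*-1}0)=(g_*v,0)\in D(mg)$, hence $g_*v\in G_D(mg)$.

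For the inclusion $[G_D]_G\subseteq G_{[D]_G}$: take $v\in G_D(m)$. Then $(v,0)\in D(m)$, and passing to classes gives $([v],[0])\in[D]_G([m])$. Since the zero class in $T^*M/G$ is the class of the zero covector, this says $[v]\in G_{[D]_G}([m])$.

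For the reverse inclusion, which is the only point needing care: suppose $([v],0)\in[D]_G([m])$. Then there exist a representative $m'$ of $[m]$ and a pair $(v',\alpha')\in D(m')$ with $[v']=[v]$ and $[\alpha']=0$. We must show that $\alpha'$ itself is zero. Because the action is free, the fibre of $T^*M/G$ over $[m]$ is identified, through any chosen representative, with the fibre $T^*_{m'}M$. In other words, the map $T^*_{m'}M\to(T^*M/G)_{[m]}$ is a linear isomorphism. The same holds for $TM/G$, and this is the fact already used in the proof of Theorem~\ref{reconstruction_theorem}: two vectors with the same class over the same point are equal. Consequently $[\alpha']=0$ forces $\alpha'=0$. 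It follows that $v'\in G_D(m')$, and therefore $[v]=[v']\in[G_D]_G([m])$.

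The main obstacle is this identification of the zero class with the zero covector at a single representative. To settle it, one invokes freeness together with the linearity of the lifted action on fibres; properness is used only to guarantee that $M/G$ and the quotient bundles are smooth. A final remark: the statement is pointwise, exactly as in Remark~\ref{pointwise_caveat}, so no constant-rank hypothesis on $G_D$ is needed.
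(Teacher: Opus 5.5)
Your proof is correct and follows the same route as the paper. In both, freeness gives a unique representative of each class over a fixed point $m$, so that $([v],[\alpha])\in[D]_G$ if and only if $(v,\alpha)\in D(m)$, and the zero class is identified with the zero covector. The differences are cosmetic: the paper runs the argument as a single equivalence rather than two inclusions, and notes that $[\alpha]=0\Leftrightarrow\alpha=0$ already follows from the fiberwise linearity of the lifted action, with freeness needed only for the uniqueness of representatives over a given point.
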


\dem
Both sides are understood pointwise, as in Remark~\ref{pointwise_caveat}. The
action of $G$ on $TM$ and on $T^*M$ is by vector bundle automorphisms, so
$[v]=0$ if and only if $v=0$, and likewise for covectors. Since the action is
free, a class $([v],[\alpha])$ at $[m]$ has, for each choice of representative
$m$, a unique representative pair $(v,\alpha)\in T_mM\oplus T^*_mM$, and
$([v],[\alpha])\in[D]_G$ if and only if $(v,\alpha)\in D(m)$. Hence
$([v],0)\in[D]_G$ if and only if $(v,0)\in D$, that is, $v\in G_D$.
\findem

\begin{theorem}
\label{reduction_commutes}
Let $(P,C,\Gamma,L)$ be a $G$-invariant optimal control problem, not assumed
regular, with $G$ acting freely and properly. Assume that the constraint
algorithm for $(M,D_\Omega,H_P)$ produces embedded submanifolds $M_k$ and
stabilizes. Then

\begin{enumerate}
\item[(i)] every $M_k$ is $G$-invariant, and the induced Dirac structures $D_k$
are $G$-invariant;
\item[(ii)] $(M/G)_k=M_k/G$ for every $k$, and $\overline D_k=[D_k]_G$: the
Dirac structure carried by the $k$-th stage of the reduced algorithm is the
reduction of $D_k$;
\item[(iii)] consequently $(M/G)_\infty=M_\infty/G$, and the reduced dynamics on
$(M/G)_\infty$ is the reduction of the dynamics on $M_\infty$.
\end{enumerate}
\end{theorem}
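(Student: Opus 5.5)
I will prove all three parts together by induction on $k$. The induction hypothesis at stage $k$ is that $M_k$ is $G$-invariant, that $D_k$ is $G$-invariant, and that $(M/G)_k=M_k/G$ with $\overline D_k=[D_k]_G$. The base case $k=0$ holds by the results already established. Proposition~\ref{invariance_prop} gives invariance of $\Omega$ and $H_P$, Proposition~\ref{Dinvariant} gives invariance of $D_0=D_\Omega$, and $[\Omega]_G$ is by construction the two-form on $TM/G$ whose graph is $[D_\Omega]_G$. Also $\Delta_0=a^{-1}(T(M/G))=TM/G$, so $\overline D_0=D_{[\Omega]_G,TM/G}=[D_\Omega]_G$.

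For the inductive step I first prove invariance upstairs. Suppose $M_k$ is $G$-invariant. Then $T\Psi_g$ maps $TM_k$ onto $TM_k$, and since $g^*\Omega=\Omega$ it preserves $TM_k\cap(TM_k)^{\perp\Omega}$. By \eqref{kernel_of_induced}, this intersection is $G_{D_k}$. The same argument as in Proposition~\ref{Dinvariant}, now applied to the description \eqref{annihilator_form} with the annihilator $(TM_k)^\circ$ (which is also carried to itself), shows that $D_k$ is $G$-invariant. Now let $n\in M_{k+1}$ and $g\in G$. For $v\in G_{D_k}(n)$ we have
\[
dH_P(n\cdot g)(T\Psi_g v)=d(H_P\circ\Psi_g)(n)(v)=dH_P(n)(v)=0,
\]
and $T\Psi_g$ maps $G_{D_k}(n)$ onto $G_{D_k}(n\cdot g)$. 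Hence $n\cdot g\in M_{k+1}$, which gives (i) at stage $k+1$.

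Next I compare the two algorithms at the level of sets. Since $M_k$ is invariant and embedded and the action is free and proper, $M_k/G$ is an embedded submanifold of $M/G$. Moreover $T\pi_M(T_mM_k)=T_{[m]}(M_k/G)$, and, because $M_k$ is a union of orbits and so contains the vertical directions, $a^{-1}(T(M_k/G))=TM_k/G$. Therefore $\Delta_k=[TM_k]_G$. Since $[\Omega]_G$ is the quotient of $\Omega$, Definition~\ref{def_induced_dirac} applied representative-wise as in the proof of Lemma~\ref{kernel_descends} gives $\overline D_k=D_{[\Omega]_G,[TM_k]_G}=[D_{\Omega,TM_k}]_G=[D_k]_G$. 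Lemma~\ref{kernel_descends} then yields $\overline G_k=[G_{D_k}]_G$, and so $a(\overline G_k([m]))=T\pi_M(G_{D_k}(m))$. Because $H_P=H_G\circ\pi_M$, we get $dH_G([m])(T\pi_M v)=dH_P(m)(v)$. The defining condition of $(M/G)_{k+1}$ at $[m]$ is therefore exactly the condition defining $M_{k+1}$ at $m$, so $(M/G)_{k+1}=M_{k+1}/G$. This completes (ii).

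For (iii), stabilization upstairs at some $k$ forces stabilization downstairs at the same $k$, since $(M/G)_k=M_k/G$ for every $k$. Solutions $X$ of $(X,dH_P)\in D_\infty$ along $M_\infty$ can be averaged, or more simply observed to form a $G$-invariant affine family. Their classes $([X],[dH_P])=([X],[dH_G])$ lie in $[D_\infty]_G=\overline D_\infty$, by the representative-wise membership criterion used in Lemma~\ref{kernel_descends}. Conversely, any reduced solution has representatives solving the unreduced equation.

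The main obstacle I expect is the identification $\Delta_k=[TM_k]_G$, that is, the fact that $a^{-1}$ of the tangent space of the quotient recovers all of $TM_k/G$. This relies on $\ker a=\widetilde{\mathfrak g}$ lying inside $TM_k/G$, which holds exactly because $M_k$ is $G$-invariant. This is why invariance has to be proved before the comparison at each stage, and why the induction must carry both statements together.
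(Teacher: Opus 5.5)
Your proof is correct and follows essentially the same induction as the paper: $G$-invariance of $M_k$ gives $\Delta_k=TM_k/G$, Lemma~\ref{kernel_descends} together with $H_P=\pi_M^*H_G$ matches the conditions defining $M_{k+1}$ and $(M/G)_{k+1}$, and (iii) follows from the representative-wise membership criterion. The differences are cosmetic. You prove invariance of $M_{k+1}$ directly from the invariance of $H_P$ and of $G_{D_k}$, whereas the paper reads it off from that equivalence. Your averaging remark in (iii) is unnecessary, and unavailable for noncompact $G$, since the class of any single solution already lies in $[D_\infty]_G$.
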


\dem
We argue by induction on the statement: $M_k$ is $G$-invariant,
$(M/G)_k=M_k/G$, and $\overline D_k=[D_k]_G$. For $k=0$ all three assertions
hold trivially.

Suppose they hold at stage $k$. Then $TM_k$ is a $G$-invariant distribution
and, $\Omega$ being $G$-invariant by Proposition~\ref{invariance_prop}, the
induced structure $D_k=D_{\Omega,TM_k}$ is $G$-invariant; this proves the
second half of (i) at stage $k$. By Lemma~\ref{kernel_descends},
$G_{[D_k]_G}=[G_{D_k}]_G$, so that
\[
a\bigl(\overline G_k([m])\bigr)
=
a\bigl(G_{[D_k]_G}([m])\bigr)
=
T_m\pi_M\bigl(G_{D_k}(m)\bigr).
\]
Since $H_P=\pi_M^*H_G$, one has
$\langle dH_P(m),v\rangle=\langle dH_G([m]),T_m\pi_M(v)\rangle$ for every
$v\in T_mM$. Hence, for $m\in M_k$,
\[
dH_G([m])\big|_{a(\overline G_k([m]))}=0
\iff
dH_P(m)\big|_{G_{D_k}(m)}=0 ,
\]
so that $M_{k+1}$ is $G$-invariant and $(M/G)_{k+1}=M_{k+1}/G$. Finally,
$M_{k+1}$ is $G$-invariant and contains the $G$-orbits of its points, so the
vertical spaces lie in $TM_{k+1}$ and
$TM_{k+1}=(T\pi_M)^{-1}\bigl(T(M_{k+1}/G)\bigr)$ along $M_{k+1}$; hence
$\Delta_{k+1}=a^{-1}\bigl(T(M/G)_{k+1}\bigr)=TM_{k+1}/G$, and the two
pointwise conditions of \eqref{induced_dirac_structure} pass to the quotient
classwise, exactly as in the proof of Lemma~\ref{kernel_descends}, giving
$\overline D_{k+1}=[D_{k+1}]_G$. This is (i) and (ii) at stage $k+1$.

For (iii), stabilization upstairs gives stabilization downstairs by (ii), and
the reduced dynamics is characterized by
$([X]_G,[dH_P]_G)\in[D_\infty]_G$, with $[dH_P]_G=a^*dH_G$ the section of
$T^*M/G$ of \eqref{Dirac_reduced_equations}, which by (ii) and the
$G$-invariance of $D_\infty$ is the reduction of $(X,dH_P)\in D_\infty$.
\findem

\begin{remark}
Theorem~\ref{reduction_commutes} is of practical as well as structural interest:
the algorithm may be run on $M/G$, whose dimension is smaller by $\dim G$, and
the constraints obtained upstairs are then recovered by pulling back. In the
regular case the theorem is vacuous, both algorithms stopping at the first
stage; it has content precisely when the problem is singular.
\end{remark}

\begin{remark}
The proof uses only that $[D_\Omega]_G$ is a Dirac subbundle of the Courant
algebroid $E$, the anchor map $a$ of $E$, and that $H_G$ is a function on
$M/G$. It therefore does not
require the identification $T^*P/G\cong T^*(P/G)\oplus\widetilde{\mathfrak
g}^{\,*}$, and in particular no principal connection is chosen. A connection is
needed only to write the resulting equations in the explicit form \eqref{RHP}.
\end{remark}

\begin{remark}
\label{BR_comparison}
It is worth contrasting the reduced object used here with the one of
Blankenstein and van der Schaft \cite{Bl01} and Blankenstein and Ratiu
\cite{Bl04}. Given a symmetry group acting regularly, they project $D$ to an
honest Dirac structure $\widehat D$ \emph{on} $M/G$, whose local sections are the
pairs $(\widehat X,\widehat\alpha)$ admitting a lift $(X,\alpha)\in D$ with
$\alpha=\pi^*\widehat\alpha$; that is, they retain only those covectors which are
pullbacks from the quotient \cite[Prop.~5]{Bl04}. This is the appropriate object
for a Lie--Poisson type description of the reduced dynamics, and it is a Dirac
structure in the ordinary sense.

It is not, however, the object required here. Discarding the covectors that are
not basic discards precisely the $\widetilde{\mathfrak g}^{\,*}$ component, and
with it the variables $\xi$ and $\mu$ through which the reduced equations
\eqref{RHP} are expressed; no reconstruction equation survives the passage. The
quotient $[D_\Omega]_G$ used in Section~\ref{drec}, a Dirac subbundle of the
Courant algebroid $E=(TM\oplus T^*M)/G$ rather than a Dirac structure on $M/G$,
retains that information, which is why it is the one carried through the
constraint algorithm above. The two constructions are complementary rather than
competing: $\widehat D$ answers what the reduced dynamics is on $M/G$, and
$[D_\Omega]_G$ answers what it is on $M/G$ \emph{together with} the data needed
to reconstruct the unreduced extremal.

A further caveat from the same source deserves mention. Restriction of an
implicit Hamiltonian system to an invariant submanifold carries solutions in one
direction only, there being in general no bijection between the solutions of
$(M,D,H)$ contained in $N$ and those of $(N,D_N,H_N)$ \cite[Prop.~4]{Bl04}; and
not every solution of a reduced implicit Hamiltonian system need be the
projection of a solution upstairs, although every solution is when the
constraints have index one \cite[Prop.~6]{Bl04}. Theorem~\ref{reduction_commutes}
is a statement about the constraint manifolds and the Dirac structures carried on
them, and is unaffected by these subtleties; the correspondence of individual
solutions requires them.
\end{remark}


\subsection{The abnormal multiplier as a momentum map}
\label{cost_extension}

The Pontryagin Hamiltonian $H_P=\langle p,\Gamma\rangle-L$ used throughout
corresponds to the normalization $p_0=-1$ of the cost multiplier, and therefore
describes normal extremals only. Abnormal extremals are recovered by adjoining
the running cost to the state, a device introduced in this context by Ohsawa
\cite{Oh15} and developed by J\'o\'zwikowski and Respondek \cite{JR16}, who work
on the extended space $Q\times\mathbb R$ of configurations and costs and pass to
the projectivization of $T^*(Q\times\mathbb R)$, thereby treating normal and
abnormal extremals uniformly by contact-geometric means. What we add is only the
observation that, read through the reduction theory of this paper, the extension
is itself an instance of that theory: translation in the cost variable is a
symmetry whose momentum map is the multiplier $p_0$, so that the normalization is
a choice of level set rather than a convention.

Let
\[
\widehat P=P\times\mathbb R,
\qquad
\widehat C=C\times\mathbb R\longrightarrow\widehat P,
\]
with the extra coordinate $y$ recording the accumulated cost, and let
\begin{equation}
\label{cost_extended_dynamics}
\widehat\Gamma(x,y,u)=\bigl(\Gamma(x,u),\,L(x,u)\bigr),
\qquad
\widehat L=0 .
\end{equation}
Minimizing the terminal objective $\Phi(x,y)=y$ evaluated at $t=T$, subject to
\eqref{cost_extended_dynamics} with $y(0)=0$ and $x(0)=x_0$, $x(T)=x_T$, is the
Mayer form of the original problem. Its Pontryagin Hamiltonian, with costate
$(p,p_0)$, is
\begin{equation}
\label{extended_hamiltonian}
\widehat H(x,y,p,p_0,u)=\langle p,\Gamma(x,u)\rangle+p_0L(x,u),
\end{equation}
which does not depend on $y$.

\begin{proposition}
\label{p0_momentum}
The group $\widehat G=\mathbb R$ acting on $\widehat P$ by $y\mapsto y+s$, and
trivially on controls, is a symmetry group of the extended optimal control
problem in the sense of Definition~\ref{ocp_symmetry}. Its momentum map is
$J=p_0$, the bundle $\widehat P\rightarrow\widehat P/\widehat G\cong P$ is
trivial with flat product connection, and $\widehat{\mathfrak g}$ is abelian.
Consequently the reduced equations \eqref{RHP} for the extended problem read
\begin{equation}
\label{extended_reduced}
\dot x=\frac{\partial\widehat H}{\partial p},
\qquad
\dot p=-\frac{\partial\widehat H}{\partial x},
\qquad
\dot p_0=0,
\qquad
\xi=\frac{\partial\widehat H}{\partial p_0}=L,
\qquad
0=\frac{\partial\widehat H}{\partial u},
\end{equation}
and the framework of Sections~\ref{ocpsec}--\ref{drec} is the restriction of
\eqref{extended_reduced} to the leaf $p_0=-1$.
\end{proposition}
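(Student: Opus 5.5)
The plan is to verify the hypotheses of Definition~\ref{ocp_symmetry} directly, compute the momentum map from Proposition~\ref{invariance_prop}, and then specialize \eqref{RHP} term by term. First, for the symmetry: the action $\widehat\Psi_s(x,y,u)=(x,y+s,u)$ covers $\widehat\Phi_s(x,y)=(x,y+s)$. Since $\widehat L=0$, invariance of the running cost is immediate. Equivariance amounts to $T\widehat\Phi_s\circ\widehat\Gamma=\widehat\Gamma\circ\widehat\Psi_s$, and this holds because $T\widehat\Phi_s$ is the identity on the components $(\dot x,\dot y)$, while $\widehat\Gamma$ does not depend on $y$. The action is free and proper, and $\widehat P/\widehat G\cong P$ via $(x,y)\mapsto x$. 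The global section $y=0$ gives a trivialization $\widehat P\cong P\times\mathbb R$. I take the flat product connection $A=dy$, for which $F_A=dA=0$, and note that $\widehat{\mathfrak g}=\mathbb R$ is abelian, so $\operatorname{ad}^*\equiv0$.

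Next, for the momentum map, the generator is $\zeta_{\widehat P}=\zeta\,\partial_y$, so Proposition~\ref{invariance_prop} gives $\langle J_M,\zeta\rangle=\langle(p,p_0),\zeta\partial_y\rangle=\zeta p_0$, that is, $J=p_0$. Since the adjoint bundle is trivial with trivial induced connection, the identification \eqref{cotangent_quotient} sends $(x,y,p,p_0)$ to $(x,p;\mu=p_0)$. For the connection $dy$, the horizontal part of the covector is simply $p$. Likewise \eqref{tangent_quotient} gives $[\widehat\Gamma]=(\Gamma,L)$, so $\widetilde\Gamma=\Gamma$ and $\widetilde\gamma=A(\widehat\Gamma)=L$. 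The reduced Hamiltonian \eqref{reduced_pontryagin_hamiltonian} is therefore $h=\langle p,\Gamma\rangle+p_0L-0=\widehat H$, which is consistent with the $y$-independence of $\widehat H$.

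Substituting into \eqref{RHP}, with $F_A=0$, $\operatorname{ad}^*=0$ and trivial covariant derivatives on the coadjoint bundle, produces \eqref{extended_reduced} line by line. In particular $D\mu/Dt=\dot p_0=0$ and $\xi=\partial h/\partial\mu=L$. One point needs care here. The auxiliary connection $\nabla$ on the base $P$ enters $Dp/Dt$ and $\partial h/\partial x$. Following the discussion at the end of Section~\ref{rlpp}, I will take $\nabla$ flat in a chart, so that these become ordinary partial derivatives; this choice does not affect the equations.

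Finally, for the restriction claim: $p_0$ is conserved, so each level set $\{p_0=c\}$ is invariant under the reduced flow. On the level $p_0=-1$ we have $\widehat H=\langle p,\Gamma\rangle-L=H_P$, and the first, second and last equations of \eqref{extended_reduced} become exactly \eqref{excur} with the primary constraint \eqref{constr}. The remaining equation $\xi=L$ is just the reconstruction of the cost coordinate, $\dot y=L$. I expect no real obstacle here. The only delicate point is the sign and normalization in identifying $\mu$ with $p_0$, rather than with $-p_0$, under the right-action conventions. I would settle this by checking directly that $\langle\lambda,v\rangle=\langle p,T\rho v\rangle+\langle\mu,A(v)\rangle$ holds for $v=\partial_y$.
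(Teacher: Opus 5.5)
Your proposal is correct and follows the paper's proof. The paper uses the same steps: invariance from the $y$-independence of $\Gamma$ and $L$ together with $\widehat L=0$, the momentum map from $\zeta_{\widehat P}=\zeta\,\partial_y$, vanishing of the curvature and coadjoint terms of \eqref{RHP} for the flat product connection and abelian $\widehat{\mathfrak g}$, and $\widehat H|_{p_0=-1}=H_P$. Your extra checks — $\mu=p_0$, $\widetilde\gamma=L$, $h=\widehat H$, and the chart-flat auxiliary connection — fill in details the paper leaves implicit.
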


One point about the extension deserves to be made explicitly, because it
concerns the boundary conditions and not the interior equations. Translation in
$y$ is a symmetry of the dynamics \eqref{cost_extended_dynamics} and of
$\widehat L=0$, hence of the interior variational problem; it is \emph{not} a
symmetry of the Mayer boundary problem, whose terminal objective $\Phi=y$ and
whose initial condition $y(0)=0$ both single out a value of $y$. That is as it
should be: the PMP transversality condition for the terminal cost $\Phi$,
namely $p_0(T)=-\partial\Phi/\partial y=-1$ in the normal case, is precisely
what selects the leaf $p_0=-1$ and fixes the sign and normalization of the
multiplier, and no transversality condition is imposed on $p(T)$ because $x(T)$
is prescribed. The abnormal case $p_0=0$ is the alternative permitted when the
nontriviality requirement $(p,p_0)\not\equiv0$ is met by $p$ alone. Conservation
of $p_0$ along the flow, which is the content of \eqref{extended_reduced}, is
a consequence of the interior symmetry and is unaffected by this remark.

\dem
Invariance is immediate from \eqref{cost_extended_dynamics}, since neither
$\Gamma$ nor $L$ depends on $y$ and $\widehat L=0$. The fundamental vector field
of $\zeta\in\widehat{\mathfrak g}=\mathbb R$ is $\zeta\,\partial_y$, so
$\langle J,\zeta\rangle=\langle(p,p_0),\zeta\partial_y\rangle=\zeta p_0$. Both
the curvature and the coadjoint terms in \eqref{RHP} vanish, the first because
the product connection on $P\times\mathbb R$ is flat and the second because
$\widehat{\mathfrak g}$ is abelian, which gives \eqref{extended_reduced}.
Restricting to $p_0=-1$ turns \eqref{extended_hamiltonian} into
$\langle p,\Gamma\rangle-L=H_P$.
\findem

Thus reducing the Mayer problem by the translation symmetry of the cost variable
is what converts it to Lagrange form; the reconstruction variable $\xi=L$ is the
running cost, and reconstruction recovers the accumulated cost. Because the
reduction used here is of Poisson rather than symplectic type, all levels of
$p_0$ are described simultaneously, the level appearing as a coordinate on the
reduced space. In particular:

\begin{definition}
\label{def_abnormal}
The \emph{abnormal extremals} of $(P,C,\Gamma,L)$ are the solutions of
\eqref{extended_reduced} lying on the leaf $p_0=0$, with $p\not\equiv0$.
\end{definition}

On that leaf $\widehat H=\langle p,\Gamma\rangle$ carries no cost term, the
Hessian in the controls is $\langle p,\partial^2\Gamma/\partial u^2\rangle$, and
this vanishes identically whenever $\Gamma$ is affine in $u$. \emph{For
control-affine systems}, abnormal extremals are therefore always singular in the
sense of Section~\ref{ocpsec}, and it is exactly here that the algorithm of
Section~\ref{algorithm} is required. The restriction is necessary: for a
$\Gamma$ depending nonlinearly on $u$ an abnormal extremal may be regular. With
$P=\mathbb R$ and $\Gamma(x,u)=(1+u^2)\partial_x$ one has
$\widehat H=p(1+u^2)$ on $p_0=0$, so $\widehat H_u=2pu$ vanishes at $u=0$ while
$\widehat H_{uu}=2p\neq0$; the curve $\dot x=1$ with $p$ a nonzero constant is
an abnormal extremal through a regular point of $M_1$. All the systems treated
below, namely \eqref{horizontal_system}, are control-linear, so the hypothesis
is in force throughout the rest of this section.


\subsection{Control-linear systems: abnormality and curvature}
\label{abnormal_curvature_sec}

We now specialize to the situation in which the admissible velocities form a
distribution, and show that the constraint algorithm is governed by the
curvature of the connection.

Let $\rho:P\rightarrow B=P/G$ be a principal $G$-bundle with connection $A$ and
curvature $F_A$, let
\[
\mathcal D=\ker A\subset TP
\]
be the horizontal distribution, and consider the driftless control-linear system
obtained by taking $C=\rho^*\widetilde U$ with $U\cong\mathbb R^{\dim B}$ and
\begin{equation}
\label{horizontal_system}
\Gamma(x,u)=\sum_a u^a X_a^h(x),
\end{equation}
where $X_a^h$ are the horizontal lifts of a local frame $X_a$ of $TB$. The
sub-Riemannian problem of Section~\ref{heisenberg} is of this form, and so is
the Martinet problem below.

\begin{convention}
\label{immersed_convention}
Unless the contrary is stated, admissible curves of \eqref{horizontal_system}
are assumed \emph{immersed}, that is, $u(t)\neq0$ for every $t$; whenever a
metric is present they are assumed parametrized by arclength. Constant curves,
for which every covector in $\mathcal D^\circ$ is trivially an abnormal
multiplier, are thereby excluded, and so are reparametrizations that stop. The
convention is used at three places below --- the constraint chain of
Section~\ref{martinet}, Corollary~\ref{martinet_elliptic} and
Proposition~\ref{martinet_nonstrict} --- and we recall it at each. Theorem
\ref{abnormal_curvature}, by contrast, is deliberately stated \emph{without} it,
so that the reader can see exactly where the hypothesis is needed.
\end{convention}

\begin{proposition}
\label{abnormal_leaf}
For the system \eqref{horizontal_system} the abnormal primary constraint
manifold is the annihilator of the horizontal distribution,
\[
M_1\big|_{p_0=0}
=
\mathcal D^\circ\times_P C ,
\]
and under the identification \eqref{cotangent_quotient} determined by $A$ this
is precisely the leaf
\[
p_z=0
\]
of the reduced Pontryagin bundle: the base costate of an abnormal extremal
vanishes identically, and the extremal is described by the coadjoint variable
$\mu$ alone.
\end{proposition}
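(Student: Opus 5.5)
The plan is a direct computation in two steps: first identify the abnormal primary constraint upstairs, then translate it through the splitting $[\beta_J]$ of \eqref{cotangent_quotient}.

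On the leaf $p_0=0$ the extended Hamiltonian \eqref{extended_hamiltonian} reduces to $\widehat H=\langle p,\Gamma(x,u)\rangle=\sum_a u^a\langle p,X_a^h(x)\rangle$. By Proposition~\ref{primary_from_dirac}, applied to the extended problem, the primary constraint is $\partial\widehat H/\partial u^a=0$ for every $a$. This reads $\langle p,X^h_a(x)\rangle=0$ for all $a$.

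Since the $X^h_a$ are horizontal lifts of a local frame of $TB$, they span $\mathcal D_x=\ker A_x$ at each point. The constraint is therefore exactly $p\in\mathcal D^\circ_x$. It places no condition on $u$, so the constraint set is $\mathcal D^\circ\times_P C$. The conclusion is local in the frame, but it is frame-independent, because $\mathcal D^\circ$ is intrinsic.

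Next, by the decomposition used in the proof of Lemma~\ref{reduced_two_form}, $\langle p,v\rangle=\langle p_z,T\rho(v)\rangle+\langle J(x,p),A(v)\rangle$. Evaluating on horizontal $v$ kills the second term. Since $T\rho:\mathcal D_x\to T_zB$ is an isomorphism, $p|_{\mathcal D_x}=0$ holds if and only if $p_z=0$. Hence $\mathcal D^\circ$ corresponds under $[\beta_J]$ to the zero section of $T^*B$ summed with $\widetilde{\mathfrak g}^{\,*}$, that is, to the leaf $p_z=0$, with $\mu$ free.

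The momentum map is invariant and $\mathcal D^\circ$ is $G$-invariant, so this identification descends to $M/G$. Together with the reduced control factor, the abnormal constraint set is $\{p_z=0\}$ in $T^*B\oplus\widetilde{\mathfrak g}^{\,*}\oplus\widetilde U$.

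The one point needing care is the phrase ``vanishes identically'' along an extremal. The pointwise identification gives $p_z(t)=0$ at each time, because an abnormal extremal lies in $M_1|_{p_0=0}$ for all $t$; no further use of the dynamics is needed. I would remark that the nontriviality condition $p\not\equiv0$ then forces $\mu\not\equiv0$, so the extremal is carried by $\mu$ alone, with $D\mu/Dt=-\operatorname{ad}^*_\xi\mu$ and $\xi=\partial h/\partial\mu=0$ because $\widetilde\gamma=0$.

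I expect no genuine obstacle. The only subtlety is bookkeeping: the proposition is stated for $M_1$ itself, the first stage of the algorithm, so consistency of the second equation of \eqref{RHP} with $p_z\equiv0$ is not required here. That consistency is precisely the secondary constraint, the degeneracy of $\langle\mu,F_A(\dot z,\cdot)\rangle$, and it belongs to the subsequent theorem.
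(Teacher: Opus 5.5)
Your proposal is correct and follows the paper's proof essentially step for step: on $p_0=0$ the constraints $\partial\widehat H/\partial u^a=\langle p,X^h_a\rangle=0$ give $p\in\mathcal D^\circ$. Since $p_z$ is the restriction of $p$ to $\ker A$ with $\mu$ unconstrained, this is exactly the leaf $p_z=0$, and your remarks on descent to $M/G$ and on $\mu\not\equiv0$ agree with how the paper uses the result in Theorem~\ref{abnormal_curvature}.
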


\dem
On $p_0=0$ we have $\widehat H=\sum_au^a\langle p,X_a^h\rangle$, so the primary
constraints $\partial\widehat H/\partial u^a=0$ read $\langle p,X_a^h\rangle=0$
for all $a$, that is, $p\in\mathcal D^\circ$. In the splitting
\eqref{cotangent_quotient} the component $p_z$ is the restriction of $p$ to the
horizontal subspace, so $p\in\mathcal D^\circ$ is exactly $p_z=0$, and $\mu$,
which is determined by $p=A_x^*\mu$, that is, by
$\langle p,v\rangle=\langle\mu,A_x(v)\rangle$ for $v\in T_xP$, is unconstrained.
(Note the direction: $A_x:T_xP\rightarrow\mathfrak g$, so
$A_x^*:\mathfrak g^*\rightarrow T_x^*P$, and it is $p$ that is the pullback of
$\mu$, not the reverse. Equivalently $\mu=J(p)$ is the value of the momentum
map.) The identification uses $\mathcal D=\ker A$, so that a covector
annihilating $\mathcal D$ is determined by its values on the vertical
subspace.
\findem

\begin{lemma}
\label{bracket_curvature}
For horizontal lifts, $A\bigl([X^h,Y^h]\bigr)=-F_A(X,Y)$. Consequently, if
$p\in\mathcal D^\circ$ corresponds to $\mu$ under \eqref{cotangent_quotient},
\begin{equation}
\label{goh_curvature}
\bigl\langle p,[X^h,Y^h]\bigr\rangle=-\bigl\langle\mu,F_A(X,Y)\bigr\rangle .
\end{equation}
\end{lemma}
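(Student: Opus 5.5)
The plan is to obtain the first identity from the Cartan structure equation $F_A=dA+\tfrac12[A,A]$ fixed in Section~\ref{reduced_canonical}, evaluated on the two horizontal lifts $X^h,Y^h$. Three ingredients are needed, all routine.

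\emph{Evaluate $dA$ on horizontal lifts.} The invariant formula for the exterior derivative of a $\mathfrak g$-valued one-form gives
\[
dA(X^h,Y^h)=X^h\bigl(A(Y^h)\bigr)-Y^h\bigl(A(X^h)\bigr)-A\bigl([X^h,Y^h]\bigr).
\]
The first two terms vanish identically, since $A(X^h)=A(Y^h)=0$ as functions on $P$. Hence $dA(X^h,Y^h)=-A([X^h,Y^h])$.

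\emph{Kill the bracket term.} The term $\tfrac12[A,A](X^h,Y^h)=[A(X^h),A(Y^h)]$ vanishes for the same reason.

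\emph{Identify the curvature.} $F_A$ is horizontal and equivariant, so it descends to an $\widetilde{\mathfrak g}$-valued two-form on $B$. We read $F_A(X,Y)$ at $z=\rho(x)$ as the class $[x,F_A(X^h,Y^h)(x)]$; this is the same identification used for $\widetilde\tau$ in Section~\ref{connection_independence}. Combining the three steps yields $A([X^h,Y^h])=-F_A(X,Y)$.

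For \eqref{goh_curvature}, take $p\in\mathcal D^\circ$. By the proof of Proposition~\ref{abnormal_leaf}, $p=A_x^*\hat\mu$, where $\hat\mu=J(p)$ is the representative of $\mu$ at $x$. Pairing gives
\[
\langle p,[X^h,Y^h]\rangle=\langle\hat\mu,A([X^h,Y^h])\rangle=-\langle\hat\mu,F_A(X^h,Y^h)\rangle .
\]
The right-hand side is $-\langle\mu,F_A(X,Y)\rangle$ by the duality $\widetilde{\mathfrak g}^{\,*}\cong\widetilde{\mathfrak g^*}$. Only the vertical part of the bracket contributes, and $p$ sees only that part because $p|_{\mathcal D}=0$; this is consistent with $[X^h,Y^h]$ being vertical in the case $[X,Y]=0$.

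The main obstacle is bookkeeping of conventions rather than substance. One must check that the formula $dA(U,V)=U A(V)-V A(U)-A([U,V])$ carries the normalization implicit in $F_A=dA+\tfrac12[A,A]$, so that there is no stray factor of $\tfrac12$. One must also check that the descent of $F_A$ to $B$ agrees with the convention used in \eqref{RHP} and in Proposition~\ref{atiyah_structure}. The Heisenberg example offers a consistency check. There $[X,Y]=W$ for $X=X^h_{\partial_x}$ and $Y=X^h_{\partial_y}$, so $A([X,Y])=1$. On the other side, $F_A(\partial_x,\partial_y)=-1$ from \eqref{heis_curvature}. The two agree with the asserted sign.
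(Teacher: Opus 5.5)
Your proposal is correct and follows the paper's proof. You evaluate $F_A=dA+\tfrac12[A,A]$ on the horizontal lifts, where $A(X^h)=A(Y^h)=0$ removes the bracket term and leaves $dA(X^h,Y^h)=-A([X^h,Y^h])$; you then identify $F_A(X^h,Y^h)$ with $F_A(X,Y)$ by descent and pair with $p=A_x^*\hat\mu$. Your explicit invariant formula for $dA$ and the Heisenberg sign check simply spell out what the paper compresses into a single line.
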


\dem
Since $A(X^h)=A(Y^h)=0$,
\[
F_A(X^h,Y^h)
=
dA(X^h,Y^h)+[A(X^h),A(Y^h)]
=
-A\bigl([X^h,Y^h]\bigr),
\]
and $F_A$ is basic, so $F_A(X^h,Y^h)=F_A(X,Y)$. For $p\in\mathcal D^\circ$ one
has $\langle p,v\rangle=\langle\mu,A(v)\rangle$ for every $v$, whence
\eqref{goh_curvature}.
\findem

\begin{theorem}
\label{abnormal_curvature}
For the system \eqref{horizontal_system}, write
\begin{equation}
\label{curvature_form}
C_\mu(z)
=
\bigl\langle\mu,F_A(\,\cdot\,,\,\cdot\,)\bigr\rangle,
\end{equation}
a skew-symmetric bilinear form on $T_zB$. The first secondary constraint of the
algorithm on the abnormal leaf is that the control take values in the kernel of
$C_\mu$:
\begin{equation}
\label{goh_condition}
u(t)\in\ker C_{\mu(t)}\bigl(z(t)\bigr),
\qquad\text{that is,}\qquad
C_{\mu(t)}\bigl(u(t),\cdot\bigr)=0 .
\end{equation}
At any time at which $u(t)\neq0$, and only then, \eqref{goh_condition} forces
$C_{\mu(t)}$ to be \emph{degenerate}. In particular:
\begin{enumerate}
\item[(i)] if $C_\mu$ is nondegenerate for every $\mu\neq0$ at a point, then no
abnormal extremal with nonvanishing control --- equivalently, with nonvanishing
speed --- passes through that point;
\item[(ii)] if $\dim B=2$ then, along an immersed curve,
\eqref{goh_condition} reduces to
$\langle\mu,F_A\rangle=0$, which is the classical Goh condition; if moreover
$\dim\mathfrak g=1$, every immersed abnormal extremal projects into the locus
$\{F_A=0\}$ where the curvature vanishes.
\end{enumerate}
\end{theorem}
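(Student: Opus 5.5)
The plan is to run the second step of the algorithm of Definition~\ref{def_algorithm}, as identified in Proposition~\ref{primary_from_dirac}, on the abnormal leaf of the cost-extended problem, and to compute its kernel in reduced terms by means of Lemma~\ref{bracket_curvature}. By Proposition~\ref{abnormal_leaf}, on $p_0=0$ the primary constraint manifold is $M_1=\mathcal D^\circ\times_P C$, that is $p_z=0$, with $\widehat H=\sum_a u^a\langle p,X^h_a\rangle$. The step $M_2$ requires $dH_P|_{\ker\Omega_1}=0$. Rather than compute $\ker\Omega_1$ abstractly, I will use the equivalent and more transparent route of differentiating the primary constraints along the dynamics. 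The functions $\varphi_a=\langle p,X^h_a\rangle$ must remain zero. Along any solution of $i_X\Omega=d\widehat H$, one has $\dot\varphi_a=\{\varphi_a,\widehat H\}=\sum_b u^b\langle p,[X^h_b,X^h_a]\rangle$, which is the standard Poisson bracket of linear Hamiltonians on $T^*P$. Since $\widehat H$ is linear in $u$, the $u$-directions lie in $\ker\Omega$, and the condition $dH|_{\ker\Omega_1}=0$ amounts exactly to the tangency conditions $\dot\varphi_a=0$. The control $u$ itself is not determined by them.

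I will check this identification by writing $\ker\Omega_1$ at a point of $M_1$. Its elements are vectors $(\delta x,\delta p,\delta u)$ tangent to $M_1$ and $\Omega$-orthogonal to $TM_1$. The annihilator of $T(\mathcal D^\circ)$ under $\omega_P$ is spanned by the Hamiltonian vector fields of the $\varphi_a$. The $\delta u$ components are free, so $\ker\Omega_1$ contains the $\partial_u$ directions together with the combinations $\sum c^a X_{\varphi_a}$ that are tangent to $\mathcal D^\circ$. Pairing $d\widehat H$ with $X_{\varphi_a}$ gives $\{\widehat H,\varphi_a\}$. I expect this bookkeeping, namely showing that the secondary condition is exactly $\sum_b u^b\langle p,[X^h_b,X^h_a]\rangle=0$ for all $a$ with no spurious extra conditions, to be the main technical obstacle. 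I will handle it by using Proposition~\ref{primary_from_dirac} and the fact that the only freedom is the multipliers $c^a$.

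Lemma~\ref{bracket_curvature} then turns the condition into $-\langle\mu,F_A(X_b,X_a)\rangle u^b=0$ for every $a$. Since the $X_a$ form a frame, this is $C_\mu(u,\cdot)=0$, which is \eqref{goh_condition}. The degeneracy statement is then immediate: if $u(t)\neq0$, it is a nonzero vector in $\ker C_{\mu(t)}$, so $C_{\mu(t)}$ is degenerate. Conversely, $u=0$ satisfies \eqref{goh_condition} for any $\mu$, which accounts for the phrase ``and only then''. Part (i) is the contrapositive: if $C_\mu$ is nondegenerate for all $\mu\neq0$, no nonzero control is admissible. The case $\mu=0$ is excluded because then $p=A^*\mu=0$, contradicting $p\not\equiv0$, and nonvanishing control is equivalent to nonvanishing speed because $\dot x=\sum u^aX^h_a$ with independent $X^h_a$.

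For part (ii), a skew form on a two-dimensional space is either zero or nondegenerate. Hence for $u\neq0$ the condition forces $C_\mu=0$, that is $\langle\mu,F_A\rangle=0$, which reads $\langle p,[X^h_1,X^h_2]\rangle=0$ and is the Goh condition. If in addition $\dim\mathfrak g=1$, then $\mu$ is a nonzero scalar along the curve, again because $p\not\equiv0$ and $p_z=0$. The Goh condition then gives $F_A(z(t))=0$, so the curve projects into $\{F_A=0\}$.
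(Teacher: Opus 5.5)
Your core computation is the paper's own: differentiate $\varphi_a=\langle p,X^h_a\rangle$ along solutions of $i_X\Omega=d\widehat H$, convert the bracket with Lemma~\ref{bracket_curvature}, and finish with the linear algebra of skew forms. Your handling of the degeneracy claim, the ``only then'' remark, and parts (i) and (ii) matches the paper.

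\textbf{The gap.} What fails is your bridging claim that $d\widehat H|_{\ker\Omega_1}=0$ ``amounts exactly to'' the tangency conditions, and with it your plan to verify this through Proposition~\ref{primary_from_dirac}. On the abnormal leaf, $\widehat H=\sum_a u^a\varphi_a$ vanishes identically on $M_1=\{\varphi_a=0\}$. Hence $d\widehat H$ annihilates all of $TM_1\supseteq\ker\Omega_1$, and that condition is vacuous: the $M_2$ of Proposition~\ref{primary_from_dirac} is all of $M_1$ here. Your own bookkeeping would show this.
\begin{itemize}
\item A combination $\sum_a c^aX_{\varphi_a}$ is tangent to $\mathcal D^\circ$ iff $C_\mu(X_b,c)=0$ for all $b$, that is, $c\in\ker C_\mu$.
\item For such $c$, the pairing $d\widehat H\bigl(\sum_a c^aX_{\varphi_a}\bigr)=\pm C_\mu(u,c)$ is automatically zero.
\item The $\partial_u$ directions only return $\varphi_a=0$.
\end{itemize}
So the danger is not spurious extra conditions. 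The condition disappears altogether.

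\textbf{Why, and the fix.} The intrinsic equation $(X,d\widehat H)\in D_1$ puts the constraint force $\sum_a\lambda^a d\varphi_a$ in $(TM_1)^\circ$. That gives $\dot x=\sum_a(u^a+\lambda^a)X^h_a$, so the velocity is no longer tied to $u$; for instance $X=0$ is a solution everywhere on $M_1$. What produces \eqref{goh_condition} is the ambient Gotay--Nester--Hinds condition $d\widehat H|_{(TM_1)^{\perp\Omega}}=0$. Here $(TM_1)^{\perp\Omega}\subset TM$ is the full $\Omega$-orthogonal, spanned by the $\partial_u$ directions and by lifts of \emph{all} the $X_{\varphi_a}$, not only those tangent to $\mathcal D^\circ$. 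Pairing $d\widehat H=\sum_b u^b d\varphi_b$ with each $X_{\varphi_a}$ gives $C_\mu(u,X_a)=0$ for every $a$. Equivalently, this is the requirement that the ambient equation $i_X\Omega=d\widehat H$, which fixes $\dot x=\sum_a u^aX^h_a$, admit a solution tangent to $M_1$. That is exactly the tangency argument the paper uses, and it never passes through $\ker\Omega_1$. Drop the identification with $\ker\Omega_1$ and argue from tangency, or from $(TM_1)^{\perp\Omega}$, and your proof is complete.
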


\dem
On the abnormal leaf $\widehat H=\sum_au^ah_a$ with $h_a=\langle p,X^h_a\rangle$,
and $M_1=\{h_a=0\}$ by Proposition~\ref{abnormal_leaf}. Tangency of the dynamics
to $M_1$ requires $\dot h_a=\sum_bu^b\{h_a,h_b\}=0$ for each $a$, and
$\{h_a,h_b\}=-\langle p,[X^h_a,X^h_b]\rangle$, which by
Lemma~\ref{bracket_curvature} equals $\langle\mu,F_A(X_a,X_b)\rangle$. Hence
tangency reads $C_\mu(u,\cdot)=0$, that is, \eqref{goh_condition}; a
\emph{nonvanishing} control satisfying it exists only where $C_\mu$ is
degenerate, which is Assertion (i) in contrapositive form. Nothing is asserted
at instants where $u(t)=0$, and indeed a constant curve with $u\equiv0$ and any
$0\neq p\in\mathcal D^\circ$ satisfies \eqref{extended_reduced} on $p_0=0$
vacuously. For (ii), a skew form on a two-dimensional space is a
multiple of the area form, so it is degenerate if and only if it vanishes; and
if in addition $\mathfrak g=\mathbb R$ then $C_\mu=\mu F_A$ with $\mu\neq0$ on a
nontrivial abnormal extremal, since $\mu=0$ together with $p_z=0$ would give
$p=0$.
\findem

\begin{remark}
Four comments place Theorem~\ref{abnormal_curvature} in context. It concerns abnormal extremals, not
\emph{strictly} abnormal ones in the sense of
Definition~\ref{strictly_abnormal}: an abnormal extremal may carry a normal
lift as well, as in the Martinet problem below, and deciding which abnormal extremals are strict requires more than the curvature and is taken
up in Section~\ref{strictness}. Condition (i) is the classical criterion of \emph{fatness}, that a distribution for
which $C_\mu$ is nondegenerate for all $\mu\neq0$ admits no nontrivial abnormal
extremals \cite{AgrachevSachkov,Mo02}, recovered here as the nondegeneracy of the curvature pairing appearing in \eqref{RHP}. Intrinsic criteria valid for an arbitrary distribution are available, notably that of J\'o\'zwikowski
and Respondek \cite[Thm.~5, Cor.~1]{JR16}, which characterizes the abnormal
sub-Riemannian extremals as the curves along which the smallest flow-invariant distribution containing $\mathcal D$ has rank less than $\dim P$, a condition on the iterated brackets $\operatorname{ad}^k_X(\Gamma(\mathcal D))$ for smooth fields; for $\mathcal D=\ker A$ the iterated brackets are generated by the
curvature and the criterion becomes the degeneracy of $C_\mu$, so the content here is
not a new criterion but the identification of the obstruction with the
curvature term of \eqref{RHP}. Finally, \eqref{goh_condition} is the form taken along the extremal by the
classical Goh condition \cite{AgrachevSachkov,Ba08}, with which it coincides
when $\dim B=2$; in general the Goh condition requires $C_\mu$ to vanish on all of $T_zB$, of which \eqref{goh_condition} retains
the part enforced by tangency, namely the vanishing of the term
$\langle\mu,F_A(\dot z,\cdot)\rangle$ to which the reduced costate
equation \eqref{RHP} reduces on the abnormal leaf, where $p_z\equiv0$ and $h\equiv0$. The curvature term of the reduced Pontryagin Maximum
Principle is thus simultaneously the term that distinguishes it from the
unreduced principle and the term that obstructs abnormality.
\end{remark}

\begin{corollary}
\label{heisenberg_no_abnormal}
The sub-Riemannian problem on the Heisenberg group of Section~\ref{heisenberg}
has no nontrivial abnormal extremals.
\end{corollary}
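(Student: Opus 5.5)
The plan is to apply Theorem~\ref{abnormal_curvature}(ii) directly, after checking that the Heisenberg problem of Section~\ref{heisenberg} falls within its scope. First, the dynamics \eqref{heis_control_vector_field} is of the form \eqref{horizontal_system}. The connection \eqref{heis_connection} has horizontal distribution $\ker A=\mathcal H=\mathrm{span}\{X,Y\}$ by \eqref{heis_horizontal}, and $X,Y$ are the horizontal lifts of $\partial_x,\partial_y$ on $B=\mathbb R^2$. Moreover $U=\mathbb R^2\cong\mathbb R^{\dim B}$, and the system is driftless and control-linear. Hence Proposition~\ref{abnormal_leaf} places every abnormal extremal on the leaf $p_z=0$, and on that leaf the multiplier $\mu$ is necessarily nonzero.

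Second, I would compute the curvature pairing. Here $\dim B=2$ and $\mathfrak g=\mathbb R$, so $C_\mu=\mu F_A=-\mu\,dx\wedge dy$ by \eqref{heis_curvature}. Since $\mu\neq0$, this form is nondegenerate at every point. Theorem~\ref{abnormal_curvature}(i), or equivalently (ii), then says that the secondary constraint \eqref{goh_condition} forces $u(t)=0$ at every time. Equivalently, an immersed abnormal extremal would have to lie in $\{F_A=0\}$, and that set is empty.

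Third, I would make precise what ``nontrivial'' means, since this is the only delicate point. The theorem is stated without Convention~\ref{immersed_convention}, so the conclusion is that $u\equiv0$ along any abnormal extremal. Then $\dot x=\dot y=0$ and $\dot w=0$ by \eqref{heis_controlled_dynamics}, so the state curve is constant. I would note that these constant curves, with any $p\in\mathcal D^\circ$, are exactly the trivial abnormal extremals excluded by that convention.

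I expect no real obstacle. The one point needing care is that, on the abnormal leaf, the algorithm produces the pointwise constraint $C_{\mu(t)}(u(t),\cdot)=0$ at each $t$. Nondegeneracy then kills $u(t)$ pointwise for every $t$, with no assumption that $u$ is nonvanishing, and this is what is needed to conclude that the curve is constant.
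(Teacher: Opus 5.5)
Your proposal is correct and is essentially the paper's argument, which just observes that $\mathfrak g=\mathbb R$ and $F_A=-dx\wedge dy$ is nowhere zero, so that Theorem~\ref{abnormal_curvature}(i) applies with empty degeneracy locus. Your extra step, that the pointwise constraint $C_{\mu(t)}(u(t),\cdot)=0$ with $C_\mu$ nondegenerate forces $u\equiv0$ and hence a constant state curve, makes explicit what ``nontrivial'' excludes.
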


\dem
There $\mathfrak g=\mathbb R$ and $F_A=-dx\wedge dy$ is nowhere zero, so
Theorem~\ref{abnormal_curvature}(i) applies with empty locus.
\findem


\subsection{Example: the Martinet problem}
\label{martinet}

The complementary example is the Martinet distribution, for which the curvature
degenerates on a hypersurface. Let $P=\mathbb R^3$ with coordinates $(x,y,z)$,
let
\begin{equation}
\label{martinet_connection}
A=dz-\frac{x^2}{2}\,dy,
\qquad
F_A=dA=-\,x\,dx\wedge dy ,
\end{equation}
and let $G=\mathbb R$ act by translation in $z$, so that $B=\mathbb R^2$ and
$\mu=p_z$. The horizontal lifts of $\partial_x,\partial_y$ are
\[
X=\frac{\partial}{\partial x},
\qquad
Y=\frac{\partial}{\partial y}+\frac{x^2}{2}\frac{\partial}{\partial z},
\qquad
[X,Y]=x\frac{\partial}{\partial z},
\]
and $A([X,Y])=x=-F_A(\partial_x,\partial_y)$, in agreement with
Lemma~\ref{bracket_curvature}. The corresponding optimal control problem, with
$\Gamma=u_1X+u_2Y$ and $L=\frac12|u|^2$, is the Martinet sub-Riemannian problem
\cite{Mo02,ABCK97}.

\begin{remark}
\label{martinet_convention}
Conventions differ. We place the Martinet surface at $\{x=0\}$, whereas
\cite{ABCK97} use $\omega=dz-\tfrac12y^2\,dx$ and place it at $\{y=0\}$. The
transposition $\sigma(x,y,z)=(y,x,z)$ satisfies $\sigma^*\omega=A$ and is an
isometry of the flat metric $dx^2+dy^2$, so the two models are the same problem;
being orientation-reversing on the base it changes the sign of $F_A$, which
accounts for the sign discrepancies between the formulas below and those of
\cite{ABCK97}.
\end{remark}

Unlike the Heisenberg case, $F_A$ vanishes on the \emph{Martinet surface}
$\{x=0\}$, and Theorem~\ref{abnormal_curvature} predicts that abnormal
extremals are confined to it. The algorithm confirms this. On the abnormal leaf
$p_0=0$,
\[
h_X=p_x,
\qquad
h_Y=p_y+\frac{x^2}{2}p_z,
\qquad
\widehat H=u_1h_X+u_2h_Y ,
\]
so that
\[
M_1=\{h_X=h_Y=0\},
\qquad
\{h_X,h_Y\}=-x\,p_z ,
\]
and tangency gives $\dot h_X=-u_2\,x\,p_z$, $\dot h_Y=u_1\,x\,p_z$. The
secondary constraint is therefore
\[
M_2=\{\,u_1\,x\,p_z=u_2\,x\,p_z=0\,\}\cap M_1 ,
\]
which is \eqref{goh_condition} for this connection. As a subset of $M$ this is
the union $\{x\,p_z=0\}\cup\{u=0\}$, which is not a submanifold; by
Convention~\ref{immersed_convention} we work on the immersed branch $u\neq0$,
where it is
\[
M_2=\{x\,p_z=0\}\cap M_1 .
\]
On $M_1$ the vanishing of
$p_z$ would force $p_y=0$ and hence $p=0$, so $p_z\neq0$ and
\[
x\equiv0 .
\]
Tangency of $\{x=0\}$ requires $\dot x=u_1=0$, after which
$\dot p_x=-u_2xp_z=0$ and $\dot z=u_2x^2/2=0$, and the algorithm stabilizes. Note
that the control constraint $u_1=0$ is part of the final constraint manifold,
which lives in $M=T^*P\times_PC$ and therefore carries $u$ among its
coordinates. The abnormal extremal is
\[
x\equiv0,
\qquad
p_x=p_y\equiv0,
\qquad
p_z=\text{const}\neq0,
\qquad
z=\text{const},
\]
with $y$ traversed at the arbitrary rate $u_2$: the Martinet line, recovered
here as the final constraint manifold of the algorithm.

Since all the data are invariant under translation in $z$, and the whole chain
was computed from $\mu=p_z$ and $F_A$ alone, the computation may equally be
performed on $M/G$; by Theorem~\ref{reduction_commutes} the two answers agree,
the reduced final constraint manifold being
$\{x=0,\ p_x=p_y=0,\ u_1=0\}\subset M/G$, with $\mu$ a nonzero parameter and
$u_2$ a free nonzero parameter, by Convention~\ref{immersed_convention}.


\subsection{The Martinet normal geodesics}
\label{martinet_normal}

The abnormal leaf occupied us above; we now treat the normal one, $p_0=-1$. This
is the first place in the paper where the curvature term of \eqref{RHP} is
neither zero nor constant, and it is what makes the reduced equations
interesting: they turn out to be those of a pendulum.

Since the controlled vector field is horizontal, $\widetilde\gamma=0$ and the
reduced Pontryagin Hamiltonian is
\begin{equation}
\label{martinet_normal_hamiltonian}
h(z,p_z,\mu,u)=\langle p_z,u\rangle-\tfrac12|u|^2,
\end{equation}
independent of $z$ and of $\mu$, exactly as in Section~\ref{heisenberg}. The
stationarity condition gives $u=p_z$, and \eqref{RHP} becomes, using
$F_A=-x\,dx\wedge dy$,
\begin{equation}
\label{martinet_normal_equations}
\left\{
\begin{aligned}
\dot x&=p_x, & \dot p_x&=-\mu\,x\,p_y,\\
\dot y&=p_y, & \dot p_y&=+\mu\,x\,p_x,
\end{aligned}
\right.
\qquad
\dot\mu=0,
\qquad
\xi=0 .
\end{equation}
The contrast with the Heisenberg equations \eqref{heis_complete_reduced_PMP} is
the whole point: there the momentum $p_z$ rotated at the constant rate $\mu$,
here at the rate $\mu x$. The curvature is no longer a constant multiple of the
area form, and the reduced system ceases to be linear.

\begin{proposition}
\label{martinet_conserved}
Along solutions of \eqref{martinet_normal_equations} both
\[
|p_z|=|u|
\qquad\text{and}\qquad
p_y-\tfrac12\mu x^2
\]
are constant. The first states that normal extremals are parametrized
proportionally to arclength. The second is the unreduced costate $\lambda_y$,
conserved because the Martinet structure is invariant under translation in $y$ as
well as in $z$.
\end{proposition}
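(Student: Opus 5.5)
The argument is a direct differentiation along solutions of \eqref{martinet_normal_equations}, followed by a comparison with the unreduced variables.

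\emph{First quantity.} I compute
\[
\tfrac{d}{dt}\tfrac12|p_z|^2=p_x\dot p_x+p_y\dot p_y=-\mu x p_xp_y+\mu x p_yp_x=0,
\]
using the Euclidean norm on $B=\mathbb R^2$, the metric for which $X,Y$ are orthonormal. The stationarity condition $u=p_z$ then gives $|u|=|p_z|$ constant. Since $\dot z=u$, the base curve is traversed at constant speed. The horizontal lift is an isometry on $\mathcal D$, so the extremal is parametrized proportionally to arclength.

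\emph{Second quantity.} I differentiate, using $\dot\mu=0$ and $\dot x=p_x$:
\[
\tfrac{d}{dt}\bigl(p_y-\tfrac12\mu x^2\bigr)=\mu x p_x-\mu x\dot x=0 .
\]

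\emph{Identification with $\lambda_y$.} I apply \eqref{cotangent_quotient} exactly as in \eqref{heis_reduced_variables}. The horizontal part of $\lambda$ is read off from the frame $X,Y$, which gives $p_x=\lambda_x$ and $p_y=\langle\lambda,Y\rangle=\lambda_y+\tfrac{x^2}{2}\lambda_z$. The momentum map gives $\mu=\lambda_z$. Hence $p_y-\tfrac12\mu x^2=\lambda_y$.

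\emph{Conservation of $\lambda_y$.} The unreduced Hamiltonian $H_P=\lambda_xu_1+(\lambda_y+\tfrac{x^2}{2}\lambda_z)u_2-\tfrac12|u|^2$ does not depend on $y$. Therefore $\dot\lambda_y=-\partial H_P/\partial y=0$. This is Noether's theorem for translation in $y$, which preserves $A$, $\Gamma$ and $L$.

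The only step needing care is the sign convention in the identification of $p_y$. The direct derivative computation already settles the conservation claim independently of that identification.
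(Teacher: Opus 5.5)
Your proposal is correct and follows the paper's proof essentially verbatim: you differentiate $|p_z|^2$ and $p_y-\tfrac12\mu x^2$ along \eqref{martinet_normal_equations}, then identify the latter with $\lambda_y$ through $p_y=\langle\lambda,Y\rangle=\lambda_y+\tfrac12x^2\lambda_z$ and $\mu=\lambda_z$. The one addition is your Noether check $\dot\lambda_y=-\partial H_P/\partial y=0$, which the paper's proof omits; it is a valid independent confirmation of the symmetry explanation given in the statement.
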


\dem
$\frac{d}{dt}(p_x^2+p_y^2)=2p_x(-\mu xp_y)+2p_y(\mu xp_x)=0$, and
$\frac{d}{dt}\bigl(p_y-\tfrac12\mu x^2\bigr)=\mu x p_x-\mu x\dot x=0$. For the
identification, recall from Section~\ref{martinet} that $p_y=\lambda_y+\tfrac12
x^2\lambda_z$ and $\mu=\lambda_z$, whence $p_y-\tfrac12\mu x^2=\lambda_y$.
\findem

The second conserved quantity is worth a comment. In the variables $(z,p_z,\mu)$
adapted to the $\mathbb R_z$-reduction it is not manifest; the term
$\tfrac12\mu x^2$ is precisely the bookkeeping of the splitting
\eqref{cotangent_quotient}. A symmetry that has not been quotiented is present in
the reduced equations, but only in a form that the connection makes visible.

\begin{theorem}
\label{martinet_pendulum}
Write $p_z=r(\cos\theta,\sin\theta)$ with $r=|u|$ constant. Then along solutions
of \eqref{martinet_normal_equations}
\begin{equation}
\label{martinet_theta}
\dot\theta=\mu\,x,
\qquad\text{and hence}\qquad
\ddot\theta=\mu r\cos\theta .
\end{equation}
Equivalently, setting $\varphi=\theta-\pi/2$ and $\omega^2=\mu r$,
\begin{equation}
\label{martinet_pendulum_eq}
\ddot\varphi=-\omega^2\sin\varphi ,
\end{equation}
the equation of a simple pendulum. The base curve is recovered from
$\dot x=r\cos\theta$, $\dot y=r\sin\theta$, and the fiber coordinate by the
quadrature $\dot z=\tfrac12x^2\dot y$ furnished by $\xi=0$.
\end{theorem}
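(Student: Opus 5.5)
Since $|p_z|=r$ is constant by Proposition~\ref{martinet_conserved}, the polar representation $p_z=r(\cos\theta,\sin\theta)$ is legitimate along any solution with $r\neq0$, with $\theta$ a smooth angle chosen by continuity. I will differentiate it along \eqref{martinet_normal_equations}. The case $r=0$ is excluded, or trivial: then $u=0$ and the curve is constant.

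The first step is to compute $\dot\theta$. Write $\dot p_x=-r\sin\theta\,\dot\theta$ and $\dot p_y=r\cos\theta\,\dot\theta$. The equations give $\dot p_x=-\mu x p_y=-\mu x r\sin\theta$ and $\dot p_y=\mu x p_x=\mu x r\cos\theta$. Comparing coefficients, or more cleanly using
\[
r^2\dot\theta=p_x\dot p_y-p_y\dot p_x=\mu x(p_x^2+p_y^2)=\mu x r^2,
\]
yields $\dot\theta=\mu x$, which is the first half of \eqref{martinet_theta}.

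The second step differentiates once more. Since $\mu$ is constant ($\dot\mu=0$) and $\dot x=p_x=r\cos\theta$, I get $\ddot\theta=\mu\dot x=\mu r\cos\theta$. For the pendulum form, set $\varphi=\theta-\pi/2$; then $\cos\theta=\cos(\varphi+\pi/2)=-\sin\varphi$, and therefore $\ddot\varphi=-\mu r\sin\varphi=-\omega^2\sin\varphi$, with $\omega^2=\mu r$. This is \eqref{martinet_pendulum_eq}. If $\mu r<0$, the same equation holds with the pendulum inverted, that is, with the shift taken as $+\pi/2$ instead. I would note this sign caveat rather than restrict $\mu$.

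The final step concerns recovery of the curve. The base curve satisfies $\dot x=p_x=r\cos\theta$ and $\dot y=p_y=r\sin\theta$, directly from $\dot z=\partial h/\partial p_z$ in \eqref{martinet_normal_equations}. For the fiber coordinate, $\xi=0$ means $A(\dot x)=0$ by Theorem~\ref{reconstruction_theorem}, as in the Heisenberg case; with $A=dz-\tfrac12x^2dy$ this gives the quadrature $\dot z=\tfrac12x^2\dot y$.

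There is no real obstacle here. The only delicate points are the smooth choice of the angle $\theta$, which needs $r\neq0$, and the sign convention making $\omega^2$ positive.
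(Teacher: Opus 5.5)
Your proof is correct and follows the paper's argument: you differentiate the polar form of $p_z$ against the costate equations to get $\dot\theta=\mu x$, differentiate again using $\dot x=p_x=r\cos\theta$, shift by $\pi/2$, and read the quadrature $\dot z=\tfrac12x^2\dot y$ off $A(\dot x)=\xi=0$. Your identity $r^2\dot\theta=p_x\dot p_y-p_y\dot p_x$ is a slightly cleaner way of doing what the paper does by appealing to the $p_x$ equation and then the $p_y$ equation, since it avoids dividing by $\sin\theta$ or $\cos\theta$. Your $\mu r<0$ caveat agrees with the paper's later Corollary~\ref{martinet_elliptic}, where the libration is about $\varphi=\pi$.
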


\dem
Differentiating $p_x=r\cos\theta$ gives $-r\sin\theta\,\dot\theta=-\mu xp_y=-\mu
xr\sin\theta$, so $\dot\theta=\mu x$; the same follows from the equation for
$p_y$. Then $\ddot\theta=\mu\dot x=\mu p_x=\mu r\cos\theta$. The substitution
$\varphi=\theta-\pi/2$ turns $\cos\theta$ into $-\sin\varphi$. The reconstruction
equation is $A(\dot x)=\xi=0$, that is, $\dot z=\tfrac12x^2\dot y$.
\findem

\begin{corollary}
\label{martinet_elliptic}
Let $r>0$ and $\mu>0$. For a librating solution of
\eqref{martinet_pendulum_eq} with modulus $k<1$,
\begin{equation}
\label{martinet_cn}
x(t)=2k\sqrt{\frac{r}{\mu}}\;\operatorname{cn}\bigl(\sqrt{\mu r}\,(t-t_0),\,k\bigr),
\end{equation}
periodic with period $4K(k)/\sqrt{\mu r}$, where $K$ is the complete elliptic
integral of the first kind. One recovers in this way the classical description of
the Martinet geodesics by elliptic functions \textup{\cite{Mo02,ABCK97}}: with
the arclength normalization $r=1$ the amplitude $2k\sqrt{r/\mu}$ and the period
$4K(k)/\sqrt{\mu r}$ are those of \textup{\cite[\S4.5]{ABCK97}}, where the same
solution is written as $-2k\lambda^{-1/2}\operatorname{cn}(K(k)+t\sqrt\lambda,k)$
with $\lambda=p_z$, the difference in sign and phase being the transposition of
Remark~\textup{\ref{martinet_convention}} together with a choice of origin.

The two excluded signs of $\mu$ behave differently. For $\mu<0$ the roles of
the two equilibria of \eqref{martinet_pendulum_eq} are exchanged: the libration
takes place about $\varphi=\pi$, and \eqref{martinet_cn} is replaced by
\[
x(t)=-2k\sqrt{\frac{r}{|\mu|}}\;
\operatorname{cn}\bigl(\sqrt{|\mu|r}\,(t-t_0),\,k\bigr),
\]
of period $4K(k)/\sqrt{|\mu|r}$; the expression \eqref{martinet_cn} itself is
complex there. For $\mu=0$ the pendulum degenerates, since $\omega^2=\mu r=0$
and the substitution $x=\dot\theta/\mu$ used in the proof is unavailable: the
direction $\theta$ of the momentum is constant, and the base curve is a straight
line. The case $r=0$ is excluded by
Convention~\ref{immersed_convention}, since $r=|u|$.
\end{corollary}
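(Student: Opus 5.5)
We reduce to the pendulum equation of Theorem~\ref{martinet_pendulum} and integrate it by the standard Jacobi substitution. The identity that carries everything is $x=\dot\theta/\mu=\dot\varphi/\mu$ from \eqref{martinet_theta}, valid whenever $\mu\neq0$. Hence it suffices to know $\dot\varphi$ for a librating solution of $\ddot\varphi=-\omega^2\sin\varphi$ with $\omega^2=\mu r>0$.

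For the librating case we use the energy integral $\tfrac12\dot\varphi^2-\omega^2\cos\varphi=E$, and define the modulus by $k=\sin(\varphi_{\max}/2)<1$. The substitution $\sin(\varphi/2)=k\,\operatorname{sn}(\omega(t-t_1),k)$ is the classical solution. Differentiating gives
\[
\tfrac12\cos(\varphi/2)\,\dot\varphi=k\omega\,\operatorname{cn}\,\operatorname{dn},
\]
and since $\cos(\varphi/2)=\sqrt{1-k^2\operatorname{sn}^2}=\operatorname{dn}$, we get $\dot\varphi=2k\omega\,\operatorname{cn}(\omega(t-t_1),k)$. Consequently
\[
x=\dot\varphi/\mu=2k\omega/\mu\cdot\operatorname{cn}=2k\sqrt{r/\mu}\,\operatorname{cn}\bigl(\sqrt{\mu r}(t-t_0),k\bigr),
\]
using $\omega/\mu=\sqrt{r/\mu}$. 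This is \eqref{martinet_cn}. The period $4K(k)/\omega$ follows from the period $4K$ of $\operatorname{cn}$. To verify the formula rather than quote it, I check that $\dot\varphi=2k\omega\operatorname{cn}$ together with $\sin(\varphi/2)=k\operatorname{sn}$ satisfies the energy relation, using $\operatorname{sn}^2+\operatorname{cn}^2=1$.

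For the comparison with \cite{ABCK97}, we set $r=1$ and $\lambda=\mu$. The sign is absorbed by the transposition of Remark~\ref{martinet_convention}, which flips $F_A$ and hence the sign of $x$. The phase is absorbed using $\operatorname{cn}(u+2K)=-\operatorname{cn}u$ together with the shift of $t_0$.

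For $\mu<0$, we write $\ddot\theta=\mu r\cos\theta=-|\mu|r\cos\theta$. Setting $\psi=\theta+\pi/2$ gives $\ddot\psi=-|\mu|r\sin\psi$. The libration is then about $\psi=0$, that is, about $\theta=-\pi/2$, which is $\varphi=\pi$. The same computation gives $\dot\psi=2k\sqrt{|\mu|r}\,\operatorname{cn}$, and $x=\dot\theta/\mu=-\dot\psi/|\mu|$ yields the stated formula. The complex-valuedness of \eqref{martinet_cn} for $\mu<0$ is just $\sqrt{r/\mu}\notin\mathbb R$.

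For $\mu=0$, \eqref{martinet_normal_equations} gives $\dot p_x=\dot p_y=0$. Thus $\theta$ is constant and $\dot z_{\rm base}=p_z$ is constant, so the base curve is a line.

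For $r=0$, we invoke Convention~\ref{immersed_convention}.

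The main obstacle is bookkeeping of signs and phases in the ABCK comparison, not the integration itself. I expect the remaining steps to be routine.
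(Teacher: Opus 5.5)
Your proposal is correct and follows the paper's argument. The paper's proof consists of $x=\dot\theta/\mu=\dot\varphi/\mu$ from \eqref{martinet_theta} together with the standard librating solution $\dot\varphi=2k\omega\operatorname{cn}(\omega(t-t_0),k)$, $\omega=\sqrt{\mu r}$; the paper quotes that solution, while you derive it from $\sin(\varphi/2)=k\operatorname{sn}$, and your treatment of $\mu<0$ (via $\psi=\theta+\pi/2$) and of $\mu=0$ spells out what the paper leaves to the statement. In the comparison with \cite{ABCK97} you do not need the transposition to explain the sign: the identity $-\operatorname{cn}(u,k)=\operatorname{cn}(u+2K,k)$, which you already invoke, absorbs sign and phase together into the choice of $t_0$.
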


\dem
Since $x=\dot\theta/\mu=\dot\varphi/\mu$ and the librating pendulum has
$\dot\varphi=2k\omega\operatorname{cn}(\omega(t-t_0),k)$ with $\omega=\sqrt{\mu
r}$, the amplitude is $2k\omega/\mu=2k\sqrt{r/\mu}$ and the period is that of
$\operatorname{cn}$.
\findem

\subsubsection*{The abnormal line seen from the normal leaf}

The set $\{x=0,\ p_x=0\}$ is invariant under
\eqref{martinet_normal_equations}, since $\dot x=p_x$ and $\dot p_x=-\mu xp_y$
both vanish on it. On that set $p_z=(0,\pm r)$ and the base curve is the
$y$-axis: it is the Martinet line of Section~\ref{martinet}.

\begin{proposition}
\label{martinet_nonstrict}
The abnormal extremal of Section~\ref{martinet}, traversed at constant speed, is
also a normal extremal; the Martinet line is therefore not strictly abnormal as
an unparametrized curve. The parametrization qualifier is needed: the abnormal
computation of Section~\ref{martinet} left the rate $u_2(t)$ along the Martinet
line entirely arbitrary, whereas a normal lift of the quadratic cost has
$u=p_z$ with $|p_z|$ constant by
Proposition~\ref{martinet_conserved}, so an abnormal traversal at
non-constant speed carries no normal lift. This is the arclength convention of
Convention~\ref{immersed_convention} and of
Theorem~\ref{strictness_criterion}. In the pendulum
\eqref{martinet_pendulum_eq} it appears as the two equilibria $\varphi=0$ and
$\varphi=\pi$, corresponding to the line traversed in the directions $+y$ and
$-y$ respectively. For fixed $\mu>0$ the first is a center and the second a
saddle, so the two directions along the Martinet line are not equivalent.
\end{proposition}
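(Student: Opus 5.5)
The plan is to exhibit an explicit solution of the normal reduced equations \eqref{martinet_normal_equations} that projects onto the Martinet line, and then to compare its parametrization with that of the abnormal extremal of Section~\ref{martinet}. The abnormal extremal has base curve $x\equiv0$, $y$ arbitrary, $z$ constant, with $u_1=0$ and $u_2$ free. Traversed at constant speed, this means $u_2\equiv\pm r$ with $r>0$.

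First, on the normal leaf $p_0=-1$, take initial data $x(0)=0$, $p_x(0)=0$, $p_y(0)=\pm r$ and any $\mu$. The set $\{x=0,\ p_x=0\}$ is invariant, as already observed before the statement. I will check this directly: on that set $\dot x=p_x=0$ and $\dot p_x=-\mu x p_y=0$, and then $\dot p_y=\mu x p_x=0$. So $p_y\equiv\pm r$, $\dot y=\pm r$, and the normal control $u=p_z=(0,\pm r)$ agrees with the abnormal control. Reconstruction via $\xi=0$ gives $\dot z=\tfrac12x^2\dot y=0$, so $z$ is constant. By Theorem~\ref{reconstruction_theorem} this yields an unreduced normal extremal whose state curve is exactly the abnormal one; hence the line is not strictly abnormal.

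Second, to justify the parametrization qualifier, invoke Proposition~\ref{martinet_conserved}. Any normal lift satisfies $u=p_z$ with $|p_z|$ constant. So a traversal with non-constant $|u_2|$ admits no normal lift, while the abnormal computation left $u_2(t)$ arbitrary.

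Third, the pendulum reading. In Theorem~\ref{martinet_pendulum}, $p_z=(0,r)$ corresponds to $\theta=\pi/2$, i.e.\ $\varphi=0$, and $p_z=(0,-r)$ to $\theta=-\pi/2$, i.e.\ $\varphi=\pi$. Both satisfy $\dot\varphi=\mu x=0$, so both are equilibria of \eqref{martinet_pendulum_eq}. Linearizing $\ddot\varphi=-\omega^2\sin\varphi$ with $\omega^2=\mu r>0$ gives $\ddot\delta=-\omega^2\delta$ at $\varphi=0$, which is a center. At $\varphi=\pi$ it gives $\ddot\delta=+\omega^2\delta$, which is a saddle. The only mild obstacle is bookkeeping of signs: I must make sure that $\varphi=\theta-\pi/2$ sends $+y$ to $0$, and that the $\mu>0$ convention fixes which direction is the center. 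Everything else is immediate.
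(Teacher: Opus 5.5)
Your proposal is correct and follows essentially the paper's argument: the invariance of $\{x=0,\ p_x=0\}$ under \eqref{martinet_normal_equations} makes the two equilibria $\varphi=0,\pi$ normal extremals over the Martinet line, and linearizing \eqref{martinet_pendulum_eq} at those points gives the center and the saddle. Your explicit checks of $\dot p_y=0$, of $\dot z=0$ via $\xi=0$ and reconstruction, of the qualifier via Proposition~\ref{martinet_conserved}, and of the sign bookkeeping $\varphi=\theta-\pi/2$ spell out what the paper's short proof leaves implicit.
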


\dem
Both equilibria lie in the invariant set $\{x=0,p_x=0\}$ and satisfy
\eqref{martinet_normal_equations} with $p_0=-1$, so the corresponding curves are
normal extremals; by Section~\ref{martinet} the same base curve carries an
abnormal extremal. Linearizing \eqref{martinet_pendulum_eq} gives
$\ddot{\delta\varphi}=-\omega^2\delta\varphi$ at $\varphi=0$ and
$\ddot{\delta\varphi}=+\omega^2\delta\varphi$ at $\varphi=\pi$.
\findem

The first assertion is Lemma~2.7 of \cite{ABCK97}, obtained there by exhibiting
the solution directly; here it falls out of the reduced equations, as the
statement that the invariant set $\{x=0,\ p_x=0\}$ of
\eqref{martinet_normal_equations} projects onto the final constraint manifold of
the algorithm of Section~\ref{martinet}. The reduction is what makes the two
computations comparable: both take place on $M/G$, with $\mu$ a parameter.

The asymmetry is not an artifact: the reflection $y\mapsto-y$ preserves the
Martinet distribution only together with $z\mapsto-z$, hence together with
$\mu\mapsto-\mu$, so for a fixed value of the reduced coadjoint variable the two
directions along the abnormal line are genuinely different. The separatrix of the
saddle is the family of normal geodesics asymptotic to the abnormal line, and it
is this structure that is responsible for the delicate behavior of the Martinet
sphere. Two of its consequences are recorded in \cite{ABCK97}. First, the period
$4K(k)/\sqrt{\mu r}$ of Corollary~\ref{martinet_elliptic} diverges
logarithmically as $k\to1^-$, that is, as the librating geodesics approach the
separatrix cycle joining the saddle to its translate by $2\pi$; the resulting
logarithmic term is what makes the Martinet sphere and the sub-Riemannian
distance fail to be sub-analytic \cite[Thm.~1.2]{ABCK97}. Second, the cut locus
acquires there a singularity like that of a Poincar\'e--Dulac return mapping
computed for a section to a separatrix connecting two saddles \cite{ABCK97}. Both are
statements about the phase portrait of \eqref{martinet_pendulum_eq}, which is to
say about the reduced system; the parameter $\mu$ that organizes the family is
the reduced coadjoint variable.

\begin{remark}
\label{abck_deformation}
The curvature criterion is stable under deformation of the model, and the
deformation of \cite{ABCK97} is instructive about what the criterion does and
does not see. Agrachev, Bonnard, Chyba and Kupka replace the flat structure by
\[
D_\varepsilon=\operatorname{span}\Bigl\{(1+\varepsilon y)\frac{\partial}{\partial
x}+\frac{y^2}{2}\frac{\partial}{\partial z},\ \frac{\partial}{\partial
y}\Bigr\},
\]
the two fields being declared orthonormal. Distribution and cost are again
invariant under translation in $z$, so Definition~\ref{ocp_symmetry} applies,
with
\[
A_\varepsilon=dz-\frac{y^2}{2(1+\varepsilon y)}\,dx,
\qquad
F_{A_\varepsilon}=\frac{y\,(2+\varepsilon y)}{2(1+\varepsilon y)^2}\,dx\wedge dy .
\]
All of this is local in $y$, and the correct domain is
\begin{equation}
\label{abck_domain}
|\varepsilon y|<1 ,
\end{equation}
or any connected neighborhood of $\{y=0\}$ avoiding the hypersurface
$\{1+\varepsilon y=0\}$, on which $A_\varepsilon$, $F_{A_\varepsilon}$ and the
reduced metric \eqref{abck_metric} below are all singular and the first
generator of $D_\varepsilon$ degenerates to $\tfrac12y^2\partial_z$, which is
tangent to the $G$-orbits and therefore no longer horizontal. On
\eqref{abck_domain} the numerator $y(2+\varepsilon y)$ vanishes only at $y=0$,
since the other root $y=-2/\varepsilon$ is excluded, so the curvature vanishes
exactly on $\{y=0\}$ and
Theorem~\ref{abnormal_curvature} confines the abnormal extremals to that
hypersurface; they are the lines $y=0$, $z=z_0$ found in \cite{ABCK97}. (The
domain $|y|<2/|\varepsilon|$, which one might be tempted to read off from the
numerator alone, will not do: it contains $y=-1/\varepsilon$.) For
$\varepsilon=0$ the condition \eqref{abck_domain} is empty and the domain is all
of $\mathbb R^3$, as it should be. What
$\varepsilon$ changes is not the location of the abnormal leaf but its position
relative to the normal one: for $\varepsilon=0$ the abnormal extremal is also
normal (Proposition~\ref{martinet_nonstrict}), while for $\varepsilon\neq0$ it
is strictly abnormal \cite[Thm.~1.3]{ABCK97}. Strictness is a statement about the
intersection of the two leaves, and is invisible to a criterion that sees only
the abnormal one.
\end{remark}

\subsubsection*{Reducing by the full symmetry group}

The Martinet structure is invariant under translations in $y$ as well as in $z$,
so one may instead take $G=\mathbb R^2$, acting by
$(y,z)\mapsto(y+s_1,z+s_2)$. The hypotheses of
Definition~\ref{ocp_symmetry} hold as before, the base is now $B=\mathbb R$ with
coordinate $x$, and the natural connection is
\[
A=(dy,\,dz),
\]
which is \emph{flat}: $F_A=0$. The whole of the dynamics moves into the vertical
part,
\[
\widetilde\Gamma(x,u)=u_1\frac{\partial}{\partial x},
\qquad
\widetilde\gamma(x,u)=\bigl(u_2,\ \tfrac12u_2x^2\bigr),
\]
so that $h=p_xu_1+\mu_1u_2+\tfrac12\mu_2u_2x^2-\tfrac12|u|^2$. Stationarity gives
$u_1=p_x$ and $u_2=\mu_1+\tfrac12\mu_2x^2$, the coadjoint variables are constant
because $\mathfrak g=\mathbb R^2$ is abelian, and \eqref{RHP} reduces to the
single second-order equation
\begin{equation}
\label{martinet_duffing}
\ddot x=-\mu_2\,x\left(\mu_1+\tfrac12\mu_2x^2\right),
\end{equation}
with first integral
\begin{equation}
\label{martinet_energy}
\tfrac12\dot x^2+\tfrac12\left(\mu_1+\tfrac12\mu_2x^2\right)^2=\tfrac12r^2 .
\end{equation}
Here $\mu_1=\lambda_y$ and $\mu_2=\lambda_z$, so that
\eqref{martinet_duffing} is exactly what
\eqref{martinet_normal_equations} gives after eliminating $p_y$ by
Proposition~\ref{martinet_conserved}; the two reductions produce the same base
curve. Equation \eqref{martinet_duffing} is an oscillator in the quartic
potential $V(x)=\tfrac12\mu_1\mu_2x^2+\tfrac18\mu_2^2x^4$, again integrable by
elliptic functions.

\begin{remark}
\label{curvature_vs_vertical}
The two reductions of the same problem are instructive about the structure of
\eqref{RHP}. Reducing by $\mathbb R_z$ leaves a curved connection and a
horizontal controlled field, so that $\widetilde\gamma=0$ and the entire
nonlinearity is carried by the curvature term. Reducing by $\mathbb R^2$ leaves a
flat connection, so that the curvature term disappears and the entire
nonlinearity is carried by $\widetilde\gamma$. The reduced dynamics is the same.
The division of the reduced equations into a curvature term and a vertical term
is therefore a feature of the chosen splitting rather than of the problem, in the
same way that Theorem~\ref{connection_independent} shows the division to be
unaffected by a change of connection for a fixed group. A systematic treatment of
the relation between the two reductions belongs to the theory of reduction by
stages, which we do not develop here.
\end{remark}


\subsection{Strict and non-strict abnormality}
\label{strictness}

Theorem~\ref{abnormal_curvature} locates the abnormal extremals using the
curvature alone, and Proposition~\ref{martinet_nonstrict} showed that in the
Martinet problem the curve so located also carries a normal extremal. That
cannot be read off from the curvature: the abnormal leaf does not depend on the
cost at all, whereas the normal leaf does. This subsection makes the difference
precise, and reduces the question to a linear problem in the fiber
$\widetilde{\mathfrak g}^*$.

\begin{definition}
\label{strictly_abnormal}
An admissible curve is \emph{abnormal} if it is the projection of an extremal
with $p_0=0$ and nonvanishing multiplier, \emph{normal} if it is the projection
of an extremal with $p_0=-1$, and \emph{strictly abnormal} if it is abnormal and
not normal.
\end{definition}

The terminology is that of \cite{ABCK97,MR20}. It is worth a warning that the
adjective \emph{strict} is used by other authors for a different property,
namely that the abnormal multiplier be unique up to scale
\cite[Def.~3.2]{BT01}; the two notions are independent.

\subsubsection*{The two leaves share an adjoint equation}

Fix an admissible reduced curve, that is, a curve $t\mapsto(z(t),\xi(t))$
together with a control $u(t)$ satisfying the first and fourth equations of
\eqref{RHP}. A \emph{lift} of it is a pair $(p_z,\mu)$ along the curve
satisfying the remaining equations of \eqref{RHP} for the given $u$.

\begin{lemma}
\label{common_adjoint}
Suppose the controlled field is linear in the control,
$[\Gamma](z,u)=\sum_au^a\bigl(\widetilde\Gamma_a(z),\widetilde\gamma_a(z)\bigr)$,
and the cost is $L=\tfrac12|u|^2$ with $|\cdot|$ a fixed $G$-invariant norm
on $U$, so that $L$ does not depend on the base point. Then along a fixed
admissible reduced curve the equations for $(p_z,\mu)$ in \eqref{RHP} form one
and the same linear homogeneous system on both leaves; the leaves differ only in
the algebraic condition imposed by $\partial h/\partial u=0$, namely
\[
h_a=0
\quad (p_0=0),
\qquad\text{versus}\qquad
h_a=u^a
\quad (p_0=-1),
\]
where $h_a=\langle p_z,\widetilde\Gamma_a\rangle+\langle\mu,\widetilde\gamma_a\rangle$.
\end{lemma}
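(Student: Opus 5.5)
The plan is to write the reduced Hamiltonian on a general leaf $p_0\in\{0,-1\}$ and read off which parts of \eqref{RHP} involve $p_0$. By Proposition~\ref{p0_momentum}, the extended Hamiltonian \eqref{extended_hamiltonian} reduces, in the splitting fixed by $A$, to
\[
h_{p_0}(z,p_z,\mu,u)=\sum_a u^a h_a(z,p_z,\mu)+p_0\,\tfrac12|u|^2 ,
\qquad h_a=\langle p_z,\widetilde\Gamma_a\rangle+\langle\mu,\widetilde\gamma_a\rangle .
\]
This form uses the linearity hypothesis on $[\Gamma]$ and the invariance of $L$. For $p_0=-1$ it is \eqref{reduced_pontryagin_hamiltonian}; for $p_0=0$ it is the abnormal Hamiltonian of Section~\ref{abnormal_curvature_sec}. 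The whole proof rests on one observation: the $p_0$-term depends on neither $z$ nor $p_z$ nor $\mu$, because $|\cdot|$ is a fixed $G$-invariant norm on $U$. We will take this norm to be parallel for the connection that $A$ induces on $\widetilde U$, so that the covariant $z$-derivative of Section~\ref{rlpp}, which holds $u$ fixed in a parallel frame, annihilates $\tfrac12|u|^2$.

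Next we go through the equations of \eqref{RHP} for $(p_z,\mu)$ along the fixed curve $(z(t),\xi(t),u(t))$. In the costate equation, $\partial h_{p_0}/\partial z=\sum_a u^a\,\partial h_a/\partial z$ is independent of $p_0$. It is linear and homogeneous in $(p_z,\mu)$, since each $h_a$ is linear in the fiber variables and covariant differentiation preserves this. The curvature term $\langle\mu,F_A(\cdot,\dot z)\rangle$ is also linear in $\mu$ and does not involve $p_0$, because $\dot z$ is a datum of the curve. The coadjoint equation $D\mu/Dt=-\operatorname{ad}^*_\xi\mu$ is linear in $\mu$, with $\xi$ again given. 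The covariant derivatives $Dp_z/Dt$ and $D\mu/Dt$ are linear operators along $z(\cdot)$. So these equations form a linear homogeneous ODE system whose coefficients depend only on the curve and the control, and it is the same system for both values of $p_0$.

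The first and fourth equations, $\dot z=\partial h/\partial p_z=\sum u^a\widetilde\Gamma_a$ and $\xi=\partial h/\partial\mu=\sum u^a\widetilde\gamma_a$, do not involve $p_0$ at all. They are the admissibility conditions that we are assuming hold. The only remaining equation is $\partial h_{p_0}/\partial u^a=h_a+p_0u^a=0$, which gives $h_a=0$ for $p_0=0$ and $h_a=u^a$ for $p_0=-1$, as stated. Here $u^a$ means the components in an orthonormal frame for $|\cdot|$; we will choose such a frame so that $\partial(\tfrac12|u|^2)/\partial u^a=u^a$.

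The main obstacle is the covariant partial derivative $\partial h/\partial z$. One has to check that the cost term contributes nothing to it, which requires the norm to be parallel under the connection on $\widetilde U$, and not just $G$-invariant. If $U$ has its $G$-invariant inner product, the induced connection on the associated bundle $\widetilde U$ is metric, so the norm is automatically parallel. That is the argument we will give, and it is why the lemma insists that $L$ not depend on the base point.
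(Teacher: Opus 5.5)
Your proposal is correct and follows the paper's proof. The paper writes $h=p_0L(u)+\sum_au^ah_a$, observes that $\partial h/\partial z=\sum_au^a\,\partial h_a/\partial z$ and the covariant-derivative terms carry no $p_0$, and reads off $p_0u^a+h_a=0$ from stationarity. Your additional check that $\tfrac12|u|^2$ has vanishing covariant $z$-derivative is what the paper compresses into ``$L$ independent of $z$''. One small correction: Section~\ref{rlpp} holds $u$ fixed in a local trivialization of $\widetilde U$, not necessarily a parallel frame, but any trivialization induced by a local section of $P$ makes the $G$-invariant norm constant just as well.
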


\dem
$h=p_0L(u)+\sum_au^ah_a$ with $h_a$ linear in $(p_z,\mu)$ and $L$ independent of
$z$. Hence $\partial h/\partial z=\sum_au^a\partial h_a/\partial z$ carries no
$p_0$, and neither do the covariant derivative terms, which come from the
splitting alone. The condition $\partial h/\partial u^a=0$ reads
$p_0u^a+h_a=0$.
\findem

\begin{proposition}
\label{normal_affine}
For a fixed admissible reduced curve, the set of normal lifts is either empty or
an affine space over the vector space of abnormal lifts. In particular an
abnormal curve is non-strictly abnormal precisely when the inhomogeneous system
$h_a=u^a$ is consistent within the solution space of the common adjoint
equation.
\end{proposition}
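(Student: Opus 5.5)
The plan is to read Proposition~\ref{normal_affine} as elementary linear algebra applied to Lemma~\ref{common_adjoint}. Fix the admissible reduced curve $(z(t),\xi(t),u(t))$. Along it, let $\mathcal S$ denote the solution space of the common adjoint system furnished by Lemma~\ref{common_adjoint}. This system consists of the second and third equations of \eqref{RHP}, with $u$ frozen and the $p_0$-independent terms retained. It is a linear homogeneous system of ordinary differential equations for $(p_z,\mu)$ along $z(\cdot)$, so $\mathcal S$ is a vector space. It is finite-dimensional, and a solution is determined by its value at $t=0$.

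On $\mathcal S$ consider the linear map
\[
\Lambda:\mathcal S\longrightarrow C^\infty([0,T];\mathbb R^r),\qquad
\Lambda(p_z,\mu)(t)=\bigl(h_a(t)\bigr)_a ,
\]
where $h_a=\langle p_z,\widetilde\Gamma_a\rangle+\langle\mu,\widetilde\gamma_a\rangle$. This map is linear because each $h_a$ is linear in $(p_z,\mu)$ and the coefficients $\widetilde\Gamma_a(z(t))$, $\widetilde\gamma_a(z(t))$ depend only on the fixed curve. By Lemma~\ref{common_adjoint}, an abnormal lift ($p_0=0$) is exactly an element of $\ker\Lambda$. A normal lift ($p_0=-1$) is exactly an element of $\Lambda^{-1}(u)$, where $u$ denotes the fixed control viewed as a curve in $\mathbb R^r$; the condition $-u^a+h_a=0$ is $h_a=u^a$. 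The first claim is then standard. If $\Lambda^{-1}(u)$ is nonempty and contains some $\ell_0$, then $\Lambda^{-1}(u)=\ell_0+\ker\Lambda$. One inclusion holds because differences of normal lifts lie in the kernel. The other holds because adding an abnormal lift to $\ell_0$ preserves both the adjoint equation, by linearity, and the condition $h_a=u^a$.

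For the second claim, an abnormal curve is non-strictly abnormal, by Definition~\ref{strictly_abnormal}, exactly when it also carries a normal lift. By the identification above, this is exactly when $u\in\operatorname{im}\Lambda$, that is, when the inhomogeneous system $h_a=u^a$ is consistent on $\mathcal S$.

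I expect the only delicate point to be the bookkeeping of what counts as the ``same curve''. The abnormal lift may have been found for a different parametrization, and a normal lift needs the specific control $u(t)$ (compare Proposition~\ref{martinet_nonstrict}). The statement sidesteps this because it is posed for a fixed admissible reduced curve with fixed $u$. A second point needs a remark: the nonvanishing of the abnormal multiplier does not affect the argument. Nonvanishing only excludes the zero element of $\ker\Lambda$ from counting as an abnormal lift, while the affine structure concerns the full vector space $\ker\Lambda$, which contains $0$.
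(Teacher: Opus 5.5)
Your proposal is correct and is essentially the paper's own argument. The paper likewise takes the solution space of the common linear adjoint system from Lemma~\ref{common_adjoint}, notes that the map $\lambda\mapsto(h_a(\lambda(\cdot)))_a$ is linear, and identifies the abnormal lifts with its kernel and the normal lifts with the preimage of $u$. Your remarks on finite-dimensionality, on the excluded zero multiplier, and on the fixed parametrization are accurate and make explicit points the paper leaves implicit.
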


\dem
Immediate from Lemma~\ref{common_adjoint}: the solutions of the linear
homogeneous system form a vector space $\mathcal V$, the map
$\mathrm{ev}\colon\mathcal V\to C^\infty([0,T],\mathbb R^m)$,
$\lambda\mapsto(h_a(\lambda(\cdot)))_a$, is linear, the abnormal lifts are
$\ker\mathrm{ev}$ and the normal lifts are $\mathrm{ev}^{-1}(u)$.
\findem

The unreduced counterpart of Proposition~\ref{normal_affine} is
\cite[Rem.~2]{MR20}. What the reduction contributes is that $\mathcal V$ and
$\mathrm{ev}$ are computable from reduced data, which we now do.

\subsubsection*{The criterion}

Take the horizontal system \eqref{horizontal_system} with
$U\cong\mathbb R^{\dim B}$ and $L=\tfrac12|u|^2$, so that the $X_a$ are the
horizontal lifts of a frame $X_a$ of $TB$ that we now take orthonormal for a
Riemannian metric $g$ on $B$; equivalently, $\mathcal D=\ker A$ carries the
pullback of $g$. Let $\nabla$ be the Levi-Civita connection of $g$, let
$\nabla^A$ be the connection induced by $A$ on $\widetilde{\mathfrak g}^*$, and
write $\sharp$ and $\flat$ for the musical isomorphisms of $g$. Along a curve
$z(\cdot)$ in $B$ define the \emph{curvature pairing}
\begin{equation}
\label{curvature_pairing}
C_{\dot z}\colon\widetilde{\mathfrak g}^*_z\longrightarrow T_z^*B,
\qquad
\bigl\langle C_{\dot z}\mu,v\bigr\rangle
=
\bigl\langle\mu,F_A(\dot z,v)\bigr\rangle ,
\end{equation}
and let
\[
K_{\dot z}
=
\{v\in T_zB:\ F_A(\dot z,v)=0\}
\supseteq
\langle\dot z\rangle ,
\qquad
\operatorname{im}C_{\dot z}=K_{\dot z}^{\circ}.
\]

\begin{lemma}
\label{wong}
For the system \eqref{horizontal_system} with the above cost, $\xi\equiv0$ and
\eqref{RHP} becomes
\begin{align}
\label{normal_wong}
p_0=-1:&\qquad
\nabla_{\dot z}\dot z=-\bigl(C_{\dot z}\mu\bigr)^{\sharp},
\qquad
\nabla^A_{\dot z}\mu=0,
\qquad
p_z=\dot z^{\flat};
\\[0.4em]
\label{abnormal_wong}
p_0=\phantom{-}0:&\qquad
p_z\equiv0,
\qquad
C_{\dot z}\mu=0,
\qquad
\nabla^A_{\dot z}\mu=0,
\qquad
\mu\neq0 .
\end{align}
\end{lemma}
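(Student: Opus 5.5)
The plan is to specialize \eqref{RHP}, read on the extended space of Proposition~\ref{p0_momentum}, to the horizontal system \eqref{horizontal_system} with $L=\tfrac12|u|^2$. The auxiliary connection $\nabla$ of Section~\ref{rlpp} will be taken to be the Levi-Civita connection of $g$. Since every $X_a^h$ is horizontal, $A(\Gamma)=0$, so $\widetilde\gamma\equiv0$ and $\widetilde\Gamma(z,u)=\sum_au^aX_a$. Hence the reduced Hamiltonian on either leaf is
\[
h=p_0\tfrac12|u|^2+\sum_au^a\langle p_z,X_a\rangle,
\]
which does not depend on $\mu$.

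The fourth equation of \eqref{RHP} then gives $\xi=\partial h/\partial\mu=0$ at once. With $\xi=0$ the coadjoint equation becomes $\nabla^A_{\dot z}\mu=0$ on both leaves. The first equation gives $\dot z=\sum_au^aX_a$, and stationarity reads $p_0u^a+\langle p_z,X_a\rangle=0$. On $p_0=-1$ this says $u^a=\langle p_z,X_a\rangle$. Orthonormality of the frame then yields $\dot z=\sum_a\langle p_z,X_a\rangle X_a=p_z^\sharp$, that is, $p_z=\dot z^\flat$. On $p_0=0$ the stationarity condition gives $\langle p_z,X_a\rangle=0$ for all $a$, so $p_z\equiv0$, which is Proposition~\ref{abnormal_leaf}.

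Next comes the costate equation $Dp_z/Dt=-\partial h/\partial z-\langle\mu,F_A(\dot z,\cdot)\rangle$. The main care is needed in the covariant partial derivative $\partial h/\partial z$. It is taken with $p_z$ held $\nabla$-parallel and $u$ held fixed in the trivialization given by the frame. This gives $\langle\partial h/\partial z,v\rangle=\sum_au^a\langle p_z,\nabla_vX_a\rangle$.

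\emph{Normal leaf.} Here $p_z=\dot z^\flat$, so $\partial h/\partial z(v)=\sum_au^a g(\dot z,\nabla_vX_a)$. Writing $\nabla_vX_a=\sum_b\omega_{ba}(v)X_b$ with $\omega$ skew, since the frame is orthonormal, this sum equals $\sum_{a,b}u^au^b\omega_{ba}(v)$, which vanishes by skewness. Metric compatibility gives $Dp_z/Dt=(\nabla_{\dot z}\dot z)^\flat$. Therefore
\[
(\nabla_{\dot z}\dot z)^\flat=-\langle\mu,F_A(\dot z,\cdot)\rangle=-C_{\dot z}\mu,
\]
which is \eqref{normal_wong}.

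\emph{Abnormal leaf.} With $p_z\equiv0$, both $Dp_z/Dt$ and $\partial h/\partial z$ vanish, so the costate equation reduces to $C_{\dot z}\mu=0$. This is consistent with \eqref{goh_condition} of Theorem~\ref{abnormal_curvature}. Finally, $\mu\neq0$ because $p_z=0$ and $\mu=0$ would give $p=0$, contradicting the nontriviality in Definition~\ref{def_abnormal}.

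The step I expect to be the main obstacle is the frame bookkeeping in $\partial h/\partial z$: confirming that the contribution of $\nabla X_a$ drops out by the skewness of the connection forms in an orthonormal frame. The sign of the curvature term also needs checking, and it is fixed by the convention noted after \eqref{RHP}.
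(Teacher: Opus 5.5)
Your proof is correct and follows the paper's argument: horizontality gives $\widetilde\gamma=0$, hence $\xi=0$ and $\nabla^A_{\dot z}\mu=0$; stationarity gives $p_z=\dot z^\flat$ or $p_z\equiv0$; and the costate equation of \eqref{RHP} becomes Wong's equation once $\partial h/\partial z$ is shown to vanish. The only difference is in that last step. You differentiate with $u$ held fixed in the frame trivialization, which is exactly how the covariant partial derivative is defined in Section~\ref{rlpp}, and you remove $\sum_{a,b}u^au^b\omega_{ba}(v)$ by skewness of the orthonormal connection forms. The paper instead first substitutes the stationary control, writes $h=\tfrac12|p_z|^2_{g^*}$, and uses $\nabla g^{-1}=0$ in coordinates, which tacitly relies on $\partial h/\partial u=0$ to justify differentiating after the substitution.
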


\dem
Since the controlled field is horizontal, $\widetilde\gamma=0$ and
$\xi=\partial h/\partial\mu=0$, so the third equation of \eqref{RHP} gives
$\nabla^A_{\dot z}\mu=0$ in both cases. On $p_0=-1$ stationarity gives $u^a=h_a$,
so $p_z=\dot z^\flat$, and the second equation of \eqref{RHP}, whose curvature
term is $\langle\mu,F_A(\cdot,\dot z)\rangle=-C_{\dot z}\mu$, becomes
\eqref{normal_wong}. The last step deserves its computation, since it is what
produces Wong's equation. The auxiliary connection of Section~\ref{rlpp} is
here the Levi-Civita connection of $g$, so $Dp_z/Dt=\nabla_{\dot z}p_z$ by
definition; and on the constraint manifold $h=\tfrac12|p_z|^2_{g^*}$, whose
covariant partial derivative in $z$ vanishes: parallel-translating $p_z$,
\[
\Bigl(\frac{\partial h}{\partial z}\Bigr)_{\!k}
=\tfrac12\bigl(\partial_kg^{ij}\bigr)p_ip_j
+g^{ij}\Gamma^l_{ki}\,p_lp_j
=\bigl(-g^{lj}\Gamma^i_{kl}+g^{lj}\Gamma^i_{kl}\bigr)p_ip_j
=0 ,
\]
the first term having been rewritten with
$\partial_kg^{ij}=-\Gamma^i_{kl}g^{lj}-\Gamma^j_{kl}g^{il}$, which is
$\nabla g^{-1}=0$, and the second by relabeling. The second equation of
\eqref{RHP} therefore reads $\nabla_{\dot z}p_z=-C_{\dot z}\mu$, which with
$p_z=\dot z^\flat$ and $\nabla g=0$ is \eqref{normal_wong}. On $p_0=0$,
Proposition~\ref{abnormal_leaf} gives $p_z\equiv0$, and the same equation reduces
to $C_{\dot z}\mu=0$, which is Theorem~\ref{abnormal_curvature} evaluated on the
velocity.
\findem

Equation \eqref{normal_wong} is Wong's equation for a classical particle of
charge $\mu$ in the Yang--Mills field $F_A$; that reduced normal sub-Riemannian
geodesics on a principal bundle are magnetic geodesics on the base is classical
\cite{Mo02}. The point of writing the two leaves side by side is the following
statement, which is immediate from them.

\begin{theorem}
\label{strictness_criterion}
Let $z(\cdot)$ be the base projection of an abnormal extremal of
\eqref{horizontal_system}, parametrized by arclength. Then $z(\cdot)$ is
\emph{non}-strictly abnormal if and only if there exists a
$\nabla^A$-parallel section $\mu^{\mathrm n}$ of $\widetilde{\mathfrak g}^*$ along
$z(\cdot)$ with
\begin{equation}
\label{strictness_equation}
\nabla_{\dot z}\dot z
=
-\bigl(C_{\dot z}\mu^{\mathrm n}\bigr)^{\sharp} .
\end{equation}
The set of such $\mu^{\mathrm n}$ is then an affine space over the parallel
sections of $\ker C_{\dot z}$, which contains the abnormal multiplier
$\mu^{\mathrm{ab}}$.
\end{theorem}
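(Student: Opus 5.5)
The proof will read the criterion off Lemma~\ref{wong}, using Proposition~\ref{normal_affine} for the affine structure. Fix the arclength-parametrized base curve $z(\cdot)$. By Definition~\ref{strictly_abnormal}, $z(\cdot)$ is non-strictly abnormal exactly when it also admits a normal lift. By Theorem~\ref{reconstruction_theorem}, together with $\xi\equiv0$ from Lemma~\ref{wong}, a lift of the admissible reduced curve is determined by the data $(p_z,\mu)$ along $z$: the unreduced curve is the horizontal lift of $z$, and the control is fixed by $\dot z=\sum_a u^aX_a$. So we work entirely on $B$.

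The forward direction comes first. Suppose a normal lift $(p_z,\mu^{\mathrm n})$ exists. Then \eqref{normal_wong} gives $\nabla^A_{\dot z}\mu^{\mathrm n}=0$ and $\nabla_{\dot z}\dot z=-(C_{\dot z}\mu^{\mathrm n})^\sharp$. This is exactly \eqref{strictness_equation}. For the converse, let $\mu^{\mathrm n}$ be a parallel section satisfying \eqref{strictness_equation}. Set $p_z=\dot z^\flat$ and $u^a=\langle p_z,X_a\rangle$, which is the orthonormal-frame expression of $\dot z$. We check the equations of \eqref{RHP} on the leaf $p_0=-1$ in the order used in Lemma~\ref{wong}: the first equation holds by the choice of $u$, stationarity holds since $u=p_z$, the coadjoint equation is $\nabla^A_{\dot z}\mu^{\mathrm n}=0$, and $\xi=\partial h/\partial\mu=0$. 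The second equation reduces, by the covariant computation in the proof of Lemma~\ref{wong} and $\nabla g=0$, to $\nabla_{\dot z}\dot z^\flat=-C_{\dot z}\mu^{\mathrm n}$, which is \eqref{strictness_equation}. Hence $(\dot z^\flat,\mu^{\mathrm n})$ is a normal lift.

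The arclength hypothesis enters here. Along a normal extremal $|u|=|p_z|$ is constant, since $h=\tfrac12|p_z|^2$ is conserved; Proposition~\ref{martinet_conserved} is the Martinet instance of this. The given parametrization must therefore be compatible with a normal lift. Without the hypothesis the criterion would have to be stated up to reparametrization, as in Proposition~\ref{martinet_nonstrict}.

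For the affine statement we apply Proposition~\ref{normal_affine} through Lemma~\ref{common_adjoint}. Two normal multipliers $\mu^{\mathrm n},\mu'^{\mathrm n}$ along the same curve both satisfy \eqref{strictness_equation}, so their difference $\nu$ is parallel and satisfies $C_{\dot z}\nu=0$, since $\sharp$ is injective. Thus $\nu$ is a parallel section of $\ker C_{\dot z}$. Conversely, adding such a $\nu$ to $\mu^{\mathrm n}$ preserves both conditions by linearity. By \eqref{abnormal_wong} the abnormal multipliers are exactly the nonzero parallel sections of $\ker C_{\dot z}$, so $\mu^{\mathrm{ab}}$ lies in this space.

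One point needs care: $\ker C_{\dot z}$ may fail to have constant rank along $z$. The phrase "parallel sections of $\ker C_{\dot z}$" should therefore be read as the vector space $\{\nu\ \text{parallel}: C_{\dot z(t)}\nu(t)=0\ \forall t\}$, not as sections of a subbundle; with this reading no rank assumption is needed. The step I expect to require the most care is the converse verification, specifically confirming that the covariant partial derivative $\partial h/\partial z$ really vanishes on the constraint manifold when $u$ is held fixed in the trivialization, so that the second equation of \eqref{RHP} reduces exactly to \eqref{strictness_equation}. I would take this from the Christoffel cancellation already displayed in the proof of Lemma~\ref{wong}.
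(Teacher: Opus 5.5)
Your proposal is correct and follows essentially the paper's own argument. The equivalence is read off Lemma~\ref{wong}, whose normal-leaf system \eqref{normal_wong} forces $p_z=\dot z^\flat$ and reduces the existence of a normal lift to \eqref{strictness_equation} for a parallel $\mu^{\mathrm n}$; the affine structure is Proposition~\ref{normal_affine} combined with \eqref{abnormal_wong}. You spell out more than the paper does, namely the explicit check of \eqref{RHP} in the converse (including the vanishing of $\partial h/\partial z$ on the constraint manifold, taken from the computation in Lemma~\ref{wong}) and the pointwise reading of $\ker C_{\dot z}$ when its rank varies, but the route is the same.
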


\dem
By Lemma~\ref{wong} a normal lift of the same curve exists if and only if
\eqref{normal_wong} holds for some parallel $\mu^{\mathrm n}$, and its $p_z$
component is then forced to be $\dot z^\flat$, which is the same curve. The
affine structure is Proposition~\ref{normal_affine}, the abnormal lifts being
the parallel sections of $\ker C_{\dot z}$ by \eqref{abnormal_wong}.
\findem

Two consequences are worth isolating. The first is pointwise and costs nothing;
the second is the case in which it decides everything.

\begin{corollary}
\label{strictness_pointwise}
A necessary condition for non-strictness is
\begin{equation}
\label{first_order_test}
g\bigl(\nabla_{\dot z}\dot z,\,v\bigr)=0
\qquad\text{for every }v\in K_{\dot z} .
\end{equation}
Since $\dot z\in K_{\dot z}$ and $g(\nabla_{\dot z}\dot z,\dot z)=0$ for an
arclength parametrization, the obstruction \eqref{first_order_test} takes values
in $\bigl(K_{\dot z}/\langle\dot z\rangle\bigr)^*$.
\end{corollary}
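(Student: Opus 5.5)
The plan is to deduce the necessary condition directly from Theorem~\ref{strictness_criterion} by pairing \eqref{strictness_equation} against vectors in $K_{\dot z}$. Suppose $z(\cdot)$ is non-strictly abnormal. By Theorem~\ref{strictness_criterion} there is a $\nabla^A$-parallel $\mu^{\mathrm n}$ with $\nabla_{\dot z}\dot z=-(C_{\dot z}\mu^{\mathrm n})^\sharp$. Fix $t$ and $v\in K_{\dot z}$. Then
\[
g(\nabla_{\dot z}\dot z,v)=-\bigl\langle C_{\dot z}\mu^{\mathrm n},v\bigr\rangle=-\bigl\langle\mu^{\mathrm n},F_A(\dot z,v)\bigr\rangle=0 ,
\]
using the definition \eqref{curvature_pairing} and $F_A(\dot z,v)=0$. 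Equivalently, $\nabla_{\dot z}\dot z$ lies in $(\operatorname{im}C_{\dot z})^\sharp=(K_{\dot z}^\circ)^\sharp$, which is the $g$-orthogonal complement of $K_{\dot z}$. This gives \eqref{first_order_test}. The only point needing care is the identity $\operatorname{im}C_{\dot z}=K_{\dot z}^\circ$, but the argument only uses the trivial inclusion $\operatorname{im}C_{\dot z}\subseteq K_{\dot z}^\circ$, which holds by definition.

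For the second sentence, first note that $\dot z\in K_{\dot z}$ by antisymmetry of $F_A$. Next, the arclength parametrization gives $g(\dot z,\dot z)\equiv1$. Differentiating with $\nabla g=0$ yields $g(\nabla_{\dot z}\dot z,\dot z)=0$. Hence the linear functional $v\mapsto g(\nabla_{\dot z}\dot z,v)$ restricted to $K_{\dot z}$ vanishes on $\langle\dot z\rangle$, so it descends to an element of $(K_{\dot z}/\langle\dot z\rangle)^*$. This element is the obstruction.

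No step is a real obstacle, since everything is pointwise linear algebra once Theorem~\ref{strictness_criterion} is invoked. The only thing to keep straight is the sign convention in $C_{\dot z}$, and it is irrelevant here because the pairing vanishes on $K_{\dot z}$.
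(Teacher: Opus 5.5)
Your proposal is correct and is essentially the paper's argument: invoke Theorem~\ref{strictness_criterion} and observe that \eqref{strictness_equation} forces $(\nabla_{\dot z}\dot z)^\flat\in\operatorname{im}C_{\dot z}\subseteq K_{\dot z}^\circ$. The paper phrases this through the equality $\operatorname{im}C_{\dot z}=K_{\dot z}^\circ$, which yields the sharper statement that \eqref{strictness_equation} is pointwise solvable exactly when \eqref{first_order_test} holds; as you note, only the trivial inclusion is needed for necessity.
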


\dem
$\operatorname{im}C_{\dot z}=K_{\dot z}^\circ$, so \eqref{strictness_equation} is
solvable pointwise for $\mu^{\mathrm n}$ if and only if
$(\nabla_{\dot z}\dot z)^\flat$ annihilates $K_{\dot z}$.
\findem

\begin{corollary}
\label{strictness_rank_one}
Suppose $\dim\mathfrak g=1$. Then $C_{\dot z}\equiv0$ along an abnormal extremal,
and $z(\cdot)$ is non-strictly abnormal if and only if it is a geodesic of
$(B,g)$. In that case every parallel $\mu^{\mathrm n}$ is admissible, so the
normal lifts form a full line.
\end{corollary}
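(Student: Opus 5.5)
The plan is to specialize Theorem~\ref{strictness_criterion} to $\dim\mathfrak g=1$, where everything becomes scalar. First I will show that $C_{\dot z}\equiv0$ along an abnormal extremal. With $\mathfrak g$ one-dimensional, $\widetilde{\mathfrak g}^{\,*}$ is a line bundle, and the pairing is linear in $\mu$, so $C_{\dot z}\mu=\mu\,C_{\dot z}e$ for a local nonvanishing section $e$. By \eqref{abnormal_wong} we have $C_{\dot z}\mu^{\mathrm{ab}}=0$ with $\mu^{\mathrm{ab}}\neq0$. Because $\mu^{\mathrm{ab}}$ is $\nabla^A$-parallel, it is nonzero at every time, not merely somewhere: parallel transport is a linear isomorphism. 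Hence $C_{\dot z}$ vanishes on the whole fiber $\widetilde{\mathfrak g}^{\,*}_{z(t)}$ for every $t$, that is, $C_{\dot z}\equiv0$. This is the same argument as in Theorem~\ref{abnormal_curvature}(ii), but now applied only along the velocity and for any $\dim B$.

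Next I will substitute this into \eqref{strictness_equation}. Its right-hand side vanishes identically for every choice of $\mu^{\mathrm n}$. So the existence of a parallel $\mu^{\mathrm n}$ solving \eqref{strictness_equation} reduces to the condition $\nabla_{\dot z}\dot z=0$, the geodesic equation of $(B,g)$ for the Levi-Civita connection. A parallel section always exists, for instance the zero section or $\mu^{\mathrm{ab}}$ itself, so the existence requirement on $\mu^{\mathrm n}$ imposes nothing further. By Theorem~\ref{strictness_criterion} this gives the claimed equivalence. The arclength hypothesis carries over unchanged from that theorem.

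For the final assertion, I will use that once $z(\cdot)$ is a geodesic, every parallel section satisfies \eqref{strictness_equation}. The parallel sections of a line bundle along a curve form a one-dimensional vector space, determined by the initial value. So the admissible $\mu^{\mathrm n}$, and hence the normal lifts with $p_z=\dot z^\flat$ forced, form a full line. This is consistent with the affine description in Theorem~\ref{strictness_criterion}, since $\ker C_{\dot z}$ is now the whole fiber.

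The only point needing care is the pointwise nonvanishing of $\mu^{\mathrm{ab}}$. I will handle it through the parallelism in \eqref{abnormal_wong} rather than through the condition $\mu\neq0$ alone, which a priori only says that $\mu$ is not identically zero.
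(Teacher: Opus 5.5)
Your proposal is correct and follows the paper's proof essentially step for step: since $\widetilde{\mathfrak g}^{\,*}$ is a line bundle, $C_{\dot z}\mu^{\mathrm{ab}}=0$ forces $C_{\dot z}=0$, so \eqref{strictness_equation} becomes $\nabla_{\dot z}\dot z=0$ with no condition on $\mu^{\mathrm n}$, and the parallel sections of a line bundle along the curve form a line. Your use of parallelism to show that $\mu^{\mathrm{ab}}$ is nonzero at every time is a detail the paper's one-line argument uses without stating.
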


\dem
By \eqref{abnormal_wong} there is $\mu^{\mathrm{ab}}\neq0$ with
$C_{\dot z}\mu^{\mathrm{ab}}=0$; as $\widetilde{\mathfrak g}^*$ is a line bundle this
forces $C_{\dot z}=0$, whence $K_{\dot z}=T_zB$ and
\eqref{strictness_equation} reads $\nabla_{\dot z}\dot z=0$, with no condition on
$\mu^{\mathrm n}$; parallel sections of a line bundle along a curve exist and are
unique up to scale.
\findem

Corollary~\ref{strictness_rank_one} is not new in the abelian rank-one case: for
$B=\mathbb R^2$ with the flat metric and $G=\mathbb R$ it is essentially
Theorem~3.1 of \cite{Mo94}, which states that the horizontal lift of the zero
locus of a magnetic field fails to be a normal geodesic as soon as its planar
curvature is nonzero, and for Martinet-type distributions in $\mathbb R^3$ it is
Proposition~2.4 of \cite{PS25}, where the same conclusion is reached in a normal
form. What Theorem~\ref{strictness_criterion} adds is the general
principal-bundle statement, with $\mathfrak g$ of arbitrary dimension and possibly
nonabelian and with $\mu$ a $\nabla^A$-parallel section rather than a constant;
in that generality Corollary~\ref{strictness_rank_one} fails, as the
rank-two example below shows. A criterion of a comparable
kind for Engel structures, phrased through a structure function of the canonical
frame, is Theorem~4 of \cite{BM18}.

\begin{remark}
\label{why_curvature_blind}
The reason Theorem~\ref{abnormal_curvature} cannot decide strictness is now
visible. Abnormal extremals are determined by $\mathcal D=\ker A$, hence by $A$
and $F_A$, and are independent of the cost; strictness compares them with the
normal extremals, which depend on the cost through $\nabla$. Every object in
\eqref{first_order_test} and \eqref{strictness_equation} except
$\nabla_{\dot z}\dot z$ is metric-free, and $\nabla_{\dot z}\dot z$ carries the
entire dependence on the cost. Generic perturbation of the metric therefore
makes abnormal extremals strictly abnormal without moving them
\cite{CJT08}.
\end{remark}

\subsubsection*{The parallelism obstruction}

Condition \eqref{first_order_test} is only necessary: a pointwise solution
$\mu^{\mathrm n}(t)$ of \eqref{strictness_equation} need not be parallel. The
residual obstruction is again resolved by a constraint algorithm, this time run
in the fiber $\widetilde{\mathfrak g}^*$ rather than on $M$.

Trivialize $\widetilde{\mathfrak g}^*$ along $z(\cdot)$ by $\nabla^A$-parallel
transport and $T^*B$ by $\nabla$-parallel transport, and write $\Phi(t)$ for
$C_{\dot z(t)}$ and $b(t)$ for $-(\nabla_{\dot z}\dot z)^\flat$ in these
trivializations. Then \eqref{strictness_equation} together with parallelism reads
\begin{equation}
\label{strictness_linear}
\Phi(t)\,\mu=b(t)
\quad\text{for all }t,
\qquad
\mu\ \text{constant} .
\end{equation}
Differentiating repeatedly, set
\[
\mathcal W_k(t)
=
\bigl\{\mu:\ \Phi^{(j)}(t)\mu=b^{(j)}(t),\ j=0,\dots,k\bigr\},
\]
a decreasing sequence of affine subspaces of $\mathfrak g^*$.

\begin{proposition}
\label{strictness_algorithm}
For each fixed $t$ the decreasing sequence $\mathcal W_k(t)$ of affine
subspaces is eventually constant. If the data are real-analytic, then
$z(\cdot)$ is non-strictly abnormal if and only if
$\bigcap_{k\geq0}\mathcal W_k(t_0)\neq\emptyset$ for one, equivalently every,
$t_0$; the set of normal multipliers is then that intersection, which is
independent of $t_0$. Moreover, there is a discrete exceptional set
$\Sigma\subset[0,T]$ such that for $t_0\notin\Sigma$ the intersection is
already attained at order $N=\dim\mathfrak g$,
\[
\bigcap_{k\geq0}\mathcal W_k(t_0)=\mathcal W_{N}(t_0),
\]
so that, writing
$\Phi_{[N]}=(\Phi;\Phi';\dots;\Phi^{(N)})$ and
$b_{[N]}=(b;b';\dots;b^{(N)})$, non-strictness holds if and
only if
\begin{equation}
\label{rank_test}
\operatorname{rank}\Phi_{[N]}(t_0)
=
\operatorname{rank}\bigl(\Phi_{[N]}(t_0)\mid b_{[N]}(t_0)\bigr),
\qquad
t_0\notin\Sigma .
\end{equation}
\end{proposition}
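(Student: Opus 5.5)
The plan is to translate non-strictness, through Theorem~\ref{strictness_criterion}, into the existence of a constant vector $\mu$ solving \eqref{strictness_linear}. After that the proposition is linear algebra for an analytic matrix-valued function of $t$. The first step is to justify the trivializations. Parallel transport for $\nabla^A$ on $\widetilde{\mathfrak g}^*$ and for $\nabla$ on $T^*B$ along $z(\cdot)$ gives frames in which $\nabla^A$-parallel sections are exactly the constant vectors $\mu\in\mathfrak g^*$. Then \eqref{strictness_equation} for a parallel $\mu^{\mathrm n}$ is exactly \eqref{strictness_linear}. When the data are real-analytic, the transport ODEs have analytic coefficients, so $\Phi(t)$ and $b(t)$ are analytic on $[0,T]$. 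By Theorem~\ref{strictness_criterion}, the set of normal multipliers is
\[
\mathcal N=\{\mu\in\mathfrak g^*:\ \Phi(t)\mu=b(t)\ \text{for all }t\}.
\]

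The pointwise stabilization needs no analyticity. The $\mathcal W_k(t)$ form a decreasing chain of affine subspaces of the finite-dimensional space $\mathfrak g^*$. Along such a chain the dimension is nonincreasing, and two nested affine subspaces of equal dimension coincide. The empty set is terminal. Hence the chain is eventually constant, and it has at most $N+2$ distinct terms.

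The characterization in the analytic case comes next. If $\mu\in\mathcal N$, differentiating $\Phi\mu=b$ shows $\mu\in\mathcal W_k(t_0)$ for every $k$ and every $t_0$. Conversely, if $\mu\in\bigcap_k\mathcal W_k(t_0)$, then $\Phi(t)\mu-b(t)$ is an analytic function on the interval whose Taylor series at $t_0$ vanishes, so it vanishes identically. Thus $\bigcap_k\mathcal W_k(t_0)=\mathcal N$ for every $t_0$. This gives both the independence of $t_0$ and the equivalence ``for one, equivalently every, $t_0$''.

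The main obstacle is the bound $N=\dim\mathfrak g$ off a discrete set. I would argue with row spaces rather than kernels. Let $R_k(t)\subset\mathfrak g$ be the span of the rows of $\Phi^{(j)}(t)$ for $j\le k$. Let $\widehat R_k(t)$ be the analogous span for the augmented rows $(\Phi^{(j)}\mid b^{(j)})$. For each $k$ the rank of $\Phi_{[k]}(t)$, and of its augmented version, equals its maximal value except on the zero set of a fixed nonvanishing analytic minor. That zero set is discrete, hence finite in $[0,T]$. Let $\Sigma$ be the union of these sets for $k=0,\dots,N+1$.

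The key lemma is that stabilization propagates. Suppose that on a subinterval avoiding $\Sigma$ the generic ranks at orders $k$ and $k+1$ agree. Cramer's rule on the nonvanishing minor then writes each row of $(\Phi^{(k+1)}\mid b^{(k+1)})$ as $\sum_{j\le k}c_j(t)\,(\Phi^{(j)}\mid b^{(j)})$, with analytic coefficients $c_j$. Differentiating this identity puts the rows of order $k+2$ in the span of the rows of order at most $k+1$, which equals $\widehat R_k$. The same holds for all higher orders by induction.

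The generic rank of $\Phi_{[k]}$ is nondecreasing in $k$ and bounded by $N$. It therefore increases strictly until it stabilizes, and by the lemma it is constant from order $N$ on. The same holds for the augmented matrix, which has one extra column. So for $t_0\notin\Sigma$ the augmented row space, and hence $\mathcal W_k(t_0)$, is constant for $k\ge N$. This gives $\bigcap_k\mathcal W_k(t_0)=\mathcal W_N(t_0)$. Finally, by Rouché--Capelli, $\mathcal W_N(t_0)\neq\emptyset$ exactly when the rank equality \eqref{rank_test} holds.

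The delicate points are these. The lemma needs the augmented ranks, not just those of $\Phi_{[k]}$, to be locally constant near $t_0$. One also has to check that the inclusion of orders up to $N+1$ in the definition of $\Sigma$ suffices for the stabilization to occur by order $N$.
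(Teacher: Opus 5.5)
Your argument follows the paper's: pointwise stabilization by a dimension count, the Taylor-series argument for analytic data, and then a flag of augmented row spaces that stays constant once it stalls, with $\Sigma$ made of zero sets of minors. Your local Cramer coefficients are the paper's meromorphic coefficients, whose poles lie in those zero sets.

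The one step that does not go through as written is ``the same holds for the augmented matrix, which has one extra column.'' With $N+1$ columns the generic augmented rank is bounded by $N+1$, so your count only gives constancy from order $N+1$, not $N$. This is exactly your second flagged point, and it closes in one line. Since $\mu^{\mathrm{ab}}$ is $\nabla^A$-parallel, it is a constant vector in your trivialization with $\Phi(t)\mu^{\mathrm{ab}}\equiv0$. Hence $(\mu^{\mathrm{ab}},0)$ lies in the kernel of every $(\Phi_{[k]}\mid b_{[k]})$, and the generic augmented ranks lie in $\{0,\dots,N\}$. A chain that rises strictly and is then constant can therefore rise at most $N$ times, so it is constant from order $N$. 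Alternatively, if the rows of $(\Phi\mid b)$ vanish identically then so do all their derivatives; otherwise the chain starts at rank $\geq1$.

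Your other flagged point, local constancy of the augmented ranks near $t_0$, is already handled because you put the augmented minors into $\Sigma$.
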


\dem
For fixed $t$, each strict inclusion
$\mathcal W_{k+1}(t)\subsetneq\mathcal W_k(t)$ lowers the dimension by at
least one or produces the empty set, so at most $\dim\mathfrak g+1$ of the
inclusions are strict and the sequence is eventually constant. The index at
which it stabilizes is, however, not bounded by $\dim\mathfrak g$ at every
point: with $\dim\mathfrak g=1$, $\Phi\equiv0$ and $b(t)=(t-t_0)^2$ one has
$\mathcal W_0(t_0)=\mathcal W_1(t_0)=\mathfrak g^*$ while
$\mathcal W_2(t_0)=\emptyset$. This is the reason for the exceptional set
$\Sigma$.

If $\mu\in\mathcal W_k(t_0)$ for every $k$ then all derivatives of
$t\mapsto\Phi(t)\mu-b(t)$ vanish at $t_0$, so by analyticity the function
vanishes identically and $\mu$ is a normal multiplier; the converse is
differentiation. Hence
$\bigcap_k\mathcal W_k(t_0)=\{\mu:\Phi\mu\equiv b\}$, which proves the second
assertion and the independence of $t_0$.

For the last part, let $\mathcal N_k$ denote the span, over the field of
functions meromorphic on $[0,T]$, of the rows of the augmented matrices
$(\Phi^{(j)}\mid b^{(j)})$ with $j\leq k$. If $\mathcal N_{k+1}=\mathcal N_k$,
then every row of order $k+1$ is a meromorphic combination of rows of order at
most $k$; differentiating the combination expresses the rows of order $k+2$
through rows of order at most $k+1$, so the flag is constant from $k$ on.
Since $\dim\mathcal N_k\leq\dim\mathfrak g+1$ and each strict step increases
the dimension by at least one, $\mathcal N_k=\mathcal N_N$ for every
$k\geq N=\dim\mathfrak g$; iterating the relations, every row of
$(\Phi^{(k)}\mid b^{(k)})$ with $k>N$ is a meromorphic combination of the rows
of $(\Phi_{[N]}\mid b_{[N]})$, with coefficients whose poles form a discrete
set. Let $\Sigma$ consist of those poles together with the (discrete) zero
sets of the finitely many minors realizing the generic ranks of $\Phi_{[N]}$
and of $(\Phi_{[N]}\mid b_{[N]})$. For $t_0\notin\Sigma$ the equations
defining $\mathcal W_k(t_0)$ with $k>N$ are consequences, by evaluation at
$t_0$, of those defining $\mathcal W_N(t_0)$, so
$\mathcal W_k(t_0)=\mathcal W_N(t_0)$ for every $k\geq N$; and
\eqref{rank_test} is the consistency criterion for the linear system
$\Phi_{[N]}\mu=b_{[N]}$.
\findem

The primary constraint of this algorithm, that $\mathcal W_0(t)$ be nonempty,
is exactly \eqref{first_order_test}; the later stages are the
parallelism. Note that $\Phi_{[N]}(t_0)\mu^{\mathrm{ab}}=0$, so
$\Phi_{[N]}$ is never injective on an abnormal, in agreement with
Theorem~\ref{strictness_criterion}. When $\dim\mathfrak g=1$ one has
$\Phi\equiv0$ along an abnormal extremal, so the test reduces to the
identical vanishing of $b$, that is, to the geodesic condition of
Corollary~\ref{strictness_rank_one}; generically in $t_0$ the algorithm halts
at $k=0$.

\subsubsection*{The three rank-one examples}

For the Heisenberg problem of Section~\ref{heisenberg} there is nothing to test:
$F_A$ is nowhere zero, so by Corollary~\ref{heisenberg_no_abnormal} there are no
abnormal extremals.

For the flat Martinet problem of Section~\ref{martinet}, $B=\mathbb R^2$ carries
the Euclidean metric and the abnormal extremal projects to the line $\{x=0\}$,
which is a geodesic. Corollary~\ref{strictness_rank_one} therefore says it is
non-strictly abnormal, with a line of normal lifts parametrized by $\mu$. This is
Proposition~\ref{martinet_nonstrict}, obtained there from the explicit pendulum
and here from the metric alone.

The deformation of Remark~\ref{abck_deformation} is the interesting case, and the
criterion says exactly what the deformation does. The reduced metric is read off
from the declared orthonormal frame: the projections of
$(1+\varepsilon y)\partial_x+\tfrac12y^2\partial_z$ and $\partial_y$ are
$(1+\varepsilon y)\partial_x$ and $\partial_y$, so
\begin{equation}
\label{abck_metric}
g_\varepsilon
=
\frac{dx^2}{(1+\varepsilon y)^2}+dy^2 ,
\qquad
\Gamma^{y}_{xx}
=
\frac{\varepsilon}{(1+\varepsilon y)^3} ,
\end{equation}
both on the domain \eqref{abck_domain}, off which they are singular.
The abnormal locus is $\{y=0\}$ for every $\varepsilon$, and the abnormal curve
is the $x$-axis, which is arclength-parametrized by $\dot x=1$ there. Hence
\begin{equation}
\label{abck_acceleration}
\bigl(\nabla_{\dot z}\dot z\bigr)^y\Big|_{y=0}
=
\Gamma^y_{xx}\big|_{y=0}
=
\varepsilon ,
\end{equation}
and Corollary~\ref{strictness_rank_one} gives:

\begin{proposition}
\label{abck_strictness}
The abnormal extremal of the deformed Martinet problem of
Remark~\ref{abck_deformation} is non-strictly abnormal if and only if
$\varepsilon=0$.
\end{proposition}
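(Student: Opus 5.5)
The plan is to apply Corollary~\ref{strictness_rank_one}, which in the rank-one case $\dim\mathfrak g=1$ says that an abnormal curve is non-strictly abnormal if and only if it is a geodesic of the reduced metric $(B,g_\varepsilon)$. Its hypotheses must first be checked for the deformed Martinet structure of Remark~\ref{abck_deformation} on the domain \eqref{abck_domain}.

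\emph{The setting.} The group $G=\mathbb R$ acts by translation in $z$. The connection $A_\varepsilon$ annihilates $D_\varepsilon$, since $A_\varepsilon\bigl((1+\varepsilon y)\partial_x+\tfrac12y^2\partial_z\bigr)=\tfrac12y^2-\tfrac12y^2=0$ and $A_\varepsilon(\partial_y)=0$. So $D_\varepsilon=\ker A_\varepsilon$ is the horizontal distribution. The declared orthonormal frame consists of the horizontal lifts of the $g_\varepsilon$-orthonormal frame $\{(1+\varepsilon y)\partial_x,\partial_y\}$ of $TB$, so the problem has the form \eqref{horizontal_system} with the cost required by Lemma~\ref{wong}.

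\emph{The abnormal curve.} By Theorem~\ref{abnormal_curvature}(ii) and the computation of $F_{A_\varepsilon}$, the immersed abnormal extremals project into $\{y=0\}$. Along that line, immersed and arclength-parametrized, the base curve is $z(t)=(x_0+t,0)$; note that $g_\varepsilon=dx^2+dy^2$ on $y=0$. One should also check that this curve really is abnormal. Take $p_z\equiv0$ and $\mu$ a nonzero constant, which is parallel because $G$ is abelian and the coadjoint bundle is trivial. Then $C_{\dot z}\mu=\mu F_{A_\varepsilon}(\dot z,\cdot)=0$, since $F_{A_\varepsilon}$ vanishes at $y=0$. So \eqref{abnormal_wong} holds.

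\emph{The geodesic test.} Compute $\nabla_{\dot z}\dot z$ for the Levi-Civita connection of $g_\varepsilon$ along $\dot z=\partial_x$ at $y=0$.

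The $x$-component is $\Gamma^x_{xx}$, which vanishes because $g_{xx}$ depends only on $y$.

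The $y$-component is $\Gamma^y_{xx}=-\tfrac12\partial_y g_{xx}=-\tfrac12\partial_y(1+\varepsilon y)^{-2}=\varepsilon(1+\varepsilon y)^{-3}$. This is \eqref{abck_metric}, and it gives \eqref{abck_acceleration}: $(\nabla_{\dot z}\dot z)^y=\varepsilon$.

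Hence $z(\cdot)$ is a $g_\varepsilon$-geodesic if and only if $\varepsilon=0$. By Corollary~\ref{strictness_rank_one} this is exactly non-strictness. For $\varepsilon=0$ the result also agrees with Proposition~\ref{martinet_nonstrict}.

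\emph{Caveat.} The one step needing care is the rank-one corollary itself. It is applied with a variable coadjoint section, and its proof uses only that $C_{\dot z}\equiv0$ along the curve, which holds here for every $\varepsilon$. The remaining steps are the Christoffel symbol computation above, and the observation that the whole argument stays inside \eqref{abck_domain}, which contains $\{y=0\}$.
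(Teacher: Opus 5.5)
Your proposal is correct and is the paper's own argument. You reduce via Corollary~\ref{strictness_rank_one} to the geodesic condition for $g_\varepsilon$ along the arclength-parametrized $x$-axis, and read off $(\nabla_{\dot z}\dot z)^y=\Gamma^y_{xx}|_{y=0}=\varepsilon$. Your closing caveat about a ``variable coadjoint section'' is unnecessary: for the abelian group $\mathbb R$ the coadjoint bundle is trivial, so its parallel sections here are just constants.
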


This is the content of \cite[Thm.~1.3]{ABCK97} in its first assertion, recovered
from \eqref{abck_metric} by one Christoffel symbol. The reading it supplies is
that the deformation does not act on the abnormal extremals at all: on the domain
\eqref{abck_domain} the curvature still vanishes exactly on $\{y=0\}$ and the
abnormal curves are still the lines $y=0$, $z=z_0$. What the deformation destroys is the property of those lines of
being geodesics of the reduced metric, and with it the normal lift.

\subsubsection*{A higher-rank phenomenon}

When $\dim\mathfrak g\geq2$ the first-order test \eqref{first_order_test} may be
vacuous while strictness still fails, so that Corollary~\ref{strictness_rank_one}
has no analogue and the algorithm of Proposition~\ref{strictness_algorithm} is
needed. The following example is the smallest we know.

Let $B=\mathbb R^2$ with the Euclidean metric, $G=\mathbb R^2$, $P=B\times G$ with
coordinates $(x,y,z_1,z_2)$ and
\begin{equation}
\label{rank_two_connection}
A
=
\Bigl(
dz_1+a_1(x,y)\,dy,\ \
dz_2+\bigl(2x+\tfrac12x^2\bigr)dy
\Bigr),
\qquad
F_A
=
\bigl((x^2+y^2-1)^2,\ 2+x\bigr)\,dx\wedge dy ,
\end{equation}
where
\[
a_1(x,y)=\tfrac15x^5+\tfrac23x^3(y^2-1)+x(y^2-1)^2,
\qquad
\frac{\partial a_1}{\partial x}=(x^2+y^2-1)^2 ,
\]
so that $F_A=dA$, the group being abelian.
Consider the unit circle $z(s)=(\cos s,\sin s)$, of geodesic curvature $1$.
Since the first component of $F_A$ vanishes on it, $\mu^{\mathrm{ab}}=(1,0)$ is
a constant multiplier with $C_{\dot z}\mu^{\mathrm{ab}}=0$, so the circle is
abnormal.

The second component $2+\cos s$ is nowhere zero on the circle, so $C_{\dot z}\neq0$
at every point, hence
$K_{\dot z}=\langle\dot z\rangle$ everywhere and \eqref{first_order_test} holds
automatically: the first-order test sees nothing. The full criterion does. Since $(dx\wedge dy)(\dot z,v)=-\langle z,v\rangle$ and
$\nabla_{\dot z}\dot z=-z$, in the $\nabla$-parallel trivialization of
\eqref{strictness_linear} one has
\[
\Phi(s)\mu=-\mu_2\,(2+\cos s)\,z^\flat(s),
\qquad
b(s)=z^\flat(s),
\]
so that the residual is
\begin{equation}
\label{rank_two_residual}
\Phi(s)\mu-b(s)=-r(s)\,z^\flat(s),
\qquad
r(s)=1+\mu_2\,(2+\cos s).
\end{equation}
The covector $z^\flat(s)$ is nowhere zero, so \eqref{strictness_linear} reads
\[
\mu_2\,(2+\cos s)=-1
\qquad\text{for all }s,
\]
which has no constant solution: the circle is strictly abnormal.

Because $z^\flat$ never vanishes, \eqref{rank_two_residual} also reduces the
whole algorithm of Proposition~\ref{strictness_algorithm} to the scalar function
$r$: if $r$ and its first $k-1$ derivatives vanish at $s$, then the $k$-th
derivative of the residual is $-r^{(k)}(s)z^\flat(s)$, so
$\mathcal W_k(s)=\{\mu:\ r^{(j)}(s)=0,\ j\leq k\}$. Now
$\mathcal W_0(s)=\{\mu_2=-1/(2+\cos s)\}$ is nonempty for every $s$, and on it
$\mu_2\neq0$, so $r'(s)=-\mu_2\sin s$ vanishes only at $s\in\{0,\pi\}$:
$\mathcal W_1(s)=\emptyset$ off those two points. At them
$\mathcal W_1=\mathcal W_0$, while $r''(s)=-\mu_2\cos s\neq0$ there, so
$\mathcal W_2=\emptyset$. Thus $\bigcap_k\mathcal W_k=\mathcal W_2=\emptyset$ at
every $s$, in accordance with \eqref{rank_test} with $N=\dim\mathfrak g=2$ and
with no exceptional set needed; and the obstruction lies beyond the first stage
of the algorithm, since $\mathcal W_0$ is nowhere empty.

Replacing the second component of $F_A$ in \eqref{rank_two_connection} by the
constant $2$, that is, deleting the term $\tfrac12x^2$ from the second component
of $A$, changes the conclusion: then $\mu_2=-\tfrac12$ solves
\eqref{strictness_linear} and the same circle is non-strictly abnormal, with
$\mu_1$ free. The two cases have the same abnormal extremal and the same
first-order data, and are distinguished only by \eqref{rank_test}.

\begin{remark}
\label{rank_two_design}
Both components of \eqref{rank_two_connection} have been chosen to vanish to a
prescribed order on the circle, and it is worth saying which property each one
carries. The first is what makes the circle abnormal, and its \emph{double}
zero on $\{x^2+y^2=1\}$ makes $\nabla\langle\mu^{\mathrm{ab}},F_A\rangle$ vanish
there as well; this is used in Remark~\ref{rank_two_legendre} below, and a simple
zero would not do. The second is bounded away from zero, which is what makes the
first-order test \eqref{first_order_test} vacuous at every point, and
nonconstant, which is what makes the parallelism obstruction bite.
\end{remark}

\begin{remark}
\label{strictness_general_D}
We have assumed $\mathcal D=\ker A$, so that the reduced problem is Riemannian on
$B$, and this hypothesis is used in an essential way rather than for
convenience. It enters through Proposition~\ref{abnormal_leaf}, whose conclusion
$p_z\equiv0$ on the abnormal leaf follows from $p\in\mathcal D^\circ$ only
because $\mathcal D$ is the full horizontal space. If $\mathcal D$ is a proper
subbundle of $\ker A$, then $p\in\mathcal D^\circ$ gives merely
$p_z\in(\rho_*\mathcal D)^\circ$, so the abnormal base costate need not vanish;
\eqref{abnormal_wong} is then not the pair of equations
$C_{\dot z}\mu=0$, $\nabla^A_{\dot z}\mu=0$ alone, and the identification
$p_z=\dot z^\flat$ on the normal leaf, on which the proof of
Theorem~\ref{strictness_criterion} rests, likewise fails: only the restriction of
$p_z$ to $\rho_*\mathcal D$ is determined. Strictness then becomes a coupled
system in $(p_z,\mu)$ rather than a single acceleration equation for $\mu$, and
neither Theorem~\ref{strictness_criterion} nor
Corollary~\ref{strictness_rank_one} extends as stated. One expects the
replacement of \eqref{strictness_equation} to say that the projected curve is a
magnetic \emph{sub}-Riemannian geodesic for a parallel charge, with
$\nabla_{\dot z}\dot z$ replaced by the normal sub-Riemannian geodesic operator
of the reduced structure on $B$; but that is a conjecture and not a corollary of
what is proved here, and it would in any case require the full reduced adjoint
system together with its own normal and abnormal branches. See
Remark~\ref{D_equals_kerA_essential} for an explicit case in which the
hypothesis fails.
\end{remark}


\subsection{The reduced index form, and a trichotomy}
\label{index_form}

Section~\ref{strictness} decided whether an abnormal extremal carries a normal
lift. It did not decide whether it minimizes, and Proposition
\ref{criticality_optimality} disclaimed any second-order statement. We now supply
one, in the same reduced variables, and the outcome is a clean division of
labor: the second-order condition turns out to be \emph{metric-free}, so that it
is independent of the criterion of Section~\ref{strictness} rather than a refinement
of it.

Throughout, the hypotheses are those of Theorem~\ref{strictness_criterion}: the
horizontal system \eqref{horizontal_system} with $\mathcal D=\ker A$ carrying the
pullback of a Riemannian metric $g$ on $B$, and $z(\cdot)$ the base projection of
an abnormal extremal with $\nabla^A$-parallel multiplier $\mu\neq0$, so that
$C_{\dot z}\mu=0$ by \eqref{abnormal_wong}.

\subsubsection*{The endpoint map, reduced}

\begin{lemma}
\label{reduced_endpoint}
Fix $x_0\in P$ over $z_0\in B$ and let $E\colon u(\cdot)\mapsto x(T)$ be the
endpoint map of \eqref{horizontal_system}. Then $E$ factors through the base
curve: $x(\cdot)$ is the horizontal lift of $z(\cdot)$ through $x_0$, and
\[
E(u)=\bigl(z(T),\ \operatorname{hol}_A(z|_{[0,T]})\bigr)
\]
in a local trivialization. If $G$ is abelian, with $A=\theta-a$ for
$a\in\Omega^1(B;\mathfrak g)$ locally and $F_A=-da$, the fiber component is
\begin{equation}
\label{fibre_displacement}
J[z]=\int_0^Ta(\dot z)\,dt .
\end{equation}
An abnormal multiplier is a $\mu\in\mathfrak g^*\setminus\{0\}$ annihilating the
image of $dE$ on variations fixing $z(0)$ and $z(T)$, and
$\delta(\mu\cdot J)[v]=\int_0^T\langle\mu,F_A(\dot z,v)\rangle\,dt$, so
criticality is \eqref{abnormal_wong}.
\end{lemma}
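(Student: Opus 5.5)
The argument has four parts: the factorization through the horizontal lift, the abelian holonomy formula, the first variation of the holonomy, and the identification of criticality with \eqref{abnormal_wong}.

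\emph{Factorization.} For \eqref{horizontal_system} the controlled field is horizontal, so $A(\dot x)=0$ along every admissible curve. The state trajectory is therefore the horizontal lift of its projection $z(\cdot)$ through $x_0$, which is Definition~\ref{def_horizontal_lift} with $\xi\equiv0$. By Theorem~\ref{reconstruction_theorem} with $g\equiv e$, this lift is uniquely determined by $z$, and $z$ solves $\dot z=\sum_au^aX_a(z)$. Hence $E$ depends on $u$ only through $z$. In a local trivialization $P|_U\cong U\times G$ the fiber component of $x(T)$ is the parallel transport of $x_0$ along $z$, namely $\operatorname{hol}_A(z|_{[0,T]})$.

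\emph{Abelian case.} Write $x=(z,w)$ with $A=dw-a$, where $\theta=dw$ is the flat Maurer--Cartan form for $G=\mathbb R^k$ or a torus. Horizontality reads $\dot w=a(\dot z)$, which integrates to \eqref{fibre_displacement}. Since $[A,A]=0$, we get $F_A=dA=-da$, and the sign matches the Heisenberg and Martinet computations.

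\emph{First variation.} Take a variation $z_\varepsilon$ with fixed endpoints and $v=\partial_\varepsilon z_\varepsilon|_0$. Differentiating $\int a(\dot z_\varepsilon)\,dt$ by the homotopy (Cartan) formula gives
$\delta J[v]=\int_0^T da(v,\dot z)\,dt+\bigl[a(v)\bigr]_0^T$.
The boundary term vanishes because $v(0)=v(T)=0$. Thus $\delta J[v]=\int\langle F_A(\dot z,v)\rangle dt$, and pairing with $\mu$ gives the stated formula. In the trivialization the image of $dE$ on such variations consists of the vectors $(0,\delta J[v])$ plus base directions; restricting to variations that fix $z(0)$ and $z(T)$ isolates the fiber part. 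So $\mu$ annihilates the image exactly when $\int_0^T\langle\mu,F_A(\dot z,v)\rangle dt=0$ for every $v$ vanishing at the endpoints.

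\emph{Criticality.} Since $v$ is otherwise arbitrary, the fundamental lemma applied to $t\mapsto\langle\mu,F_A(\dot z,\cdot)\rangle\in T^*_{z(t)}B$ gives $C_{\dot z}\mu=0$ pointwise. A constant $\mu\in\mathfrak g^*$ is $\nabla^A$-parallel in the abelian case because the induced connection on the trivial bundle $\widetilde{\mathfrak g}^*$ is trivial. Together with $p_z\equiv0$ from Proposition~\ref{abnormal_leaf}, this is \eqref{abnormal_wong}.

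\emph{Main obstacle.} The delicate point is the sign and boundary bookkeeping in the homotopy formula. I would check it against Lemma~\ref{bracket_curvature} and against the Heisenberg example, where $a=\tfrac12(x\,dy-y\,dx)$. For nonabelian $G$ I would only assert the factorization, since the holonomy then has no additive formula.
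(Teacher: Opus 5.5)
Your proposal is correct and follows essentially the same route as the paper: horizontality of the controlled field identifies $x(\cdot)$ with the horizontal lift, the abelian holonomy integrates to $\int_0^T a(\dot z)\,dt$, and the first variation $\int_0^T (da)(v,\dot z)\,dt=\int_0^T\langle\mu,F_A(\dot z,v)\rangle\,dt$ follows after pairing with $\mu$, the boundary term being killed by $v(0)=v(T)=0$. The homotopy formula you use is the coordinate-free form of the paper's integration by parts of $a_i(z)\dot z^i$, and your explicit appeal to the fundamental lemma and to the triviality of $\nabla^A$ for abelian $G$ spells out steps the paper leaves implicit.
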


\dem
The controlled field is horizontal, so $A(\dot x)=0$ and $x$ is the horizontal
lift; the fiber displacement of a horizontal lift is the holonomy, which in the
abelian case is \eqref{fibre_displacement}. For the first variation, with
$J=\int a_i(z)\dot z^i\,dt$ and $v$ vanishing at the endpoints, an integration by
parts gives $\delta J=\int(\partial_ja_i-\partial_ia_j)v^j\dot z^i\,dt
=\int(da)(v,\dot z)\,dt=\int\langle\mu,F_A(\dot z,v)\rangle\,dt$ after pairing
with $\mu$.
\findem

\subsubsection*{The index form}

\begin{theorem}
\label{reduced_index_form}
Let $G$ be \emph{abelian} and let $z(\cdot)$ be an abnormal extremal with
multiplier $\mu$. The Hessian of
$\mu\cdot E$ on variations $v$ of the base curve vanishing at the endpoints is
\begin{equation}
\label{index_form_eq}
Q_\mu(v)
=
\int_0^T
\Bigl\langle
\mu,\ F_A\bigl(\nabla_{\dot z}v,\,v\bigr)
+
\bigl(\nabla_vF_A\bigr)(\dot z,v)
\Bigr\rangle\,dt .
\end{equation}
Here $\nabla_{\dot z}v$ is the covariant derivative of the variation field
along $z(\cdot)$ with respect to the Levi-Civita connection $\nabla$ of $g$,
and $\nabla F_A$ is the covariant derivative of the section
$F_A\in\Gamma(\Lambda^2T^*B\otimes\widetilde{\mathfrak g})$ with respect to
the connection induced on $\Lambda^2T^*B\otimes\widetilde{\mathfrak g}$ by
$\nabla$ on the $T^*B$-factors and by $\nabla^A$ on the
$\widetilde{\mathfrak g}$-factor, that is,
\[
\bigl(\nabla_vF_A\bigr)(w_1,w_2)
=
\nabla^A_v\bigl(F_A(W_1,W_2)\bigr)
-F_A\bigl(\nabla_vW_1,w_2\bigr)
-F_A\bigl(w_1,\nabla_vW_2\bigr)
\]
for vector fields $W_1,W_2$ extending $w_1,w_2$. The right-hand side of
\eqref{index_form_eq} does not change if $\nabla$ is replaced by any other
torsion-free connection on $B$; in particular $Q_\mu$ does \emph{not} depend on
the metric $g$, but only on $A$ and $F_A$.
\end{theorem}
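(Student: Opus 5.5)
In the abelian case the holonomy becomes a line integral, and I would prove \eqref{index_form_eq} by differentiating that integral twice along a well-chosen variation. Since $G$ is abelian, $\operatorname{ad}^*$ vanishes and $\nabla^A$ is flat in the trivialization of Lemma~\ref{reduced_endpoint}. The condition $\nabla^A_{\dot z}\mu=0$ from \eqref{abnormal_wong} then says that $\mu$ is a constant element of $\mathfrak g^*$ in that trivialization. So, by Lemma~\ref{reduced_endpoint}, the fiber part of $\mu\cdot E$ is the scalar functional
\[
\mu\cdot J[z]=\int_0^T\alpha(\dot z)\,dt,\qquad \alpha=\langle\mu,a\rangle,\qquad d\alpha=-\beta,\qquad \beta:=\langle\mu,F_A\rangle .
\]
Here $\beta$ is an ordinary closed two-form on $B$. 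For a constant $\mu$, $\nabla_v\beta=\langle\mu,\nabla_vF_A\rangle$ with $\nabla F_A$ as in the statement, so $\nabla F_A$ may be replaced by $\nabla\beta$ throughout. The base endpoints are fixed, so the $z(T)$ component of $E$ contributes nothing on the variations considered. The Hessian is therefore the second derivative of this functional along families $z_\varepsilon$ with $z_0=z$ and fixed endpoints.

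The first step is the first-variation formula of Lemma~\ref{reduced_endpoint}, now along the whole family. For every $\varepsilon$,
\[
\frac{d}{d\varepsilon}\,\mu\cdot J[z_\varepsilon]=\int_0^T\beta\bigl(\dot z_\varepsilon,\partial_\varepsilon z_\varepsilon\bigr)\,dt,
\]
and the boundary terms vanish. Differentiating once more at $\varepsilon=0$, and writing $v=\partial_\varepsilon z_\varepsilon|_{0}$ and $D_\varepsilon$ for $\nabla$-covariant differentiation, the Leibniz rule gives three terms:
\[
\frac{d}{d\varepsilon}\beta(\dot z_\varepsilon,\partial_\varepsilon z_\varepsilon)\Big|_{0}=(\nabla_v\beta)(\dot z,v)+\beta\bigl(D_\varepsilon\dot z,v\bigr)+\beta\bigl(\dot z,D_\varepsilon\partial_\varepsilon z\bigr).
\]
Since $\nabla$ is torsion-free, $D_\varepsilon\dot z=\nabla_{\dot z}v$. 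The last term is $\langle C_{\dot z}\mu,D_\varepsilon\partial_\varepsilon z\rangle$, which vanishes by the criticality condition $C_{\dot z}\mu=0$ of \eqref{abnormal_wong}. This is the standard fact that at a critical point the Hessian does not depend on the second-order part of the variation. Integrating what remains gives exactly \eqref{index_form_eq}. I would record explicitly that criticality is used here; without it the formula would depend on how $v$ is extended to the family.

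Next comes the independence of $\nabla$. Replace $\nabla$ by $\nabla+S$ with $S$ a symmetric $(1,2)$-tensor. The term $\beta(\nabla_{\dot z}v,v)$ changes by $\beta(S(\dot z,v),v)$. The term $(\nabla_v\beta)(\dot z,v)$ changes by $-\beta(S(v,\dot z),v)-\beta(\dot z,S(v,v))$. By the symmetry of $S$ the first two changes cancel, and the remaining $-\beta(\dot z,S(v,v))$ vanishes again because $C_{\dot z}\mu=0$. Hence $Q_\mu$ depends only on $A$ and $F_A$, and in particular not on $g$.

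The main obstacle, in my view, is the bookkeeping of what ``Hessian of $\mu\cdot E$'' means. It has to be well defined on the space of base variations fixing $z(0)$ and $z(T)$, independently of the chosen family. I would settle this by the first-variation computation above: two families with the same $v$ differ at second order by a vector field $w$ along $z$ vanishing at the endpoints, and they contribute $\int\beta(\dot z,w)\,dt=0$ by criticality. This is the same cancellation that yields connection independence, so the two points can be written as one lemma.
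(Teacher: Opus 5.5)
Your proof is correct, and it rests on the same two facts as the paper's. For abelian $G$ the fiber component of $\mu\cdot E$ is the scalar line integral $\int_0^T\langle\mu,a(\dot z)\rangle\,dt$. The single term separating a naive computation from the tensorial formula is then annihilated by the abnormality relation $C_{\dot z}\mu=0$.

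The organization differs, though. The paper Taylor-expands $J[z+\varepsilon v]$ in a chart. It rewrites the $\varepsilon^2$-coefficient, through an identity for mixed partials and an integration by parts, as $\tfrac12\int_0^T[\omega(v,\dot v)+(\partial_v\omega)(v,\dot z)]\,dt$ with $\omega=da=-F_A$. Only then does it pass from coordinate to covariant derivatives, at the cost of the remainder $-\langle\mu,F_A(\dot z,\Gamma(v,v))\rangle$. You tensorialize first, starting from the first-variation identity along the whole family, and then differentiate covariantly. Your term $\beta(\dot z,D_\varepsilon\partial_\varepsilon z)$ is the intrinsic form of the paper's remainder, and reduces to $\beta(\dot z,\Gamma(v,v))$ for the family $z+\varepsilon v$.

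Your route needs no second-order integration by parts and works verbatim for every family and every torsion-free $\nabla$. Since the second derivative along a family involves no connection at all, it already gives the independence of $\nabla$, and hence of $g$. Your computation with $\nabla'=\nabla+S$, which is the paper's, therefore becomes a consistency check rather than a necessary step. Your route also makes the well-definedness of the Hessian explicit, a point the paper records separately after the statement. The paper's route, in exchange, is a completely elementary check on the explicit integral.
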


Two remarks on the domain of $Q_\mu$. Since abnormality is the pointwise
identity $\langle\mu,F_A(\dot z,\cdot)\rangle=0$, the first variation
$\delta(\mu\cdot J)[v]=\int_0^T\langle\mu,F_A(\dot z,v)\rangle\,dt$ vanishes for
\emph{every} $v$ with $v(0)=v(T)=0$, so the Hessian is already well defined,
independently of a choice of complement, on that whole space; this is why
\eqref{index_form_eq} may be stated there. The index form of Agrachev and
Sarychev is, however, the restriction of $Q_\mu$ to the intrinsic domain
\begin{equation}
\label{kernel_dE}
\ker dE
=
\Bigl\{v:\ v(0)=v(T)=0,\ \int_0^TF_A(\dot z,v)\,dt=0\in\mathfrak g\Bigr\},
\end{equation}
the variations preserving the full endpoint to first order, and it is on
\eqref{kernel_dE} that the index and nullity of \cite[\S4]{AS96} are computed.
The two spaces differ only when $\dim\mathfrak g\geq2$, and definiteness on the
larger one is the stronger hypothesis; we shall use only the pointwise
consequence \eqref{glc_reduced}, for which the distinction is immaterial.

\dem
Take $G$ abelian and expand \eqref{fibre_displacement} at $z+\varepsilon v$. The
coefficient of $\varepsilon^2$ is
\[
\int_0^T\Bigl[\tfrac12\partial_k\partial_ja_i\,v^kv^j\dot z^i
+\partial_ja_i\,v^j\dot v^i\Bigr]dt ,
\]
and, writing $\omega=da$ and using
$\partial_k\partial_ja_i-\partial_i\partial_ja_k=\partial_j\omega_{ki}$, this
equals
\[
\tfrac12\int_0^T\bigl[\omega(v,\dot v)+(\partial_v\omega)(v,\dot z)\bigr]dt
+\tfrac12\Bigl[\partial_ja_i\,v^iv^j\Bigr]_0^T ,
\]
the boundary term vanishing. Since $\omega=-F_A$, twice this is
$\int[\langle\mu,F_A(\dot v,v)+(\partial_vF_A)(\dot z,v)\rangle]dt$. Replacing
the coordinate derivatives by covariant ones with a torsion-free $\nabla$ changes
the integrand by $-\langle\mu,F_A(\dot z,\Gamma(v,v))\rangle$ alone, the terms in
$\Gamma(\dot z,v)$ canceling by symmetry of $\Gamma$; and that remainder
vanishes because $\langle\mu,F_A(\dot z,\cdot)\rangle=0$ along an abnormal.
The same cancellation proves the independence of the torsion-free connection:
if $\nabla'=\nabla+S$ with $S$ a symmetric $(1,2)$-tensor field, then
\[
F_A\bigl(\nabla'_{\dot z}v,v\bigr)-F_A\bigl(\nabla_{\dot z}v,v\bigr)
=F_A\bigl(S(\dot z,v),v\bigr),
\]
\[
\bigl(\nabla'_vF_A\bigr)(\dot z,v)-\bigl(\nabla_vF_A\bigr)(\dot z,v)
=-F_A\bigl(S(v,\dot z),v\bigr)-F_A\bigl(\dot z,S(v,v)\bigr);
\]
the first two changes cancel by the symmetry of $S$, and the last is
annihilated by $\mu$ along an abnormal extremal, since $C_{\dot z}\mu=0$. The
$\widetilde{\mathfrak g}$-factor of the connection on
$\Lambda^2T^*B\otimes\widetilde{\mathfrak g}$ is $\nabla^A$ in every case,
fixed by $A$, so the metric enters only through the $TB$-factor, whose choice
has just been shown to be immaterial.
\findem

\begin{remark}
\label{nonabelian_index}
The abelian hypothesis in Theorem~\ref{reduced_index_form} is not an artifact of
the proof. For nonabelian $G$ the fiber component of the endpoint map is the
path-ordered exponential $\mathrm{Pexp}\bigl(\int_0^Ta(\dot z)\,dt\bigr)$,
which reduces to \eqref{fibre_displacement} only in the abelian case, and the second-order term of its Magnus--Chen
expansion contains, besides the terms of \eqref{index_form_eq}, the time-ordered
double integral
\begin{equation}
\label{holonomy_commutator_term}
\Bigl\langle
\mu_0,\ \int_0^T\!\!\int_0^t\bigl[\nu(\sigma),\nu(t)\bigr]\,d\sigma\,dt
\Bigr\rangle,
\qquad
\nu(t)=\operatorname{Ad}_{U(t)}\bigl(F_A(\dot z,v)(t)\bigr),
\end{equation}
where $U(t)$ is the holonomy up to time $t$ and $\mu_0$ the multiplier at the
initial point. Abnormality gives $\langle\mu_0,\nu(t)\rangle\equiv0$, that is,
$\nu(t)\in\ker\mu_0$, which does \emph{not} annihilate
\eqref{holonomy_commutator_term}: the term vanishes identically precisely when
$\langle\mu_0,[\ker\mu_0,\ker\mu_0]\rangle=0$, in particular whenever
$\mathfrak g$ is abelian. Nor is it removed by the Goh condition. It is not
negligible either: for $B=\mathbb R^2$, $G=SO(3)$, $P=B\times G$ and the constant
local connection form $a=c\,e_3\,dx+e_1\,dy$, so that
$F_A=\tfrac12[a,a]=c\,e_2\,dx\wedge dy$, the line $z(t)=(t,0)$ with
$\mu_0=e_3^*$ is abnormal (note that $\operatorname{ad}^*_{ce_3}e_3^*=0$, so
$\mu_0$ is indeed $\nabla^A$-parallel), satisfies the full Goh condition
$C_\mu\equiv0$, and has $\mathcal L_\mu(v)=c\,v_y^2$. Take $c>0$. Formula
\eqref{index_form_eq} would then give $Q_\mu(v)=c\int_0^Tv_y^2\,dt>0$ for every
$T$, whereas the true second variation acquires from
\eqref{holonomy_commutator_term} a L\'evy-area functional whose kernel is
proportional to $\sin\bigl(c(t-\sigma)\bigr)$ and whose smallest eigenvalue
becomes negative for $cT$ large. The
\emph{pointwise} consequence \eqref{glc_reduced} survives for general $G$, since
under a concentration of width $\varepsilon$ the term
\eqref{holonomy_commutator_term} is $O(\varepsilon^2)$ against the $O(\varepsilon)$
of $\int\mathcal L_\mu$; what needs $G$ abelian is the global conclusion of
Corollary~\ref{reduced_glc}. Deriving the nonabelian index form, and identifying
the hypotheses under which \eqref{holonomy_commutator_term} cancels, we leave
open.
\end{remark}

\begin{lemma}
\label{bracket_legendre}
Let $V,Z$ be vector fields on $B$ and let $\lambda=\mu\circ A\in\mathcal D^\circ$
correspond to $\mu$ under \eqref{cotangent_quotient}. At a point at which
$\nabla V=\nabla Z=0$,
\begin{equation}
\label{bracket_to_curvature}
\lambda\Bigl(\bigl[V^h,[Z^h,V^h]\bigr]\Bigr)
=
-\bigl\langle\mu,(\nabla_VF_A)(Z,V)\bigr\rangle .
\end{equation}
\end{lemma}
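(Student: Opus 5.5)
The plan is to expand the double bracket into horizontal and vertical parts, using Lemma~\ref{bracket_curvature} twice together with the bracket rules recorded in the proof of Proposition~\ref{atiyah_structure}, and then let $\lambda$ annihilate the horizontal part.

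Throughout, for a section $\sigma$ of $\widetilde{\mathfrak g}$ write $X_\sigma$ for the $G$-invariant vertical vector field on $P$ with $A(X_\sigma)=\sigma$, read through representatives. Since $\lambda=\mu\circ A$ vanishes on $\mathcal D=\ker A$, one has $\lambda(W)=\langle\mu,A(W)\rangle$ for every $G$-invariant field $W$. So only the vertical component of $[V^h,[Z^h,V^h]]$ matters.

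\emph{Step 1: the inner bracket.} The field $[Z^h,V^h]$ is $G$-invariant and projects to $[Z,V]$. By Lemma~\ref{bracket_curvature} its vertical part is $-F_A(Z,V)$, so
\[
[Z^h,V^h]=[Z,V]^h+X_\sigma,\qquad \sigma=-F_A(Z,V)\in\Gamma(\widetilde{\mathfrak g}).
\]

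\emph{Step 2: the outer bracket.} Bracketing with $V^h$ gives two terms, each expanded by an earlier result.
\begin{itemize}
\item[(a)] The first term is $[V^h,[Z,V]^h]$. By Lemma~\ref{bracket_curvature} again, $A$ takes the value $-F_A(V,[Z,V])$ on it.
\item[(b)] The second term is $[V^h,X_\sigma]$. By the horizontal–vertical bracket rule in the proof of Proposition~\ref{atiyah_structure}, $[V^h,X_\sigma]=X_{\nabla^A_V\sigma}$.
\end{itemize}
Pairing with $\lambda$ therefore yields
\[
\lambda\bigl([V^h,[Z^h,V^h]]\bigr)
=-\bigl\langle\mu,F_A(V,[Z,V])\bigr\rangle-\bigl\langle\mu,\nabla^A_V\bigl(F_A(Z,V)\bigr)\bigr\rangle .
\]

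\emph{Step 3: use the point conditions.} At the chosen point $\nabla V=\nabla Z=0$, and $\nabla$ is torsion-free. Hence $[Z,V]=\nabla_ZV-\nabla_VZ=0$ there, and the first term drops out. By the Leibniz formula defining $\nabla F_A$ in Theorem~\ref{reduced_index_form},
\[
\nabla^A_V\bigl(F_A(Z,V)\bigr)=(\nabla_VF_A)(Z,V)+F_A(\nabla_VZ,V)+F_A(Z,\nabla_VV).
\]
The last two terms vanish at the point. This leaves exactly $-\langle\mu,(\nabla_VF_A)(Z,V)\rangle$, which is \eqref{bracket_to_curvature}.

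The main obstacle is Step 2(b), namely confirming the sign and normalization of $[V^h,X_\sigma]=X_{\nabla^A_V\sigma}$ for a \emph{right} action. I would check it in a local trivialization $P|_U\cong U\times G$ with $A=\operatorname{Ad}_{g^{-1}}(a+dg\,g^{-1})$. There I would compute $[V^h,X_\sigma]$ directly and match the result against the formula for $D\eta/Dt$ in Section~\ref{rlpp}.

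A second point to check is that evaluating $\lambda$ on iterated brackets at a single point is legitimate. Both sides involve derivatives of $V$ and $Z$, and the hypotheses $\nabla V=\nabla Z=0$ at the point are exactly what make the two sides agree. So the identity is pointwise, not tensorial, and no further argument is needed beyond the computation above.
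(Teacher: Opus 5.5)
Your proposal is correct and follows the paper's proof essentially line by line. You split $[Z^h,V^h]=[Z,V]^h+X_{-F_A(Z,V)}$ by Lemma~\ref{bracket_curvature}, use $[V^h,X_\sigma]=X_{\nabla^A_V\sigma}$, and then kill $F_A(V,[Z,V])$ and the two Leibniz remainders using $\nabla V=\nabla Z=0$ and torsion-freeness. The sign you flag in Step~2(b) is right: $dA(V^h,X_\sigma)=0$ because $F_A$ is horizontal and $A(V^h)=0$, which gives $A([V^h,X_\sigma])=V^h(\hat\sigma)$, and this is exactly the induced covariant derivative $\nabla^A_V\sigma$.
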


\dem
By Lemma~\ref{bracket_curvature}, $[Z^h,V^h]=[Z,V]^h-\bigl(F_A(Z,V)\bigr)^\dagger$,
where $\sigma^\dagger$ denotes the vertical field of a section $\sigma$ of
$\widetilde{\mathfrak g}$. Since $[V^h,\sigma^\dagger]=(\nabla^A_V\sigma)^\dagger$,
\[
A\bigl([V^h,[Z^h,V^h]]\bigr)
=
-F_A(V,[Z,V])-\nabla^A_V\bigl(F_A(Z,V)\bigr).
\]
Expanding $\nabla^A_V(F_A(Z,V))=(\nabla_VF_A)(Z,V)+F_A(\nabla_VZ,V)+F_A(Z,\nabla_VV)$
and $[Z,V]=\nabla_ZV-\nabla_VZ$, and evaluating where $\nabla V=\nabla Z=0$,
leaves $-(\nabla_VF_A)(Z,V)$; the term $F_A(Z,\nabla_VV)$ is in any case killed by
$\mu$ on an abnormal.
\findem

Identity \eqref{bracket_to_curvature} is the translation announced above: the
quadratic form $v\mapsto\lambda([v,[\dot\gamma,v]])$ of Agrachev and Sarychev
\cite[(4.15)]{AS96}, which is the generalized Legendre form of an abnormal
extremal, becomes on a principal bundle a form built from the covariant
derivative of the curvature. We give it a name.

\begin{definition}
\label{reduced_legendre}
The \emph{reduced Legendre form} of an abnormal extremal at time $t$ is the
quadratic form
\begin{equation}
\label{legendre_form}
\mathcal L_\mu(v)=\bigl\langle\mu,(\nabla_vF_A)(\dot z,v)\bigr\rangle,
\qquad
v\in T_zB .
\end{equation}
\end{definition}

Its domain is the full tangent space of the base, modulo the velocity, and not
the kernel $K_{\dot z}$ of the curvature pairing. This is worth stating
explicitly, because $K_{\dot z}$ is the natural domain of the \emph{first-order}
obstruction \eqref{first_order_test} and the analogy is misleading. The reason
is that the generalized Legendre form of \cite[(4.15)]{AS96} is tested on all
control directions: taking $v=\varphi(t)e(t)$ with $e$ a $\nabla$-parallel field
and $\varphi$ a bump function makes the first term of \eqref{index_form_eq}
vanish, since $F_A(\nabla_{\dot z}v,v)=\varphi'\varphi\,F_A(e,e)=0$, and leaves
$Q_\mu(\varphi e)=\int\varphi^2\,\mathcal L_\mu(e)\,dt$, so that
$\mathcal L_\mu(e)\geq0$ is forced for \emph{every} $e\in T_zB$. Restricting the
test to $K_{\dot z}$ would make it vacuous whenever $K_{\dot z}=\langle\dot
z\rangle$, that is, in exactly the situations the test is meant to decide.

Two properties justify passing to the quotient by $\dot z$. First,
$\mathcal L_\mu(\dot z)=0$ by antisymmetry of $F_A$. Second, and less
immediately, $\mathcal L_\mu$ descends to $T_zB/\langle\dot z\rangle$ under the
full Goh condition $C_\mu=0$: polarizing \eqref{legendre_form} and
differentiating the abnormality relation
$\langle\mu,F_A(\dot z,V)\rangle\equiv0$ along $z(\cdot)$, using
$\nabla^A_{\dot z}\mu=0$, gives
\[
2\mathcal L_\mu(v,\dot z)
=\bigl\langle\mu,(\nabla_{\dot z}F_A)(\dot z,v)\bigr\rangle
=-\bigl\langle\mu,F_A(\nabla_{\dot z}\dot z,v)\bigr\rangle
=-C_\mu\bigl(\nabla_{\dot z}\dot z,v\bigr),
\]
which vanishes when $C_\mu=0$. Note that this is a further reason why
$K_{\dot z}$ is not the right domain: $\mathcal L_\mu$ need not descend to
$K_{\dot z}/\langle\dot z\rangle$ in the absence of Goh either.

\begin{corollary}
\label{reduced_glc}
Along an abnormal extremal the generalized Legendre--Clebsch condition reads
\begin{equation}
\label{glc_reduced}
\mathcal L_\mu\succeq0
\quad\text{on }T_zB,
\end{equation}
after normalizing the sign of $\mu$; this holds for every structure group, by
the localization argument in the proof below together with
Remark~\ref{nonabelian_index}.
Suppose in addition that $G$ is abelian, that the Goh condition holds in its
full form, $C_\mu=0$ along the extremal, and that $\mathcal L_\mu$ is uniformly
definite on $T_zB/\langle\dot z\rangle$. Then the extremal is
$W^{1,\infty}$-rigid, hence a local length minimizer.
\end{corollary}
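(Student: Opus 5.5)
The corollary has two parts, and I will prove them in order. The first is the necessary condition \eqref{glc_reduced}, valid for every structure group. The second is the sufficient rigidity statement, for abelian $G$ under the full Goh condition.

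\textbf{Part one: the necessary condition.} I use the localization already sketched after Definition~\ref{reduced_legendre}. Fix $t_0$ and $e\in T_{z(t_0)}B$, and extend $e$ to a $\nabla$-parallel field $e(t)$ along $z(\cdot)$. Test the Hessian of $\mu\cdot E$ on $v_\varepsilon=\varphi_\varepsilon(t)e(t)$, where $\varphi_\varepsilon$ is a bump of width $\varepsilon$ centred at $t_0$, normalized so that $\int\varphi_\varepsilon^2\,dt=\varepsilon$.

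In the abelian case, Theorem~\ref{reduced_index_form} gives
\[
Q_\mu(v_\varepsilon)=\int\varphi_\varepsilon^2\,\mathcal L_\mu(e)\,dt=\varepsilon\,\mathcal L_\mu(e(t_0))+o(\varepsilon),
\]
because the first term of \eqref{index_form_eq} is $\varphi_\varepsilon'\varphi_\varepsilon F_A(e,e)=0$. For general $G$ I invoke Remark~\ref{nonabelian_index}: the commutator term \eqref{holonomy_commutator_term} is bilinear in $F_A(\dot z,v_\varepsilon)=O(\varphi_\varepsilon)$ and is integrated over a region of area $O(\varepsilon^2)$. It is therefore $O(\varepsilon^2)$, which is negligible against the leading term $\varepsilon\,\mathcal L_\mu(e)$.

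The remaining input is the Agrachev--Sarychev necessary condition \cite[\S4]{AS96}. For a minimizer, the Hessian of $\lambda\cdot E$ restricted to $\ker dE$ must have finite negative index for a suitable sign of the multiplier. Applied to concentrated variations, this forces $\mathcal L_\mu(e)\geq0$ once the sign of $\mu$ is normalized. Lemma~\ref{bracket_legendre} identifies $\mathcal L_\mu$ with their form $\lambda([v,[\dot\gamma,v]])$ of \cite[(4.15)]{AS96}, so the two conditions are the same statement.

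There is one technicality: the bump variations must lie in $\ker dE$. Finite index is a codimension-$\dim\mathfrak g$ statement, so I will take a span of bumps with disjoint supports larger than $\dim\mathfrak g$. A suitable combination then lies in $\ker dE$, and the sign is still detected.

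\textbf{Part two: rigidity.} Assume $G$ is abelian, $C_\mu\equiv0$, and $\mathcal L_\mu\geq c|v|^2$ on $T_zB/\langle\dot z\rangle$. I aim to show that $Q_\mu$ is coercive on $\ker dE$ modulo reparametrizations, in the $L^2$ norm of the Goh-transformed variable. Then the rigidity theorem of Agrachev--Sarychev \cite{AS96} applies: strong Legendre plus Goh plus a positive-definite index form on $\ker dE$ gives $W^{1,\infty}$-rigidity. Rigidity implies local length minimality, since a rigid curve is locally the only admissible curve joining its endpoints up to reparametrization.

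The steps are as follows.
\begin{enumerate}
\item Under $C_\mu=0$ the first term of \eqref{index_form_eq} vanishes identically, because $F_A(\nabla_{\dot z}v,v)$ is paired with $\mu$. Hence
\[
Q_\mu(v)=\int_0^T\mathcal L_\mu(v)\,dt .
\]
\item Decompose $v=\alpha\dot z+w$ with $w\perp\dot z$. The $\alpha\dot z$ part is reparametrization. Because $\mathcal L_\mu$ descends to $T_zB/\langle\dot z\rangle$ under Goh (the polarization computation preceding the corollary), $Q_\mu(v)=\int\mathcal L_\mu(w)\,dt\geq c\|w\|_{L^2}^2$.
\item Translate this estimate into the control variables. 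The base variation $v$ is the integral of the control variation, so $L^2$-coercivity in $v$ is exactly the coercivity in the Goh-transformed variables that \cite{AS96} requires.
\end{enumerate}

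\textbf{Main obstacle.} The hardest step is matching the function spaces and the hypotheses of the Agrachev--Sarychev rigidity theorem. That theorem asks for coercivity of the extended form on the Goh-transformed space and for a corank-one or suitably strict multiplier. I expect to handle coercivity as in step 3. For the multiplier, I will note that uniform definiteness makes the finite-dimensional compactness part of their argument apply verbatim, citing \cite{AS96} for the passage from coercive second variation to rigidity rather than reproving it.
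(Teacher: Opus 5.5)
Your proposal is correct and follows essentially the same route as the paper. For the Legendre condition you concentrate $v=\varphi e$ on a $\nabla$-parallel field, so that the first term of \eqref{index_form_eq} drops out and the nonabelian commutator term is $O(\varepsilon^2)$ by Remark~\ref{nonabelian_index}. For rigidity you use the exactness $Q_\mu=\int_0^T\mathcal L_\mu\,dt$ under full Goh, which turns uniform definiteness into $L^2$ coercivity of the primitive $v$, i.e.\ the $H^{-1}$ weak positivity hypothesized in \cite[Thm.~4.8]{AS96}.

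Your handling of the constraint $\ker dE$, through disjointly supported bumps and the finite-index condition, is a more explicit version of the paper's single-bump correction. Fix one count, though: a single negative direction in $\ker dE$ does not contradict a finite index. So the number of bumps must exceed $\dim\mathfrak g$ plus the index bound (or be taken arbitrarily large), not merely $\dim\mathfrak g$.
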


The full Goh condition $C_\mu=0$ is a genuine additional hypothesis and is not
supplied by the constraint algorithm. What
Theorem~\ref{abnormal_curvature} produces as a secondary constraint is only the
tangency relation \eqref{goh_condition}, that is, $C_\mu(\dot z,\cdot)=0$; the
two coincide when $\dim B=2$, where a degenerate skew form on a plane vanishes,
and they differ as soon as $\dim B>2$. This is the distinction already recorded
after Theorem~\ref{abnormal_curvature}. The full condition holds in all the
examples below.

\dem
For \eqref{glc_reduced}, take $v=\varphi e$ as in the discussion following
Definition~\ref{reduced_legendre}; the correction needed to place $v$ in
$\ker dE$ is of second order in the width of $\varphi$ and does not affect the
sign. This argument is local and survives for nonabelian $G$, since by
Remark~\ref{nonabelian_index} the extra holonomy term is of second order in that
width against the first order of $\int\mathcal L_\mu$.

For the second assertion, $C_\mu=0$ annihilates the first term of
\eqref{index_form_eq}, so that
\begin{equation}
\label{index_form_exact}
Q_\mu(v)=\int_0^T\mathcal L_\mu(v)\,dt
\end{equation}
exactly, with no nonlocal quadratic remainder. This is where the abelian
hypothesis is used, and it is the whole of the argument:
Remark~\ref{nonabelian_index} exhibits the term that would otherwise appear on
the right of \eqref{index_form_exact}. Uniform definiteness means
$\mathcal L_\mu(v)\geq\beta\,\bigl|\,\overline v\,\bigr|^2$, with
$\overline v$ the class of $v$ in $T_zB/\langle\dot z\rangle$ and $\beta>0$
independent of $t$; by \eqref{index_form_exact},
\begin{equation}
\label{coercivity}
Q_\mu(v)\geq\beta\,\|\overline v\|^2_{L^2}
\end{equation}
for every variation $v$ vanishing at the endpoints. Writing $w=\dot v$ for the
corresponding control variation, so that $v$ is its primitive vanishing at
$t=0$, the right-hand side of \eqref{coercivity} is $\beta$ times the squared
$H^{-1}$ norm of $w$ in the convention of \cite[\S4]{AS96}, in which the
duality is taken against test functions not required to vanish at $t=T$, so that
\[
\|\dot v\|_{H^{-1}}
=\sup\Bigl\{\int_0^Tv\psi\,dt\ :\ \|\psi\|_{L^2}\leq1\Bigr\}
=\|v\|_{L^2}
\]
for $v(0)=0$; the estimate is taken on the
quotient by the reparametrization direction $\dot z$ along which the
unparametrized curve does not move. Estimate \eqref{coercivity} is therefore the
weak positive definiteness of the second variation hypothesized in
\cite[Thm.~4.8]{AS96}, whose conclusion is $W^{1,\infty}$-rigidity of the whole
arc, that is, isolation up to reparametrization.
\findem

\begin{remark}[attribution and terminology]
\label{rigidity_hypothesis}
\label{algorithm_vs_index}
\label{index_form_attribution}
\label{rigidity_terminology}
The ingredients are classical and are used as such. The index form of an abnormal
extremal, its metric independence, the Goh and generalized Legendre conditions
and the rigidity theorem are due to Agrachev and Sarychev \cite{AS96}, in the
Lie-bracket form $\lambda([v,[\dot\gamma,v]])$ on the total space; the
description of the abnormal extremals of a bundle sub-Riemannian structure as the
horizontal lifts of the curves with $\langle\mu,F_A(\dot z,\cdot)\rangle=0$ and
$\nabla^A_{\dot z}\mu=0$ is Montgomery's \cite[\S3.2]{Mo95}; conditions that rule out strictly abnormal
minimizers using the distribution alone (medium fatness) are in
\cite{AS99}. What is done here is
to write the second-order theory in the same reduced variables as the first-order
theory, so that all three rows of Theorem~\ref{trichotomy} below are computations with
$F_A$, $\nabla F_A$ and $g$ on the base rather than with iterated brackets
upstairs. Three points of interpretation follow. First, the hypotheses of the second assertion of Corollary~\ref{reduced_glc} should not be weakened casually:
Theorem~4.8 of \cite{AS96} assumes weak positive definiteness of the full second
variation, and what pointwise Goh together with the strong Legendre condition gives in general
is the finite-codimension positivity of \cite[Prop.~4.5]{AS96}, from which
rigidity follows for sufficiently short subarcs \cite[Thm.~5.5]{AS96} but not for
the whole arc; what rescues the
global conclusion in the abelian case is the exactness of
$Q_\mu=\int\mathcal L_\mu$ under full Goh, displayed in the proof, whereas for nonabelian $G$ the remainder
\eqref{holonomy_commutator_term} is present and, by the example of
Remark~\ref{nonabelian_index}, genuinely destroys positivity on long arcs. Second, the constraint
algorithm of Section~\ref{singular} produces, as its secondary constraints, the
\emph{equations} of Goh and of Legendre--Clebsch \cite{Ba08}, as Theorem~\ref{abnormal_curvature}
records for the first of them, but it is insensitive to the replacement of $\mu$ by $-\mu$ and so cannot produce a
\emph{sign}; the index
form \eqref{index_form_eq} supplies exactly that missing datum, and
\eqref{glc_reduced} is its first consequence. Third, \emph{rigid} is applied by Hsu
\cite[Prop.~8]{Hs92} and by Bryant and Hsu \cite{BH93} to an integral curve
admitting no compactly supported variations, and by Agrachev and Sarychev
\cite[Thm.~4.8]{AS96} to a curve isolated, up to reparametrization, in the
$W^{1,\infty}$ topology; Corollary~\ref{reduced_glc} asserts the second. The $C^0$-isolation among geodesics of
bounded length in \cite[Thm.~1.3]{ABCK97} is a further, distinct statement, and we do not assert that it is equivalent to
$W^{1,\infty}$-rigidity in the sense of \cite{AS96}: the admissible classes and the topologies differ, and the two conclusions
should be read separately.
\end{remark}

\subsubsection*{The trichotomy}

\begin{theorem}
\label{trichotomy}
For the horizontal system \eqref{horizontal_system} the following three questions
about an admissible curve are answered by three reduced tests, of which the first
two involve only the connection and its curvature and the third involves the
metric:
\begin{center}
\renewcommand{\arraystretch}{1.3}
\begin{tabular}{lll}
\hline
question & reduced test & data used\\
\hline
is it abnormal?
& iff $C_{\dot z}\mu=0$, $\nabla^A_{\dot z}\mu=0$ for some $\mu\neq0$
& $A,F_A$\\
is it rigid?
& sufficient: $C_\mu=0$ and $\mathcal L_\mu$ definite on $T_zB/\langle\dot z\rangle$
& $A,F_A,\nabla F_A$\\
is it strictly abnormal?
& iff no parallel $\mu^{\mathrm n}$ solves \eqref{strictness_equation}
& $A,F_A,g$\\
\hline
\end{tabular}
\end{center}
The first is a precondition for the other two, and the two remaining tests are
independent of one another: the outcome of one determines neither the outcome
nor the applicability of the other.
\end{theorem}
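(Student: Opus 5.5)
The proof assembles earlier results row by row, then proves independence with explicit examples.

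\textbf{First row.} I would invoke Lemma~\ref{wong}, equation \eqref{abnormal_wong}. On the abnormal leaf, Proposition~\ref{abnormal_leaf} forces $p_z\equiv0$. The reduced costate equation of \eqref{RHP} then collapses to $C_{\dot z}\mu=0$, and the coadjoint equation with $\xi=0$ gives $\nabla^A_{\dot z}\mu=0$. Conversely, such a $\mu$ reconstructs an abnormal extremal by Theorem~\ref{reconstruction_theorem}, applied on the leaf $p_0=0$ of Proposition~\ref{p0_momentum}. Only $A$ and $F_A$ enter.

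\textbf{Second row.} This is Corollary~\ref{reduced_glc}, which gives a sufficient condition for $W^{1,\infty}$-rigidity. By Theorem~\ref{reduced_index_form} it uses only $F_A$ and $\nabla F_A$, and the choice of torsion-free connection is immaterial. I would state this row under the corollary's hypotheses: $G$ abelian, or else only the pointwise necessary condition \eqref{glc_reduced}.

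\textbf{Third row.} This is Theorem~\ref{strictness_criterion}. The metric enters only through $\nabla_{\dot z}\dot z$ and $\sharp$.

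\textbf{Precondition clause.} Both remaining tests presuppose a multiplier $\mu$ with $C_{\dot z}\mu=0$ and $\nabla^A_{\dot z}\mu=0$, which is exactly the output of the first row.

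\textbf{Independence.} The real content is that the rigidity and strictness tests are independent, and I would prove it with examples.
\begin{enumerate}
\item \emph{Flat versus deformed Martinet.} In the flat Martinet problem (Proposition~\ref{martinet_nonstrict}) the line is non-strict. In the deformation of Remark~\ref{abck_deformation} with $\varepsilon\neq0$, Proposition~\ref{abck_strictness} shows the same line is strict. The rigidity data are $F_A$ and $\nabla F_A$ along $\{x=0\}$ (respectively $\{y=0\}$). Since $\dim\mathfrak g=1$ and $\dim B=2$, full Goh holds automatically on the abnormal locus. Computing $\mathcal L_\mu$ from $F_A=-x\,dx\wedge dy$ gives $\mathcal L_\mu(\partial_x)=\langle\mu,-(\partial_x x)\,(dx\wedge dy)(\dot z,\partial_x)\rangle=\pm\mu\neq0$, which is definite after the sign normalization. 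The same holds for $F_{A_\varepsilon}$, whose $y$-derivative at $y=0$ is $1$. So both curves pass the rigidity test while the strictness test gives opposite answers. Hence the strictness outcome does not determine the rigidity outcome, and vice versa.
\item \emph{Rank-two circle.} For the circle in \eqref{rank_two_connection}, the double zero of the first component makes $\mathcal L_{\mu^{\mathrm{ab}}}\equiv0$. So the rigidity test is inconclusive there, even though strictness is decided, and this example shows that applicability is also independent.
\item \emph{Modified rank-two circle.} The modified version with the second component constant gives the non-strict counterpart, again with degenerate $\mathcal L_\mu$.
\end{enumerate}

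\textbf{Expected obstacle.} The main difficulty is this independence bookkeeping. I need to exhibit all four combinations, or at least enough of them to refute each dependency. I also need to check the Goh hypothesis and the sign and definiteness of $\mathcal L_\mu$ in each example. The ABCK deformation is the delicate case: there I must verify that $\nabla_vF_{A_\varepsilon}$ at $y=0$ is unaffected by $\varepsilon$ to leading order, so that definiteness persists while strictness flips.
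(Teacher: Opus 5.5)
Your proposal is correct and follows the paper's proof. The three rows are read off from \eqref{abnormal_wong}, Corollary~\ref{reduced_glc} and Theorem~\ref{strictness_criterion}, and independence is shown with two examples: the deformed Martinet family of Remark~\ref{abck_deformation}, which is rigid throughout while strictness flips at $\varepsilon=0$, and the rank-two circle, which is strict with $\mathcal L_{\mu^{\mathrm{ab}}}\equiv0$. Your modified rank-two circle, which is non-strict with a degenerate Legendre form, adds the fourth combination, which the paper does not need. Note that your item 1 by itself only shows that rigidity does not determine strictness; the converse comes from comparing items 1 and 2.
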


Three qualifications belong with the table. The second row is a
\emph{sufficient} certificate and not a criterion: by
Corollary~\ref{reduced_glc} and Remark~\ref{rigidity_hypothesis} it certifies
rigidity of the whole arc for abelian $G$, and of sufficiently short subarcs in
general, while the corresponding \emph{necessary} condition is the weaker
$\mathcal L_\mu\succeq0$ of \eqref{glc_reduced}. The first and third rows
require the connection $A$ itself and not only its curvature, since both involve
$\nabla^A$-parallel transport of $\mu$, and $F_A$ does not determine $A$. And
the metric enters the third row only, through $\nabla_{\dot z}\dot z$, which is
the content of Remark~\ref{why_curvature_blind}.

\dem
The three tests are \eqref{abnormal_wong}, Corollary~\ref{reduced_glc} and
Theorem~\ref{strictness_criterion}; the assertions about the data used are
Theorem~\ref{reduced_index_form} and the definition of $\nabla$. For the
independence, the family of Remark~\ref{abck_deformation}, worked out below, is
rigid for every $\varepsilon$ and changes strictness at $\varepsilon=0$, so
rigidity determines nothing about strictness; and the rank-two example of
Section~\ref{strictness} is strictly abnormal while its reduced Legendre form
vanishes identically, so that the second-order test is silent about its
rigidity, by Remark~\ref{rank_two_legendre}. The strictness test can therefore be
decisive where the second-order test is not.
\findem

\subsubsection*{The examples}

For the flat Martinet problem of Section~\ref{martinet}, $F_A=-x\,dx\wedge dy$
and the abnormal is $x\equiv0$ with $\dot z=(0,1)$, so
\begin{equation}
\label{martinet_legendre}
\mathcal L_\mu(v)=\mu\,v_x^2 ,
\end{equation}
definite on $T_zB/\langle\dot z\rangle$, which is here the line spanned by the
class of $\partial_x$, and uniformly so since $\mu$ is a nonzero constant. The
group is $\mathbb R$, hence abelian, and $C_\mu=0$ because $F_A$ vanishes on the
abnormal, so Corollary~\ref{reduced_glc} applies to the whole arc: the abnormal
is rigid and locally minimizing. This is
consistent with, and independent of, \cite[Thm.~1.2]{ABCK97}, where minimality is
obtained from the explicit description of the sphere: there the abnormal
geodesic is removed from the cut locus, so it minimizes for all time.

For the deformation of Remark~\ref{abck_deformation}, with
$F_{A_\varepsilon}$ as displayed there, the metric \eqref{abck_metric}, the
domain \eqref{abck_domain} and the abnormal $y\equiv0$,
$\dot z=(1,0)$,
\begin{equation}
\label{abck_legendre}
\mathcal L_\mu(v)=\mu\,v_y^2
\qquad\text{for every }\varepsilon,
\end{equation}
using $f_\varepsilon'(0)=1$ for
$f_\varepsilon(y)=\dfrac{y(2+\varepsilon y)}{2(1+\varepsilon y)^2}$; again definite on
$T_zB/\langle\dot z\rangle$, and again with $\mathbb R$ abelian and $C_\mu=0$ on
the abnormal, so the abnormal is rigid for every $\varepsilon$. This is the
counterpart, for the whole family and not only for $\varepsilon\neq0$, of the
$C^0$-isolation of \cite[Thm.~1.3]{ABCK97}; the two statements are not the same,
and we do not identify them (see Remark~\ref{rigidity_terminology}). Together with
Proposition~\ref{abck_strictness} this is the promised independence: the location
of the abnormal and its rigidity are the same for all $\varepsilon$, and only
strictness changes. In the language of Theorem~\ref{trichotomy}, the deformation
moves nothing in the first two rows and everything in the third.

\begin{remark}
\label{rank_two_legendre}
The rank-two example of Section~\ref{strictness} shows the other degeneracy.
There $\langle\mu^{\mathrm{ab}},F_A\rangle=(x^2+y^2-1)^2\,dx\wedge dy$, and
since $\nabla(x^2+y^2-1)^2=4(x^2+y^2-1)\,z$ vanishes on the unit circle,
\[
\mathcal L_{\mu^{\mathrm{ab}}}\equiv0
\qquad\text{along the circle.}
\]
The Goh condition holds in its full form there, $C_{\mu^{\mathrm{ab}}}=0$, since
$\langle\mu^{\mathrm{ab}},F_A\rangle$ itself vanishes on the circle. So
\eqref{glc_reduced} holds, but only with equality: the reduced Legendre form is
degenerate, Corollary~\ref{reduced_glc} does not apply, and the second-order test
says nothing about rigidity. Strictness, by contrast, was decided there. This is
the promised complementarity: the two tests can each be decisive where the other
is not.

The double zero of the first curvature component is what produces the
degeneracy, as noted in Remark~\ref{rank_two_design}. Had that component been
$x^2+y^2-1$, with a simple zero, one would have obtained
$\mathcal L_{\mu^{\mathrm{ab}}}(v)=-2\langle z,v\rangle^2$, whose kernel is
exactly $\langle\dot z\rangle$; on the one-dimensional quotient
$T_zB/\langle\dot z\rangle$, spanned by the radial class, that form is definite
after normalizing $\mu^{\mathrm{ab}}$ to $(-1,0)$, so
Corollary~\ref{reduced_glc} would certify the circle as rigid, hence locally
minimizing --- a strictly abnormal local minimizer, decided by both tests. Both
variants are instructive, and it is the domain $T_zB/\langle\dot z\rangle$ of
Definition~\ref{reduced_legendre}, rather than $K_{\dot z}/\langle\dot z\rangle$,
that makes the second computation meaningful: on $K_{\dot z}$ the test would be
empty in both variants, for the accidental reason that $K_{\dot
z}=\langle\dot z\rangle$ there, and would therefore distinguish nothing.
\end{remark}

\begin{remark}
\label{D_equals_kerA_essential}
The standing hypothesis $\mathcal D=\ker A$ of this subsection and of
Section~\ref{strictness} is essential rather than convenient, and the example of
\cite{Hs92} shows why. The Hilbert--Cartan system on $\mathbb R^5$, generated by
\[
\theta^1=dy-p\,dx,
\qquad
\theta^2=dp-q\,dx,
\qquad
\theta^3=dz-q^2\,dx ,
\]
is invariant under translation in $z$, so that $G=\mathbb R$, $B=\mathbb R^4$ with
coordinates $(x,y,p,q)$, and
\[
A=dz-q^2\,dx,
\qquad
F_A=2q\,dx\wedge dq .
\]
But $\mathcal D$ has rank two while $\ker A$ has rank four, and the conclusion of
Theorem~\ref{abnormal_curvature} fails accordingly. The family $\mathcal R$ of
\cite[(2.16)]{Hs92} consists of the integral curves with $q''=0$, that is,
$q(x)=a_2+a_3x$ with $y$ the corresponding cubic in $x$; it carries the five
parameters $(a_0,a_1,a_2,a_3,b_0)$. Along such a curve, with $\dot x=1$, the
reduced distribution contains $\partial_q$, and
$F_A(\dot z,\partial_q)=2q(x)$, which vanishes identically along the curve
only on the three-parameter subfamily $q\equiv0$. The curvature pairing
therefore does not detect the generic member of $\mathcal R$. What goes wrong is that when $\mathcal D$ is a
proper subbundle of $\ker A$ the brackets of $\mathcal D$ are not generated by
$F_A$ alone, and the reduced problem is sub-Riemannian rather than Riemannian on
the base, as anticipated in Remark~\ref{strictness_general_D}.
\end{remark}

\begin{remark}
Section~\ref{heisenberg} and the present example exhibit the two extremes
permitted by Theorem~\ref{abnormal_curvature}: the curvature is nowhere zero and
there are no abnormal extremals, or it degenerates on a hypersurface and the
abnormal extremals are exactly the admissible curves lying in that hypersurface.
Together with the spacecraft of Section~\ref{sec:spacecraft}, in which the
curvature vanishes identically but the coadjoint term is nontrivial, the three
examples exhaust the qualitative possibilities for the two terms of \eqref{RHP}.
\end{remark}

\bibliographystyle{plain}
\bibliography{references}

\end{document}